\newcolumntype{P}[1]{>{\centering\arraybackslash}p{#1}}
\newcolumntype{M}[1]{>{\centering\arraybackslash}m{#1}}
\newtheorem{lem}{Lemma}
\newtheorem{thm}{Theorem}
\newtheorem{defn}{Definition}
\newtheorem{coro}{Corollary}
\newtheorem{rem}{Remark}
\crefname{equation}{}{}
\Crefname{equation}{}{}
\crefname{thm}{theorem}{theorems}
\Crefname{thm}{Theorem}{Theorems}
\crefname{clm}{claim}{claims}
\Crefname{clm}{Claim}{Claims}
\Crefname{coro}{Corollary}{Corollaries}
\Crefname{lem}{Lemma}{Lemmas}
\Crefname{sec}{Section}{Sections}
\crefname{app}{appendix}{appendices}
\Crefname{app}{Appendix}{Appendices}
\Crefname{part}{Part}{Parts}
\crefname{prop}{proposition}{propositions}
\Crefname{prop}{Proposition}{Propositions}
\Crefname{propty}{Property}{Properties}
\crefname{figure}{fig.}{fig.}
\Crefname{figure}{Fig.}{Fig.}
\crefname{defn}{definition}{definitions}
\Crefname{defn}{Definition}{Definitions}
\crefname{fact}{fact}{facts}
\Crefname{fact}{Fact}{Facts}
\crefname{app}{appendix}{appendices}
\Crefname{app}{Appendix}{Appendices}
\crefname{algo}{algorithm}{algorithms}
\Crefname{algo}{Algorithm}{Algorithms}
\crefname{algorithm}{algorithm}{algorithms}
\Crefname{algorithm}{Algorithm}{Algorithms}
\crefname{conj}{conjecture}{conjectures}
\Crefname{conj}{Conjecture}{Conjectures}
\crefname{obs}{observation}{observations}
\Crefname{obs}{Observation}{Observations}
\crefname{rem}{remark}{remarks}
\Crefname{rem}{Remark}{Remarks}
\begin{document}
\bstctlcite{IEEEexample:BSTcontrol}
% paper title
% Titles are generally capitalized except for words such as a, an, and, as,
% at, but, by, for, in, nor, of, on, or, the, to and up, which are usually
% not capitalized unless they are the first or last word of the title.
% Linebreaks \\ can be used within to get better formatting as desired.
% Do not put math or special symbols in the title.

\title{A Unified Coded Deep Neural Network Training Strategy Based on Generalized PolyDot Codes for Matrix Multiplication}
% author names and IEEE memberships
% note positions of commas and nonbreaking spaces ( ~ ) LaTeX will not break
% a structure at a ~ so this keeps an author's name from being broken across
% two lines.
% use \thanks{} to gain access to the first footnote area
% a separate \thanks must be used for each paragraph as LaTeX2e's \thanks
% was not built to handle multiple paragraphs
%
\IEEEoverridecommandlockouts
\author{Sanghamitra~Dutta$^*$, Ziqian~Bai$^*$,  Haewon Jeong, Tze Meng Low and Pulkit Grover\\ \small{$^*$ Equal Contribution}% <-this % stops a space
\thanks{ S. Dutta, H. Jeong, T. M. Low and P. Grover are with Department of ECE at Carnegie Mellon University, PA, United States. Z. Bai is with Simon Fraser University, BC, Canada. [Corresponding Author Contact: sanghamd@andrew.cmu.edu]}
%\thanks{}
\thanks{This work was presented in part in \cite{dutta2018DNN2} at the IEEE International Symposium on Information Theory (ISIT) 2018.}
%It builds upon our prior work on PolyDot codes~\cite{allerton17} that was  presented at Communications, Control and Computing (Allerton) in Oct 2017. }
}

\maketitle
%\IEEEpeerreviewmaketitle
% As a general rule, do not put math, special symbols or citations
% in the abstract or keywords.
\begin{abstract}
This paper has two contributions. First, we propose a novel coded matrix multiplication technique called \emph{Generalized PolyDot codes} that advances on existing methods for coded matrix multiplication under storage and communication constraints. This technique uses ``garbage alignment,'' \textit{i.e.}, aligning computations in coded computing that are not a part of the desired output. Generalized PolyDot codes bridge between the recent Polynomial codes and MatDot codes, trading off between recovery threshold and communication costs. Second, we demonstrate that Generalized PolyDot coding can be used for training large Deep Neural Networks (DNNs) on unreliable nodes that are prone to soft-errors, e.g., bit flips during computation that produce erroneous outputs. This requires us to address three additional challenges: (i) prohibitively large overhead of coding the weight matrices in each layer of the DNN at each iteration; (ii) nonlinear operations during training, which are incompatible with linear coding; and (iii) not assuming presence of an error-free master node, requiring us to architect a fully decentralized implementation. Because our strategy is completely decentralized, \textit{i.e.}, no assumptions on the presence of a single, error-free master node are made, we avoid any ``single point of failure.'' We also allow all primary DNN training steps, namely, matrix multiplication, nonlinear activation, Hadamard product, and update steps as well as the encoding and decoding to be error-prone. We consider the case of mini-batch size $B=1$, as well as $B>1$; the first leverages coded matrix-vector products, and the second coded matrix-matrix products, respectively. The problem of DNN training under soft-errors also motivates an interesting, probabilistic error model under which a real number $(P,Q)$ MDS code is shown to correct $P-Q-1$ errors with probability $1$ as compared to $\lfloor \frac{P-Q}{2} \rfloor$ for the more conventional, adversarial error model. We also demonstrate that our proposed coded DNN strategy can provide unbounded gains in error tolerance over a competing replication strategy and a preliminary MDS-code-based strategy \cite{dutta2018DNN1} for both these error models. Lastly, as an example, we demonstrate an extension of our technique for a specific neural network architecture, namely, sparse autoencoders.

\end{abstract}

%Soft-errors refer to the undetected errors that arise in large scale distributed and parallel computing, and are often regarded as supercomputing's monster in the closet.

% Note that keywords are not normally used for peer review papers.

%\begin{IEEEkeywords}
%Coded Computing, Deep Neural Networks, Soft-errors, Poly-Dot Codes
%\end{IEEEkeywords}

% For peer review papers, you can put extra information on the cover
% page as needed:
% \ifCLASSOPTIONpeerreview
% \begin{center} \bfseries EDICS Category: 3-BBND \end{center}
% \fi
%
% For peerreview papers, this IEEEtran command inserts a page break and
% creates the second title. It will be ignored for other modes.
\IEEEpeerreviewmaketitle
%\abovedisplayskip=0.5pt
%\abovedisplayshortskip=0.5pt
%\belowdisplayskip=0.5pt
%\belowdisplayshortskip=0.5pt
%\setlength{\parskip}{0pt}
%\setlength{\parsep}{0pt}
%\setlength{\partopsep}{0.5pt}
%\medmuskip=0mu
%\thinmuskip=0mu
%\thickmuskip=0mu

%\vspace{-0.4cm}

\section{Introduction}
\label{sec:introduction}

Ever-increasing data and computing requirements increasingly require massively distributed and parallel processing. However, as the number of parallel processing units are scaling, the expected number of faults, errors or delays in computing are also scaling\cite{dean2013tail,ferreira2011evaluating,bergman2008exascale,geist2016supercomputing, ziegler1996terrestrial,soft-error}. Thus, one of the major challenges of large-scale computing today is ensuring ``reliability at scale''.  \emph{Coded computing} has emerged as a promising solution to the various problems arising from the unreliability of processing nodes in parallel and distributed computing, such as straggling delays\cite{dean2013tail} or ``soft-errors''\cite{geist2016supercomputing, ziegler1996terrestrial,soft-error}. It is, in fact, a significant step in a long line of work on noisy computing started by von Neumann\cite{von1956probabilistic} in 1956, that has been followed upon by \textit{Algorithm-Based Fault Tolerance} (ABFT)\cite{ABFT1984,ABFT1988FFT,ABFT1998generalized,ABFT2015matrix}, the predecessor of coded computing. Recent advances in coded computing~\cite{NewsletterPaper,gauri2014delay,gauri2015straggler,gauri2014efficient,lee2016speeding,lee2018speeding,dutta2016short,dutta2017coded,lee2017matrix,yu2017polynomial,GC1,GC2,allerton17,yu2018entangled,li2015coded,GC3,azian2017consensus,yang2016logistic,yang2017encoded,yang2016convolution,yang2016encoded,Salman1,Salman2,Salman3,Salman4,Emina1,Emina2,Virtualization,heterogeneousclusters,GC4,Suhas1,Suhas2,yang2017NIPS,Ramtin1,lee2017multicore,jeongFFT,baharav2018straggler,suh2017matrix,mallick2018rateless, wang2018coded, wang2018fundamental,severinson2018block,ye2018communication,haddadpour2018codes,haddadpour2018straggler,yang2018ISIT,ferdinand2016anytime,ferdinand2018hierarchical,song2017pliable,kosaian2018learning,seth2018bigdata,jeong2018locally} have generated significant interest both within and outside information theory. 

The problem of distributed matrix-matrix multiplication $\bm{S}=\bm{W}\bm{X}$ under storage constraints, \textit{i.e.}, when each node is allowed to store a fixed $\frac{1}{K}$ fraction of each of the matrices $\bm{W}$ and $\bm{X}$, has been of considerable interest in the coded computing community. In \cite{lee2017matrix}, the authors introduced Product codes that use two different Maximum Distance Separable (MDS)~\cite{ryan2009channel} codes to encode the two matrices being multiplied. This work was followed upon by \cite{yu2017polynomial}, where the authors proposed Polynomial codes, that use a polynomial-based encoding scheme for the storage-constrained matrix multiplication problem to achieve a lower \textit{recovery threshold}, \textit{i.e.}, the number of processing nodes to wait for out of a total of $P$ nodes. In our prior work~\cite{allerton17}, we first demonstrated that the recovery threshold for the storage-constrained matrix multiplication problem can be reduced further (beyond Polynomial codes~\cite{yu2017polynomial}) in scaling sense by using novel constructions called \emph{MatDot codes}. When a fixed $\frac{1}{K}$ fraction of each matrix can be stored at each node, MatDot achieves a recovery threshold of $2K-1$ as compared to Polynomial codes which achieve a threshold of $K^2$, albeit at a higher communication cost. In fact, MatDot codes can be proved to be optimal for storage-constrained matrix-matrix multiplication using fundamental limits in \cite{yu2018entangled}. In the same work \cite{allerton17}, we also proposed the PolyDot codes for matrix-matrix multiplication that interpolate between Polynomial codes (for low communication costs) and MatDot codes (for lowest recovery threshold), trading off recovery threshold and communication costs, as illustrated in \Cref{fig:RecThres}.

\begin{figure}
\centering
\includegraphics[height=5cm]{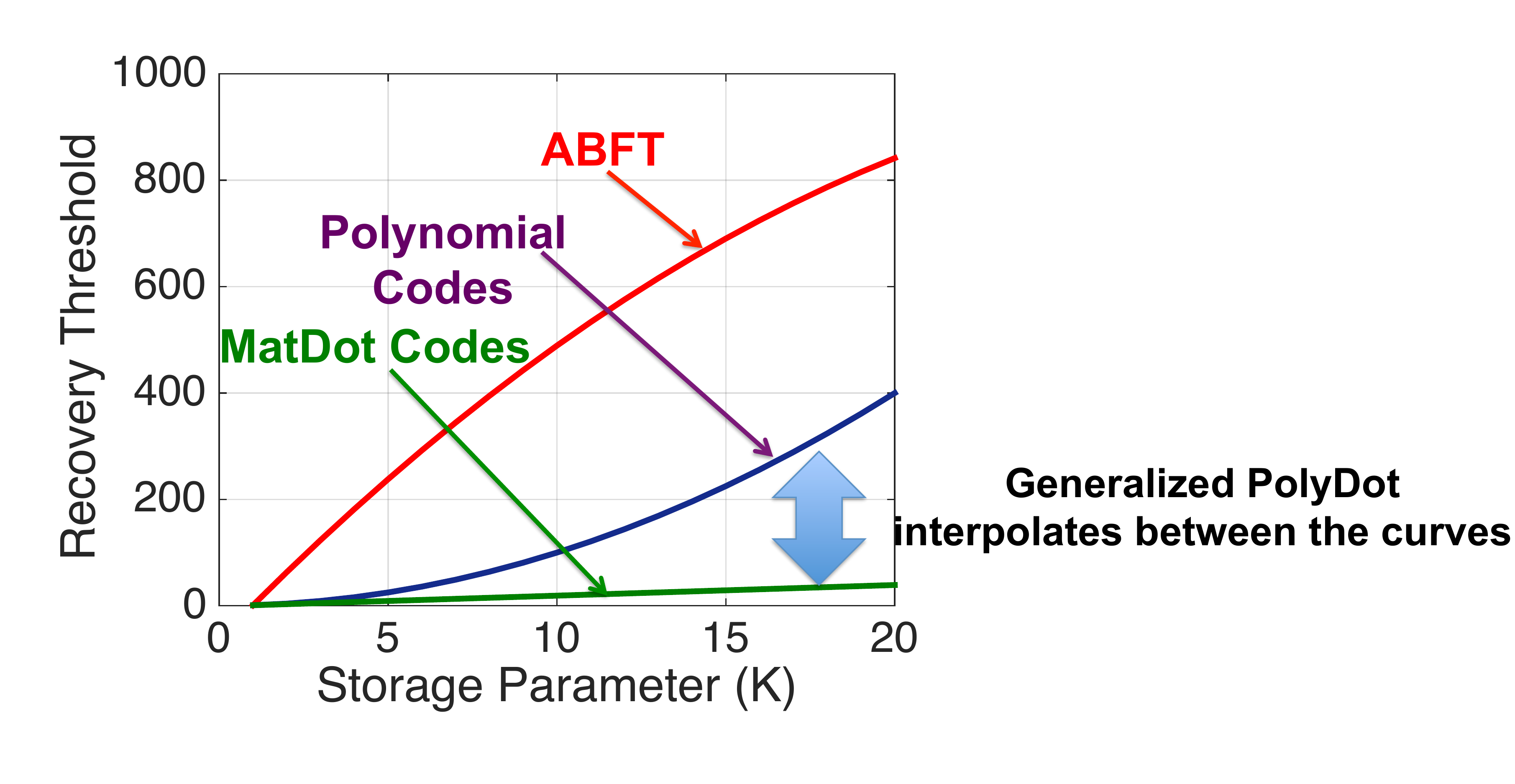}
\caption{An illustration showing how the recovery threshold scales with storage parameter $K$, \textit{i.e.}, when each node can store a fraction $1/K$ of each of the matrices being multiplied. Total number of nodes is $P=1000$. Generalized PolyDot codes interpolates between MatDot codes and Polynomial codes, trading off recovery threshold and communication costs. \vspace{-0.5cm}}
\label{fig:RecThres}
\end{figure}
In this work, we have two main contributions as follows:\\
%\subsection{Contributions:}
 \textbf{1. Generalized PolyDot Codes:}
First we build upon our prior work on PolyDot codes~\cite{allerton17} and propose a new class of codes for distributed matrix-vector and matrix-matrix multiplication called Generalized PolyDot codes. The proposed Generalized PolyDot codes interpolate better between Polynomial codes and MatDot codes, improving the recovery threshold (see \Cref{thm:gen_poly_MV,thm:gen_poly_MM}) by introducing the new idea of ``garbage alignment'' for this problem. Garbage alignment is essentially a clever substitution in the multivariate polynomial introduced in our prior work on PolyDot framework of matrix multiplication~\cite{allerton17} that aligns some of the unwanted coefficients and reduces the number of unknowns during polynomial interpolation, as we elaborate in \Cref{sec:motivating_examples}. The recovery threshold of PolyDot codes are improved by a factor of $2$ using Generalized PolyDot codes by replacing the bijection-based substitution with garbage alignment. We note that, a concurrent work~\cite{yu2018entangled}, that in fact appeared at the same venue as the original publication of this work~\cite{dutta2018DNN2}, also achieve the same recovery threshold as Generalized PolyDot codes through slightly different routes.

%Our next contribution is that we develop a unified coded computing technique for the training of Deep Neural Networks in presence of \emph{soft-errors}, utilizing Generalized PolyDot codes and also proposing novel solutions to three additional challenges. Soft-errors~\cite{geist2016supercomputing, ziegler1996terrestrial} refer to undetected errors, e.g., bit-flips or gate errors in computation, that can corrupt the end result\footnote{Ignoring soft-errors entirely during training of DNNs can severely degrade the accuracy of training, as we experimentally observe in~\cite{dutta2018DNN1}.}. 

\noindent \textbf{2. Coded DNN Training Strategy:} Our next contribution is that we develop a unified coded computing strategy, by appropriately utilizing Generalized PolyDot codes, for the training of \textit{model-parallel}\footnote{\textit{Data parallel} and \textit{model parallel} are two different architectures for DNN training. In data parallelism~\cite{dutta2018slow}, different nodes store and train a different replica of the entire DNN on different pieces of data, and a central parameter server combines inputs from all the nodes to train a central replica of the DNN. In model parallelism, different parts of a single DNN are parallelized across multiple nodes. Coding for data parallel training is examined in \cite{GC1,GC2,GC3,GC4}.} Deep Neural Networks in presence of \emph{soft-errors}. Soft-errors~\cite{geist2016supercomputing, ziegler1996terrestrial} refer to undetected errors, e.g., bit-flips or gate errors in computation, that can corrupt the end result\footnote{Ignoring soft-errors entirely during training of DNNs can severely degrade the accuracy of training, as we experimentally observe in~\cite{dutta2018DNN1}.}.  \textcolor{black}{For this problem, we consider two kinds of error models: an adversarial model and a probabilistic model. Interestingly, as we show in \Cref{thm:main}, under the probabilistic model a real number $(P,Q)$ MDS code can theoretically correct $P-Q-1$ errors with probability $1$ as compared to $\lfloor \frac{P-Q}{2} \rfloor$ for the more conventional, adversarial case. While ideas of correcting more errors than $\frac{P-Q}{2}$ have been prevalent in finite fields (see \cite{guruswami2007algorithmic}), to the best of our knowledge this seems to be the first result of this nature for real number error correction using MDS codes (see \cite{yang2016logistic} for similar results on LDPC codes). 
}

%Our proposed coded DNN strategy builds on our recent work~\cite{dutta2018DNN1} which provides preliminary MDS-code-based coded-computing strategies for DNN training. 

The problem of coded DNN training under soft-errors also has several additional challenges (see~\cite{dutta2018DNN1}), that are all addressed by our proposed unified strategy, as follows: 

%Deep Neural Networks by appropriately using Generalized PolyDot codes, we can develop a unified coded computing technique for the training of 

\begin{itemize}
\item \textbf{Prohibitive overhead of encoding matrix $\bm{W}$ at each iteration:}  Existing works on coded matrix multiplication (e.g.~for computing $\bm{W}_{N\times N} \bm{X}_{N \times B}$ where $N \gg B$) require encoding of both the matrices $\bm{W}$ and $\bm{X}$. Because $N \gg B$, encoding $\bm{W}$ is computationally expensive (complexity $\Theta(N^2P)$) and in fact can be as expensive as the matrix multiplication itself ($\Theta(N^2B)$). Thus, for this problem, the coding techniques in~\cite{dutta2016short,lee2018speeding, lee2016speeding,GC1,GC2,lee2017matrix,allerton17,yu2017polynomial} would be helpful if $\bm{W}$ is known in advance and is fixed over a large number of computations so that the encoding cost is amortized. However, when training DNNs, because the parameter matrices \textit{update} at every iteration, a naive extension of existing techniques would require encoding of parameter matrices at every iteration and thus introduce an undesirable additional overhead of $\Theta(N^2P)$ at every iteration that can no longer be amortized. To address this, we carefully weave Generalized PolyDot codes into the operations of DNN training so that an initial encoding of the weight matrices is maintained \emph{across the updates at each iteration}. To do so, at each iteration each node locally encodes much \textit{smaller matrices consisting of $NB$ elements instead of the large matrix $\bm{W}$ of $N^2$ elements}, adding negligible overhead. In particular, for the case of $B=1$, this simply reduces to encoding \textit{vectors instead of matrices} which is much cheaper in terms of computational complexity.

\item \textbf{Master node acting as a single point of failure:} Because of our focus on soft-errors in this work, if we allow the architecture to use a master node, this node can often become a ``single point of failure'' (considered undesirable in parallel computing literature, e.g.,~\cite{nelson1990fault}). Thus,  we consider a completely decentralized setting, with no master node. In that spirit, our  strategy allows encoding/decoding to be error-prone~\cite{Tay_Bel_68} as well, along with all the other primary steps, namely, matrix multiplication, nonlinear activation, Hadamard product and update. We only introduce two verification steps (that check for decoding errors by exchanging some values among all nodes and comparing) that are extremely low complexity\footnote{The longer a computation, the more is the probability of soft-errors~\cite{li2007memory}. In fact, the number of soft-errors that occur within a time interval is often modelled as a Poisson random variable with mean proportional to the length of the time interval.} and hence, may be assumed to be error-free.

%are considered to be error-free, as these steps are extremely low complexity

\item \textbf{Nonlinear activation between layers:} The nonlinear activation (e.g.~sigmoid, ReLU) between layers also poses a difficulty for coded training because most coding techniques are linear. %We cannot simply encode the input data and decode the final output, e.g., as is done in coded linear regression~\cite{lee2018speeding}. 
To circumvent this issue, we code the linear operations (matrix multiplication of complexity $\Theta(N^2B)$) at each layer separately. Matrix multiplication and update are \textit{the} most critical and complexity-intensive steps in the training of DNNs as compared to other operations such as nonlinear activation or Hadamard product which are of complexity $\Theta(NB)$, and hence are also more likely to have errors. Moreover, as our implementation is decentralized, every node acts as a low-complexity functional replica of the master node, performing encoding/decoding/nonlinear activation/Hadamard product and helping us detect (and if possible correct) errors in all the primary steps, including the nonlinear activation step.
\end{itemize}

\textit{Overview of results in coded DNN Training:} We show (in \Cref{thm:error_tolerance_MV,thm:error_tolerance_MM}) that under both the adversarial and probabilistic error models, the coded DNN strategy using Generalized PolyDot codes improves the error tolerance in scaling sense over competing replication strategy and a preliminary MDS-code-based DNN training strategy\cite{dutta2018DNN1}. Moreover, to demonstrate the utility of the DNN training strategy, we also show in \Cref{thm:complexity_MV,thm:complexity_MM} that the additional overhead due to coding per iteration is negligible as compared to the computational complexity of the local matrix operations at each node as long as the number of processors $P^4=o(N)$. Our coding technique also extends to DNN training with regularization as is done more commonly in practice. %Finally, as an example, we show how the coded DNN training strategy applies to a specific neural network architecture, namely, sparse autoencoder (see \Cref{sec:coded_autoencoder}). 

\subsection{Why are Generalized PolyDot codes a natural choice for coded DNN training?}
For a distributed matrix-matrix multiplication problem $\bm{S}=\bm{W}\bm{X}$, Polynomial codes~\cite{yu2017polynomial} use a horizontal splitting of the first matrix $\bm{W}$ and vertical splitting of the second matrix $\bm{X}$ into $K$ blocks each, thus satisfying the storage constraint (see~\Cref{fig:Poly_Mat_tradeoff}). Alternately, MatDot codes~\cite{allerton17} use a vertical splitting of the first matrix $\bm{W}$ and a horizontal splitting of the second matrix $\bm{X}$ into $K$ blocks each. PolyDot~\cite{allerton17} and Generalized PolyDot codes both incorporate simultaneous vertical and horizontal splitting of both matrices $\bm{W}$ and $\bm{X}$ to interpolate between Polynomial codes and MatDot codes, while satisfying storage constraints. \textbf{Interestingly, as it turns out, the problem of coded computing by splitting the matrix $\bm{W}$ both vertically and horizontally arises naturally in DNN training.}

%\textbf{Why do we need to split both vertically and horizontally in DNN Training?}\\
In the problem of DNN training, at each iteration in any particular layer, the same matrix $\bm{W}$ is required to be multiplied once with another matrix $\bm{X}$ (or vector $\bm{x}$) from the right side in the feedforward stage, \textit{i.e.}, $\bm{W}\bm{X}$, and once with another matrix $\bm{\Delta}^T$ (or vector $\bm{\delta}^T$) from the left side in the backpropagation stage, \textit{i.e.}, $\bm{\Delta}^T\bm{W}$. One would like to use the same encoding on $\bm{W}$ (or sub-matrices of $\bm{W}$) for both the matrix-matrix multiplication because the available storage is limited and storing two encoded sub-matrices of $\bm{W}$ for the two matrix-matrix multiplications is expensive. Now, suppose that we choose to use MatDot codes for $\bm{W}\bm{X}$ and hence stick with vertical partitioning of the first matrix $\bm{W}$. Then, we would have to use Polynomial codes for $\bm{\Delta}^T\bm{W}$ as $\bm{W}$ is the second matrix for this multiplication. This is also pictorially illustrated in \Cref{fig:Poly_Mat_tradeoff}. Thus, we would like to allow for both horizontal and vertical partitioning of the matrix $\bm{W}$ into a grid of $m \times n$ sub-matrices with $mn=K$ for the storage constraint. This would allow us to be able to interpolate between MatDot and Polynomial codes for the two matrix-matrix multiplications $\bm{W}\bm{X}$ and $\bm{\Delta}^T\bm{W}$, so as to achieve a good recovery threshold (and hence error tolerance) for both forward and backward matrix-matrix multiplications.

\begin{figure}
\centering
\subfloat[For the matrix multiplication $\bm{S}=\bm{W}\bm{X}$, Polynomial codes and MatDot codes split the matrix $\bm{W}$ in different ways. We would like to interpolate between the two strategies by allowing for both horizontal and vertical splitting, while satisfying the storage constraint.]{
\includegraphics[height=3.8cm]{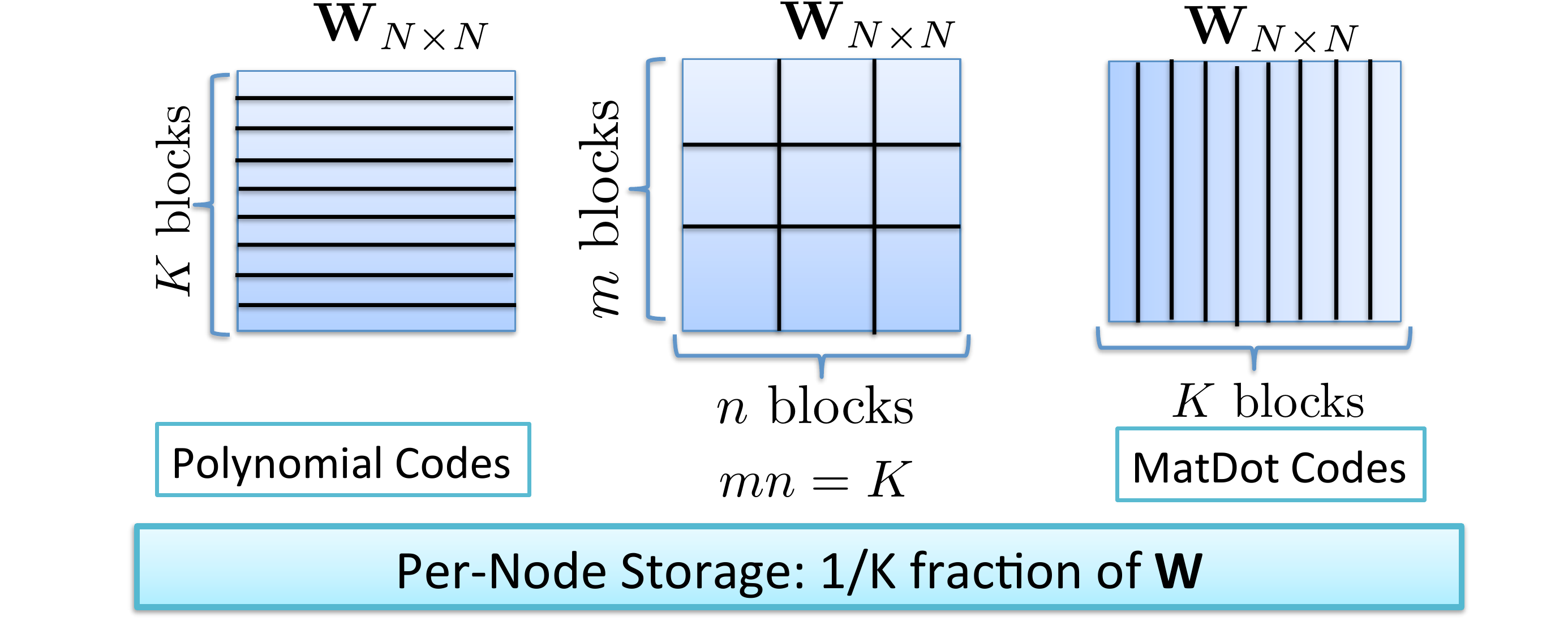} }
\hspace{0.5cm}
\subfloat[The same matrix $\bm{W}$ is multiplied with another matrix once from the right side in feedforward stage ($\bm{W}\bm{X}$) and once from the left side in the backpropagation stage ($\bm{\Delta}^T\bm{W}$) in DNN training. ]{ \includegraphics[height=3.5cm]{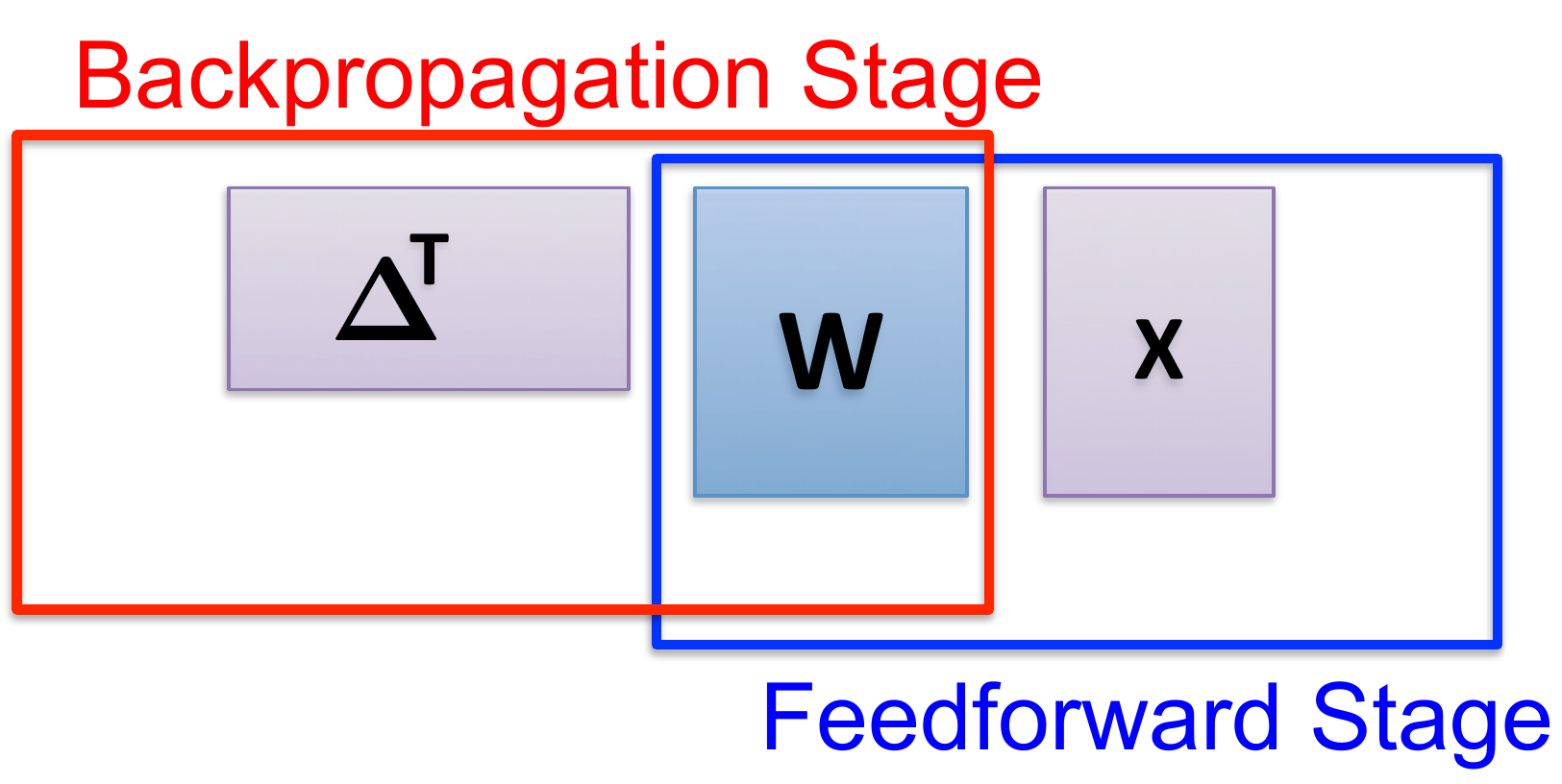} }
\caption{Interpolation between Polynomial codes and MatDot codes:  the former considers only horizontal splitting of the first matrix $\bm{W}$ while the latter considers only the vertical splitting of $\bm{W}$. To meet the storage constraint, we are required to use the same encoding of $\bm{W}$ so that we are able to perform both $\bm{W}\bm{X}$ and $\bm{\Delta}^T\bm{W}$. Suppose, we decide to stick to only vertical partitioning of $\bm{W}$. This would result in MatDot codes for $\bm{W}\bm{X}$ but Polynomial codes for $\bm{\Delta}^T\bm{W}$. Thus, we would like a tradeoff between the two strategies, allowing for both horizontal and vertical partitioning into $m \times n$ blocks, such that $mn=K$ so that the storage constraint is satisfied. }
\label{fig:Poly_Mat_tradeoff}
\end{figure}

We note that a concurrent work~\cite{yu2018entangled}, that in fact appeared at the same conference as the original publication of this work~\cite{dutta2018DNN2}, proposes the coding scheme ``Entangled Polynomial codes'' that achieve the same recovery threshold as Generalized PolyDot codes through slightly different routes for the problem of distributed matrix multiplication under storage constraints allowing for both vertical and horizontal splitting. In this expanded version of \cite{dutta2018DNN2} we show that the Generalized PolyDot codes, that were originally proposed for matrix-vector products in \cite{dutta2018DNN2}, naturally extend to matrix-matrix products as well. This is because Generalized PolyDot codes are simply a clever substitution in a multivariate polynomial in our prior PolyDot framework for matrix-matrix multiplication~\cite{allerton17} (this work~\cite{allerton17} precedes both \cite{yu2018entangled} and \cite{dutta2018DNN2}), so that some unwanted coefficients of the polynomial align with each other, reducing the degree of the polynomial and hence number of unknowns in polynomial interpolation, thereby improving the recovery threshold by a factor of $2$. More importantly, \textcolor{black}{our work also introduces a novel result in real-number error correction} and is also the first line of work that considers the problem of coded DNN training.% for the case of mini-batch size $B=1$ in \cite{dutta2018DNN2} and extends it to mini-batch size $B>1$ in this expanded version. %The problem of coded DNN training naturally leads to the problem of coded ``post'' and ``pre'' multiplication of the same weight matrix $\bm{W}$ with two different vectors (for $B=1$) or two different matrices (for $B>1$),~e.g., $\bm{W}\bm{X}$ and $\bm{\Delta}^T\bm{W}$, in a distributed setting as we elaborate in the subsequent sections. 

\subsection{Organization.}
\textcolor{black}{The rest of the paper is organized as follows. We introduce our two problem formulations, namely, (1) coded matrix multiplication; and (2) coded DNN, in \Cref{sec:problem_formulation}. First, we address Problem $1$, \textit{i.e.}, the coded matrix multiplication problem. For this problem, we introduce some motivating examples in \Cref{sec:motivating_examples} and then describe the Generalized PolyDot codes in detail in \Cref{sec:gen_poly_codes}. Then moving on to Problem $2$, we first elaborate upon some modelling assumptions, e.g., the two error models for the coded DNN problem in \Cref{sec:background}, which leads to a novel result on real number error correction. This is followed by possible solutions for the coded DNN problem for $B=1$ (existing strategies in \Cref{sec:existing_strategies}, our proposed strategy in \Cref{sec:coded_DNN_MV}). We formally analyze the error tolerance and the computation and communication costs of our strategy in \Cref{sec:comparison_DNN_MV}. Next, we extend the proposed strategy for the case of $B>1$ in \Cref{sec:extension}. Finally, in \Cref{sec:coded_autoencoder}, we discuss an application of our strategy to more recent but closely related neural network architectures, namely, sparse autoencoders.}
%\textcolor{blue}{SD: should we have a para called "the rest of the paper is organized as follows...." I had removed it to shorten the length of the intro. What do you both think? -- PG: I think we should still have it, given that there are 10 technical sections.}

\section{System Models and Problem Formulations for the two problems}
\label{sec:problem_formulation}
%\textcolor{blue}{SD: Please read this section carefully. Should we keep Problem Formulation 2 here as it is now, or should we remove it from here altogether and only introduce it after we have already talked about the steps of DNN training? But then, it would need to be introduced twice for B=1 and $B>1$. We also have a goals paragraph in these two cases which is also marked in red right now. Let me know what you think. PG: I think that is okay, but I find the coded DNN problem formulation somewhat scattered and vague. It requires a discussion. Ideally, problem formulations should be crisp, precise, and more structured. Perhaps it is better to state the problem first, and then motivate how we arrive at the formulation? }

%{\color{black}{In this work, we have two problem formulations. First, we consider the problem of general coded matrix multiplication in a distributed setting. 

\subsection{Problem $1$ (Coded Matrix Multiplication).}

\textbf{System Model:} We assume that there is a centralized, reliable master node and $P$ memory-constrained worker nodes that may be unreliable. The master node allocates computational tasks to the worker nodes. The worker nodes perform their computations in parallel and send their outputs back to the master node. Outputs of some worker nodes may be modeled as erasures (e.g., because of straggling or faults). The master node gathers the outputs of the worker nodes (possibly a subset), and uses them to compute the final result. It is desirable that the computational overhead of the master node as well as the communication costs should be smaller than the local computational complexity of each worker node after parallelization. 

\textbf{Problem Formulation:} Compute distributed matrix multiplication $\bm{S}=\bm{W}\bm{X}$ using $P$ worker nodes prone to erasures such that each node can store only a fixed fraction $\frac{1}{K}$ of matrix $\bm{W}$ and $\frac{1}{K'}$ of matrix $\bm{X}$. For this problem, our goal is to minimize the \emph{erasure recovery threshold}, \textit{i.e.}, the number of nodes that the decoder has to wait for out of the total $P$ nodes, to be able to compute the entire final result. We are also interested in studying tradeoffs between communication costs and recovery threshold.
We assume that both $K$ and $K'$ are less than $P$.

\begin{rem}
When $K=K'$, our prior work on MatDot codes~\cite{allerton17} achieves the optimal recovery threshold of $2K-1$ under the storage constraint, albeit at a high communication cost. Here, we propose a more general encoding strategy for the problem of coded matrix multiplication under fixed storage constraints that also allows us to tradeoff between recovery threshold and communication costs, with MatDot codes and Polynomial codes being two special cases.
\end{rem}

\begin{rem}
%Note that, while we assume $\bm{W}$ and $\bm{X}$ to be an $N \times N$ and $N \times B$ matrix respectively, no specific assumptions are required to be made on the relative dimensions of $\bm{W}$ and $\bm{X}$. %, and the problem formulation applies to any matrix-matrix multiplication. 
Note that no specific assumptions are made on the relative dimensions of $\bm{W}$ and $\bm{X}$ for this problem formulation. %, and the problem formulation applies to any matrix-matrix multiplication. 
The matrices $\bm{W}$ and $\bm{X}$ do not necessarily have to be square matrices either.
\end{rem}

\subsection{Problem $2$ (Coded DNN).}

\textbf{Background}: 
%Next, we would like to utilize Generalized PolyDot codes to design a unified coded neural network training strategy, that has negligible additional overhead. 
Before introducing our system model and problem formulation, we first introduce the main computational steps at each iteration of classical (error-free) DNN training using Stochastic Gradient Descent (SGD)\footnote{As a first step in this direction of coded neural networks, we assume that the training is performed using vanilla SGD. As a future work, we plan to extend these coding ideas to other training algorithms~\cite{ruder2016overview} such as momentum SGD, Adam etc.}. For a more detailed introduction, we refer to the seminal work \cite{rumelhart1988learning} or \Cref{appendix:DNN_background}. %, but to understand the problem formulation, the reader is only required to understand the main computational operations.

A DNN consists of $L$ weight matrices (also called parameter matrices), $\bm{W}^l$, of dimensions $N_l \times N_{l-1}$ where $l$ denotes the layer-index and $N_l$ denotes the number of neurons at layer $l$, for $l=1,2,\ldots,L$. These weight matrices 
%denote the interconnections between two consecutive layers, and 
are updated at each iteration of training based on a ``mini-batch'' of $B$ data-points and their labels. Because we are primarily interested in large models that require parallelization across multiple nodes, we assume $N_l\gg B$ for all $l$. For simplicity of presentation, we also assume $N_l=N$ for all $l$, and thus $N \gg B$. DNN training has $3$ stages in each iteration, (i) the feedforward stage, (ii) the backpropagation stage, and (iii) the update stage. As the operations are similar and repeat across all the layers (see \Cref{appendix:DNN_background}), we limit our discussion to layer $l$. The operations during feedforward stage (see \Cref{fig:DNN_step1}) can be summarized as:

\noindent From layer $l = 1$ to $L$, 
\begin{itemize}[noitemsep,topsep=0pt,leftmargin=0.2cm]
\item $[$Step $O1]$ Compute matrix-matrix product $\bm{S}^l=\bm{W}^l\bm{X}^l$ where $\bm{X}^l$ is a matrix of dimension $N \times B$. 
\item $[$Step $C1]$ Compute $\bm{X}^{(l+1)}=f(\bm{S}^l)$ where $f(\cdot)$ is a nonlinear activation function applied element-wise.
\end{itemize}
At the last layer ($l=L$), the backpropagated error matrix is generated by accessing the true label matrix from memory and the estimated label matrix as output of last layer (see \Cref{fig:DNN_step2}). Then, the backpropagated error propagates from layer $L$ to $1$ (see \Cref{fig:DNN_step3}), also updating the weight matrices at every layer alongside (see \Cref{fig:DNN_step4}). The operations for the backpropagation stage can be summarized as:\\
\noindent From layer $l = L$ to $1$,
\begin{itemize}[noitemsep,topsep=0pt,leftmargin=0.2cm]
\item $[$Step $O2]$ Compute matrix-matrix product $[\bm{C}^l]^T=[\bm{\Delta}^l]^T\bm{W}^l$ where $[\bm{\Delta}^l]^T$ is a matrix of dimensions $B \times N$. 
\item $[$Step $C2]$ Compute Hadamard product $[\bm{\Delta}^{l-1}]^T = [\bm{C}^l]^T \circ g([\bm{X}^l]^T)$ where $g(\cdot)$ is another function applied element-wise (more specifically $\frac{df(u)}{du}=g(f(u))$ for the chosen nonlinear activation function $f(u)$) and the Hadamard product ``$\circ$'' between two matrices of the same dimensions is another matrix of those dimensions, such that its elements are element-wise products of the corresponding elements of the operands.

\end{itemize}
Finally, the step in the Update stage is as follows: \\
For all layers $l$,
\begin{itemize}[noitemsep,topsep=0pt,leftmargin=0.2cm]
\item  $[$Step $O3]$ Update matrix $\bm{W}^l$ as follows: $\bm{W}^l \leftarrow \bm{W}^l + \eta \bm{\Delta}^l[\bm{X}^l]^T$ where $\eta$ is the learning rate. Sometimes a regularization term is added with the loss function in DNN training (elaborated in \Cref{subsec:app_reg}). For L2 regularization, the update rule is modified as: $\bm{W}^l \leftarrow (1-\eta \lambda)\bm{W}^l + \eta \bm{\delta}^l[\bm{x}^l]^T$ where $\eta$ is the learning rate and $\frac{\lambda}{2}$ is the regularization constant.
%Note that, sometimes a regularization term is added with the loss function in DNN training (elaborated in \Cref{subsec:app_reg}). For L2 regularization, the update rule is modified as: $\bm{W}^l \leftarrow (1-\eta \lambda)\bm{W}^l + \eta \bm{\delta}^l[\bm{x}^l]^T$ where $\eta$ is the learning rate and $\frac{\lambda}{2}$ is the regularization constant.
\end{itemize}

\begin{figure*}[!ht]
\centering
\subfloat[Feedforward stage]{\fbox{\includegraphics[height=4.5cm]{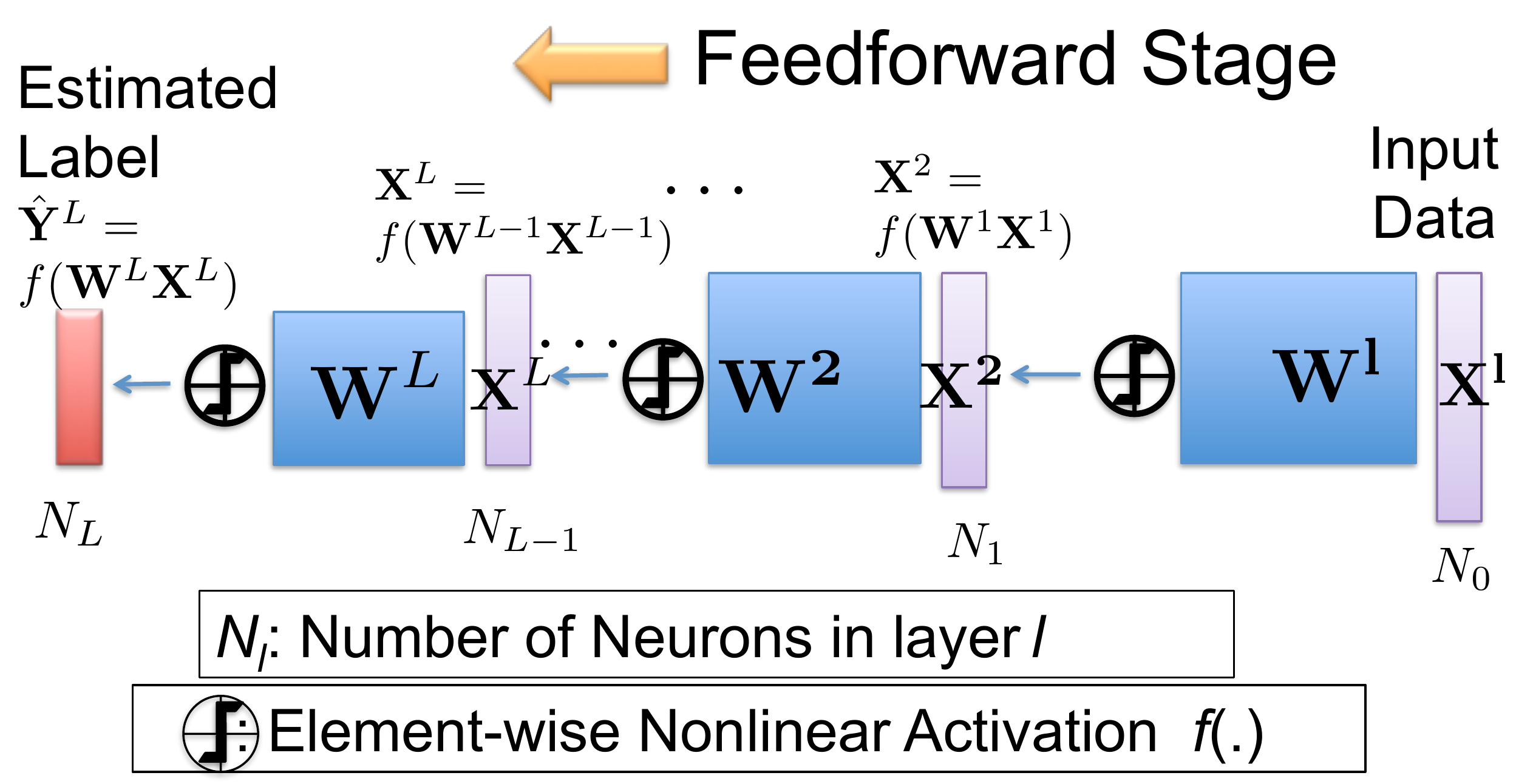}
\label{fig:DNN_step1}}} \quad
\subfloat[Transition at last layer]{\fbox{\includegraphics[height=4.5cm]{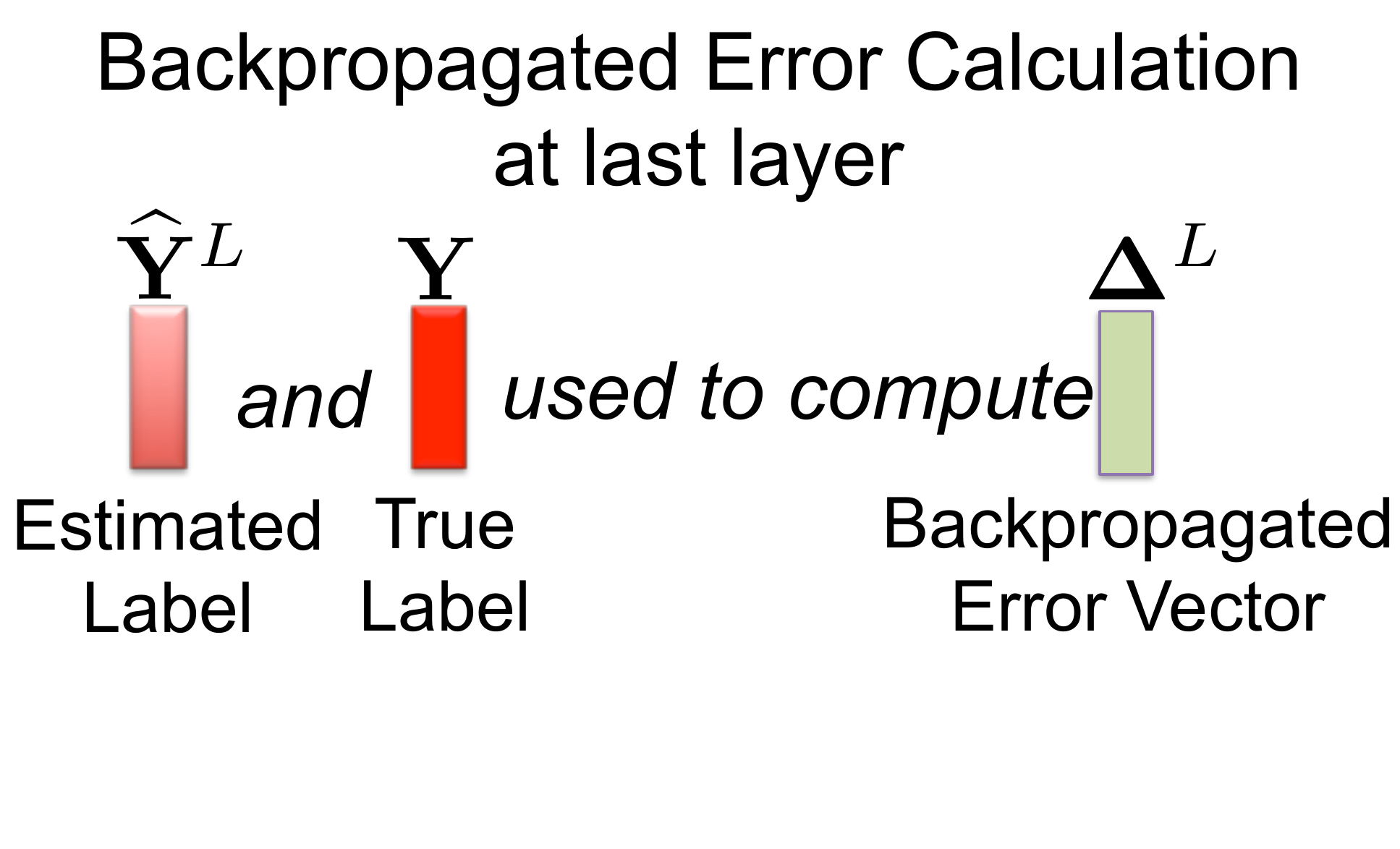}
\label{fig:DNN_step2}}} \\
\subfloat[Backpropagation stage]{\fbox{\includegraphics[height=3.8cm]{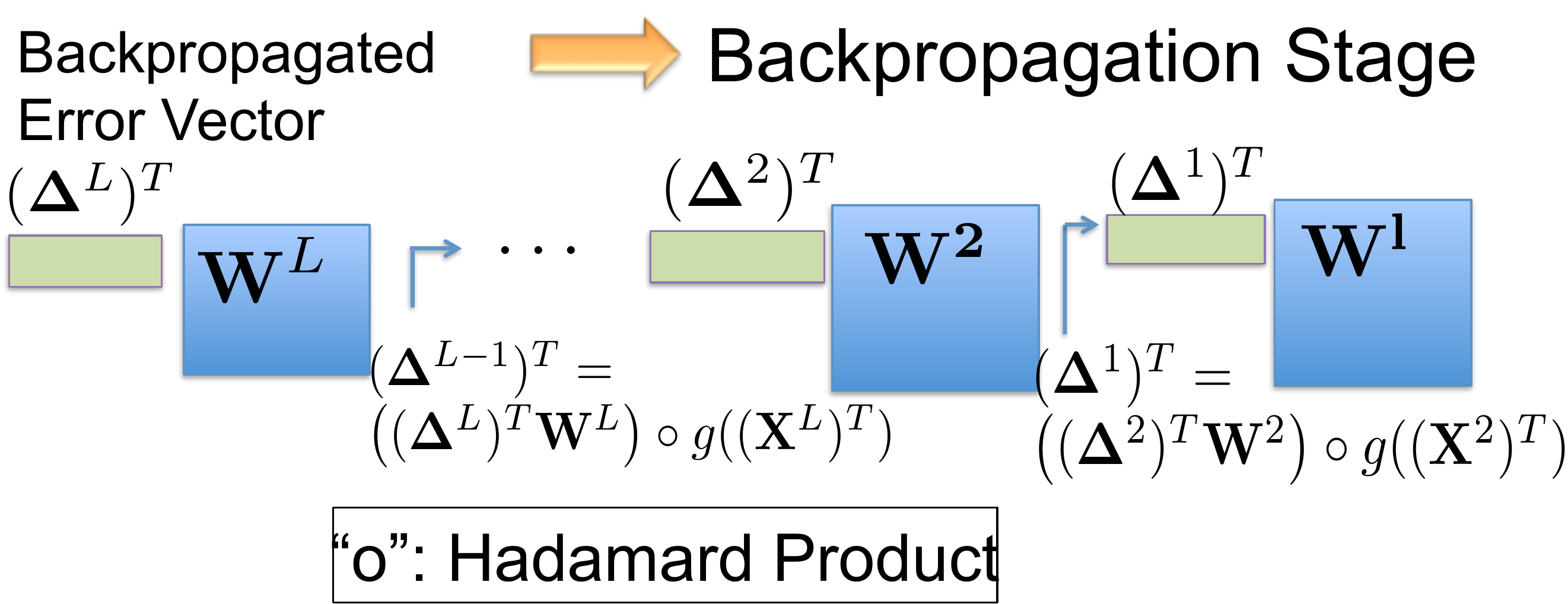}
\label{fig:DNN_step3}}} \quad
\subfloat[Update stage]
{\fbox{\includegraphics[height=3.8cm]{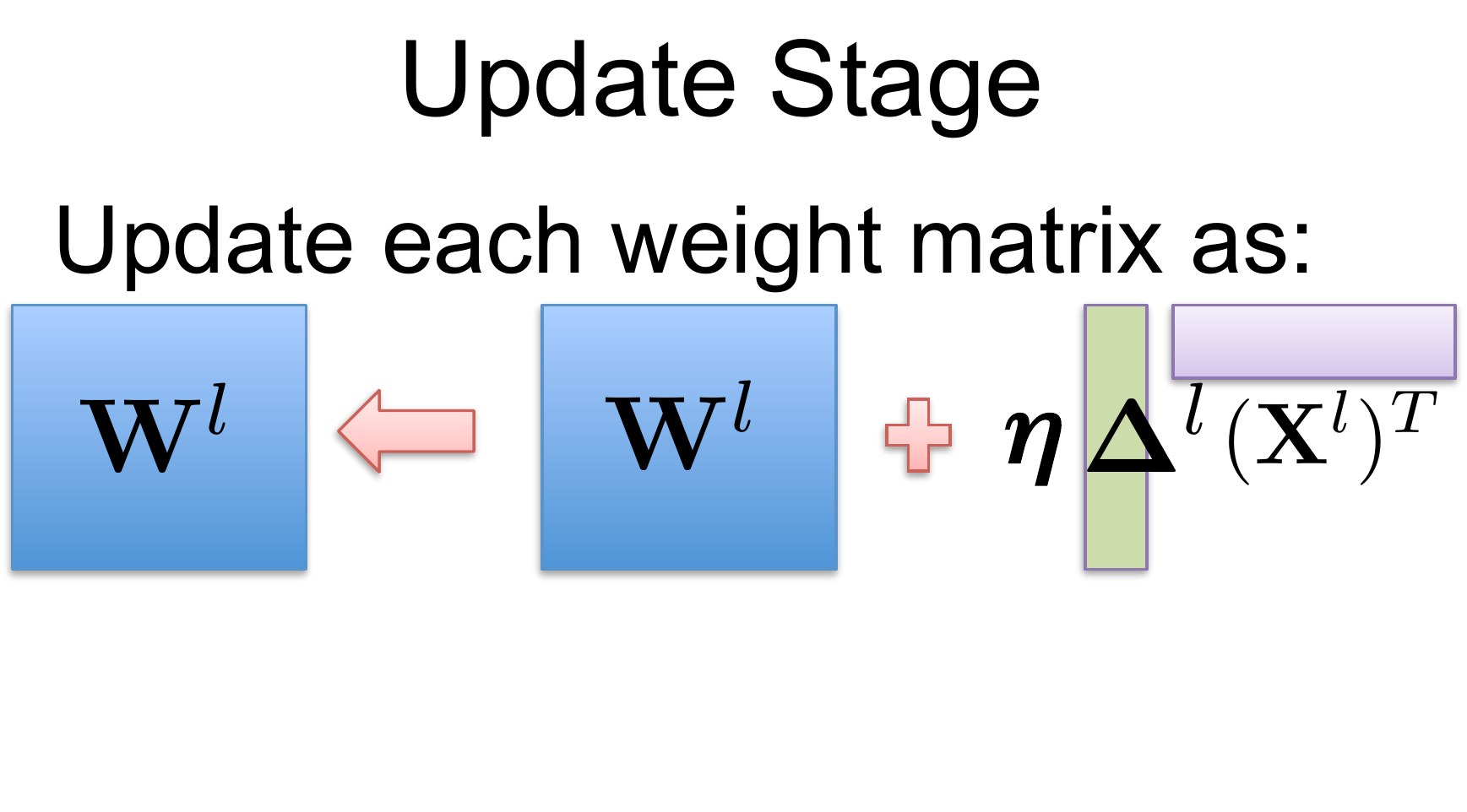}
\label{fig:DNN_step4}}}
\caption{DNN training for mini-batch size $B=1$: (a) Feedforward stage: The input data $\bm{X}^1$ is passed forward through all the layers (a matrix-multiplication followed by an element-wise nonlinear activation function $f(\cdot)$ at each layer) producing an estimate of the label $(\widehat{\bm{Y}^L})$. (b) Transition: The backpropagated error for the last layer ($\bm{\Delta}^L$) is calculated using the estimated $(\widehat{\bm{Y}}^L)$ and true label $(\bm{Y})$. (c) Backpropagation stage:  The backpropagated error $\bm{\Delta}^L$ propagates backward across the layers (a matrix-multiplication followed by a Hadamard product), generating the backpropagated error vector for every layer recursively from the previous layer. (d) Update stage: Each layer also updates itself using its backpropagated error $\bm{\Delta}^l$ and its own input $\bm{X}^l$. \label{fig:DNN_training}}% \vspace{-0.6cm}
\end{figure*}

\textbf{System model:} We assume that there is a decentralized system of $P$ memory-constrained nodes that are prone to \emph{soft-errors} during computation. Soft-errors cause the node to produce entirely garbage outputs. We introduce two error-models in \Cref{sec:background}. \textcolor{black}{Under Error-Model $1$, which is an adversarial model, soft-errors only occur during the most computationally intensive operations, \textit{i.e.}, steps $O1$, $O2$ and $O3$ but the number of erroneous nodes is bounded. Under Error-Model $2$, which is a probabilistic model, soft-errors can occur during the steps $O1$, $O2$, $O3$, $C1$, $C2$ as well as encoding/decoding. There is no upper bound on the number of erroneous nodes or any specific assumption on where they can occur, but when errors occur, the output of that node is assumed to have an additive continuous-valued random noise. We elaborate upon these two models in \Cref{sec:background}. }

There is no single reliable master node, and all nodes can be unreliable. After an initial error-free setup (a pre-processing step\footnote{We assume that the cost of initial setup or pre-processing before the start of training is amortized across large number of iterations.}), all the nodes may begin their computational tasks in parallel, and proceed with the iterations of training. These memory-constrained and unreliable nodes may also communicate with each other, as required by the DNN training algorithm, or even perform some functions of a master node, such as, gathering outputs from other nodes, encoding/decoding etc., while respecting their storage constraints. After completing the required number of iterations, the final computational results (in this case, the $L$ trained parameter matrices) remain stored locally across the multiple nodes in a distributed manner. The data set and their labels are stored in a separate reliable memory unit and are communicated to all the nodes at each iteration when they access it. 

%but is expensive to access than their local memory, and hence is not used to store parameter matrices during the computation.

%It is desirable that the communication costs as well as additional local computational overhead at each node in each iteration should be smaller than the local computational complexity of the steps $O1$, $O2$ and $O3$ at each node after parallelization.

%\textbf{Desirable Parallelization Schemes:} We are interested in model-parallel architectures that parallelize each layer across $P$ error-prone nodes (that can be reused across layers) because the nodes cannot store the entire matrix $\bm{W}^l$. Moreover, we would like our strategy to be fully decentralized, \textit{i.e.}, not assuming presence of an error-free master node for encoding/decoding. As the steps $O1$, $O2$ and $O3$ are the most computationally intensive ($\Theta(N^2B)$) steps at each layer, we restrict ourselves to schemes where these three steps for each layer are parallelized across the $P$ nodes\footnote{The steps $C1$ and $C2$ are lower in computational complexity, \textit{i.e.}, $\Theta(NB)$ which is lower in scaling sense as compared to $\Theta(N^2B)$ and hence may or may not be parallelized across multiple nodes.}. 

\textbf{Problem formulation}: Design an error-resilient DNN training strategy using $P$ nodes, such that:
\begin{itemize}
\item Each node can store only a $\frac{1}{K}$ fraction of each weight matrix $\bm{W}^l$ for each layer. Thus, for each layer, there is a per-node storage of $\frac{N^2}{K} + o \left( \frac{N^2}{K}\right)$ where the small additional storage of $o \left( \frac{N^2}{K}\right)$ is to store additional quantities that are negligible in storage size as compared to the $\frac{1}{K}$ fraction of $\bm{W}^l$, e.g., matrices $\bm{S}^l$, $\bm{X}^l$, $\bm{C}^l$ and $\bm{\Delta}^l$ which are all of dimensions $N \times B$ where $B \ll N$ by our assumption. In particular, we assume that $B=o\left(\frac{N}{K} \right)$ to satisfy this storage constraint.
%\item We are able to parallelize the matrix multiplications $O1$, $O2$ as well as the update $O3$, which are the three most computationally intensive steps, across the $P$ nodes at each layer in each iteration.
\item All additional overheads per node including the communication complexity as well as the computational complexity of encoding/decoding \emph{in an error-free iteration}\footnote{For error-resilience, some operations such as error detection, encoding etc. are required to be performed at each iteration even though most iterations of training are actually error-free. The purpose of this assumption is only to ensure that the additional overheads introduced for error-resilience in these error-free iterations is negligible. In the few iterations where errors occur and are detected, all error-resilient strategies incur some extra costs, such as, possibly regenerating the erroneous nodes or reverting to the last checkpoint etc, that is not being compared here.} should be negligible in scaling sense as compared to the computational complexity of the local matrix multiplications and updates, \textit{i.e.}, steps $O1$, $O2$ and $O3$ \emph{at each node after parallelization}. 
\end{itemize}

%Most iterations are error-free with no checkpointing even though encoding and error-detection are performed at every iteration. The purpose of this assumption is only to ensure that the additional overheads in such an error-free iteration are not too high as compared to the local computations. Our focus is not on characterizing the total expected time for training including periodic checkpointing or the extra costs in erroneous iterations. We assume that checkpointing is performed only after large number of iterations and hence the cost of checkpointing to disk is also amortized across iterations. A separate analysis focusing on characterizing the total expected time for training DNN including periodic checkpointing costs as well as the extra costs in an erroneous iteration, where errors occur with a fixed probability, is performed in~\cite{dutta2018DNN1}.

Our goal is to achieve maximum error tolerance in the steps of DNN training, \textit{i.e.}, maximize the number of errors that can be corrected during training in a single iteration under both the error models. Under Error model $1$ (see \Cref{sec:background}), the number of erroneous nodes are bounded and we require the number of errors that can be corrected in any step to be higher than the maximum number of errors that can occur in that step in the worst-case. Under Error model $2$ (see \Cref{sec:background}), which allows for unbounded number of errors but their values being drawn from a continuous-valued distribution, in addition to maximizing the number of errors we can correct, our goal is also to be able to detect the occurrence of errors even when they are too many to be corrected.

%For the Error model $1$, where errors occur only in the long and most computationally intensive steps ($O1$, $O2$ and $O3$) but are bounded. 

%both worst case and probabilistic error models that we elaborate in . focusing on detecting and wherever possible correcting errors in all the primary steps and also in the encoding and decoding. 

For this problem formulation, we also assume that $N \gg P$, the number of parallel nodes.

\begin{rem}
In practice, the nodes also perform ``checkpointing'' at large intervals, i.e., storing the entire DNN (the $L$ parameter matrices) at a reliable disk from which the values can be retrieved when errors cannot be corrected. However, checkpointing is very expensive, even though it is assumed to be error-free, as the nodes have to access the disk, and thus can only be performed at large time intervals.
\end{rem}

%assume that $N$ is much greater that the mini-batch size $B$ ($=o(\frac{N}{K})$) and also the number of nodes $P$. However, no specific assumptions between the relative scaling of $B$ and $P$ are made.

%Lastly, we would also like to have a fully decentralized implementation with no single error-free master node.  %Note that for $B=1$ when the problem reduces to a matrix-vector multiplication problem from matrix-matrix.

%Note that, the matrix $\bm{W}$ is updated at each iteration and if we were to encode the matrix $\bm{W}$ afresh at each iteration, the latter requirement is violated as the cost of encoding matrices at each iteration is very high.

\begin{rem}
Existing coded computing techniques require encoding of the matrices being multiplied. If we were to extend them naively to the problem of coded DNN training, then the matrix $\bm{W}^l$ has to be encoded afresh in each iteration because the matrix $\bm{W}^l$ is updated in each iteration during step $O3$. Encoding matrix $\bm{W}^l$ with non-sparse codes in each iteration has a huge computational cost, and is thus a major challenge for the problem of coded DNN training (violates the last criterion in problem formulation). Thus, one key contribution in this work is in proposing a unified strategy such that the matrix $\bm{W}^l$, once initially encoded, remains encoded during updates in each iteration, obviating the need to encode afresh.
\end{rem}

\noindent \textbf{Matrix Partitioning Notations:} Throughout this paper, matrices and vectors are denoted in bold font. When we block-partition a matrix $\bm{A}$ both row-wise and column-wise into $m \times n$ equal-sized blocks for any integers $m$ and $n$, we let $\bm{A}_{i,j}$ denote the block with row index $i$ and column index $j$, where $i=0,1,\dots,m-1$ and $j=0,1,\dots n-1$. Similarly, when we partition a vector $\bm{a}$ into $m$ equal parts for any integer $m$, the sub-vectors are denoted as $\bm{a}_0,\bm{a}_1,\dots,\bm{a}_{n-1}$ respectively. E.g., for $m=n=2$, the partitioning is as follows:
$$
\bm{A}= \begin{bmatrix}
            \bm{A}_{0,0} & \bm{A}_{0,1} \\
            \bm{A}_{1,0} & \bm{A}_{1,1}
          \end{bmatrix} \text{  and  }
          \bm{a}= \begin{bmatrix} \bm{a}_0 \\\bm{a}_1 \end{bmatrix}.
$$
%These notations apply to all matrices and vectors used in this paper. 
Also, note that when a matrix $\bm{A}$ is split only horizontally or only vertically into $m$ blocks, we denote the sub-matrices as $\bm{A}_{i,:}$ or $\bm{A}_{:,i}$ respectively for $i=0,1,\ldots,m-1$.

%We will again return to the problem of coded DNNs later from \Cref{sec:background} onwards. In the next section, we begin with some motivating examples for our Problem Formulation $1$ (General Coded Matrix Multiplication), and then move on to our proposed coding strategy for this problem in \Cref{sec:gen_poly_codes}.

\section{Motivating Example for Coded Matrix Multiplication}
\label{sec:motivating_examples}
In this section, we introduce a motivating example to first understand both Polynomial codes~\cite{yu2017polynomial} and MatDot codes~\cite{allerton17}, that have been proposed for distributed matrix-matrix multiplication and then, we introduce the key idea of garbage alignment in the PolyDot framework for matrix-matrix multiplication~\cite{allerton17}.  We choose $K=K'=4$. 

The main intuition behind these polynomial-based strategies is to carefully design two polynomials $\widetilde{\bm{W}}(v)$ and $\widetilde{\bm{X}}(v)$ (may also be multivariate) whose coefficients are sub-matrices of $\bm{W}$ and $\bm{X}$ respectively, such that, different sub-matrices of the final result $\bm{S}(=\bm{W}\bm{X})$ show up as coefficients of the product $\widetilde{\bm{W}}(v)\widetilde{\bm{X}}(v)$. 
%Let us denote the product $\widetilde{\bm{W}}(v)\widetilde{\bm{X}}(v)$ as $\widetilde{\bm{S}}(v)$. %Also, let $b_1,b_2,\ldots,b_P$ be $P$ distinct values of the variable $v$ in a field $\mathbb{F}$. Note that $\mathbb{F}$ may be finite or infinite, but the number of elements in the field should be at least $P$ so that it at least contains $P$ distinct values. 
The $p$-th processing node stores a unique evaluation of $\widetilde{\bm{W}}(v)$ and $\widetilde{\bm{X}}(v)$ at $v=b_p$ for $p=0,1,\ldots,P-1$, and then computes the product $\widetilde{\bm{W}}(b_p)\widetilde{\bm{X}}(b_p)$, essentially producing a unique evaluation of the polynomial $\widetilde{\bm{S}}(v):=\widetilde{\bm{W}}(v)\widetilde{\bm{X}}(v)$ at $v=b_p$. From a sufficient number of unique evaluations of the polynomial $\widetilde{\bm{S}}(v)$, the decoder is able to interpolate back all its coefficients, which include the different sub-matrices of the result $\bm{S}$. Thus, the recovery threshold, \textit{i.e.}, the number of nodes to wait for out of $P$ is $(1+ Degree(\widetilde{\bm{S}}(v)))$, which is the total number of unknowns in the interpolation of the polynomial $\widetilde{\bm{S}}(v)$.

%For these polynomial-based strategies, the key idea is to carefully design two polynomials $\widetilde{\bm{W}}(v)$ and $\widetilde{\bm{X}}(v) $ whose coefficients are sub-matrices of $\bm{W}$ and $\bm{X}$ respectively, satisfying the storage constraint, such that, sub-matrices of the result $\bm{S}(=\bm{W}\bm{X})$ shows up as coefficients of the product $\widetilde{\bm{W}}(v)\widetilde{\bm{X}}(v)$. Let us denote $\widetilde{\bm{W}}(v)\widetilde{\bm{X}}(v)$ as $\widetilde{\bm{S}}(v)$. 
\textbf{Matrix Multiplication using Polynomial Codes:}
In Polynomial codes~\cite{yu2017polynomial}, the first matrix $\bm{W}$ is split horizontally and the second matrix $\bm{X}$ is split vertically into $K (=4)$ blocks each, as follows:
\begin{align*}
&\bm{W}=\begin{bmatrix} \bm{W}_{0,:} \\ \bm{W}_{1,:} \\ \bm{W}_{2,:} \\ \bm{W}_{3,:}  \end{bmatrix} \text{ and}\quad \bm{X}= \begin{bmatrix}  \bm{X}_{:,0} & \bm{X}_{:,1} & \bm{X}_{:,2} & \bm{X}_{:,3} \end{bmatrix}. 
\end{align*}
%\begin{align*}
%&\bm{W}=\begin{bmatrix} \bm{W}_{0,:} \\ \bm{W}_{1,:} \\ \vdots \\ \bm{W}_{K-1,:} \end{bmatrix} \qquad \bm{X}= \begin{bmatrix}  %\bm{X}_{:,0} & \bm{X}_{1,:} & \ldots & \bm{X}_{:,K-1} \end{bmatrix} \\
%\end{align*}
Note that, the resultant matrix $\bm{S}=\bm{W}\bm{X}$ therefore takes the following form: 
\begin{align*}& \bm{S} = \begin{bmatrix} \bm{S}_{0,0} & \bm{S}_{0,1} & \bm{S}_{0,2} & \bm{S}_{0,3}\\
\bm{S}_{1,0} &  \bm{S}_{1,1}&  \bm{S}_{1,2} & \bm{S}_{1,3} \\
\bm{S}_{2,0} &  \bm{S}_{2,1}&  \bm{S}_{2,2} & \bm{S}_{2,3} \\
\bm{S}_{3,0} &  \bm{S}_{3,1}&  \bm{S}_{3,2} & \bm{S}_{3,3} \\ \end{bmatrix}
 = \begin{bmatrix}
 \bm{W}_{0,:}\bm{X}_{:,0} & \bm{W}_{0,:}\bm{X}_{:,1} & \bm{W}_{0,:}\bm{X}_{:,2} & \bm{W}_{0,:}\bm{X}_{:,3}\\
\bm{W}_{1,:}\bm{X}_{:,0} &  \bm{W}_{1,:}\bm{X}_{:,1}&  \bm{W}_{1,:}\bm{X}_{:,2} & \bm{W}_{1,:}\bm{X}_{:,3} \\
\bm{W}_{2,:}\bm{X}_{:,0} &  \bm{W}_{2,:}\bm{X}_{:,1}&  \bm{W}_{2,:}\bm{X}_{:,2} & \bm{W}_{2,:}\bm{X}_{:,3} \\
\bm{W}_{3,:}\bm{X}_{:,0} &  \bm{W}_{3,:}\bm{X}_{:,1}&  \bm{W}_{3,:}\bm{X}_{:,2} & \bm{W}_{3,:}\bm{X}_{:,3} 
 \end{bmatrix}.
\end{align*}
%\begin{align*}& \bm{S} = \begin{bmatrix} \bm{S}_{0,0} & \bm{S}_{0,1} & \ldots & \bm{S}_{0,K-1}\\
%\bm{S}_{1,0} &  \bm{S}_{1,1}&  \ldots & \bm{S}_{1,K-1} \\
%\vdots  & \vdots & \ddots&  \vdots \\
%\bm{S}_{K-1,0} &  \bm{S}_{K-1,1}&  \ldots &  \bm{S}_{K-1,K-1} \end{bmatrix}
%= \begin{bmatrix}
%\bm{W}_{0,:}\bm{X}_{:,0} & \bm{W}_{0,:}\bm{X}_{:,1} & \ldots & \bm{W}_{0,:}\bm{X}_{:,K-1}\\
%\bm{W}_{1,:}\bm{X}_{:,0} &  \bm{W}_{1,:}\bm{X}_{:,1}&  \ldots & \bm{W}_{1,:}\bm{X}_{:,K-1} \\
%\vdots  & \vdots & \ddots&  \vdots \\
%\bm{W}_{K-1,:}\bm{X}_{:,0} &  \bm{W}_{K-1,:}\bm{X}_{:,1}&  \ldots &  \bm{W}_{K-1,:}\bm{X}_{:,K-1}
% \end{bmatrix}
%\end{align*}
For this coding strategy, two polynomials are chosen as follows:
\begin{align*}
& \widetilde{\bm{W}}(v) = \sum_{i=0}^{K-1}\bm{W}_{i,:} v^{i} = \bm{W}_{0,:} + \bm{W}_{1,:}v + \bm{W}_{2,:}v^{2} + \bm{W}_{3,:}v^{3}, \\
 \text{and } & \widetilde{\bm{X}}(v) = \sum_{j=0}^{K-1} \bm{X}_{:,j}v^{jK} = \bm{X}_{:,0} + \bm{X}_{:,1} v^{4} + \bm{X}_{:,2} v^{8}  + \bm{X}_{:,3}v^{12}. 
\end{align*}
The $p$-th processing node stores a unique evaluation of $\widetilde{\bm{W}}(v)$ and $\widetilde{\bm{X}}(v)$ at $v=b_p$ for $p=0,1,\ldots,P-1$, and then computes the product $\widetilde{\bm{W}}(b_p)\widetilde{\bm{X}}(b_p)= \widetilde{\bm{S}}(b_p)$, which essentially produces a unique evaluation of the polynomial $\widetilde{\bm{S}}(v)=\widetilde{\bm{W}}(v) \widetilde{\bm{X}}(v)$ at $v=b_p$.
Observe the coefficients of the polynomial $\widetilde{\bm{S}}(v)$ as follows:
\begin{align*}
\widetilde{\bm{S}}(v)= \widetilde{\bm{W}}(v) \widetilde{\bm{X}}(v) = \sum_{i=0}^{K-1} \sum_{j=0}^{K-1} \underbrace{\bm{W}_{i,:} \bm{X}_{:,j}}_{\bm{S}_{i,j}}v^{i+ jK} = \sum_{i=0}^{3} \sum_{j=0}^{3} \bm{S}_{i,j} v^{i+ 4j}.
\end{align*}
Interestingly, all the different sub-matrices of $\bm{S}$, \textit{i.e.}, $\bm{S}_{i,j}$ show up as coefficients of the polynomial $\widetilde{\bm{S}}(v)$. Because the degree of this polynomial $\widetilde{\bm{S}}(v)$ is $15$ (for $K=4$), the decoder requires $16$ unique evaluations to interpolate the polynomial successfully. Thus, recovery threshold is $16$.
%, or effectively, the erasure-tolerance is $P-16$, and hence the error tolerance is $\lfloor \frac{P-16}{2} \rfloor$.

\textbf{Matrix Multiplication using MatDot Codes:}
Contrary to Polynomial codes, in MatDot codes~\cite{allerton17} the first matrix $\bm{W}$ is split vertically and the second matrix $\bm{X}$ is split horizontally into $K$ blocks each as follows:
\begin{align*}
&\bm{W}=\begin{bmatrix} \bm{W}_{:,0} & \bm{W}_{:,1} & \bm{W}_{:,2} & \bm{W}_{:,3}  \end{bmatrix}\text{ and} \quad \bm{X}= \begin{bmatrix}  \bm{X}_{0,:} \\ \bm{X}_{1,:} \\ \bm{X}_{2,:} \\ \bm{X}_{3,:} \end{bmatrix}. 
\end{align*}

Now we carefully choose the two polynomials $\widetilde{\bm{W}}(v)$ and $\widetilde{\bm{X}}(v)$ as follows: 
\begin{align*}
& \widetilde{\bm{W}}(v) = \sum_{i=0}^{K-1}\bm{W}_{:,i} v^{i} = \bm{W}_{:,0} + \bm{W}_{:,1}v + \bm{W}_{:,2}v^{2} + \bm{W}_{:,3}v^{3} \\
 \text{and, } & \widetilde{\bm{X}}(v) = \sum_{j=0}^{K-1} \bm{X}_{j,:}v^{K-1-j} = \bm{X}_{0,:}v^3 + \bm{X}_{1,:} v^{2} + \bm{X}_{2,:} v  + \bm{X}_{3,:} .
\end{align*}
As before, the $p$-th node stores a unique evaluation of $\widetilde{\bm{W}}(v)$ and $\widetilde{\bm{X}}(v)$ at $v=b_p$ for $p=0,1,\ldots,P-1$, and then computes the product $\widetilde{\bm{W}}(b_p)\widetilde{\bm{X}}(b_p)= \widetilde{\bm{S}}(b_p)$, essentially producing a unique evaluation of the polynomial $\widetilde{\bm{S}}(v):=\widetilde{\bm{W}}(v) \widetilde{\bm{X}}(v)$. Now observe the coefficients of the polynomial $\widetilde{\bm{S}}(v)$ as follows:
\begin{align*}
\widetilde{\bm{S}}(v) & = \widetilde{\bm{W}}(v) \widetilde{\bm{X}}(v) = \sum_{i=0}^{K-1} \sum_{j=0}^{K-1} \bm{W}_{:,i} \bm{X}_{j,:}v^{i+ K-1-j} = \sum_{i=0}^{3} \sum_{j=0}^{3} \bm{W}_{:,i} \bm{X}_{j,:} v^{i+3-j} \\
%&= \underset{ i \neq j}{\sum_{i=0}^{3} \sum_{j=0 }^{3}} \bm{W}_{:,i} \bm{X}_{j,:} v^{i+3-j} + \underset{ i = j}{\sum_{i=0}^{3} \sum_{j=0 }^{3}} \bm{W}_{:,i} \bm{X}_{j,:} v^{i+3-j} \\
%& = \underset{ i \neq j}{\sum_{i=0}^{3} \sum_{j=0 }^{3}} \bm{W}_{:,i} \bm{X}_{j,:} v^{i+3-j} + \underbrace{(\sum_{i=0}^{3} \bm{W}_{:,i} \bm{X}_{i,:} )}_{\bm{S}}v^3 \\
& = (...)+ (...)v^1 + (...)v^2 + \bm{S}v^3 + (...)v^4 + (...)v^5 + (...)v^6 .
\end{align*}
Note that, the coefficient of $v^{K-1} (= v^3)$ gives the result $\bm{S}$. All the other coefficients are practically of no use, and hence, are referred to as ``garbage.'' Because the degree of the polynomial $\bm{S}(v)$ is $6$, the decoder would require only $7$ unique evaluations to interpolate the polynomial successfully. Thus, the recovery threshold is $7$.

\textbf{Matrix Multiplication using Generalized PolyDot Codes (with Garbage Alignment):} Before introducing Generalized PolyDot Codes, we review the PolyDot framework~\cite{allerton17} for the matrix-matrix multiplication problem. The PolyDot framework splits both the matrices horizontally and vertically into $4$ sub-matrices each, as follows:
\begin{align*}
\bm{W}=\begin{bmatrix}
\bm{W}_{0,0}  &\bm{W}_{0,1} \\
\bm{W}_{1,0}  &\bm{W}_{1,1} 
\end{bmatrix} \qquad \bm{X}=\begin{bmatrix}
\bm{X}_{0,0}  &\bm{X}_{0,1} \\
\bm{X}_{1,0}  &\bm{X}_{1,1} 
\end{bmatrix}.
\end{align*}
The resultant matrix $\bm{S}$ therefore takes the form:
\begin{align*}
\bm{S}= \begin{bmatrix}
\bm{S}_{0,0} & \bm{S}_{0,1}\\
\bm{S}_{1,0} & \bm{S}_{1,1}
\end{bmatrix} = \begin{bmatrix}
\bm{W}_{0,0}\bm{X}_{0,0} + \bm{W}_{0,1}\bm{X}_{1,0} & \bm{W}_{0,0}\bm{X}_{0,1}  + \bm{W}_{0,1}\bm{X}_{1,1} \\
\bm{W}_{1,0}\bm{X}_{0,0} + \bm{W}_{1,1}\bm{X}_{1,0} & \bm{W}_{1,0}\bm{X}_{0,1}  + \bm{W}_{1,1}\bm{X}_{1,1}
\end{bmatrix}.
\end{align*}
Now the PolyDot framework (see \Cref{fig:polydot}) encodes these sub-matrices of $\bm{W}$ into a polynomial in two variables with each variable corresponding to either the row or column dimension, as follows:
\begin{align*}
\widetilde{\bm{W}}(u,v)= \sum_{i=0}^2 \sum_{j=0}^2 \bm{W}_{i,j}u^i v^j = \bm{W}_{0,0} + \bm{W}_{0,1}v + \bm{W}_{1,0}u + \bm{W}_{1,1}uv.
\end{align*}
The sub-matrices of $\bm{X}$ are encoded as follows:
 \begin{align*}
\widetilde{\bm{X}}(v,w)= \sum_{j=0}^2 \sum_{k=0}^2 \bm{X}_{j,k}v^{(1-j)} w^k = \bm{X}_{0,0}v + \bm{X}_{0,1}wv + \bm{X}_{1,0} + \bm{X}_{1,1}w.
\end{align*}
Now, observe the coefficients of the product of the two polynomials, \textit{i.e.}, $\widetilde{\bm{S}}(u,v,w)= \widetilde{\bm{W}}(u,v)\widetilde{\bm{X}}(v,w) $ as follows:
\begin{align*}
\widetilde{\bm{S}}(u,v,w) & = \widetilde{\bm{W}}(u,v)\widetilde{\bm{X}}(v,w) = \sum_{i=0}^1 \sum_{j=0}^1 \sum_{j'=0}^1 \sum_{k=0}^1 \bm{W}_{i,j}u^i v^j \bm{X}_{j',k}v^{(1-j')} w^{k} \\
& = \underset{j = j'}{\sum_{i=0}^1 \sum_{j=0}^1 \sum_{j'=0}^1 \sum_{k=0}^1} \bm{W}_{i,j} \bm{X}_{j',k} u^i v^{(1+j-j')} w^{k} + \underset{j \neq j'}{\sum_{i=0}^1 \sum_{j=0}^1 \sum_{j'=0}^1 \sum_{k=0}^1} \bm{W}_{i,j} \bm{X}_{j',k} u^iv^{(1+j-j')} w^{k} \\
& = \sum_{i=0}^1 \sum_{k=0}^1 \underbrace{\left( \sum_{j=0}^1 \bm{W}_{i,j}\bm{X}_{j,k} \right)}_{\bm{S}_{i,k} } u^i v w^{k} + \underset{j \neq j'}{\sum_{i=0}^1 \sum_{j=0}^1 \sum_{j'=0}^1 \sum_{k=0}^1} \bm{W}_{i,j} \bm{X}_{j',k} u^iv^{(1+j-j')} w^{k}. 
\end{align*}
The coefficient of $u^ivw^k$ corresponds to $\bm{S}_{i,k}$. The total number of unknowns or coefficients in this multivariate polynomial is $2\times 3 \times 2 = 12$. If we convert this multivariate polynomial into a polynomial of a single variable, e.g., using substitution $v=u^2, w=u^6$, so that there exists a bijection~\cite{allerton17} between the coefficients of the multivariate polynomial and the polynomial of a single variable, then we would require $12$ unique evaluations of the polynomial to be able to interpolate all its $12$ unknown coefficients, including the ones that contribute towards $\bm{S}$.

In this work, one of our key observations is that even though the polynomial has $12$ coefficients, the number of coefficients that are useful to us is only $4$, \textit{i.e.}, only the coefficients of $u^ivw^k$ for $i,k=0,1$ while the others are garbage. Thus, we instead propose the following variable substitution: $u=v^2, w=v^4$ in our previously proposed PolyDot framework~\cite{allerton17}. Observe the product now:

\begin{align*}
\widetilde{\bm{S}}(v)= \widetilde{\bm{S}}(u,v,w)|_{u=v^2, w=v^4} = \sum_{i=0}^1 \sum_{k=0}^1 \underbrace{\left( \sum_{j=0}^1 \bm{W}_{i,j}\bm{X}_{j,k} \right)}_{\bm{S}_{i,k} } v^{2i+4k+1}  + \underset{j \neq j'}{\sum_{i=0}^1 \sum_{j=0}^1 \sum_{j'=0}^1 \sum_{k=0}^1} \bm{W}_{i,j} \bm{X}_{j',k} v^{(2i+4k+1+j-j')}. 
\end{align*}
%This is a polynomial of degree $8$. Thus it only requires $9$ unique evaluations to be able to interpolate all its coefficients. 
Note that the coefficient of $v^{2i+4k+1}$ in $\widetilde{\bm{S}}(v)$ correspond exactly to the coefficient of $u^ivw^k$ in $\widetilde{\bm{S}}(u,v,w)$, \textit{i.e.}, $\bm{S}_{i,k}$. However, some of the garbage coefficients have now aligned with each other to reduce the total number of unknown garbage coefficients, e.g., coefficient of both $uv^2$ and $w$ in $\widetilde{\bm{S}}(u,v,w)$ now get added up and form the coefficient of $v^4$ in $\widetilde{\bm{S}}(v)$. This is the key idea of garbage alignment that lies at the core of the design of Generalized PolyDot codes, as we discuss in details in the next section. The polynomial resulting after substitution is only of degree $8$. Thus it only requires $9$ unique evaluations to be able to interpolate all its coefficients. Thus, garbage alignment reduces the recovery threshold from $12$ to $9$. 

\begin{figure}
\centering
\includegraphics[height=4cm]{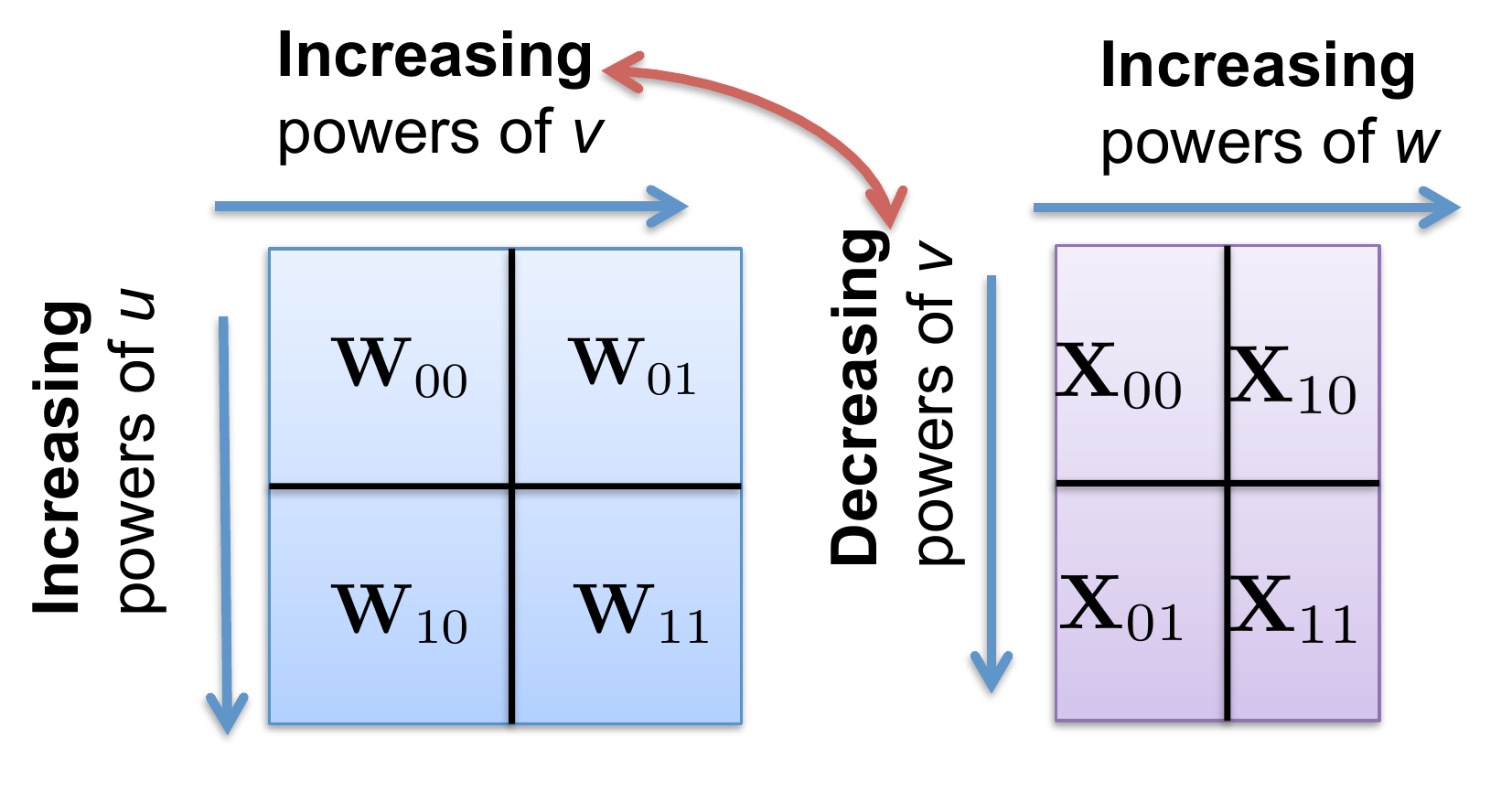}
\caption{Encoding in the PolyDot framework~\cite{allerton17} for matrix multiplication $\bm{W}\bm{X}$: Increasing powers of $v$ are chosen for the blocks along the column dimensions of $\bm{W}$, and consequently decreasing powers of $v$ are chosen for the blocks of $\bm{X}$ along the row dimension.}
\label{fig:polydot}
\end{figure}

\section{Generalized PolyDot codes for coded matrix multiplication}
\label{sec:gen_poly_codes}
In this section, we describe our Generalized PolyDot code construction for the problem of coded matrix multiplication $\bm{W}_{N_1 \times N_0}\bm{X}_{N_0 \times B}$ using $P$ memory-constrained nodes, as discussed in \Cref{sec:problem_formulation}. We choose integers $m$, $n$ and $d$ such that $mn=K$ and $nd=K'$ respectively. Next, we partition the matrix $\bm{W}$ both horizontally and vertically into an $m \times n$ grid of smaller sub-matrices of dimensions $\frac{N_1}{m} \times \frac{N_0}{n}$ each. Note that, in this type of partitioning, each sub-matrix contains a $\frac{1}{K}$ fraction of $\bm{W}_{N_1 \times N_0}$. Similarly, $\bm{X}$ is also partitioned into an $n \times d$ grid of sub-matrices of dimensions $\frac{N_0}{n} \times \frac{B}{d}$ each ($\frac{1}{K'}$ of $\bm{X}$). After this, we encode these sub-matrices of $\bm{W}$ and $\bm{X}$ by taking appropriate linear combinations and store encoded sub-matrices of $\bm{W}$ and $\bm{X}$ at each node that satisfy the storage constraints.

%\footnote{Our coding technique does not depend on the relative dimensions of $\bm{W}$ and $\bm{X}$, and holds for two matrices of any size. The matrix $\bm{W}$ does not necessarily have to be a square matrix either. We only state it in this manner here to preserve the analogy with the DNN training.}

%Compute distributed matrix multiplication $\bm{W}_{N \times N}\bm{X}_{N \times B}$ using $P$ processing nodes prone to erasures/errors such that each node can only store a sub-matrix of $\bm{W}$ of dimensions $\frac{N}{m} \times \frac{N}{n}$ with $mn=K$. In addition, each node can store a sub-matrix of $\bm{X}$ of size $\frac{N}{n} \times \frac{B}{d}$ where $d$ is a new parameter that denotes the number of vertical partitions of the matrix $\bm{X}$. Thus, each node can multiply a sub-matrix of dimensions $\frac{N}{m} \times \frac{N}{n}$ with a sub-matrix of dimensions $\frac{N}{n} \times \frac{N}{d}$, producing an output of dimensions $\frac{N}{m} \times \frac{N}{d}$ that is communicated to the decoder. The parameter $d$ therefore also determines the number of values communicated to the decoder\footnote{This parameter also determines the communication cost in the coded DNN problem, as we discuss later}.

\Cref{thm:gen_poly_MV} states our achievability result for the problem of matrix-vector products where we are required to perform $\bm{s}=\bm{W}\bm{x}$ using $P$ nodes, such that every node can only store an $\frac{N_1}{m}\times \frac{N_0}{n}$ sub-matrix ($\frac{1}{K}$ fraction) of $\bm{W}$ and an $\frac{N_0}{n} \times 1 $ sub-vector of $\bm{x}$. 
\begin{thm}[Achievability for matrix-vector]
\label{thm:gen_poly_MV}
Generalized PolyDot codes for computing matrix-vector multiplication $\bm{W}_{N_1 \times N_0}\bm{x}_{N_0 \times 1}$ using $P$ nodes, each storing an $\frac{N_1}{m} \times \frac{N_0}{n}$ sub-matrix of $\bm{W}$ and an $\frac{N_0}{n} \times 1 $ sub-vector of $\bm{x}$, has a recovery threshold of $mn+n-1$. Thus, it can tolerate at most $P-mn-n+1$ erasures. 
\end{thm}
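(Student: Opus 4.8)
The plan is to exhibit the explicit Generalized PolyDot encoding for the matrix-vector case and then read off the recovery threshold from the degree of the resulting univariate polynomial, mirroring the garbage-alignment computation of \Cref{sec:motivating_examples}. First I would partition $\bm{W}$ into the $m\times n$ grid $\{\bm{W}_{i,j}\}$, with $\bm{W}_{i,j}$ of size $\frac{N_1}{m}\times\frac{N_0}{n}$, and $\bm{x}$ into $n$ pieces $\{\bm{x}_j\}$ of size $\frac{N_0}{n}\times 1$ (here $d=1$, so $K'=n$); the target decomposes as $\bm{s}_i=\sum_{j=0}^{n-1}\bm{W}_{i,j}\bm{x}_j$ for $i=0,\dots,m-1$. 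I would then define the bivariate encoding $\widetilde{\bm{W}}(u,v)=\sum_{i=0}^{m-1}\sum_{j=0}^{n-1}\bm{W}_{i,j}u^iv^j$ and $\widetilde{\bm{x}}(v)=\sum_{j=0}^{n-1}\bm{x}_jv^{n-1-j}$, and apply the garbage-alignment substitution $u=v^n$, so that node $p$ stores the single encoded block $\widetilde{\bm{W}}(b_p^n,b_p)$ (a $\frac1K$-fraction of $\bm{W}$) and the encoded sub-vector $\widetilde{\bm{x}}(b_p)$ (a $\frac1n$-fraction of $\bm{x}$) for distinct scalars $b_0,\dots,b_{P-1}$, then locally computes $\widetilde{\bm{s}}(b_p):=\widetilde{\bm{W}}(b_p^n,b_p)\,\widetilde{\bm{x}}(b_p)$, which is an evaluation at $v=b_p$ of the univariate polynomial $\widetilde{\bm{s}}(v):=\widetilde{\bm{W}}(v^n,v)\widetilde{\bm{x}}(v)$.

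Next I would expand $\widetilde{\bm{s}}(v)=\sum_{i=0}^{m-1}\sum_{j=0}^{n-1}\sum_{j'=0}^{n-1}\bm{W}_{i,j}\bm{x}_{j'}\,v^{ni+j+n-1-j'}$ and split off the $j=j'$ terms, which collapse to $\sum_{i=0}^{m-1}\bigl(\sum_{j=0}^{n-1}\bm{W}_{i,j}\bm{x}_j\bigr)v^{ni+n-1}=\sum_{i=0}^{m-1}\bm{s}_i\,v^{ni+n-1}$, so that each desired block $\bm{s}_i$ is exactly the coefficient of $v^{ni+n-1}$; everything else is ``garbage.'' The crux of the argument — and the step I expect to be the main obstacle, since it is where the choice of the exponent $n$ in the substitution is forced — is the \emph{non-contamination} check: a garbage monomial indexed by $(i',j,j')$ with $j\neq j'$ has exponent $ni'+j+n-1-j'$, and because $|j-j'|\le n-1<n$ this can equal $ni+n-1$ only if $i'=i$ and $j=j'$, a contradiction; hence the useful coefficients sit cleanly at the distinct powers $v^{n-1},v^{2n-1},\dots,v^{mn-1}$, while for each fixed $i$ the $n(n-1)$ cross terms fall on only the $2n-2$ extra powers $v^{ni},\dots,v^{ni+2n-2}$ other than $v^{ni+n-1}$, which is precisely the reduction in the number of unknowns that garbage alignment achieves relative to the bijection-based substitution (which would leave all of the cross terms distinct). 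One must verify this modular arithmetic of exponents carefully, and also that the per-node stored objects indeed meet the stated storage budgets.

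Finally I would count unknowns: $\widetilde{\bm{s}}(v)$ has degree $n(m-1)+2(n-1)=mn+n-2$ and hence at most $mn+n-1$ coefficients. Since each surviving worker returns $\widetilde{\bm{s}}(b_p)$ at a distinct point, any $mn+n-1$ of these evaluations determine all coefficients of $\widetilde{\bm{s}}(v)$ by inverting an $(mn+n-1)\times(mn+n-1)$ Vandermonde matrix, which is nonsingular because the $b_p$ are distinct (working over $\mathbb{R}$, or over any field large enough to host $P$ distinct elements); in particular this recovers $\bm{s}_0,\dots,\bm{s}_{m-1}$. Therefore the recovery threshold is at most $mn+n-1$, so $P$ workers tolerate $P-(mn+n-1)=P-mn-n+1$ erasures, which is the claim. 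Since this is an achievability statement, I would not pursue a matching converse here; optimality among such constructions is not asserted.
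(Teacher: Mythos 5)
Your proof is correct and follows essentially the same route as the paper: same bivariate encoding of $\bm{W}$ and $\bm{x}$, same garbage-alignment substitution $u=v^n$, same degree count $mn+n-2$ yielding recovery threshold $mn+n-1$. The one thing you add is an explicit non-contamination check (that $ni'+j-j'=ni$ with $|j-j'|<n$ forces $i'=i$, $j=j'$), which the paper asserts but does not spell out; it is a sound and welcome addition that closes a small gap in the paper's exposition.
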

\begin{proof}[Proof of \Cref{thm:gen_poly_MV}]
Recall the PolyDot framework for matrix multiplication. We first block-partition $\bm{W}$ into $m \times n$ sub-matrix where $\bm{W}_{i,j}$ denotes the sub-matrix at location $(i,j)$ for $i=0,1,\ldots,m-1$ and $j=0,1,\ldots,n-1$. Let the $p$-th node ($p=0,1,\ldots,P-1$) store an encoded sub-matrix of $\bm{W}$, which is a polynomial in $u$ and $v$, as follows:
\begin{equation}
\widetilde{\bm{W}}(u,v)= \sum_{i=0}^{m-1} \sum_{j=0}^{n-1} \bm{W}_{i,j}u^i v^j,
\label{eq:encoded_W}
\end{equation}
evaluated at some $(u,v)=(a_p,b_p)$. The choice of these variables will be clarified later. The vector $\bm{x}$ is also partitioned into $n$ equal sub-vector denoted by $\bm{x}_0, \bm{x}_1, \ldots, \bm{x}_{n-1} $, each of dimensions $\frac{N_0}{n} \times 1$. Each node stores an encoded sub-vector as follows:
\begin{equation}
\widetilde{\bm{x}}(v)= \sum_{j=0}^{n-1} \bm{x}_j v^{n-j-1},
\label{eq:encoded_x}
\end{equation}
evaluated at $v=b_p$. Now, each node \emph{computes the smaller matrix-vector multiplication} $ \widetilde{\bm{W}}(a_p,b_p) \widetilde{\bm{x}}(b_p)$ which effectively results in the evaluation, at $(u,v)=(a_p,b_p)$, of the following polynomial:
\begin{equation}
\widetilde{\bm{s}}(u,v) = \widetilde{\bm{W}}(u,v) \widetilde{\bm{x}}(v) =  \sum_{i=0}^{m-1} \sum_{j=0}^{n-1} \sum_{j'=0}^{n-1} \bm{W}_{i,j} \bm{x}_{j'} u^i v^{n-j'+j-1},
\label{eq:s_polynomial1}
\end{equation}
even though the node is \textit{not explicitly evaluating} it from all its coefficients. Observe that the coefficient of $u^i v^{n-1}$ for $i=0,1,\dots,m-1$ turns out to be $\sum_{j=0}^{n-1} \bm{W}_{i,j} \bm{x}_j = \bm{s}_i$. This is obtained by fixing $j'=j$. Thus, these $m$ coefficients constitute the $m$ sub-vectors of $\bm{s}(=\bm{W}\bm{x})$. Therefore, $\bm{s}$ can be recovered by the decoder if it can interpolate these $m$ coefficients of the polynomial $\widetilde{\bm{s}}(u,v)$ in \Cref{eq:s_polynomial1} from its evaluations.
As an example, consider the case where $m=n=2$. 
\begin{align}
\widetilde{\bm{s}}(u,v)  & = (\bm{W}_{0,0} + \bm{W}_{1,0}u +  \bm{W}_{0,1}v + \bm{W}_{1,1} uv   )(\bm{x}_0 v + \bm{x}_1)  \nonumber \\
& = \bm{W}_{0,0}\bm{x}_1 +  \bm{W}_{1,0}\bm{x}_1 u +  \bm{W}_{0,1} \bm{x}_0 v^2 + \bm{W}_{1,1}\bm{x}_0 uv^2 \nonumber \\
& + \underbrace{(\bm{W}_{0,0} \bm{x}_0 + \bm{W}_{0,1} \bm{x}_1)}_{\bm{s}_0} v  + \underbrace{(\bm{W}_{1,0} \bm{x}_0 + \bm{W}_{1,1} \bm{x}_1)}_{\bm{s}_1} uv.
\label{eq:s_polynomial2}
\end{align}
More generally, we use the substitution $u=v^n$ to convert $\widetilde{\bm{s}}(u,v)$ into a polynomial in a single variable. Therefore, $a_p=b_p^n$ and each $b_p$ is unique for $p=0,1,\ldots,P-1$. Some of the unwanted, \textbf{garbage coefficients align with each other} (e.g. $u$ and $v^2$ in \Cref{eq:s_polynomial2}), but the coefficients of $u^i v^{n-1}$, \textit{i.e.}, $\bm{s}_i$  remain unchanged and now correspond to the coefficients of $v^{ni+n-1}$ for $i=0,1,\ldots,m-1$. After the substitution $u=v^n$ in $\widetilde{\bm{s}}(u,v)$, the resulting polynomial of single variable $v$ is of degree $mn+n-2$. Thus, the decoder needs to wait for $mn+n-1$ nodes, each providing a unique evaluation, to be able to interpolate all the $mn+n-1$ unknown coefficients. The recovery threshold is thus $mn+n-1$. %All the coefficients of this polynomial can be interpolated, if there are at most $P-mn-n+1$ erasures. % or $\lfloor \frac{P-mn-n+1}{2} \rfloor$ errors (see \Cref{lem:errors} in \Cref{appendix:decoding}).
\end{proof}

Now, we extend the coding strategy to the problem of matrix-matrix multiplication.

\begin{thm}[Achievability for matrix-matrix]
\label{thm:gen_poly_MM}
Generalized PolyDot codes for computing matrix-matrix multiplication $\bm{W}_{N_1 \times N_0}\bm{X}_{N_0 \times B}$ using $P$ nodes, each storing an $\frac{N_1}{m} \times \frac{N_0}{n}$ sub-matrix of $\bm{W}$ and an $\frac{N_0}{n} \times \frac{B}{d}$ sub-matrix of $\bm{X}$ has a recovery threshold of $mnd+n-1$. Thus, it can tolerate at most $P-mnd-n+1$ erasures. 
\end{thm}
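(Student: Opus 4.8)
The plan is to follow the proof of \Cref{thm:gen_poly_MV} almost verbatim, adding a third indeterminate $w$ to encode the column partition of $\bm{X}$. First I would block‑partition $\bm{W}$ into an $m\times n$ grid $\{\bm{W}_{i,j}\}$ and $\bm{X}$ into an $n\times d$ grid $\{\bm{X}_{j',k}\}$, so that $\bm{S}=\bm{W}\bm{X}$ splits into $m\times d$ blocks with $\bm{S}_{i,k}=\sum_{j=0}^{n-1}\bm{W}_{i,j}\bm{X}_{j,k}$. Node $p$ stores the evaluations at $(u,v,w)=(a_p,b_p,c_p)$ of the encoding polynomials
\begin{equation*}
\widetilde{\bm{W}}(u,v)=\sum_{i=0}^{m-1}\sum_{j=0}^{n-1}\bm{W}_{i,j}u^i v^j,\qquad
\widetilde{\bm{X}}(v,w)=\sum_{j'=0}^{n-1}\sum_{k=0}^{d-1}\bm{X}_{j',k}\,v^{\,n-1-j'} w^k ,
\end{equation*}
and computes the small product $\widetilde{\bm{W}}(a_p,b_p)\widetilde{\bm{X}}(b_p,c_p)$, which is the evaluation at $(a_p,b_p,c_p)$ of
\begin{equation*}
\widetilde{\bm{S}}(u,v,w)=\sum_{i=0}^{m-1}\sum_{j=0}^{n-1}\sum_{j'=0}^{n-1}\sum_{k=0}^{d-1}\bm{W}_{i,j}\bm{X}_{j',k}\,u^i v^{\,n-1+j-j'} w^k .
\end{equation*}
Setting $j'=j$ shows the coefficient of $u^i v^{\,n-1} w^k$ is exactly $\bm{S}_{i,k}$, so it suffices to interpolate those $md$ coefficients.

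Next comes the garbage‑alignment substitution: take $u=v^n$ and $w=v^{mn}$, i.e. $a_p=b_p^{\,n}$, $c_p=b_p^{\,mn}$ with the $b_p$ pairwise distinct, turning $\widetilde{\bm{S}}$ into a univariate polynomial $\widetilde{\bm{S}}(v)$ in which the monomial indexed by $(i,j,j',k)$ carries degree $ni+(n-1+j-j')+mnk$. I would then verify the two facts that make the scheme work. (i) The $\bm{S}_{i,k}$ term sits at degree $ni+(n-1)+mnk$, and these degrees are pairwise distinct for $0\le i\le m-1$, $0\le k\le d-1$, since $mn$ divides $ni$ with remainder determining $i$ (as $ni<mn$) and quotient determining $k$. (ii) No garbage monomial lands on an $\bm{S}$‑slot: if $ni_1+(n-1+j-j')+mnk_1=ni_2+(n-1)+mnk_2$, then $j-j'\equiv 0 \pmod n$, and $|j-j'|\le n-1$ forces $j=j'$, contradicting $j\neq j'$. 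Hence the coefficient of $v^{\,ni+n-1+mnk}$ in $\widetilde{\bm{S}}(v)$ is precisely $\bm{S}_{i,k}$, uncorrupted.

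Finally I would bound $\deg \widetilde{\bm{S}}(v)$. The exponent $ni+(n-1+j-j')+mnk$ is nonnegative and maximized at $i=m-1,\ j=n-1,\ j'=0,\ k=d-1$, giving $n(m-1)+(2n-2)+mn(d-1)=mnd+n-2$. So $\widetilde{\bm{S}}(v)$ has at most $mnd+n-1$ unknown (matrix) coefficients, and since the $b_p$ are distinct, any $mnd+n-1$ of the $P$ nodes yield a nonsingular Vandermonde system from which all coefficients — in particular every $\bm{S}_{i,k}$ — can be recovered; equivalently, up to $P-mnd-n+1$ erasures are tolerated. I would also note the storage check ($\tfrac{N_1}{m}\times\tfrac{N_0}{n}$ of $\bm{W}$ and $\tfrac{N_0}{n}\times\tfrac{B}{d}$ of $\bm{X}$ per node) and that $d=1$ recovers \Cref{thm:gen_poly_MV}. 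The one genuinely delicate step — and the one I would double‑check most carefully — is point (ii) together with the degree count: confirming that the double substitution keeps all $md$ desired coefficients at distinct degrees while deflecting every garbage monomial off those degrees; once the mod‑$n$ arithmetic there is pinned down, the remainder is routine bookkeeping.
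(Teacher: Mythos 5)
Your proof is correct and follows essentially the same route as the paper's: the same bivariate/trivariate encoding polynomials, the same extraction of $\bm{S}_{i,k}$ from the coefficient of $u^i v^{n-1} w^k$, the same garbage-alignment substitution $u=v^n,\ w=v^{mn}$, and the same degree count $mnd+n-2$ giving recovery threshold $mnd+n-1$. The only difference is that you spell out the modular-arithmetic check that the $md$ desired degrees $ni+(n-1)+mnk$ are pairwise distinct and that no garbage monomial ($j\neq j'$) can land on one of them, a point the paper leaves implicit; that verification is sound.
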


\begin{proof}[Proof of \Cref{thm:gen_poly_MM}]
The matrix-matrix multiplication strategy is very similar to the matrix-vector case. The $p$-th node ($p=0,1,\ldots,P-1$) stores an encoded sub-matrix of $\bm{W}$ which is the same polynomial in $u$ and $v$, as follows:
\begin{equation}
\widetilde{\bm{W}}(u,v)= \sum_{i=0}^{m-1} \sum_{j=0}^{n-1} \bm{W}_{i,j}u^i v^j,
\end{equation}
evaluated at $(u,v)=(a_p,b_p)$. The matrix $\bm{X}$ is also block-partitioned into $n \times d$ sub-matrices, where the sub-matrix at location $(j,k)$ is denoted as $\bm{X}_{j,k}$, for $j=0,1,\ldots,n-1$ and $k=0,1,\ldots,d-1$. As per the PolyDot framework~\cite{allerton17}, the $p$-th node also stores an encoded sub-matrix of $\bm{X}$, as a polynomial in $(v,w)$, as follows:
\begin{equation}
\widetilde{\bm{X}}(v,w)= \sum_{j=0}^{n-1} \sum_{k=0}^{d} \bm{X}_{j,k}v^{n-1-j}w^{k},
\end{equation}
evaluated at $(v,w)=(b_p,c_p)$. %E.g., for $n=d=2$, $\bm{X}$ is encoded as $ \widetilde{\bm{X}}(v,w)=\bm{X}_{00}v + \bm{X}_{01} +\bm{X}_{10}wv +\bm{X}_{11}w$.
Next, each node \textit{computes the smaller matrix-matrix product}: $\widetilde{\bm{S}}(a_p,b_p,c_p)=\widetilde{\bm{W}}(a_p,b_p) \widetilde{\bm{X}}(b_p,c_p)$ which effectively results in the evaluation, at $(u,v,w)=(a_p,b_p,c_p)$, of the polynomial:
$$ \widetilde{\bm{S}}(u,v,w)= \widetilde{\bm{W}}(u,v)\widetilde{\bm{X}}(v,w) = \sum_{i=0}^{m-1}\sum_{j=0}^{n-1} \sum_{j'=0}^{n-1}\sum_{k=0}^{d-1} \bm{W}_{i,j}\bm{X}_{j',k}u^i v^{n-1+j-j'}w^{k}, $$
even though the node is not \textit{explicitly evaluating it} from its coefficients. Now, fixing $j'=j$, we observe that the coefficient of $u^iv^{n-1}w^{k} $ for $i=0,1,\dots, m-1$ and $k=0,1,\dots,d-1$ turns out to be $ \sum_{j=0}^{n-1} \bm{W}_{i,j}\bm{X}_{j,k} =\bm{S}_{i,k}$. These $md$ coefficients constitute the $m \times d$ sub-matrices (or blocks) of $\bm{S}=\bm{W}\bm{X}$. Therefore, $\bm{S}$ can be recovered at the decoder if all these $md$ coefficients of the polynomial $\widetilde{\bm{S}}(u,v,w)$ can be interpolated from its evaluations at different nodes.

For garbage alignment, we propose the substitutions $(u =v^n, w=v^{mn})$ to convert $\widetilde{\bm{S}}(u,v,w)$ into a polynomial of a single variable $v$. Thus, $a_p=b_p^n$, $c_p=b_p^{mn}$ and $b_p$ is unique for $p=0,1,\ldots,P-1$. The coefficient of $u^iv^{n-1}w^{k}$ in $\widetilde{\bm{S}}(u,v,w)$ exactly correspond to the coefficient of $v^{ni+mnk+n-1}$ in $ \widetilde{\bm{S}}(v)=\widetilde{\bm{S}}(u,v,w)|_{u=v^n,w=v^mn}$, while some of the garbage terms align with each other, reducing the total number of unknowns. Observe the polynomial: $$ \widetilde{\bm{S}}(v)=\widetilde{\bm{S}}(u,v,w)|_{u=v^n,w=v^{mn}} = \sum_{i=0}^{m-1}\sum_{j=0}^{n-1} \sum_{j'=0}^{n-1}\sum_{k=0}^{d-1} \bm{W}_{i,j}\bm{X}_{j',k} v^{ni+mnk+ n-1+j-j'},$$ which is a polynomial in a single variable $v$. Its degree is given by $n(m-1)+mn(d-1)+n-1+n-1 = mnd+n-2$. Thus, the decoder needs to wait for $mnd+n-1$ nodes, each producing a unique evaluation, to be able to interpolate all its $mnd+n-1$ coefficients. %All the coefficients of this polynomial can be interpolated if there are at most $P-mnd-n+1$ erasures or $\lfloor \frac{P-mnd-n+1}{2} \rfloor$ errors (see \Cref{lem:errors} in \Cref{appendix:decoding}).
\end{proof}

\begin{rem} Under erasures, the master node only waits for $mnd+n-1$ nodes to finish and the decoding reduces to the problem of solving a linear system of equations (polynomial interpolation). If the outputs are corrupted by errors instead of erasures, the master node gathers outputs from all the $P$ nodes and then solves a sparse reconstruction problem to decode the correct output, using the techniques in \cite{candes2005decoding}, as we also discuss in \Cref{appendix:decoding}. 
\end{rem}

%\begin{rem}
%Observe that the coding strategy does not depend on the relative dimensions of the matrices, \textit{i.e.}, $N$ and $B$. In fact, the technique trivially generalizes to the problem of multiplying two matrices $\bm{W}$ and $\bm{X}$ where $\bm{W}$ is of dimensions $N_1 \times N_0$ and $\bm{X}$ is of dimensions $N_0 \times B$ for any $N_1,N_0,B$, when the nodes are allowed to store sub-matrices of $\bm{W}$ of dimensions $\frac{N_1}{m}\times \frac{N_0}{n}$ and sub-matrices of $\bm{X}$ of dimensions $\frac{N_0}{n}\times \frac{B}{d}$ respectively.
%\end{rem}

%In the following table (\Cref{table:GP_complexity}), we characterize the computation and communication costs of our proposed Generalized PolyDot codes.

Now we discuss the communication and computation costs of Generalized PolyDot Codes (with $mn=K$, $nd=K'$) in the centralized setup under erasures.
\begin{itemize}
\item Computational complexity of encoding the sub-matrices at the master node: $\mathcal{O}(P(N_1N_0+N_0B))$.
\item Total communication complexity of sending different encoded sub-matrices from the master node to each of the $P$ nodes: $\Theta(\frac{PN_1N_0}{K} + \frac{PN_0B}{K'})$.
\item Computational complexity at each worker node for the matrix multiplication: $\Theta(\frac{N_1N_0B}{mnd})$.
\item Total communication complexity of gathering different outputs at the master node from the first $mnd+n-1$ workers: $\Theta((mnd+n-1)\frac{N_1B}{md})$.
\item Computational complexity of decoding at master node:  $\mathcal{O}( (mnd+n-1)^3\frac{N_1B}{md})$.
\end{itemize}

\begin{figure}[!htbp]
\centering
\includegraphics[height=4.6cm]{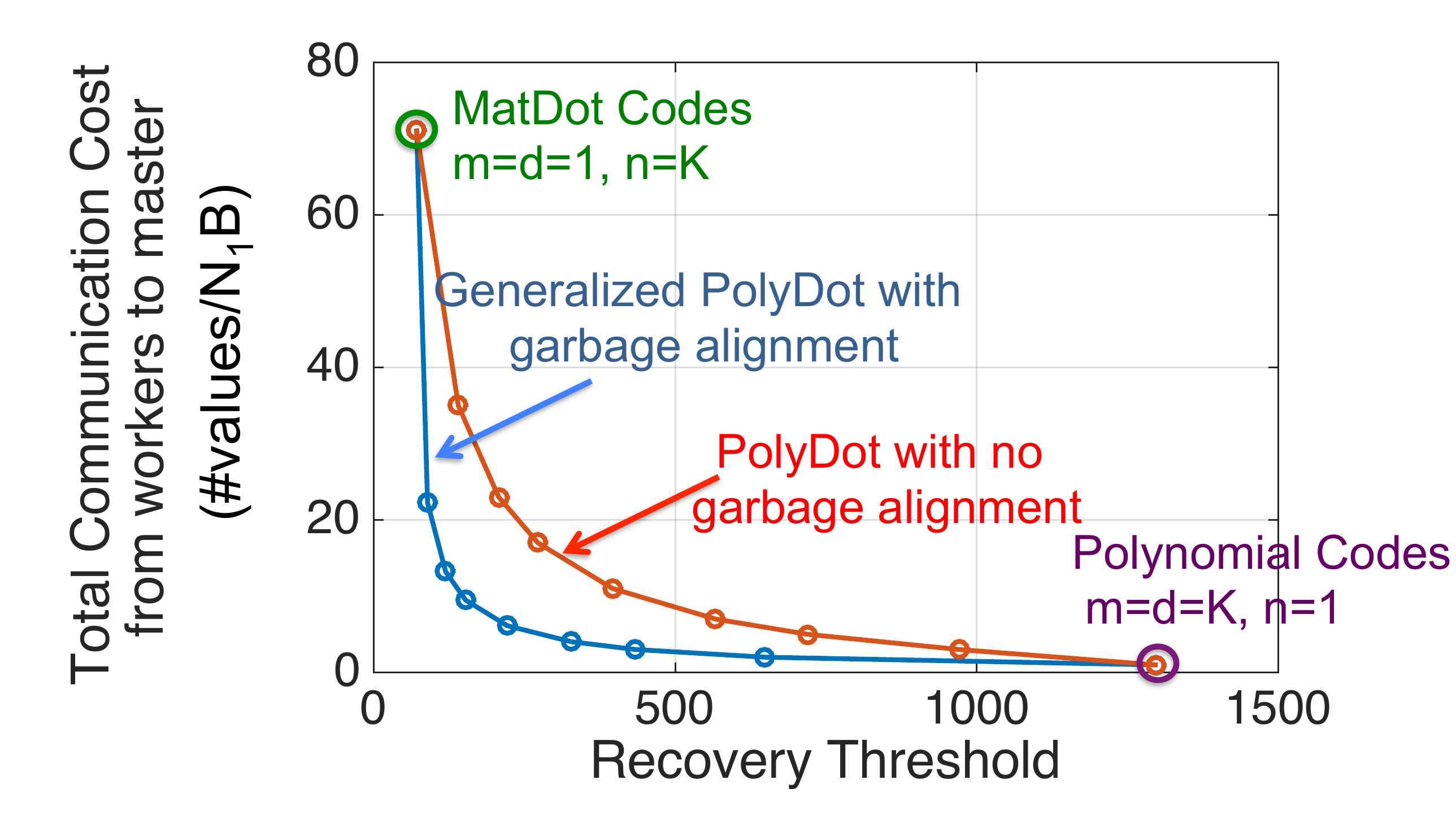}
\caption{Tradeoff between communication cost (from the workers to the master node) and recovery threshold of Generalized PolyDot codes by varying $m$, $n$ and $d$ for a fixed $K=K'=36$. MatDot codes have the lowest recovery threshold $2K-1=71$. The minimum communication cost is $N_1B$, corresponding to Polynomial codes, that have the largest recovery threshold $K^2=1296$. Generalized PolyDot codes bridge between these two strategies, improving the tradeoff using garbage alignment. The communication from the master node to the workers is not included as it is the same for all the strategies.\vspace{-0.5cm}}
\label{fig:RecThres_Comm}
\end{figure}

\noindent \textbf{Tradeoff between communication cost and recovery threshold}: In \Cref{fig:RecThres_Comm} we illustrate the tradeoff between recovery threshold and communication costs for Generalized PolyDot codes by varying $m$, $n$ and $d$. When we choose $n=1,m=K,d=K'$, the Generalized PolyDot codes reduce to Polynomial codes with recovery threshold $KK'$. On the other hand, in the regime where $K=K'$, the Generalized PolyDot codes reduce to MatDot codes when we choose $m=d=1, n=K$, resulting in a recovery threshold of $2K-1$.

Now, we move on to our Problem Formulation $2$, \textit{i.e.}, coded DNNs. 
\section{Modelling Assumptions for coded DNNs with a Result on Real Number Error Correction}
%\vspace{-0.1cm}
\label{sec:background}
%\vspace{-0.05in}

%In this paper, we have assumed $M$, $K$ and $N$ diverging to infinity, but with $M$ having the dependence on $N$

%In this section, we first provide a background on DNN training for the special case of $B=1$ which is much simpler to understand and provides useful intuition for the general case of $B>1$. For a more detailed introduction, we refer to the seminal work \cite{rumelhart1988learning} or \Cref{appendix:DNN_background}. We will also elaborate upon a few modeling assumptions in this section, such as, defining the two error models as well as communication and computational complexities. 

{\color{black}{In this section, we will elaborate upon a few modeling assumptions for coded DNNs, such as, defining the two error models and communication complexity in decentralized settings. The error models introduced here lead to an interesting theoretical result on real number error correction, as stated in \Cref{thm:main}.}}

{\color{black}{
\subsection{Adversarial and Probabilistic Error Models in channel coding.}
   Let $\bm{q}$ be a $Q \times 1$ vector consisting of $Q$ real-valued symbols. The received output vector is as follows:
\begin{equation}
\bm{z}=\bm{G}^T\bm{q} + \bm{e}.
\end{equation}
Here $\bm{G}$ is the generator matrix of a $(P,Q)$ real number MDS Code and $\bm{e}$ is the $P \times 1$ error vector that corrupts the codeword $\bm{G}^T\bm{q}$. The locations of the codeword that are affected by errors is a subset $\mathcal{A} \subseteq \{0,1,\ldots,P-1\}$, and the rest are $0$. We use the notation $\mathbb{Q}$, $\mathbb{Z}$, $\mathbb{E}$ and $\widehat{\mathbb{E}}$ to denote random vectors corresponding to the symbol vector, output vector, the true error vector and the estimated error vector respectively. 

\begin{defn}[Adversarial Error Model]The subset $\mathcal{A}$ satisfies $|\mathcal{A} | \leq \lfloor \frac{P-Q}{2} \rfloor$, with no specific assumptions on the locations or values of the errors and they may be chosen advarsarially.
\end{defn}

\begin{defn}[Probabilistic Error Model] The subset $\mathcal{A}$ can be of any cardinality from $0$ to $P$, and these locations may be chosen adversarially. However, given $\mathcal{A}$, the elements of $\mathbb{E}$ indexed in $\mathcal{A}$ are drawn from iid Gaussian distributions and the rest are $0$. Also note that $\mathbb{Q}$ and $\mathbb{E}$ are independent.
\end{defn}

\begin{thm}[Real Number Error Correction under Probabilistic Error Model] Under the Probabilistic Error Model for channel coding, the decoder of a $(P,Q)$ MDS Code can perform the following:
\begin{itemize}
\item [1.] It can detect the occurrence of errors with probability $1$, irrespective of the number of errors that occurred.
\item [2.] If the number of errors that occurred is less than or equal to $P-Q-1$, then all those errors can be corrected with probability $1$, even without knowing in advance that how many errors actually occurred.
\item [3.] If the number of errors that occurred is more than $P-Q-1$, then the decoder is able to determine that the errors are too many to be corrected and declare a ``decoding failure'' with probability $1$.
\end{itemize}
\label{thm:main}
\end{thm}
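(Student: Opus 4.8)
The plan is to reduce the whole statement to a syndrome computation together with a measure‑zero argument that exploits the MDS property. Let $\bm{H}$ be a $(P-Q)\times P$ parity‑check matrix for the code generated by $\bm{G}$, so that $\bm{H}\bm{G}^T=\bm{0}$ and, because the code (and its dual) is MDS, every set of $P-Q$ columns of $\bm{H}$ is linearly independent. On receiving $\bm{z}=\bm{G}^T\bm{q}+\bm{e}$ the decoder forms the syndrome $\bm{s}=\bm{H}\bm{z}=\bm{H}\bm{e}$, which does not depend on $\bm{q}$. For a support set $\mathcal{A}'\subseteq\{0,\dots,P-1\}$ let $\bm{H}_{\mathcal{A}'}$ denote the submatrix of the corresponding columns and $V_{\mathcal{A}'}=\mathrm{colspan}(\bm{H}_{\mathcal{A}'})$. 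The decoding rule is: compute a minimum‑Hamming‑weight solution $\widehat{\bm{e}}$ of $\bm{H}\widehat{\bm{e}}=\bm{s}$; if its weight is $0$ declare ``no error''; if its weight is at most $P-Q-1$ output $\widehat{\bm{q}}$ from $\bm{z}-\widehat{\bm{e}}$; otherwise declare ``decoding failure''. (Here we only need the combinatorial fact that this minimizer is the desired vector; efficient realization of the sparse‑recovery step is the subject of \Cref{appendix:decoding} and \cite{candes2005decoding}.) Throughout we condition on the error support $\mathcal{A}$, which may have been chosen adversarially, and use only that, given $\mathcal{A}$, the vector $\bm{e}_{\mathcal{A}}$ of nonzero entries is a nondegenerate (iid Gaussian) random vector in $\mathbb{R}^{|\mathcal{A}|}$, hence assigns probability $0$ to every proper linear subspace.

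The geometric heart of the argument is the following consequence of the MDS property: if $\mathcal{A},\mathcal{A}'$ are supports with $\mathcal{A}\setminus\mathcal{A}'\neq\emptyset$ and $|\mathcal{A}'|\le P-Q-1$, then $V_{\mathcal{A}}\not\subseteq V_{\mathcal{A}'}$. Indeed, pick $i\in\mathcal{A}\setminus\mathcal{A}'$; if the $i$‑th column $\bm{h}_i$ of $\bm{H}$ were in $V_{\mathcal{A}'}$, then $\{\bm{h}_j:j\in\mathcal{A}'\}\cup\{\bm{h}_i\}$ would be a dependent set of at most $|\mathcal{A}'|+1\le P-Q$ columns of $\bm{H}$, contradicting the MDS property. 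Consequently, when in addition $|\mathcal{A}|\le P-Q$ — so that $\bm{H}_{\mathcal{A}}\colon\mathbb{R}^{|\mathcal{A}|}\to V_{\mathcal{A}}$ is an isomorphism — the set $\{\bm{e}_{\mathcal{A}}:\bm{H}_{\mathcal{A}}\bm{e}_{\mathcal{A}}\in V_{\mathcal{A}'}\}=\bm{H}_{\mathcal{A}}^{-1}(V_{\mathcal{A}}\cap V_{\mathcal{A}'})$ is a proper subspace of $\mathbb{R}^{|\mathcal{A}|}$, hence a null set for $\bm{e}_{\mathcal{A}}$.

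Now the three claims follow. \textbf{Detection:} if $|\mathcal{A}|\ge 1$ then $\bm{s}=\bm{H}_{\mathcal{A}}\bm{e}_{\mathcal{A}}=\bm{0}$ forces $\bm{e}_{\mathcal{A}}\in\ker\bm{H}_{\mathcal{A}}$, a subspace of dimension $\max(0,|\mathcal{A}|-(P-Q))<|\mathcal{A}|$ (using that any $\min(|\mathcal{A}|,P-Q)$ columns of $\bm{H}$ are independent), so $\bm{s}\neq\bm{0}$ with probability $1$. \textbf{Correction for $|\mathcal{A}|=t\le P-Q-1$:} any syndrome‑matching solution of weight $\le t$ is supported on some $\mathcal{A}'$ with $|\mathcal{A}'|\le t$; on $\mathcal{A}'=\mathcal{A}$ it is unique since $\bm{H}_{\mathcal{A}}$ is injective, and for each of the finitely many $\mathcal{A}'\neq\mathcal{A}$ with $|\mathcal{A}'|\le t$ one has $\mathcal{A}\setminus\mathcal{A}'\neq\emptyset$ (same or smaller size, yet distinct), so by the displayed fact the event ``some $\bm{e}'$ on $\mathcal{A}'$ matches $\bm{s}$'' is null; a union bound over all such $\mathcal{A}'$ shows that with probability $1$ the true $\bm{e}$ is the unique minimum‑weight solution, and the decoder recovers $\bm{e}$ (hence $\bm{q}$) without knowing $t$ in advance. \textbf{Failure for $|\mathcal{A}|=t\ge P-Q$:} here $\bm{H}_{\mathcal{A}}$ has full row rank $P-Q$, so $\bm{e}_{\mathcal{A}}\mapsto\bm{H}_{\mathcal{A}}\bm{e}_{\mathcal{A}}$ is surjective and $\bm{s}$ is a nondegenerate vector in $\mathbb{R}^{P-Q}$; for every $\mathcal{A}'$ with $|\mathcal{A}'|\le P-Q-1$ the space $V_{\mathcal{A}'}$ is a proper subspace of $\mathbb{R}^{P-Q}$, its preimage under $\bm{H}_{\mathcal{A}}$ is a proper subspace of $\mathbb{R}^t$, and a union bound over the finitely many such $\mathcal{A}'$ gives $\Pr[\exists\,\bm{e}'\text{ of weight}\le P-Q-1,\ \bm{H}\bm{e}'=\bm{s}]=0$, so the decoder's minimum weight is $\ge P-Q$ and it correctly declares failure.

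The routine parts are the dimension counts and the finite union bounds. The step I expect to need the most care is making the ``conditioned on $\mathcal{A}$, adversarial locations'' semantics airtight — i.e., checking that, once $\mathcal{A}$ is fixed, every bad event really is a proper‑subspace event in the coordinates of $\bm{e}_{\mathcal{A}}$ (Parts 1--2) or in $\bm{s}$ (Part 3) — together with the clean invocation of the dual‑MDS property (every $P-Q$ columns of $\bm{H}$ independent), which is precisely what pushes the correctable radius up to $P-Q-1$ rather than $\lfloor(P-Q)/2\rfloor$. A secondary point worth stating explicitly is which decoding rule realizes the guarantees and why it needs no a priori bound on the number of errors; we only rely on ``return a minimum‑weight syndrome solution and threshold its weight at $P-Q-1$'', deferring an efficient implementation to \Cref{appendix:decoding}.
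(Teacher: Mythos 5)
Your proof is correct and establishes all three parts of the theorem, but it is organized around a genuinely different key lemma than the paper's. Both arguments hinge on the same two facts — the dual MDS property (every $P-Q$ columns of $\bm{H}$ are independent) and the absolute continuity of $\bm{e}_{\mathcal{A}}$ given its support — but they deploy them differently. The paper reduces Claims 2 and 3 jointly to the single statement $\Pr(\hat{\mathbb{E}}\neq\mathbb{E})=0$ and works in error space: it writes any competing solution as $\bm{e}'=\bm{e}-\bm{h}$ with $\bm{h}\in \mathrm{Null}(\bm{H})\setminus\{\bm{0}\}$, notes that the entries of $\bm{h}$ on $(\mathcal{A}\cup\mathcal{A}')\setminus\mathcal{A}'$ must exactly equal the corresponding Gaussian entries of $\bm{e}$, and then shows (via the block identity $\bm{0}=[\bm{H}_{\mathcal{A}^*}\ \bm{H}_{\mathcal{A}'}][\bm{e}_{\mathcal{A}^*}^T\ \bm{h}_{\mathcal{A}'}^T]^T$) that these entries are confined to a subspace of dimension $|\mathcal{A}\cup\mathcal{A}'|-(P-Q)<|\mathcal{A}^*|$. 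Your proof instead works in syndrome space, with the clean geometric lemma $V_{\mathcal{A}}\not\subseteq V_{\mathcal{A}'}$ whenever $\mathcal{A}\setminus\mathcal{A}'\neq\emptyset$ and $|\mathcal{A}'|\le P-Q-1$, which follows in one line from dual MDS. You then split the two regimes: for $t\le P-Q-1$ the injectivity of $\bm{H}_{\mathcal{A}}$ transports the measure-zero statement back to $\bm{e}_{\mathcal{A}}$; for $t\ge P-Q$ you observe that $\bm{s}=\bm{H}_{\mathcal{A}}\bm{e}_{\mathcal{A}}$ is itself a nondegenerate Gaussian on all of $\mathbb{R}^{P-Q}$, so every proper $V_{\mathcal{A}'}$ is null for it. Your route avoids the paper's nested integrals, indicator-function bookkeeping, and conditioning on $\bm{e}_{\mathcal{A}\setminus\mathcal{A}^*}$, and it makes the $t\ge P-Q$ case transparently easier (full-rank syndrome, no $\bm{h}$-decomposition needed). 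What the paper's version buys in exchange is a single unified argument that does not need to distinguish the two regimes. Both proofs use the identical decoding rule (syndrome test, minimum-weight reconstruction, threshold at $P-Q-1$), so the algorithmic content matches exactly.
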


This result is interesting as it essentially means that in real number error correction, one can theoretically correct $P-Q-1$ errors with probability $1$ which is more than the well-known adversarial error tolerance of $\lfloor \frac{P-Q}{2} \rfloor$ for MDS coding. A detailed proof of this result is provided in \Cref{appendix:decoding}. Here we provide the main intuition.

Let us first consider the simple case of replication. Replication is essentially a $(P,1)$ MDS Code as one single real-valued symbol is replicated $P$ times. Under an adversarial error model, one could use a majority voting among the $P$ received values and thus correct upto $\lfloor \frac{P-1}{2}\rfloor$ errors. However, under the probabilistic error model, the probability that two replicas are affected by the same value of error is $0$. Thus, as long as not all the $P$ received values are equal, one can detect that errors have occurred. Moreover, if at least two received values match out of $P$, it is most likely the original symbol unaffected by errors. Thus, one can correct $P-2$ errors with probability $1$. If no symbols match, then the decoder is able to declare a decoding failure.

This idea also extends to any $(P,Q)$ MDS Code. For a $(P,Q)$ MDS Code, the minimum Hamming distance between two codewords is $d_{min}=P-Q+1$. Thus, if the number of errors are within $\lfloor \frac{d_{min}-1}{2}\rfloor=\lfloor \frac{P-Q}{2}\rfloor$, the received output vector lies within a Hamming ball of radius $\lfloor \frac{d_{min}-1}{2}\rfloor$ around the original codeword, and is thus closest in Hamming distance to the original codeword as compared to any other codeword. If one allows for more than $\lfloor \frac{d_{min}-1}{2}\rfloor$ errors, the received output might fall within the $\lfloor \frac{d_{min}-1}{2}\rfloor$ Hamming ball of another codeword, and hence may be decoded incorrectly. However, what \Cref{thm:main} says is that if the error values are not adversarially chosen but allowed to be probabilistic, then even if we go slightly beyond $\lfloor \frac{P-Q}{2}\rfloor$, \textit{i.e.}, upto a Hamming radius of $P-Q-1$, the probability of the received output being closer to a different codeword is $0$. In other words, for a given codeword, the set of all possible outputs that are closer to other codewords in a Hamming sense has probability measure $0$ in the space of all possible values that the output can take, \textit{i.e.}, the Hamming ball of radius $P-Q-1$.

Let $\bm{H}$ be the $(P-Q)\times P$ sized parity check matrix of the MDS code, such that $\bm{H}\bm{G}^T=\bm{0}$. We first propose the following decoding algorithm (see Algorithm~\ref{algo:decoding}) to produce $\widehat{\bm{e}}$, as an estimate of $\bm{e}$, for a given channel output $\bm{z}$. If an $\widehat{\bm{e}}$ is obtained, the decoder can uniquely solve for $\widehat{\bm{q}}$ from the linear set of equations $\bm{G}^T\widehat{\bm{q}}=\bm{z}-\widehat{\bm{e}}$.

\begin{algorithm}
\caption{Decoding Algorithm to produce $\widehat{\bm{e}}$, as an estimate of $\bm{e}$, for a given channel output $\bm{z}$}
\begin{algorithmic}[1]
\STATE \textbf{If} $\bm{H}\bm{z}=\bm{0}$, \textbf{then} declare ``no errors detected'' and produce $\widehat{\bm{e}}=\bm{0}$. 
\STATE \hspace{1cm} \textbf{Else} find an $\widehat{\bm{e}}$ as follows: $\widehat{\bm{e}} =
\arg \min ||\bm{e}||_0 \text{ such that } \bm{H}\bm{e}=\bm{H}\bm{z}.$ 
\STATE \hspace{1.5cm} \textbf{If} the obtained $\widehat{\bm{e}}$ is such that $||\widehat{\bm{e}}||_0\leq P-Q-1$, \textbf{then} produce this estimate $\widehat{\bm{e}}$.
\STATE \hspace{2cm} \textbf{Else} declare a ``decoding failure.'' 
\end{algorithmic}
\label{algo:decoding}
\end{algorithm}

Now we will show that the three claims of \Cref{thm:main} hold using this proposed decoding algorithm. Let $Null(\cdot)$ denote the null-space of a matrix, and $||\cdot||_0$ denote the number of non-zero elements of a vector. We first claim that Algorithm~\ref{algo:decoding} is able to detect the occurrence of errors with probability $1$ when it checks if $\bm{H}\mathbb{Z} = \bm{0}$.  

\noindent {\bf Claim $1$:}   $\Pr{(\bm{H}\mathbb{Z} = \bm{0}|\ \mathbb{E}\neq \bm{0} )}=0$. 

%$\bm{H}\mathbb{E}=$

\noindent \textbf{Proof Sketch of Claim $1$:} Observe that $\bm{H}\bm{z}=\bm{H}\bm{e}$. As $\bm{H}$ is also the transpose of the generator matrix of a $(P, P-Q)$ MDS Code, every $(P-Q)$ columns of $\bm{H}$ are always linearly independent. If the error locations are such that $|\mathcal{A}| \leq (P-Q)$, then $\bm{e}$ can never lie in $Null(\bm{H})$. Alternately, if $|\mathcal{A}| > (P-Q)$ and $\bm{e}$ lies in $Null(\bm{H})$, then $\bm{H}_{\mathcal{A}}\bm{e}_{\mathcal{A}}=\bm{0}$ where $\bm{H}_{\mathcal{A}}$ is a sub-matrix of $\bm{H}$ consisting of columns indexed in $\mathcal{A}$ and $\bm{e}_{\mathcal{A}}$ is a sub-vector of $\bm{e}$ consisting of elements indexed in $\mathcal{A}$. Any such vector $\bm{e}_{\mathcal{A}}$ in the $Null(\bm{H}_{\mathcal{A}})$ lies in a subspace of dimension $|\mathcal{A}|- (P-Q)$ which becomes a measure $0$ subset for a random vector $\mathbb{E}_{\mathcal{A}}$ whose all $|\mathcal{A}|$ entries are iid Gaussian. We show this rigorously in \Cref{appendix:decoding}. 

The next two claims show the error correction capability of Algorithm~\ref{algo:decoding}. Note that, for a particular realization of $\mathbb{E}=\bm{e}$, Algorithm~\ref{algo:decoding} can have three possible outcomes: it either produces $\hat{\mathbb{E}}=\bm{e}$, or $\hat{\mathbb{E}}=\bm{e}' \neq \bm{e}$, or it declares a decoding failure.

\noindent {\bf Claim $2$:} $\Pr{(\hat{\mathbb{E}} = \mathbb{E}\ | \ || \mathbb{E}||_0 \leq P-Q-1 )} =1.$  Note that, given $||\bm{e}||_0 \leq P-Q-1$, there is at least one vector, which is the true $\bm{e}$, which lies in the search-space of the decoding algorithm and hence the declaration of a decoding failure does not arise. Thus, it is sufficient to show that 
$\Pr{(\hat{\mathbb{E}} \neq \mathbb{E}\ | \ || \mathbb{E}||_0 \leq P-Q-1 )} =0.$ 

\noindent {\bf Claim $3$:} $\Pr{(\text{Decoding Failure }| \ || \mathbb{E}||_0 > P-Q-1 )} =1.$ Because the case of $\hat{\mathbb{E}}=\bm{e}$ cannot arise given $|| \bm{e}||_0 > P-Q-1$, it is sufficient to show $\Pr{(\hat{\mathbb{E}} \neq \mathbb{E}| \ || \mathbb{E}||_0 > P-Q-1 )} =0.$

%\noindent {\bf Claims $2$ and $3$:}  For Claim $3$, we are required to show that $\Pr{(\text{Decoding Failure}| \ || \mathbb{E}||_0 > P-Q-1 )} =1.$ 

\noindent \textbf{Proof Sketch of Claims $2$ and $3$:} Essentially, to prove both the claims, it is sufficient to show that $\Pr{(\hat{\mathbb{E}} \neq \mathbb{E})}=0.$ We consider all the possible error patterns $\mathcal{A}$ separately. For a particular realization of $\mathbb{E}=\bm{e}$ with a particular error pattern $\mathcal{A}$, the event of producing a wrong outcome is a strict subset of the event that there exists another $\bm{e}'\neq \bm{e}$ such that $||\bm{e}'||_0\leq P-Q-1$, $\bm{H}\bm{e}=\bm{H}\bm{e}'$ and $||\bm{e}'||_0 \leq |\mathcal{A}|$. We also fix $\mathcal{A}'$ as the set of non-zero indices for $\bm{e}'$ and show that the probability of the event goes to zero for all possible $\mathcal{A}'$. 

Note that, for $\bm{H}\bm{e}=\bm{H}\bm{e}'$ to hold, $\bm{e}=\bm{e}' + \bm{h},$ where $\bm{h}\in Null(\bm{H}) \backslash \{\bm{0} \}$. Thus  $|\mathcal{A}'\cup \mathcal{A} |- |\mathcal{A}'|$ indices of $\bm{h}$, indexed in the set $(\mathcal{A}'\cup \mathcal{A})\backslash \mathcal{A}'$ match exactly with $\bm{e}$ (see \Cref{fig:e_prime}).

\begin{figure}[!ht]
\centering
\includegraphics[height=2.8cm]{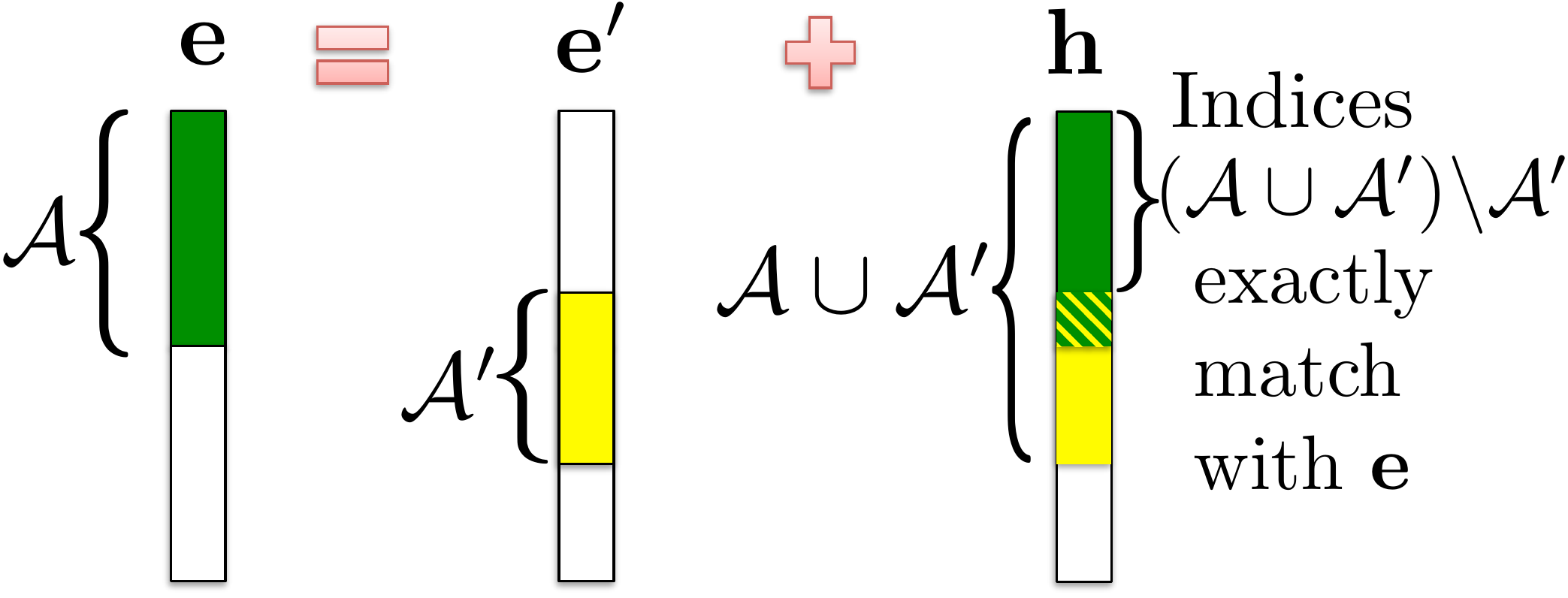}
\caption{Key intuition behind the proof of \Cref{thm:main}: Let $\mathcal{A}$ be the locations of errors in true error vector $\bm{e}$. Suppose there exists another vector $\bm{e}'$ with non-zero indices $\mathcal{A}'$ such that $\bm{e}=\bm{e}'+\bm{h}$ where $\bm{h}\in Null \ Space(\bm{H}) \backslash \{\bm{0} \}$. Then the elements of $\bm{h}$ indexed in the set $(\mathcal{A}\cup\mathcal{A}')\backslash\mathcal{A}'$ exactly match with $\bm{e}$. Thus, $\bm{e}_{(\mathcal{A}\cup\mathcal{A}')\backslash\mathcal{A}'}$, being a sub-vector of $\bm{h}_{(\mathcal{A}\cup\mathcal{A}') }$, lies in a subspace of dimension $|(\mathcal{A}\cup\mathcal{A}') |-(P-Q)$ which becomes a measure $0$ subset for a random vector $\mathbb{E}_{(\mathcal{A}\cup \mathcal{A}')\backslash\mathcal{A}'}$ whose all $|(\mathcal{A}\cup \mathcal{A}')|- |\mathcal{A}'|$ elements are from iid Gaussian distributions.
\label{fig:e_prime}}
\end{figure}

%\begin{equation}
%\bm{h}_{A_1}=\bm{e}_{A_1}.
%\end{equation}

\textbf{The key intuition behind this proof is that once a certain number of elements of a vector $\bm{h}$ are allowed to be chosen from iid Gaussian distributions, the probability of the vector $\bm{h}$ still lying in $Null(\bm{H})$ becomes $0$.} To understand this better, observe that, 
\begin{equation}
\bm{0}=\bm{H}\bm{h}=
\bm{H}_{(\mathcal{A}\cup \mathcal{A}')} \bm{h}_{(\mathcal{A}\cup \mathcal{A}')} =
\begin{bmatrix} \bm{H}_{(\mathcal{A}\cup \mathcal{A}')\backslash\mathcal{A}' } & \bm{H}_{\mathcal{A}'}\end{bmatrix} \begin{bmatrix} \bm{h}_{(\mathcal{A}\cup \mathcal{A}')\backslash\mathcal{A}'} \\
\bm{h}_{\mathcal{A}'}
\end{bmatrix} = \begin{bmatrix} \bm{H}_{(\mathcal{A}\cup \mathcal{A}')\backslash\mathcal{A}' } & \bm{H}_{\mathcal{A}'}\end{bmatrix} \begin{bmatrix} \bm{e}_{(\mathcal{A}\cup \mathcal{A}')\backslash\mathcal{A}'} \\
\bm{h}_{\mathcal{A}'}
\end{bmatrix}.
\label{eq:null_space1}
\end{equation}
Therefore, given  $\bm{e}_{\mathcal{A}}$  and a particular choice of non-zero indices $\mathcal{A}'$ for $\bm{e}'$, we need to show that the probability that there exists an $\bm{h}=\bm{e}'-\bm{e}$ (and hence an $\bm{h}_{\mathcal{A}'}$) such that \Cref{eq:null_space1} holds is $0$. Any vector $\begin{bmatrix} \bm{e}_{(\mathcal{A}\cup \mathcal{A}')\backslash\mathcal{A}'} \\
\bm{h}_{\mathcal{A}'}
\end{bmatrix}$ satisfying \Cref{eq:null_space1} lies in a subspace of dimension $|(\mathcal{A}\cup \mathcal{A}')|-(P-Q)$, and $\bm{e}_{(\mathcal{A}\cup \mathcal{A}')\backslash\mathcal{A}'}$ being a sub-vector of this vector also lies in a sub-space of dimension at most $|(\mathcal{A}\cup \mathcal{A}')|-(P-Q)$. However, this becomes a measure $0$ subspace for a random vector $\mathbb{E}_{(\mathcal{A}\cup \mathcal{A}')\backslash\mathcal{A}'}$ whose all $|(\mathcal{A}\cup \mathcal{A}')|- |\mathcal{A}'|$ elements are drawn from iid Gaussian distributions. We show this rigorously in \Cref{appendix:decoding}.

\begin{rem}
While we theoretically show that real number MDS coding can correct $P-Q-1$ errors, our proposed decoding algorithm requires sparse reconstruction for undetermined systems which is NP Hard~\cite{candes2005decoding}. For practical purposes, one might consider using an L1-norm relaxation~\cite{candes2005decoding} or other kinds of polynomial-time sparse reconstruction algorithms proposed in the compressed sensing literature~\cite{tropp2010computational}, which are known to be reasonably accurate under various restrictions on matrix $\bm{H}$.
\end{rem}

Let us now understand what these models mean in the context of coded DNNs. 
}}
\subsection{Error Models for the coded DNN problem.}
Recall from our problem formulation (\Cref{sec:problem_formulation}) that we are interested in model-parallel architectures that parallelize each layer across $P$ error-prone nodes (that can be reused across layers) because the nodes cannot locally store the entire matrix $\bm{W}^l$. As the steps $O1$, $O2$ and $O3$ are the most computationally intensive ($\Theta(N^2B)$) steps at each layer, we restrict ourselves to schemes where these three steps for each layer are parallelized across the $P$ nodes\footnote{The steps $C1$ and $C2$ are lower in computational complexity, \textit{i.e.}, $\Theta(NB)$ which is lower in scaling sense as compared to $\Theta(N^2B)$ and hence may or may not be parallelized across multiple nodes.}. In such schemes, communication will be required after steps $O1$ and $O2$ as the partial computation outputs of steps $O1$ and $O2$ at one layer might be required at another node to compute the input $\bm{X}^{(l+1)}$ or backpropagated error $\bm{\Delta}^{(l-1)}$ for another layer\footnote{Note that there could be alternate parallelization schemes where the different layers of the network are parallelized across different nodes instead of each layer being parallelized across all nodes. These schemes might have lower communication but some of the nodes stay idle and under-utilized, \textit{i.e.}, while computations are being performed in one layer, the nodes containing the other layers stay idle. It will be an interesting future work to explore the computation-communication tradeoffs among these alternate parallelization schemes.}. We define Error Models $1$ and $2$, which are essentially realizations of the probabilistic and adversarial models for the coded DNN problem, with some additional assumptions.

%\textcolor{black}{[HJ] I really don't like having two error models. First of all, the difference between the two models is not clear cut. It's not just that Model 1 is worst-case and Model 2 is probabilistic. There is also difference between which steps are erroneous. There are just two many things to keep in mind. And, also later in the paper, it is not clear if we're making a claim about Model 1 or Model 2. In the worst case solution, I think we should at least rename the models to something like ``bounded-error model'' and ``probabilistic error model''. }

\begin{defn}[Error Model $1$: Adversarial Error Model] 
\textit{Any} node can have soft-errors but \textit{only} during the steps $O1$, $O2$ and $O3$, which are the most computationally intensive operations in DNN training. Encoding, error-detection, decoding, nonlinear activation and Hadamard product are assumed to be error-free. These operations require negligible time and number of operations\footnote{The shorter the computation, the lower is the probability of soft-errors. E.g.,  a Poisson process of soft errors~\cite{li2007memory} makes the number of soft-errors have mean proportional to the interval length.} because most of the time and resources are spent on steps $O1$, $O2$ and $O3$. There is no specific assumption on the locations of the erroneous nodes and they may be adversarial. However, the total number of erroneous nodes at any layer during $O1$, $O2$ and $O3$ are known to be bounded by $t_1$, $t_2$ and $t_3$ respectively. There is also no assumption on the distribution of the errors for this model, but all the output values of an erroneous node,~e.g., all the values of the output matrix or vector are affected by errors.
\end{defn}
\begin{defn}[Error Model $2$: Probabilistic Error Model]
\textit{Any} node can have soft-errors during any primary operation such as encoding, decoding, nonlinear activation, Hadamard product as well as steps $O1$, $O2$ and $O3$, and there is no bound on the number of errors. The locations of the erroneous nodes can still be adversarial. The entire output of an erroneous node (all the values of the output matrix or vector) is assumed to be corrupted by additive iid Gaussian noise. However, under Error Model $2$, we use verification steps to check for decoding errors that have very low complexity (compared to the primary steps), and hence those verification steps are assumed to be error-free.
\end{defn}

\begin{rem} 
\label{rem:error_model} 
%\vspace{-0.1cm} 
Error Model $1$ is a ``worst-case'' abstraction, which is useful when it is difficult to place probabilistic priors on errors. It essentially means that the number of errors that can occur in the longer steps is bounded. Hence, if we choose a strategy with higher error tolerance, we can correct all errors. On the other hand, Error Model $2$ allows for errors in all primary operations and also has no upper bound on the number of errors. However, it makes one simplifying assumption. Specifically, the continuous distribution of noise simplifies our analyses by avoiding complicated probability distributions that arise in finite number of bits representations. We acknowledge that this simplification can lead to optimistic conclusions, e.g., it allows us to correct more errors than the adversarial model (see \Cref{thm:main}) with probability $1$ and also detect the occurrence of errors (``garbage outputs'') with probability $1$ (because the noise takes any specific value with probability zero; see~\Cref{appendix:decoding}). This model is only accurate in the limit of large number of bits of precision. In practical implementations, our probability $1$ results should be interpreted as holding with high probability (e.g. it is unlikely, but possible, that two erroneous nodes produce the exact same garbage output). Note that because both replication and coding can exploit Error Model $2$ for error-correction and detection, it does not bias our results towards coding relative to replication.
\end{rem}

\subsection{Error Tolerance Goals for the coded DNN Strategy.}  
%\textcolor{blue}{SD: Right now this shares some redundancy with the Problem Formulation 2. I was thinking readers might forget this by the time they read Generalized PolyDot and reach here. But what do you think? -- PG: I think this redundancy is best avoided. Gently remind them. E.g., ``We refer the reader to Section ?? for the problem formulation and notation for coded DNN training.'' A short summary will work afterwards, e.g., what you have after ``Essentially,...'' below.}

%{\color{black}{For the case of mini-batch size $B=1$, our {goal} is to design a unified coded DNN training strategy, denoted by $\mathcal{C}(N,K,P)$, using $P$ nodes such that every node can effectively store only a $\frac{1}{K}$ fraction of the entries of $\bm{W}$ for every layer. Thus, each node has a total storage constraint of $\frac{L N^2}{K}$ along with negligible additional storage of $o(\frac{L N^2}{K})$ for vectors that are significantly smaller compared to matrices. Additionally it is desirable that all additional communication and computation complexities, including encoding/decoding overheads should be negligible in scaling sense compared to the computational complexity of the steps $O1$, $O2$ and $O3$ parallelized across each node, at any layer\footnote{We are able to compare communication and computational complexities in a scaling sense following \cite{van1997summa}, even though the constant factor might differ.}. 
We would like to be able to correct as many erroneous nodes as possible after the steps $O1$ and $O2$, because outputs are communicated to other nodes after these two steps.
%under both error models, and also be able to detect errors with probability $1$ when they cannot be corrected.

%$\mathcal{C}_{\mathrm{poly-dot}}(K,N,P)$

\begin{defn}[Error Tolerances $(t_f, t_b)$]
%\vspace{-0.1cm}
Under Error Models $1$, for any layer $l$, the error tolerances are $(t_f, t_b)$ if $t_f$ and $t_b$ erroneous node outputs can be detected and corrected in the worst case immediately after steps $O_1$ and after step $O_2$ respectively. Similarly, under Error Model $2$, the error tolerances are $(t_f, t_b)$ if $t_f$ and $t_b$ erroneous node outputs can be detected and corrected with probability $1$ immediately after steps $O_1$ and after step $O_2$ respectively.  %\vspace{-0.1cm}
\end{defn}

Our goal is to maximize the values of these error tolerances under both the error models.

For any coding strategy, the achievable $t_f$ and $t_b$'s depend on the number of nodes available ($P$) and other parameters of the coding strategy, e.g., $m,n$ etc. as derived in  \Cref{sec:comparison_DNN_MV}. Note that, after steps $O1$ and $O2$, we do not necessarily correct only the errors that occur during those steps. Depending on the coding strategy used, errors occurring in other steps could also get corrected after either $O1$ or $O2$.  Under Error model $1$, the values of the achievable $t_f$ and $t_b$'s will be required to be greater than appropriate functions of $t_1,t_2$ and $t_3$ based on the coding strategy being used, as we also elaborate in \Cref{sec:comparison_DNN_MV}.

\subsection{Communication Complexity.}

%\begin{defn}[Computational Complexity]
%The \textit{computational complexity} is defined as the (deterministic) number of floating point operations required for a computation.  E.g. a matrix-vector product $\bm{W}_{N \times N}\bm{x}_{N \times 1}$, without using any sophisticated algorithms, is of complexity $\Theta(N^2)$. 
%\end{defn}

In this work, we use standard definition of communication complexity for fully distributed and decentralized architectures, as mentioned in~\cite{chan2007collective}.

\begin{defn}[Communication Complexity,~\cite{chan2007collective}]
The communication cost of sending a message of $N$ items between two nodes will be modeled by $\alpha+\beta N$, in the absence of network conflicts. Here $\alpha$ and $\beta$ are two constants representing the message startup time and per data item transmission time respectively. 

%Along the same lines, the \textit{communication complexity}\cite{chan2007collective,van1997summa} is defined as the (deterministic) product of the number of rounds of communication and the number of values transmitted in each round.  E.g. to communicate $N$ values from one node to another in a single round, the complexity is $\Theta(N)$. 
\end{defn}
%\vspace{-0.1cm}

%\begin{rem}
%\label{rem:distance_between_nodes}
%Following~\cite{chan2007collective}, the cost of the message is not a function of the distance between two nodes. The startup cost is largely due to software overhead on the sending and the receiving nodes. The routing of messages between nodes is subsequently done in hardware using wormhole routing, which pipelines messages and incurs a very small extra overhead due to the distance between two nodes. Typically, $\alpha$ is four to five orders of magnitude greater than $\beta$ where $\beta$ is on the order of the cost of an instruction.
%\end{rem}

\begin{rem} 
\label{rem:broadcast}
We assume that each node can communicate simultaneously to at most a constant number (say $2$) of nodes. Thus, when one node has to broadcast the same $N$ values to $P$ other nodes, it usually initiates communication link (or startup) with the other nodes in the form of a spanning-tree in $\log{P}$ rounds and then starts communicating the $N$ values across this tree-like transmission network of the nodes. For more details, the reader is referred to \cite{van1997summa,chan2007collective}. Following~\cite{van1997summa,chan2007collective}, the communication cost for this type of broadcast is given by: $\alpha \log{P} + \beta N $. Here, the first term arises because the communication link (or startup) between all the nodes is set up in $\Theta(\log_2{P})$ rounds, and then the second term denotes the cost of sending the $N$ values across this tree-like network of nodes. When multiple nodes have to communicate with each other, the communication cost can be efficiently managed using collective communication protocols, as suggested in~\cite{chan2007collective}. For instance, when all nodes send their own, unique message of $N$ values to all other nodes, a communication protocol called All-Gather~\cite{chan2007collective} is used whose communication cost is $\alpha \log{P} + 2\beta PN$. We will use Broadcast and All-Gather protocols to prove our results on communication complexities in \Cref{appendix:complexity}.

%\cite{chan2007collective} multi-cast , the communication complexity is  $\Theta( N\log_2{P})$, assuming a node cannot communicate to all the $P$ other nodes simultaneously but uses a spanning-tree type multi-cast mechanism where the $N$ values are communicated through all the $P$ nodes in .
\end{rem}
%\vspace{-0.3cm}

%\subsection{Matrix Partitioning Notations}   For all of the strategies discussed in this paper for $B=1$, we choose two integers $m$ and $n$ such that $K=m n$, and block-partition the matrix $\bm{W}$ both row-wise and column-wise into $m \times n$ blocks, each of size $\frac{N}{m} \times \frac{N}{n}$. Let $\bm{W}_{i,j}$ denote the block with row index $i$ and column index $j$, where $i=0,1,\dots,m-1$ and $j=0,1,\dots n-1$. The vectors $\bm{x}$ and $\bm{\delta}^T$ are also partitioned into $n$ and $m$ equal parts respectively, denoted by $\bm{x}_0,\bm{x}_1,\dots,\bm{x}_{n-1}$ and $\bm{\delta}^T_0, \bm{\delta}^T_1,\dots,\bm{\delta}^T_{m-1}$ respectively. E.g., for $m=n=2$, the partitioning for $\bm{\delta}^T$, $\bm{W}$ and $\bm{x}$ is:$$
%\bm{\delta}^T= \begin{bmatrix} \bm{\delta}^T_0 & \bm{\delta}^T_1 \end{bmatrix}, \ \ \bm{W}= \begin{bmatrix}            \bm{W}_{0,0} & \bm{W}_{0,1} \\            \bm{W}_{1,0} & \bm{W}_{1,1}          \end{bmatrix} \text{  and  }          \bm{x}= \begin{bmatrix} \bm{x}_0 \\\bm{x}_1 \end{bmatrix}.
%$$
%We would also be partitioning the vectors $\bm{s}(=\bm{W}\bm{x})$ and $\bm{c}^T(=\bm{\delta}^T\bm{W})$ into $m$ and $n$ parts respectively, denoted as $\bm{s}_0, \bm{s}_1, \dots, \bm{s}_{m-1}$ and $\bm{c}^T_0,\bm{c}^T_1, \dots,  \bm{c}^T_{n-1}$ respectively.

%We also let $\bm{a}(u)$ (or $\bm{A}(u)$) denote a vector (or matrix) whose every element is a polynomial in scalar variable $u$, \textit{i.e.}, effectively a polynomial in $u$ whose coefficients are all vectors (or matrices) of the same dimension as $\bm{a}$ (or $\bm{A}$). 

\section{Applying Existing Strategies to the coded DNN Problem}
\label{sec:existing_strategies}
%\vspace{-0.025in}
%\vspace{-0.1cm}
Before we introduce our new coded DNN training strategy using Generalized PolyDot codes, let us review the application of two existing strategies for coded DNN training for the case of $B=1$. Later in this paper, we will compare the error tolerance of these strategies with our proposed strategy. The case of mini-batch $B>1$ is similar and can be obtained as an extension of the case of $B=1$, as we discuss in \Cref{sec:extension}. Because the operations are similar across layers, henceforth, we will omit the superscript $(\cdot)^l$  and will only use the notations $\bm{W}$, $\bm{S}$ (or vector $\bm{s}$), $\bm{X}$ (or vector $\bm{x}$), $\bm{C}^T$ (or vector $\bm{c}^T$), and $\bm{\Delta}$ (or vector $\bm{\delta}^T$) respectively for a particular layer.

The problem formulation $2$ stated in \Cref{sec:problem_formulation} discusses our goals. Essentially, for $B=1$, we are required to design a coded DNN training strategy, that we denote as $\mathcal{C}(N,K,P)$, which performs distributed ``post'' and ``pre'' multiplication of the same matrix $\bm{W}$ with vectors $\bm{x}$ and $\bm{\delta}^T$ respectively at each layer and a distributed update ($\bm{W}+\eta \bm{\delta}\bm{x}^T$), along with all the other operations using $P$ memory-constrained nodes.\\

\noindent \textbf{Replication ($\mathcal{C}_{\mathrm{rep}}(K,N,P)$)}:  For every layer, the matrix $\bm{W}$ is block-partitioned across a grid of $m\times n$ nodes where $K=mn$, and $\frac{P}{mn}$ replicas of this system is created using a total of $P$ nodes (assume $mn$ divides $P$). For computing $\bm{s}=\bm{W}\bm{x}$, the node with grid index $(i,j)$ accesses $\bm{x}_j$ and computes $\bm{W}_{i,j}\bm{x}_j$. Then, the first node in every row aggregates and computes the sum $\sum_{j=0}^{n-1}\bm{W}_{i,j}\bm{x}_j = \bm{s}_i$ for $i=0,1,\dots, m-1$. For the example with $m=n=2$, observe the two sub-vectors of $\bm{s}$ that are required to be reconstructed:
\begin{equation*}
\bm{s}=
\begin{bmatrix}
\bm{s}_0 \\
\bm{s}_1
\end{bmatrix}  =  \begin{bmatrix}
           \bm{W}_{0,0} & \bm{W}_{0,1} \\
     \bm{W}_{1,0} & \bm{W}_{1,1}   
          \end{bmatrix} \begin{bmatrix} 
          \bm{x}_0 \\
          \bm{x}_1
          \end{bmatrix}
          =    
          \begin{bmatrix}
           \bm{W}_{0,0} \bm{x}_0 + \bm{W}_{0,1} \bm{x}_1\\
     \bm{W}_{1,0} \bm{x}_0 + \bm{W}_{1,1} \bm{x}_1  \end{bmatrix}.
\end{equation*}
After these computations, all the replicas computing the same sub-vector, \textit{i.e.}, say $\bm{s}_i$, exchange their computational outputs for error detection and correction. Note that the communication cost for exchanging outputs is only a function of the number of nodes, $P$, and does not depend on $N$. This is because the nodes can just exchange a single value of their computation result among each other instead of the entire sub-vector. Thus, this communication cost is much lower as compared to the computational cost of matrix multiplication in the regime $N \ll P$.

Under Error Model $1$, any $t=\lfloor\frac{P-mn}{2mn}\rfloor$ errors can be tolerated in the worst case. However under Error Model $2$, the probability of two outputs having exactly same error is $0$. As long as an output occurs at least twice, it is almost surely the correct output. Thus, any $t=\frac{P}{mn}-2$ errors can be detected and corrected. Then, the correct sub-vectors ($\bm{s}_i$'s) are communicated to the respective nodes that require it for generating their input for the next layer, and the sub-matrices stored in the erroneous nodes are \textbf{regenerated} by accessing other nodes known to be correct.

\noindent \textit{Additional Steps:} At regular intervals, the system also \textbf{checkpoints}, i.e., sends the entire DNN to a \textit{disk} for storage. This disk-storage, although time-intensive to retrieve from, can be assumed to be error-free. Under Error Model $2$, if more than $t$ errors occur, then with probability $1$, none of the outputs match. The system detects the occurrence of errors even though it is unable to correct them. So, it retrieves the DNN from the disk and reverts the computation to the last checkpoint. 

A similar technique is applied for backpropagation. The the node with index $(i,j)$ accesses $\bm{\delta}^T_i$ and computes $\bm{\delta}^T_i \bm{W}_{i,j}$. Finally the last node in every column aggregates and computes $\sum_{i=0}^{m-1}\bm{\delta}^T_i \bm{W}_{i,j} =  \bm{c}^T_j $ for $j=0,1,\dots,n-1$.  Error check occurs similarly. If errors can be corrected, then $\bm{c}^T_j $'s are communicated to the respective nodes that require it to compute backpropagated error for the next layer, along with $\bm{x}_j$.  Interestingly, after these operations, the node with index $(i,j)$ has $\bm{x}_j$ and $\bm{\delta}^T_i$, and is thus able to update itself as $\bm{W}_{i,j}\leftarrow \bm{W}_{i,j}+ \eta \bm{\delta}_i \bm{x}^T_j $ respectively.
\begin{lem}[Error Tolerances for Replication Strategy] The error tolerances for the replication strategy are $t_f=t_b=\lfloor\frac{P-mn}{2mn}\rfloor$ under Error Model $1$ and $t_f=t_b=\frac{P}{mn}-2$ under Error Model $2$, assuming $mn$ divides $P$. 
\label{lem:replication_error_tolerance}
\end{lem}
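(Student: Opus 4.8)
The plan is to reduce the whole analysis to decoding a single repetition code of length $r := P/(mn)$, the number of copies of the $m\times n$ grid. In the feedforward direction, fix a row index $i$: across the $r$ grid copies, exactly one node per copy aggregates and produces $\bm{s}_i=\sum_{j=0}^{n-1}\bm{W}_{i,j}\bm{x}_j$, and these $r$ nodes then exchange their results (it suffices to exchange a scalar summary rather than the full sub-vector, which keeps this communication independent of $N$). A soft-error during step $O1$ at one of these $r$ aggregating nodes, at a node feeding it within the same grid copy, or a stale error from a previous $O3$ that corrupted a stored copy of some $\bm{W}_{i,j}$, appears as a corrupted coordinate of this length-$r$ vector; hence detecting and correcting erroneous node outputs after $O1$ is exactly decoding a length-$r$ repetition code. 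Since in the worst case all erroneous nodes may be concentrated in a single row group, $t_f$ equals the decoding radius of that code; the identical argument with columns replacing rows (the $r$ replicas of the node that aggregates $\bm{c}_j^T=\sum_{i}\bm{\delta}_i^T\bm{W}_{i,j}$) gives $t_b=t_f$, and the local update $\bm{W}_{i,j}\leftarrow\bm{W}_{i,j}+\eta\bm{\delta}_i\bm{x}_j^T$ needs no separate treatment.

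For Error Model $1$, I would invoke the standard fact that a length-$r$ repetition code has minimum distance $r$ and decoding radius $\lfloor (r-1)/2\rfloor$: with at most $\lfloor (r-1)/2\rfloor$ corrupted coordinates the error-free value is a strict majority of the $r$ received values, so majority voting both detects and corrects. For tightness, an adversary controlling $\lfloor (r-1)/2\rfloor+1$ of the $r$ nodes in a group can force all of them to agree on one incorrect value, leaving no decoder able to distinguish it from the truth. Substituting $r=P/(mn)$ and noting $\lfloor (r-1)/2\rfloor=\lfloor \frac{P-mn}{2mn}\rfloor$ gives the claimed value of $t_f=t_b$ under Error Model $1$.

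For Error Model $2$, I would argue on a probability-one event exactly as in the proof of \Cref{thm:main}: conditioned on any fixed set of erroneous nodes, the event ``two independently corrupted outputs take equal values'' and the event ``a corrupted output equals the error-free value'' each have probability $0$ (a continuous additive-Gaussian noise hits any fixed value with probability zero), and a union bound over the finitely many error patterns makes the absence of all such spurious coincidences a probability-one event. On that event, whenever at most $r-2$ of the $r$ nodes in a group are in error, at least two report the error-free $\bm{s}_i$ while no other value is repeated, so the unique value occurring with multiplicity $\ge 2$ identifies $\bm{s}_i$; hence $t_f\ge r-2$. Conversely, if $r-1$ or $r$ nodes in a group are in error then with probability $1$ the $r$ received values are pairwise distinct, so the decoder detects a failure but cannot correct, giving $t_f=r-2=\frac{P}{mn}-2$, and $t_b=\frac{P}{mn}-2$ by the row/column symmetry.

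I expect the only delicate step to be the probability-one bookkeeping for Error Model $2$, i.e.\ making precise that ``no two garbage outputs ever collide and no garbage output ever coincides with the true value'' holds simultaneously across all error configurations; this is handled by the same measure-zero argument plus a union bound used for \Cref{thm:main}. Everything else, namely the reduction of the replication scheme to a length-$r$ repetition code and the adversarial tightness, is routine.
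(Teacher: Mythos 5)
Your argument is correct and follows essentially the same route as the paper's: reduce the per-row (and per-column) comparison among the $r=P/(mn)$ replicas to decoding a length-$r$ repetition code, apply majority voting (radius $\lfloor(r-1)/2\rfloor=\lfloor\frac{P-mn}{2mn}\rfloor$) under Error Model 1, and invoke the measure-zero-collision argument to get $r-2$ under Error Model 2. The paper states this much more tersely (a few sentences in Section V and Appendix C) but with the identical ideas; your added bookkeeping about which node errors collapse onto the same coordinate and the worst-case concentration in a single row group is a sound elaboration, not a different proof.
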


\noindent \textbf{Preliminary MDS-code-based strategy ($\mathcal{C}_{\mathrm{mds}}(K,N,P)$)}:
Another strategy (details in \cite{dutta2018DNN1}) is to use two systematic MDS codes to encode the block-partitioned matrix $\bm{W}$. %In order to achieve error tolerances $(t_f,t_b)$, a $(\frac{P_f}{n},m)$ systematic MDS code is used to encode these blocks row-wise and a $(\frac{P_b}{m},n)$ systematic MDS code is used to encode column-wise  
The total number of nodes used by this strategy is $P=P_f+P_b-mn$ (see \Cref{fig:MDS_strategy}), of which only $mn$ nodes are common for both steps $O1$ and $O2$. At each layer, the matrix $\bm{W}$ is block-partitioned into $m \times n$ blocks and arranged across a grid of processors as shown in \Cref{fig:MDS_strategy}. Then, these blocks are coded using a $(\frac{P_f}{n},m)$ systematic MDS code along the row dimension and a $(\frac{P_b}{m},n)$ systematic MDS code along the column dimension as follows:
\begin{align}
\left(\bm{G}_r^T  \otimes \bm{I}_{N/m}  \right) \bm{W} \left( \bm{G}_c \otimes \bm{I}_{N/n}\right).
\end{align}
where $\bm{G}_r$ and $\bm{G}_c$ are the generator matrices of the two systematic MDS codes for the row and column dimensions, $\otimes$ denotes the Kronecker product and $\bm{I}_{N/m}$ and $\bm{I}_{N/n}$ denote identity matrices of the corresponding dimensions.

In step $O1$, only $P_f$ nodes corresponding to the  $(\frac{P_f}{n},m)$ code are active. Similarly, in step $O2$, only $P_b$ nodes are active corresponding to the $(\frac{P_b}{m},n)$ code. Errors that happen in the update step $O3$ corrupt the updated sub-matrices and are detected and corrected the next time those sub-matrices are used to produce an output to be sent to another node, which could be either after step $O1$ or step $O2$ of the next iteration at that layer. Thus, errors of step $O3$ are corrected either after step $O1$ or step $O2$ at that layer, in the next iteration. 

Under Error Model $1$, in the worst case, the strategy thus requires $t_f \geq t_1+t_3$ and $t_b \geq t_2+t_3$ to be able to detect and correct all the errors. Under Error Model $2$, when the number of errors are greater than $t_f$ or $t_b$, they can only be detected with probability $1$ but cannot be corrected, as elaborated in \cite{dutta2018DNN1}.

\begin{figure}[!ht]
\centering
\subfloat[Original Matrix]{\includegraphics[height=3cm]{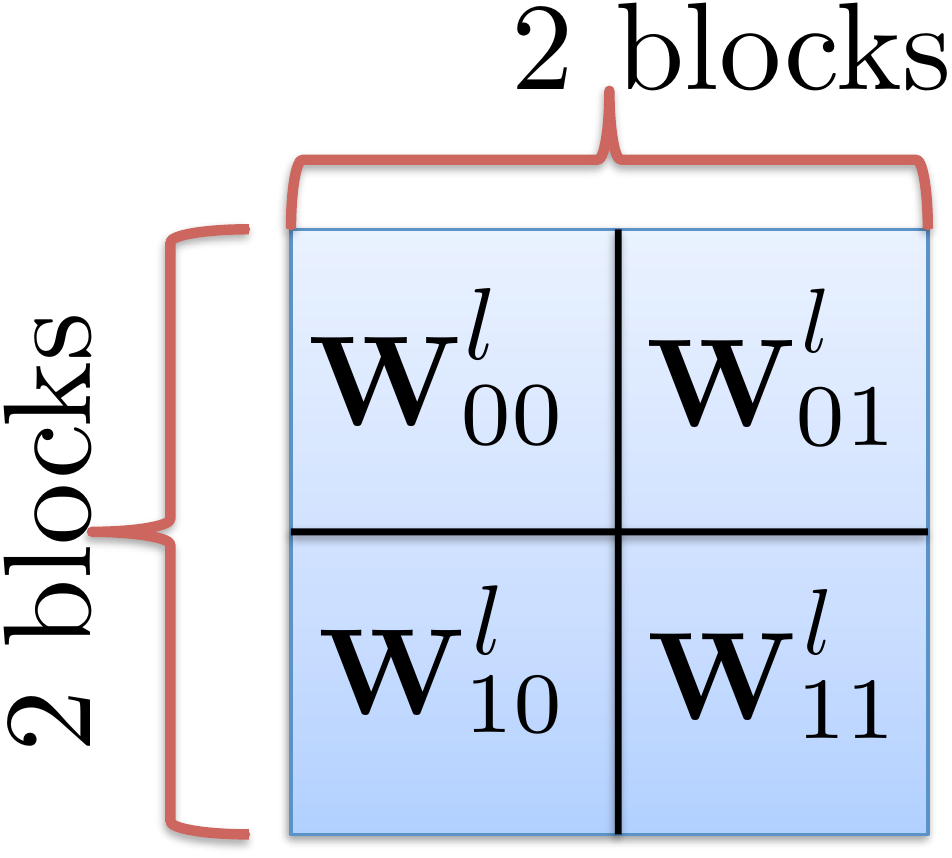}}
\hspace{0.2cm}
\subfloat[Entire Coded Matrix]{\includegraphics[height=3cm]{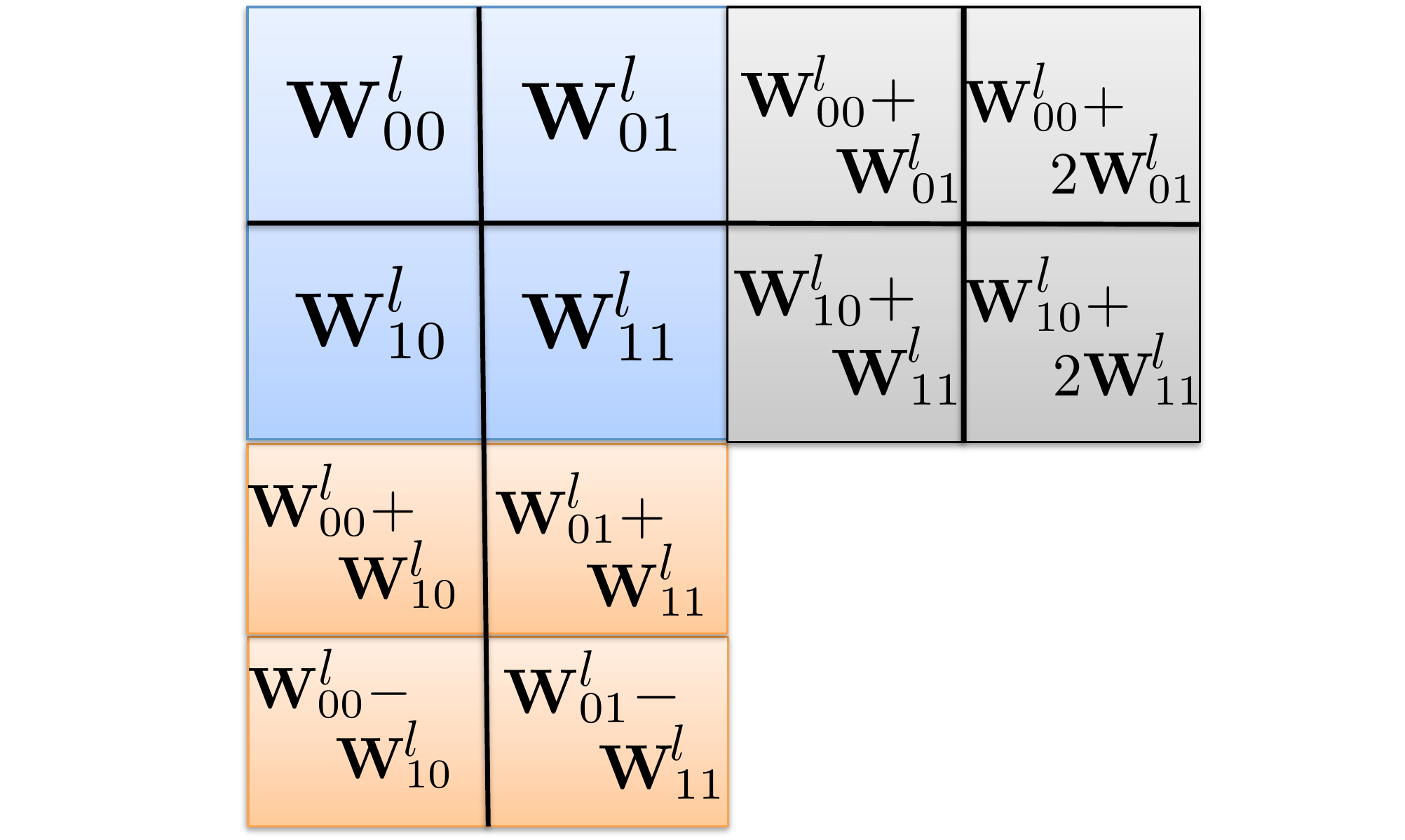}}
\hspace{0.2cm}
\subfloat[Active (Feedforward)]{\includegraphics[height=3cm]{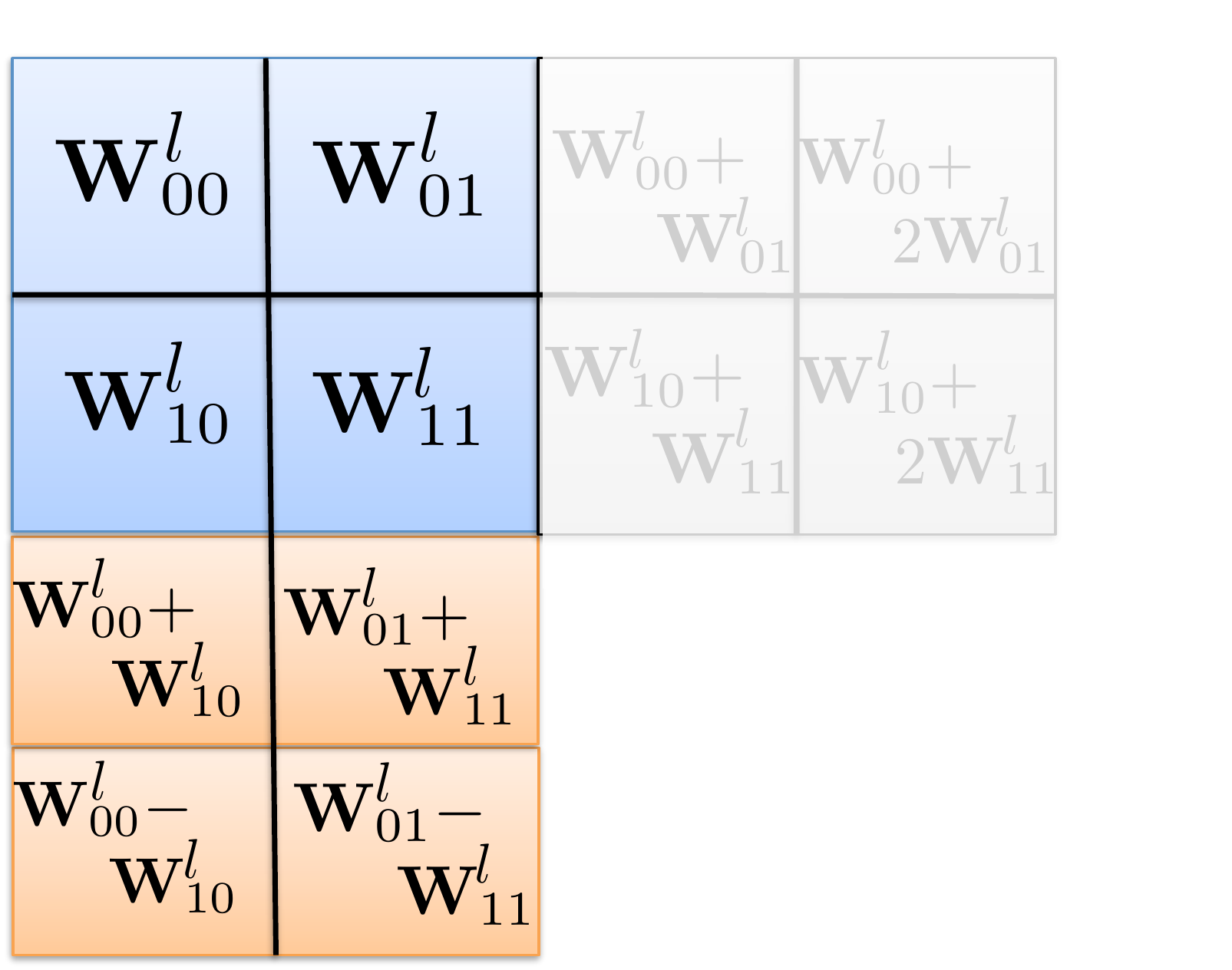}}
\hspace{0.2cm}
\subfloat[Active (Backpropagation)]{\includegraphics[height=3cm]{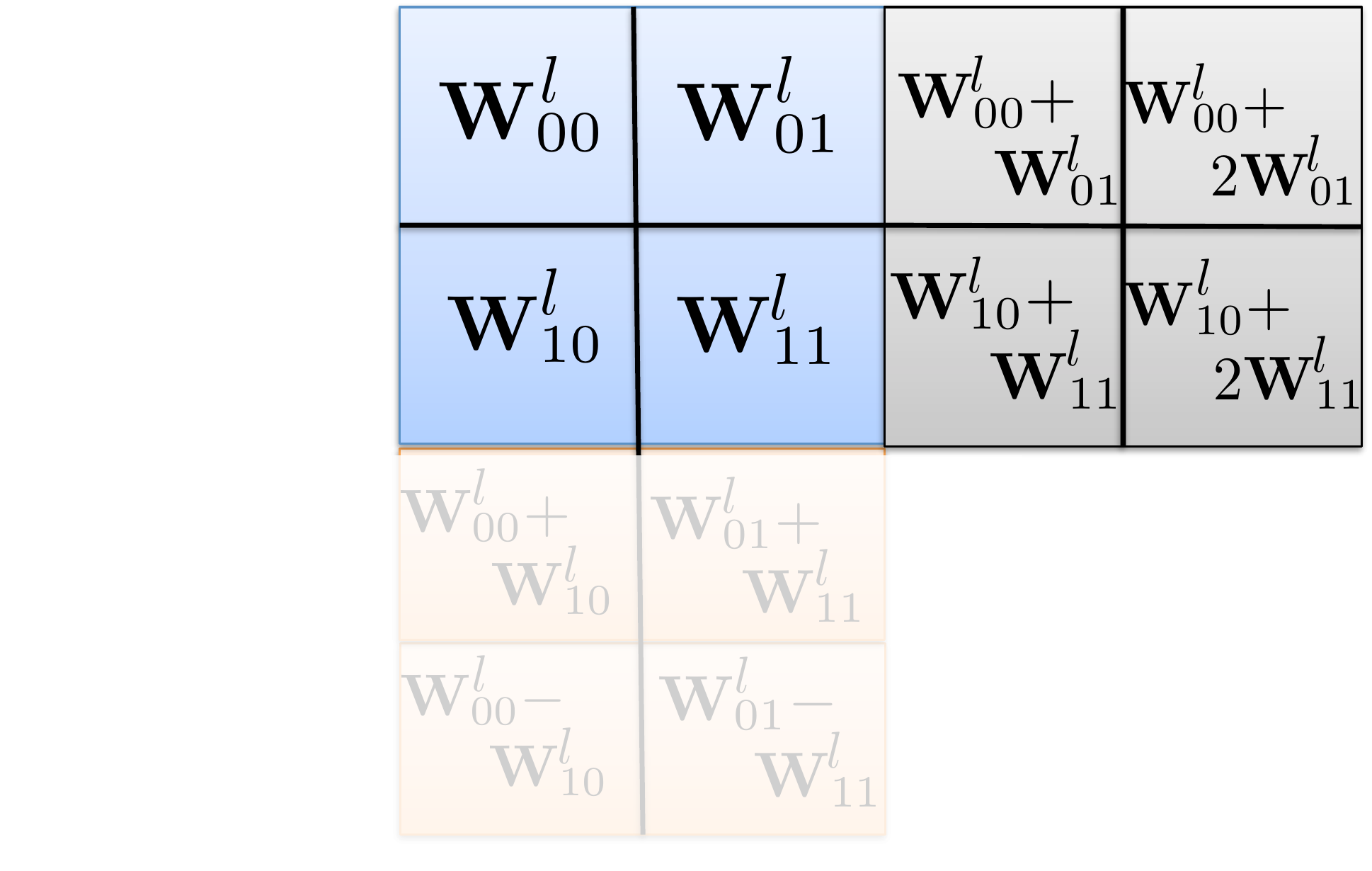}}
\caption{MDS-code-based strategy for DNN training \cite{dutta2018DNN1}: Original Matrix $\bm{W}$ is divided into $2 \times 2$ blocks and then encoded using two systematic MDS codes for the row and column blocks respectively. The redundant nodes due to any one MDS code is active in the feedforward or backpropagation stage.}
\label{fig:MDS_strategy}
\end{figure}
\begin{lem}[Error Tolerances for MDS-code-based Strategy]
The error tolerances for the MDS-code-based strategy are as follows: $t_f=\lfloor \frac{P_f-mn}{2n} \rfloor ,\ t_b=\lfloor \frac{P_b-mn}{2m} \rfloor $ under Error Model $1$ and $t_f= \frac{P_f-mn-n}{n},\  t_b= \frac{P_b-mn-m}{m}  $ under Error Model $2$.
\label{lem:mds_error_tolerance}
\end{lem}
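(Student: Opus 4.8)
The plan is to reduce the lemma to the standard error-correction guarantees of the two constituent MDS codes, the key observation being that the relevant codeword symbols are the \emph{aggregated} outputs of each block-row (for $O1$) or block-column (for $O2$), not the raw per-node block products. Consider step $O1$ first. As in the replication description, the $P_f$ active nodes sit on a grid with $\frac{P_f}{n}$ block-rows and $n$ block-columns, and the node at position $(i,j)$ stores the $(i,j)$ block of $\left(\bm{G}_r^T\otimes\bm{I}_{N/m}\right)\bm{W}$, i.e.\ a fixed linear combination (prescribed by $\bm{G}_r$) of $\bm{W}_{0,j},\dots,\bm{W}_{m-1,j}$. It accesses $\bm{x}_j$, forms its block product, and the $n$ products in block-row $i$ are summed by one designated node into $\widetilde{\bm{s}}_i$. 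Stacking the $\frac{P_f}{n}$ block-rows, the vector $(\widetilde{\bm{s}}_0,\dots,\widetilde{\bm{s}}_{P_f/n-1})$ is exactly a codeword of the $(\frac{P_f}{n},m)$ systematic MDS code generated by $\bm{G}_r$, whose $m$ message symbols are the desired sub-vectors $\bm{s}_0,\dots,\bm{s}_{m-1}$ of $\bm{s}=\bm{W}\bm{x}$. Crucially, every node belongs to exactly one block-row and feeds (directly, or through its summation node) exactly one symbol $\widetilde{\bm{s}}_i$; hence $t$ erroneous nodes in step $O1$ corrupt at most $t$ codeword symbols, the worst case being $t$ nodes in $t$ distinct block-rows corrupting exactly $t$ symbols. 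An entirely symmetric argument for step $O2$ (rows and columns swapped) shows the aggregated backpropagation outputs $(\widetilde{\bm{c}}^T_0,\dots,\widetilde{\bm{c}}^T_{P_b/m-1})$ form a codeword of the $(\frac{P_b}{m},n)$ systematic MDS code generated by $\bm{G}_c$, with message symbols $\bm{c}^T_0,\dots,\bm{c}^T_{n-1}$, and that $t$ erroneous nodes corrupt at most $t$ of these symbols.

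Given this reduction, both cases follow by applying known MDS decoding facts componentwise over the $N/m$ (resp.\ $N/n$) coordinates of the vector symbols, the set of corrupted symbols being the same across coordinates. Under Error Model $1$ (adversarial error values, with encoding/aggregation/decoding error-free), bounded-distance decoding of an $(P',Q')$ MDS code corrects — and hence detects — any pattern of at most $\lfloor\frac{d_{\min}-1}{2}\rfloor=\lfloor\frac{P'-Q'}{2}\rfloor$ symbol errors, and a larger number of adversarially chosen symbol errors can push the received word into the decoding ball of a different codeword, so this value is also the worst-case threshold. Substituting $(P',Q')=(\frac{P_f}{n},m)$ and $(P',Q')=(\frac{P_b}{m},n)$ gives $t_f=\lfloor\frac{P_f-mn}{2n}\rfloor$ and $t_b=\lfloor\frac{P_b-mn}{2m}\rfloor$. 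Under Error Model $2$, the corruption injected into a symbol $\widetilde{\bm{s}}_i$ is the sum of the iid Gaussian noise of the erroneous nodes feeding that symbol (or the direct Gaussian noise of its summation node), hence has a continuous distribution, while corruptions of distinct symbols stem from disjoint node sets and are therefore independent; \Cref{thm:main} then applies and the code corrects up to $P'-Q'-1$ symbol errors, always detects the occurrence of errors, and otherwise flags a decoding failure, all with probability $1$. This yields $t_f=\frac{P_f-mn-n}{n}$ and $t_b=\frac{P_b-mn-m}{m}$, assuming (as the construction already requires) that $n\mid P_f$ and $m\mid P_b$.

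For completeness the argument should note that errors occurring in the update step $O3$ corrupt stored sub-matrices and only surface when those sub-matrices are next used, i.e.\ after an $O1$ or $O2$ of the following iteration, at which moment they are precisely of the form analyzed above; they therefore do not alter the per-step tolerances $t_f,t_b$ (their bookkeeping against $t_1,t_2,t_3$ belongs to the later comparison theorems, not to this lemma).

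I do not anticipate a real obstacle: the lemma is essentially a translation of standard MDS decoding into the grid layout of \Cref{fig:MDS_strategy}. The one point needing genuine care is the ``one faulty node $\Rightarrow$ at most one corrupted codeword symbol'' claim — in particular verifying that a faulty summation node still corrupts only its own block-row's symbol — together with checking, for Error Model $2$, that the induced symbol-level noise remains continuously distributed and independent across symbols, so that \Cref{thm:main} (rather than merely the adversarial MDS bound) is legitimately applicable.
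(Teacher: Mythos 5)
Your proof is correct and follows essentially the same route as the paper: reduce to the two constituent $(\tfrac{P_f}{n},m)$ and $(\tfrac{P_b}{m},n)$ MDS codes, then invoke the adversarial bound $\lfloor\tfrac{P'-Q'}{2}\rfloor$ and the probabilistic bound $P'-Q'-1$ of \Cref{thm:main}. The paper's appendix proof is a one-liner doing exactly this; your extra care with the symbol-level reduction (aggregated block-row sums as codeword symbols, one faulty node corrupting at most one symbol, continuity and independence of the induced symbol noise) simply makes explicit what the paper leaves implicit.
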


\section{Our Proposed coded DNN Training Strategy for  mini-batch $B=1$}
\label{sec:coded_DNN_MV}

%\textcolor{black}{[HJ] Can we merge $B=1$ section and $B>1$ section, and mention B=1 as a special case of B>=1? They seem to have minor differences, and enumerating these two strategies in separate sections make the paper look a bit tedious while it is not! }

%\textcolor{black}{SD: Actually since the ISIT paper is written for the MV case (B=1), my approach has been to include B>1 as an extension or generalization of B=1 case than the other way round. But open to suggestions. }

In this section, we introduce our proposed unified coded DNN training strategy. We propose an initial encoding scheme for $\bm{W}$ at each layer such that the same encoding allows us to perform coded ``post'' and ``pre'' multiplication of $\bm{W}$ with vectors $\bm{x}$ and $\bm{\delta}^T$ respectively at each layer in every iteration. The key idea is that we encode $\bm{W}$ only for the first iteration. For all subsequent iterations, we encode and decode vectors (hence complexity $o(\frac{N^2}{K})$ as we show in \Cref{thm:complexity_MV}) instead of matrices. As we will show, the encoded weight matrix $\bm{W}$ is able to update itself, maintaining its coded structure at very low additional overhead. 

%In our strategy $\mathcal{C}_{\mathrm{poly-dot}}(K,N,P)$, we encode $\bm{W}$ only for the first iteration. For all subsequent iterations, we encode and decode vectors (hence complexity $o(\frac{N^2}{K})$ as we show in \Cref{thm:complexity}) instead of matrices. As we will show, the encoded weight matrices are able to update themselves, maintaining their coded structure. 

\noindent \textbf{Initial Encoding of $\bm{W}$ (Pre-processing Step):} 
    Every node stores an $\frac{N}{m}\times \frac{N}{n}$ sub-matrix (or block) of $\bm{W}$ encoded using Generalized PolyDot. Recall from \Cref{eq:encoded_W}) that, 
    \begin{equation}
    \label{eq:initial_encoded_W}
    \widetilde{\bm{W}}(u,v)= \sum_{i=0}^{m-1} \sum_{j=0}^{n-1} \bm{W}_{i,j}u^i v^j.
    \end{equation}
For $p=0,1,\ldots,P-1$, node $p$ stores $\widetilde{\bm{W}}_p := \widetilde{\bm{W}}(u,v)|_{u=a_p,v=b_p}$, \textit{i.e.}, the evaluation of $\widetilde{\bm{W}}(u,v)$ at $(u,v)=(a_p,b_p)$,  at the beginning of the training. This coded sub-matrix has $\frac{N^2}{K}$ entries. Thus every node stores a sub-matrix $\widetilde{\bm{W}}_p = \widetilde{\bm{W}}(a_p,b_p)$ at the beginning of the training which has $\frac{N^2}{K}$ entries. 

\textit{Encoding of matrix $\bm{W}$ is done only before the first iteration.}
 
%\textbf{Initial Encoding of $\bm{W}$:} 
%Now, we use \textit{Poly-Dot} codes, that were first proposed in \cite{allerton17}, to encode the matrix $\bm{W}$. Conceptually, Poly-Dot codes are a coded matrix multiplication approach that  interpolates between polynomial codes~\cite{yu2017polynomial} (with low communication costs) and MatDot codes~\cite{allerton17} (with high error tolerance). Node $p=0,1,\dots P-1$ receives an encoded block of $\bm{W}$ which is a polynomial in $u$ and $v$ 

\noindent \textbf{Feedforward stage:} 
Assume that the entire input $\bm{x}$ to the layer is made available at every node by the previous layer (this assumption is justified at the end of this paragraph). Also assume that the updated $\widetilde{\bm{W}}_p$ of the previous iteration is available at every node (this assumption will be justified when we show that the encoded sub-matrices of $\bm{W}$ are able to update themselves, preserving their coded structure).

For $p=0,1,\ldots,P-1$, node $p$ first block-partitions $\bm{x}$ into $n$ equal parts, and encodes them using the polynomial:
\begin{equation}
\widetilde{\bm{x}}(v)= \sum_{j=0}^{n-1} \bm{x}_j v^{n-j-1}.
\end{equation}
For $p=0,1,\dots,P-1$, the $p$-th node \textbf{evaluates} the polynomial $\widetilde{\bm{x}}(v)$ at $v=b_p$, yielding $\widetilde{\bm{x}}_p:=\widetilde{\bm{x}}(b_p)$. E.g., for $n=2$, $\bm{x}$ is encoded as $ \widetilde{\bm{x}}(v) = \bm{x}_0 v + \bm{x}_1  $.

Next, each node \textbf{computes} the matrix-vector product: $\widetilde{\bm{s}}_p:=\widetilde{\bm{W}}_p\widetilde{\bm{x}}_p$. The computation of $\widetilde{\bm{s}}_p$ at node $p$ is equivalent to the evaluation, at $(u,v)=(a_p,b_p)$, of the following polynomial:
\begin{equation}
\widetilde{\bm{s}}(u,v) = \widetilde{\bm{W}}(u,v) \widetilde{\bm{x}}(v) =  \sum_{i=0}^{m-1} \sum_{j=0}^{n-1} \sum_{j'=0}^{n-1}\bm{W}_{i,j} \bm{x}_{j'} u^i v^{n-1+j-j'}
\label{eq:s_polynomial}
\end{equation}
even though the node is \textit{not explicitly evaluating it by accessing all its coefficients separately}.
Now, fixing $j'=j$, observe that the coefficient of $u^i v^{n-1}$ for $i=0,1,\dots,m-1$ is $\sum_{j=0}^{n-1} \bm{W}_{i,j} \bm{x}_j = \bm{s}_i$. Thus, these $m$ coefficients constitute the $m$ sub-vectors of $\bm{s}=\bm{W}\bm{x}$. Therefore, $\bm{s}$ can be recovered at any node if it can reconstruct all the coefficients of the polynomial $\widetilde{\bm{s}}(u,v)$ in \Cref{eq:s_polynomial}, or rather \textit{just these $m$ coefficients}.
%Let us illustrate this for the case where $m=n=2$. Consider the following polynomial:
%\begin{align}
%\widetilde{\bm{s}}(u,v)  & = (\bm{W}_{0,0} + \bm{W}_{1,0}u +  \bm{W}_{0,1}v + \bm{W}_{1,1} uv   )(\bm{x}_0 v + \bm{x}_1)  \nonumber \\
%& = \bm{W}_{0,0}\bm{x}_1 +  \bm{W}_{1,0}\bm{x}_1 u +  \bm{W}_{0,1} \bm{x}_0 v^2 + \bm{W}_{1,1}\bm{x}_0 uv^2 \nonumber \\
%& + \underbrace{(\bm{W}_{0,0} \bm{x}_0 + \bm{W}_{0,1} \bm{x}_1)}_{\bm{s}_0} v  + \underbrace{(\bm{W}_{1,0} \bm{x}_0 + \bm{W}_{1,1} \bm{x}_1)}_{\bm{s}_1} uv.
%\end{align}
The matrix-vector product computed at the $p$-th node results in the evaluation of this polynomial at $(u,v)=(a_p,b_p)$. Every node then sends this product to every other node\footnote{Recall that there is no single master node, and every node replicates the functioalities of the decoder. Using efficient all-to-all communication protocols~\cite{chan2007collective,bruck1997efficient} popular in parallel computing, the communication cost of all nodes broadcasting its own sub-vector $\widetilde{\bm{s}}_p$ of length $\frac{N}{m}$ to all other nodes has a communication cost of $\alpha \log{P} +2\beta \frac{N}{m}P  = \Theta(\frac{N}{m}P )$ using All-Gather protocol. We are currently examining strategies to reduce this cost further.} where some of these products may be erroneous. Now, if every node can still reconstruct the coefficients of $u^i v^{n-1}$ from these evaluations, then it can successfully decode $\bm{s}_0,\bm{s}_1,\ldots,\bm{s}_{m-1}$. 
%Now, if every node can still decode the coefficients of $v$ and $uv$ (observe that these are actually $u^i v^{n-1}$ for $i=0,1$) from these evaluations, then it can successfully decode $s_0$ and $s_1$. 

We use one of the substitutions $u=v^n$ or $v=u^m$ (elaborated in \Cref{appendix:error_tolerance}), to convert $\widetilde{\bm{s}}(u,v)$ into a polynomial in a single variable and then use standard decoding techniques\cite{candes2005decoding} to interpolate the coefficients of a polynomial in one variable from its evaluations at $P$ arbitrary points when some  evaluations have an additive error. Once $\bm{s}$ is decoded at each node, the nonlinear function $f(\cdot)$ is applied element-wise to generate the input for the next layer. This also makes $\bm{x}$ available at every node at the start of the next feedforward layer, justifying our assumption.

\textit{Regeneration:} If the number of errors are few ($\leq$ error tolerance), the nodes are not only able to decode the vectors correctly but also locate which nodes were erroneous (see \Cref{appendix:error_tolerance}). Thus, the encoded $\bm{W}$ stored at those nodes are \textbf{regenerated}\footnote{The encoded matrix at any node is the evaluation of a polynomial whose coefficients correspond to the original sub-matrices $\bm{W}_{i,j}$. Thus, the number of nodes required by an error-prone node is the degree of this polynomial $ + 1$. Substituting $u=v^n$ (alternatively, $v=u^m$), this degree is $mn-1$, and thus an error-prone node needs to access $mn$ correct nodes to regenerate itself.} by accessing some of the nodes that are known to be correct and the algorithm proceeds forward.  

\begin{rem} It might appear that regeneration violates the storage constraint of each node, \textit{i.e.}, a storage of only a $\frac{1}{K}$ fraction of matrix $\bm{W}$. However, note that the coded sub-matrix can also be computed element-wise rather than all at once, while adhering to the storage constraint. E.g.~if the stored sub-matrix $\widetilde{\bm{W}}(a_p, b_p)$ of size $\frac{N}{m}\times \frac{N}{n}$ is required to be regenerated, the node can access only the first element (location $(0,0)$) of the stored sub-matrix of any $mn$ nodes to regenerate the element $(0,0)$ of $\widetilde{\bm{W}}(a_p, b_p)$. Then it deletes all the gathered values, and moves on to location $(0,1)$ and so on. This process only requires an additional storage of $o\left( \frac{N^2}{K} \right)$.
\end{rem}

\textit{Additional Steps (Under Error Model $2$):} Under Error Model $2$, errors can occur in all the primary steps, and are unbounded. Similar to replication and MDS-code-based strategy, the DNN is \textbf{checkpointed} at a disk at regular intervals. If there are more errors than the error tolerance after steps $O1$ or $O2$, the nodes are unable to decode correctly. However, as the error is assumed to be additive and drawn from real-valued, continuous distributions, the occurrence of errors is still detected with probability $1$ even though they cannot be located or corrected, and thus the entire DNN can again be \textit{restored from the last checkpoint}.

To allow for \textbf{decoding errors}, we need to include one more verification step. This step is similar to the replication strategy where all nodes exchange some particular values of the decoded vector, \textit{i.e.}, say any $P$ pre-decided values of the decoded vector $\bm{s}$ and compare (additional communication overhead of $\alpha \log{P} + 2\beta P^2 $ and computation overhead of $\Theta(P^2)$ as discussed in \Cref{appendix:complexity}). Again, it is unlikely that two nodes will have the exact same decoding error. If there is a disagreement at one or more nodes during this process, we assume that there has been errors during the decoding, and the entire neural network is restored from the last checkpoint. Because the total complexity of this verification step is low in scaling sense compared to encoding/decoding or communication (because it does not depend on $N$), we assume that it is error-free since the probability of soft-errors occurring within such a small duration is negligible as compared to other computations of longer duration.

%To allow for \textbf{decoding errors}, we need to include one more verification step where all nodes exchange their assessment of node outputs, \textit{i.e.}, a list of nodes that they found erroneous and compare (additional communication overhead of $\alpha \log{P} + 2\beta P^2 $ and computation overhead of $\Theta(P^2)$ as discussed in \Cref{appendix:complexity}). Again, it is unlikely that two nodes will have the exact same decoding error. If there is a disagreement at one or more nodes during this process, we assume that there has been errors during the decoding, and the entire neural network is restored from the last checkpoint. Because the total complexity of this verification step is low in scaling sense compared to encoding/decoding or communication (because it does not depend on $N$), we assume that it is error-free since the probability of soft-errors occurring within such a small duration is negligible as compared to other computations of longer duration.

Under Error Model $2$, errors can also occur in the step $C1$ or during encoding. If an error occurs during step $C1$, the vector $\bm{x}$ for the next layer is corrupted, which ultimately corrupts the encoded sub-vector of that node, \textit{i.e.}, $\widetilde{\bm{x}}_p$ for the next layer. Errors during encoding will also corrupt this sub-vector $\widetilde{\bm{x}}_p$. Then, the error propagates into $\widetilde{\bm{s}}_p$ during the matrix-vector product $\widetilde{\bm{s}}_p=\widetilde{\bm{W}}_p\widetilde{\bm{x}}_p$ at the next layer and is finally detected and if possible corrected after step $O1$ of the next layer in the same iteration, when every node attempts to decode $\bm{s}$ from all its received $\widetilde{\bm{s}}_p$ sub-vectors, some of which may be erroneous.

\textbf{Backpropagation stage:} 
The backpropagation stage is very similar to the feedforward stage. The backpropagated error (transpose) $\bm{\delta}^T$ is available at every node. Each node partitions the row-vector $\bm{\delta}^T$ into $m$ equal parts and encodes them using the polynomial: 
\begin{equation}
\widetilde{\bm{\delta}}^T(u) = \sum_{i=0}^{m-1} \bm{\delta}^T_i u^{m-i-1}.
\end{equation}
For $p=0,1,\dots,P-1$, the $p$-th node \textbf{evaluates} $\widetilde{\bm{\delta}}^T(u)$ at $ u= a_p$, yielding $\widetilde{\bm{\delta}}^T_p:= \widetilde{\bm{\delta}}^T(a_p) $. Next, it \textbf{performs} the computation $\widetilde{\bm{c}}^T_p:= \widetilde{\bm{\delta}}^T_p \widetilde{\bm{W}}_p$ and sends the product to all other nodes, of which some products may be erroneous. Consider the polynomial:
$$ \widetilde{\bm{c}}^T(u,v)= \widetilde{\bm{\delta}}^T(u)\widetilde{\bm{W}}(u,v) 
= \sum_{i'=0}^{m-1} \sum_{i=0}^{m-1} \sum_{j=0}^{n-1} \bm{\delta}_{i'}^T\bm{W}_{i,j}  u^{m-1+i-i'} v^j.
$$
The products computed at each node result in the evaluations of this polynomial $\widetilde{\bm{c}}^T(u,v)$ at $(u,v)=(a_p,b_p)$. Similar to feedforward stage, each node then decodes the coefficients of $u^{m-1}v^j$ in the polynomial for $j=0,1,\dots,n-1$, and thus reconstructs $n$ sub-vectors forming $\bm{c}^T$.

\textit{Additional Steps (Under Error Model $2$):} The additional steps of checkpointing and verification step to check for decoding errors is similar. Errors during step $C2$ or during encoding will corrupt the encoded sub-vector $\widetilde{\bm{\delta}}^T_p$, which will eventually show up after the computation $\widetilde{\bm{c}}^T_p= \widetilde{\bm{\delta}}^T_p \widetilde{\bm{W}}_p$, and will be corrected after step $O2$ when each node attempts to reconstruct $\bm{c}^T$ from the outputs $\widetilde{\bm{c}}^T_p$ of all the nodes, of which some may be erroneous.

\textbf{Update stage:} 
The key part is \emph{updating} the coded $\widetilde{\bm{W}}_p$. Observe that since $\bm{x}$ and $\bm{\delta}$ are both available at each node, it can encode the vectors as $\sum_{i=0}^{m-1}\bm{\delta}_i u^i $ and $\sum_{j=0}^{n-1}\bm{x}_j v^j $ at $u=a_p$ and $v=b_p$ respectively, and then update itself as follows:
\begin{align}
\widetilde{\bm{W}}_p &\leftarrow \widetilde{\bm{W}}_p + \eta (  \sum_{i=0}^{m-1}\bm{\delta}_i a_p^i )( \sum_{j=0}^{n-1}\bm{x}_j b_p^j  )^T \nonumber \\
& = \sum_{i=0}^{m-1}\sum_{j=0}^{n-1} \underbrace{( \bm{W}_{i,j} + \eta \bm{\delta}_i \bm{x}^T_j )}_{\text{Update of } \bm{W}_{i,j}} a_p^i b_p^j. \label{eq:coded_update}
\end{align}
Thus, the update step preserves the coded nature of the weight matrix, with negligible additional overhead (see \Cref{thm:complexity_MV}). 

\textbf{Update with Regularization:} When training is performed using L2 regularization, recall that the original update rule is modified as: $\bm{W}\leftarrow (1-\eta\lambda)\bm{W} + \bm{\delta}\bm{x}^T$. Adding the weight decay term only requires a minor change in the update step of coded DNN training. Now, the update equation in \eqref{eq:coded_update} can be modified as:
\begin{align}
\widetilde{\bm{W}}_p &\leftarrow (1-\eta\lambda) \widetilde{\bm{W}}_p + \eta (  \sum_{i=0}^{m-1}\bm{\delta}_i a_p^i )( \sum_{j=0}^{n-1}\bm{x}_j b_p^j  )^T \nonumber \\
& = \sum_{i=0}^{m-1}\sum_{j=0}^{n-1} \underbrace{( (1-\eta\lambda) \bm{W}_{i,j} + \eta \bm{\delta}_i \bm{x}^T_j )}_{\text{Update of } \bm{W}_{i,j}} a_p^i b_p^j. 
\end{align}
We do not need additional encoding or decoding for introducing the weight decay term. Instead, we only have to shrink the already encoded $\bm{W}$ matrix, $\widetilde{\bm{W}}_p$, by $(1-\eta\lambda)$ after each iteration. For a detailed discussion on regularization in DNN training, the reader is referred to \Cref{subsec:app_reg}.

\textit{Errors during Update stage:} Errors can occur in the update stage under both the error models. These errors corrupt the updated sub-matrix $\widetilde{\bm{W}}_p$ and then show up in the computation $\widetilde{\bm{s}}_p=\widetilde{\bm{W}}_p\widetilde{\bm{x}}_p$ at the same layer in the next iteration. So, these errors are finally detected and if possible corrected after step $O1$ at that layer in the next iteration when every node attempts to decode $\bm{s}$ from all its received $\widetilde{\bm{s}}_p$ sub-vectors, which may be erroneous.

%, \textcolor{black}{without requiring further computationally intensive operations (isn't this already included in the additional overhead part of the statement?)}.% and thus no further encoding of the matrix $\bm{W}$ is required.
%\vspace{-0.05in}

%\subsection{Regeneration and Checkpointing:}

%At the beginning of every iteration, all the nodes access data vector (input to layer $1$) and true label from a shared memory and start the feed forward stage which propagates from layer $1$ to $L$. At the last layer ($L$), every node can generate the backpropagated error vector from the true label and estimated label (output of last layer) and start the backpropagation stage which propagates from layer $L$ to $1$, updating the weight matrices at every layer. 

%\footnote{Regeneration is also quite expensive, but it is still better than reverting to the last checkpoint (an older iteration) as the algorithm proceeds forward in iterations.}. \\
%\textbf{Large Number of Errors:}

%one can tolerate $(P-mn - m+1)$ erasures or $\lfloor \frac{(P-mn - m+1)}{2}\rfloor $ errors. However, the feed forward stage now requires $m(2n-1))$ evaluations for erasures and can correct $\lfloor \frac{(P-m(2n-1))}{2}\rfloor $ errors. \Cref{table_example} summarizes the error tolerance and straggler-tolerance of the strategies.

%\vspace{-0.1cm}
\section{Results on Performance of our Proposed Strategy}
\label{sec:comparison_DNN_MV}
%\vspace{-0.1cm}
In this section, we show that $\mathcal{C}_{\mathrm{GP}}(K,N,P)$ has better worst case error tolerances than $\mathcal{C}_{\mathrm{mds}}(K,N,P)$ and $\mathcal{C}_{\mathrm{rep}}(K,N,P)$ by a factor that can diverge to infinity while the communication and computation overheads of the proposed strategy remains negligible. The comparison is formalized in \Cref{thm:error_tolerance_MV} followed by the characterization of the additional overheads in \Cref{thm:complexity_MV}.
{\color{black}{
\begin{thm}[Error tolerances $(t_f,t_b)$]
\label{thm:error_tolerance_MV}
The error tolerances $(t_f,t_b)$ at each layer for the three strategies $\mathcal{C}_{\mathrm{GP}}(K,N,P)$, $\mathcal{C}_{\mathrm{mds}}(K,N,P)$ and $\mathcal{C}_{\mathrm{rep}}(K,N,P)$ under Error Models $1$ and $2$ are given by \Cref{table_example}.
\end{thm}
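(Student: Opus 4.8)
The plan is to prove the three rows of \Cref{table_example} separately, doing almost no new work for two of them. The entries for the replication strategy $\mathcal{C}_{\mathrm{rep}}(K,N,P)$ and the preliminary MDS-code-based strategy $\mathcal{C}_{\mathrm{mds}}(K,N,P)$ are precisely the statements of \Cref{lem:replication_error_tolerance} and \Cref{lem:mds_error_tolerance}, so it suffices to restate them. The one observation needed to package these as ``error tolerances'' in the sense of our definition is the propagation argument already spelled out in \Cref{sec:existing_strategies}: an error occurring in step $O3$ (and, under Error Model $2$, also in $C1$, $C2$, or encoding) does not manifest as a fresh corrupted node output at the layer/iteration where it happens; it corrupts a stored sub-matrix or sub-vector that first re-enters communicated data at step $O1$ or $O2$ of a later layer, and is therefore already counted by the same $t_f$ or $t_b$.

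The substantive row is that of $\mathcal{C}_{\mathrm{GP}}(K,N,P)$, and here the first step is to identify the two relevant ``code lengths.'' For the feedforward product $\bm{s}=\bm{W}\bm{x}$, \Cref{thm:gen_poly_MV} already shows that after the garbage-alignment substitution $u=v^n$ the polynomial $\widetilde{\bm{s}}(u,v)$ becomes a univariate polynomial in $v$ of degree $mn+n-2$, whose $m$ useful coefficients $\bm{s}_i$ sit at the distinct exponents $ni+n-1$, $i=0,\dots,m-1$; hence decoding $\bm{s}$ from the $P$ (possibly erroneous) node outputs $\widetilde{\bm{s}}_p$ is exactly decoding a real-number $(P,\,mn+n-1)$ MDS code, the generator being the $P\times(mn+n-1)$ Vandermonde-type matrix with $p$-th row $(1,b_p,b_p^2,\dots,b_p^{mn+n-2})$. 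For the backpropagation product $\bm{c}^T=\bm{\delta}^T\bm{W}$, the dual substitution $v=u^m$ turns $\widetilde{\bm{c}}^T(u,v)=\widetilde{\bm{\delta}}^T(u)\widetilde{\bm{W}}(u,v)$ into a univariate polynomial in $u$ of degree $2(m-1)+m(n-1)=mn+m-2$ whose $n$ useful blocks sit at the distinct exponents $m-1+im$, $i=0,\dots,n-1$; this is a real-number $(P,\,mn+m-1)$ MDS code (the details mirror \Cref{thm:gen_poly_MV} and are deferred to \Cref{appendix:error_tolerance}). Note that here, unlike $\mathcal{C}_{\mathrm{mds}}$, all $P$ nodes participate in both steps.

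With the two MDS reductions in hand, the table entries follow from applying a single black box twice. Under Error Model $1$ (adversarial), a real-number $(P,Q)$ MDS code corrects $\lfloor (P-Q)/2\rfloor$ errors in the worst case (standard bounded-distance decoding), so $t_f=\lfloor (P-mn-n+1)/2\rfloor$ and $t_b=\lfloor (P-mn-m+1)/2\rfloor$. Under Error Model $2$, \Cref{thm:main} says the same code detects the occurrence of errors with probability $1$ regardless of their number and corrects up to $P-Q-1$ of them with probability $1$, so $t_f=P-mn-n$ and $t_b=P-mn-m$. The same propagation remark as above applies: an error in $O3$, $C1$, $C2$, or the vector-encoding step only resurfaces inside some $\widetilde{\bm{s}}_p$ or $\widetilde{\bm{c}}^T_p$ as a single corrupted node output, and — being a linear image of the originally injected continuous noise — inherits a continuous distribution, so the measure-zero arguments behind \Cref{thm:main} still apply; also, every node runs the same decoder on the same data, so the tolerances are uniform across nodes.

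The main obstacle I anticipate is not the arithmetic but making the ``behaves like a real-number MDS code'' step airtight after garbage alignment: one must verify that the substitution $u=v^n$ (resp.\ $v=u^m$) does \emph{not} merge any of the $m$ (resp.\ $n$) useful coefficients with one another or with a garbage coefficient — which reduces to checking that the exponent maps $i\mapsto ni+n-1$ (resp.\ $i\mapsto m-1+im$) are injective on the index range and land on exponents not shared with any $v^{ni+j-j'+n-1}$ term having $j\neq j'$ — and that the resulting univariate generator matrix is genuinely MDS over $\mathbb{R}$, for which it is enough that the evaluation points $b_p$ (equivalently $a_p=b_p^n$, or $u=a_p$ with $b_p=a_p^m$) be distinct, since every square submatrix of a real Vandermonde matrix with distinct nodes is nonsingular. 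Once this is established, substituting $Q=mn+n-1$ and $Q=mn+m-1$ into the adversarial bound and into \Cref{thm:main} completes the $\mathcal{C}_{\mathrm{GP}}$ row, and the other two rows are immediate from \Cref{lem:replication_error_tolerance,lem:mds_error_tolerance}.
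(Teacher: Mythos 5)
Your treatment of the replication and MDS rows is fine (they really are just \Cref{lem:replication_error_tolerance,lem:mds_error_tolerance} restated, and the propagation-of-errors remark you give correctly explains why the errors in $O3$, $C1$, $C2$, and encoding are absorbed into $t_f$ or $t_b$). The gap is in the $\mathcal{C}_{\mathrm{GP}}$ row, and it is not a verification detail but a structural error.

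You compute $t_f$ by applying the substitution $u=v^n$ to the feedforward polynomial, and $t_b$ by applying the \emph{different} substitution $v=u^m$ to the backpropagation polynomial, and you present a single entry $(t_f,t_b)=\bigl(\lfloor(P-mn-n+1)/2\rfloor,\lfloor(P-mn-m+1)/2\rfloor\bigr)$. But the two matrix--vector products share the same stored encoding $\widetilde{\bm{W}}_p=\widetilde{\bm{W}}(a_p,b_p)$, which is written once in the pre-processing step; the choice of substitution is the choice of evaluation points, e.g.\ $a_p=b_p^n$, and that choice is made a priori and used for \emph{both} stages. You cannot simultaneously have $a_p=b_p^n$ and $b_p=a_p^m$ at more than a couple of points. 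That is exactly why \Cref{table_example} has \emph{two} separate $\mathcal{C}_{\mathrm{GP}}$ rows, one per substitution, rather than one. Concretely, if you fix $u=v^n$, then in the backpropagation polynomial $\widetilde{\bm{c}}^T(v)=\sum_{i',i,j}\bm{\delta}^T_{i'}\bm{W}_{i,j}\,v^{n(m-1+i-i')+j}$ the exponent map $(m-1+i-i',j)\mapsto n(m-1+i-i')+j$ is a bijection onto $\{0,\dots,2mn-n-1\}$ (since $0\leq j\leq n-1$), so \emph{no} garbage coefficients align; the code length is $n(2m-1)$, not $mn+m-1$, and the correct tolerance is $t_b=\lfloor(P-2mn+n)/2\rfloor$ under Error Model $1$ (and $P-2mn+n-1$ under Error Model $2$). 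Symmetrically, fixing $v=u^m$ leaves the feedforward polynomial with $m(2n-1)$ coefficients. The pair you report, $\bigl(\lfloor(P-mn-n+1)/2\rfloor,\lfloor(P-mn-m+1)/2\rfloor\bigr)$, is not achievable by any single encoding of $\bm{W}$ and does not appear in the table. Everything else in your plan — the reduction to a $(P,Q)$ real-number MDS code via a univariate Vandermonde system, the invocation of bounded-distance decoding for Error Model $1$ and \Cref{thm:main} for Error Model $2$, and the observation that the useful exponents $ni+n-1$ are distinct and never collide with garbage exponents of the form $ni+n-1+j-j'$ with $j\neq j'$ — is exactly what the paper does in \Cref{appendix:error_tolerance}; you just need to carry it out under a single fixed substitution, and report the resulting $(t_f,t_b)$ pair for each of the two substitution choices separately.
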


\begin{table}[!htbp]
\centering
\caption{Error Tolerances $(t_f,t_b)$ under fixed number of nodes $P$ }
\label{table_example}
\centering
\begin{tabular}{|M{2.7cm}|M{5cm}|M{5cm}|}
\hline
Strategy & Error Model $1$ ($t_f,t_b$) & Error Model $2$ ($t_f,t_b$)\\
\hline
$\mathcal{C}_{\mathrm{GP}}(K,N,P)$ with $u=v^{n} $ & $\left(\frac{P-mn-n+1}{2}, \frac{P-2mn+n}{2} \right)$  & $\left(P-mn-n, P-2mn+n-1 \right)$
\\
\hline
$\mathcal{C}_{\mathrm{GP}}(K,N,P)$ with $v=u^{m}$ &   $\left( \frac{P-2mn+m}{2},\frac{P-mn-m+1}{2} \right) $ & $\left( P-2mn+m-1,P-mn-m \right)$ \\
\hline
$\mathcal{C}_{\mathrm{mds}}(K,N,P)$ where $P=P_f+P_b-mn$ & $\left(\frac{P_{f}-mn}{2n},   \frac{P_{b}-mn}{2m} \right)$& $\left(\frac{P_{f}-mn-n}{n},   \frac{P_{b}-mn-m}{m} \right)$ \\
\hline
$\mathcal{C}_{\mathrm{rep}}(K,N,P)$ & $\left(\frac{P-mn}{2mn}, \frac{P-mn}{2mn}\right) $ & $\left(\frac{P-2mn}{mn}, \frac{P-2mn}{mn}\right) $ \\
\hline
\end{tabular}
\end{table}
}}
\begin{rem} We ignore integer effects here as we are primarily interested in error tolerance with scaling $P$. However, strictly speaking, we need a floor function $ \lfloor \cdot \rfloor $ applied to all of the expressions for Error Model $1$ above, and $mn$ to divide $P$ for replication for both the error models.
\end{rem}

\begin{rem} 
Note that, for the proposed coded DNN training strategy, the errors in the update stage (step $O3$) are also corrected after step $O1$ at that layer, in the next iteration along with the other errors during step $O1$. Thus, under Error Model $1$, we require $t_f$ to be greater than $t_1+t_3$ while $t_b$ is only required to be greater than $t_2$. Because the computational complexities of steps $O1,O2$ and $O3$ are similar, one might expect that $t_1,t_2$ and $t_3$ are nearly the same. Under Error Model $2$, when the errors are more than $t_f$ or $t_b$, the strategy is able to detect errors with probability $1$ but not correct them. However, even under Error Model $2$, it is more desirable to have $t_f > t_b$ since more errors are likely to occur after step $O1$ as compared to step $O2$ since the total duration of computation that is covered is more.
%under Error Model $1$.
%because at step $C1$ all errors from both steps $O1$ of the current iteration and $O3$ of the last iteration are corrected along with low-complexity intermediate steps. However, at step $C2$, only errors at $O2$ are corrected along with low-complexity intermediate steps. 
\end{rem}

\begin{coro}[Scaling Sense Comparison]
\label{coro:scaling_sense_comparison}
Consider the regime $m=n=\sqrt{K}$. Then, for both the error models, the ratio of $t_f$ (or $t_b$) for $\mathcal{C}_{\mathrm{GP}}(K,N,P)$ with $\mathcal{C}_{\mathrm{mds}}(K,N,P)$ and $\mathcal{C}_{\mathrm{rep}}(K,N,P)$ scales as $\Theta(\sqrt{K})$ and $\Theta(K)$ respectively as $P \to \infty $.
\end{coro}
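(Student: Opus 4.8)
The plan is to specialize the entries of \Cref{table_example} to the symmetric regime $m=n=\sqrt{K}$ (so that $mn=K$ and $n=\sqrt{K}$) and then read off the growth rate of each tolerance in the limit $P\to\infty$ with $K$ held fixed. For $\mathcal{C}_{\mathrm{GP}}(K,N,P)$, every table entry then has the form $\tfrac12\bigl(P-\Theta(K)\bigr)$ under Error Model $1$ and $P-\Theta(K)$ under Error Model $2$, for both $t_f$ and $t_b$ and for either substitution choice ($u=v^n$ or $v=u^m$, whose tolerance values coincide up to relabeling when $m=n$); hence $t_f,t_b=\Theta(P)$. For $\mathcal{C}_{\mathrm{rep}}(K,N,P)$ the entries become $\tfrac{P-mn}{2mn}=\tfrac{P-K}{2K}$ and $\tfrac{P-2mn}{mn}=\tfrac{P-2K}{K}$, so $t_f=t_b=\Theta(P/K)$.

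The one point that needs care is the comparison with $\mathcal{C}_{\mathrm{mds}}(K,N,P)$, which splits its $P=P_f+P_b-mn$ nodes between the two stages. For a fair comparison at a fixed total $P$, I will use the symmetric allocation $P_f=P_b=\tfrac{P+mn}{2}=\tfrac{P+K}{2}$; when $m=n$ the two MDS tolerance formulas are identical functions of $(P_f,P_b)$, so this split maximizes $\min(t_f,t_b)$ and is the natural choice. Plugging in gives $t_f=t_b=\tfrac{P_f-mn}{2n}=\tfrac{P-K}{4\sqrt{K}}$ under Error Model $1$ and $t_f=t_b=\tfrac{P_f-mn-n}{n}=\tfrac{P-K-2\sqrt{K}}{2\sqrt{K}}$ under Error Model $2$, i.e.\ $t_f=t_b=\Theta(P/\sqrt{K})$ in both models. (One may also note that over all admissible allocations $t_f\le\tfrac{P-mn}{2n}$ since $P_f\le P$, so the bound $t_f=O(P/\sqrt{K})$ holds regardless of how the nodes are split.)

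Combining these estimates, the ratio of the $\mathcal{C}_{\mathrm{GP}}$ tolerance to the $\mathcal{C}_{\mathrm{mds}}$ tolerance tends, as $P\to\infty$, to a positive constant times $\sqrt{K}$ (for instance $\tfrac{P/2}{(P-K)/(4\sqrt{K})}\to 2\sqrt{K}$ for $t_f$ under Error Model $1$), while the ratio of the $\mathcal{C}_{\mathrm{GP}}$ tolerance to the $\mathcal{C}_{\mathrm{rep}}$ tolerance tends to a positive constant times $K$ (for instance $\tfrac{P/2}{(P-K)/(2K)}\to K$). Repeating the computation with the Error Model $2$ entries, and with $t_b$ in place of $t_f$, yields the same two scalings in every case, so the ratios are $\Theta(\sqrt{K})$ and $\Theta(K)$ respectively, as claimed.

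I do not expect any real obstacle: the corollary is a direct asymptotic reading of \Cref{table_example}, and the only modeling decision involved is how to allocate nodes for $\mathcal{C}_{\mathrm{mds}}(K,N,P)$, which the symmetric split settles cleanly.
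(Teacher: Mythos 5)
Your proof is correct and follows essentially the same route as the paper's: substitute $m=n=\sqrt{K}$ into the tolerance expressions of \Cref{thm:error_tolerance_MV} and take the limit $P\to\infty$. The one small refinement you add --- fixing the symmetric allocation $P_f=P_b=(P+K)/2$ for $\mathcal{C}_{\mathrm{mds}}$ and separately noting the universal bound $t_f\le(P-mn)/(2n)$ from $P_f\le P$ --- actually tightens a point the paper glosses over, since the paper's proof only exhibits the lower bound $\Theta(\sqrt{K})$ via $P_f\le P$ without pinning down a matching upper bound for the ratio; your explicit allocation closes that small gap cleanly.
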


The proofs of \Cref{thm:error_tolerance_MV} and \Cref{coro:scaling_sense_comparison} are in \Cref{appendix:error_tolerance}. In \Cref{fig:tolerance_region}, we show that Generalized PolyDot achieves the best $(t_f,t_b)$ tradeoff compared to the other existing schemes. Now we formally show in \Cref{thm:complexity_MV} that $\mathcal{C}_{\mathrm{GP}}(K,N,P)$ also satisfies the desired properties of adding negligible overhead at each node.

\begin{figure}
\centering
\includegraphics[width=6cm]{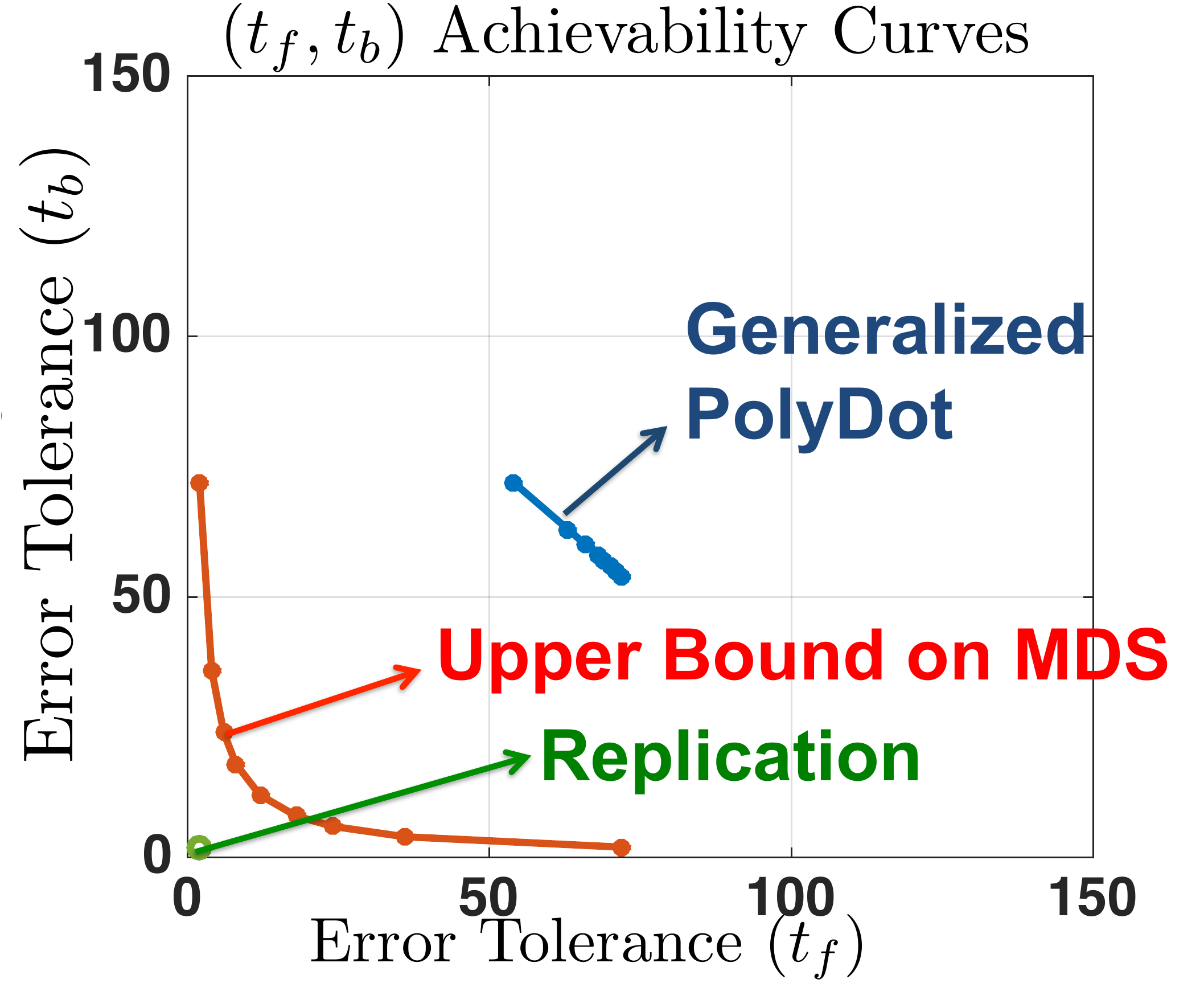}
\hspace{1cm}
\includegraphics[width=6cm]{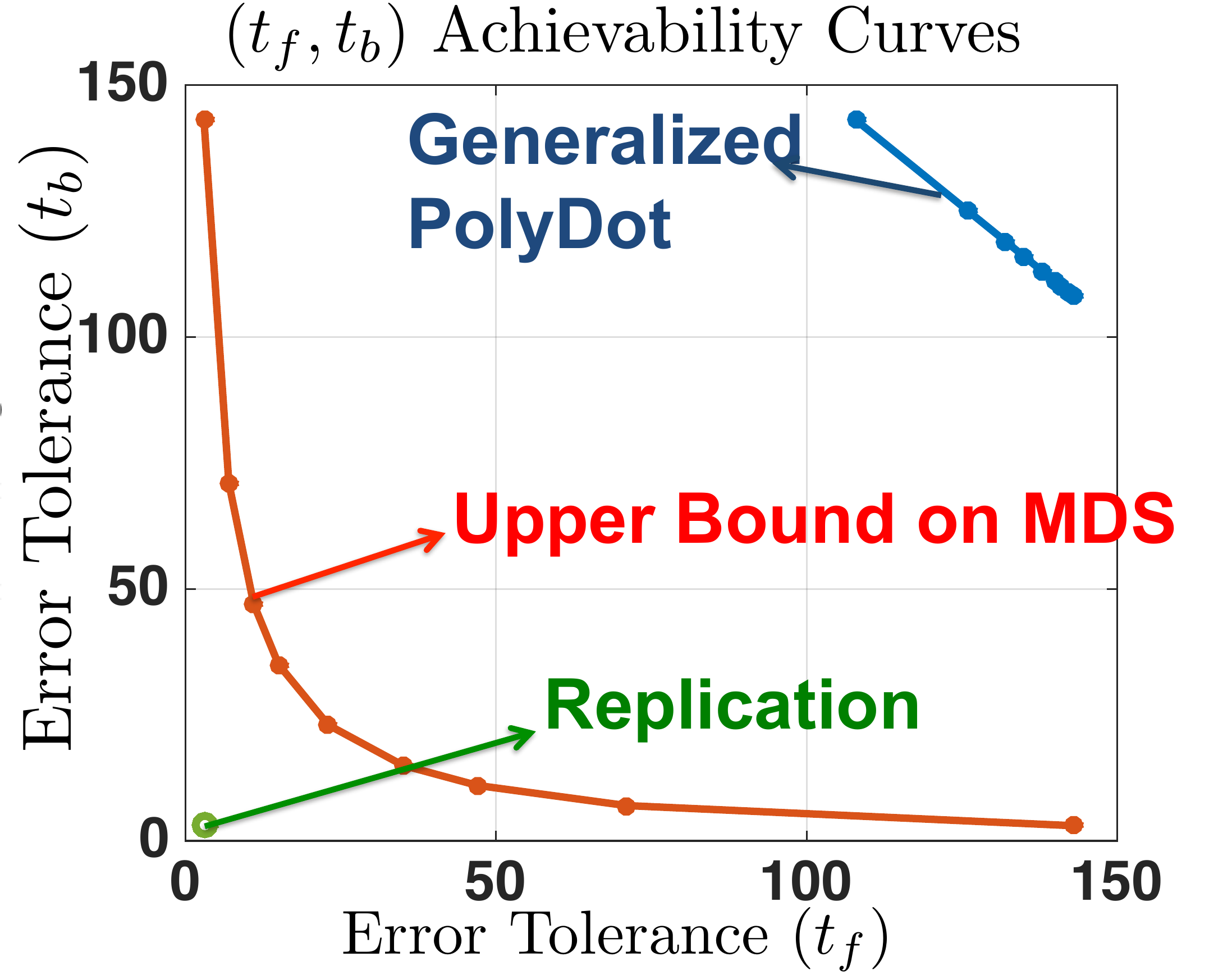}
\caption{Error tolerance regions under Error Models $1$ (Left) and $2$ (Right): We choose $P= 180$, $K=36$ and vary $m$ and $n$. For the MDS-code-based strategy, we plot an upper bounds on $t_f,t_b$ using $P_f,P_b \leq P$. Generalized PolyDot (with $u=v^n$) achieves the best $(t_f,t_b)$ tradeoff. Choosing $v=u^m$ only interchanges $(t_f,t_b)$ and thus it also gives same curve.}
\label{fig:tolerance_region}
\end{figure}

\begin{thm}[Complexity of $\mathcal{C}_{\mathrm{GP}}(K,N,P)$]
\label{thm:complexity_MV}
For $\mathcal{C}_{\mathrm{GP}}(K,N,P)$  at any layer in a single iteration, the ratio of the total complexity of all the steps including encoding, decoding, communication, nonlinear activation, Hadamard product etc. to the most complexity intensive steps (matrix-vector products $O1$ and $O2$ and update $O3$) tends to $0$ as $K,N,P \to \infty$ if the number of nodes satisfy $P^4=o(N)$.
\end{thm}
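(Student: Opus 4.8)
The plan is to enumerate every distinct operation that each node performs at a layer in a single (error-free) iteration, tally its complexity as a function of $N$, $K$, $m$, $n$, $P$, and then show that each of the ``auxiliary'' terms (encoding, decoding, communication, nonlinear activation, Hadamard product, verification, update bookkeeping) is $o(\cdot)$ of the dominant term $\Theta(N^2/K)$ coming from the local matrix-vector products $O1$, $O2$ and the update $O3$, provided $P^4 = o(N)$. I would first record the dominant cost: each node holds the $\frac{N}{m}\times\frac{N}{n}$ coded block $\widetilde{\bm{W}}_p$ and multiplies it by a length-$\frac{N}{n}$ vector $\widetilde{\bm{x}}_p$ (and symmetrically a length-$\frac{N}{m}$ covector times the block), so $O1$ and $O2$ each cost $\Theta\!\big(\frac{N}{m}\cdot\frac{N}{n}\big)=\Theta\!\big(\frac{N^2}{K}\big)$; the rank-one update $O3$ is $\Theta\!\big(\frac{N^2}{K}\big)$ as well. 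This is the denominator in the ratio we must drive to $0$.

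Next I would bound each auxiliary cost. Encoding of $\bm{x}$ and $\bm{\delta}^T$ at node $p$ amounts to evaluating a degree-$(n-1)$ (resp.\ degree-$(m-1)$) polynomial whose coefficients are the $n$ (resp.\ $m$) sub-vectors of length $\frac{N}{n}$ (resp.\ $\frac{N}{m}$): cost $\Theta\!\big(n\cdot\frac{N}{n}\big)=\Theta(N)$, and likewise $\Theta(N)$ for $\bm{\delta}^T$ and for re-encoding in the update step. Nonlinear activation $f(\cdot)$ on $\bm{s}$ and the Hadamard product $C2$ are element-wise on length-$N$ vectors, hence $\Theta(N)$. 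Decoding after $O1$: each node gathers $P$ corrupted evaluations of a single-variable polynomial of degree $mn+n-2 < P$ and solves a sparse-recovery / polynomial-interpolation problem whose coefficient blocks have length $\frac{N}{m}$; treating the linear-algebra as cubic in the number of equations, this is $\mathcal{O}\!\big(P^3\cdot\frac{N}{m}\big)$, and symmetrically $\mathcal{O}\!\big(P^3\cdot\frac{N}{n}\big)$ after $O2$. Communication: by \Cref{rem:broadcast} the All-Gather of the length-$\frac{N}{m}$ (resp.\ $\frac{N}{n}$) sub-vectors across $P$ nodes costs $\alpha\log P + 2\beta P\cdot\frac{N}{m} = \Theta\!\big(P\cdot\frac{N}{m}\big)$; the two low-complexity verification steps under Error Model~$2$ exchange $P$ scalars and cost $\alpha\log P + 2\beta P^2 = \Theta(P^2)$ in communication and $\Theta(P^2)$ in computation, independent of $N$. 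Regeneration, when it occurs, is not part of an error-free iteration and is excluded by the theorem's hypothesis.

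Finally I would assemble the ratio. Summing the auxiliary terms gives $\Theta(N) + \mathcal{O}\!\big(P^3\cdot\frac{N}{\min(m,n)}\big) + \Theta\!\big(P\cdot\frac{N}{\min(m,n)}\big) + \Theta(P^2)$, and the dominant auxiliary term is the decoding cost $\mathcal{O}\!\big(P^3 N/\min(m,n)\big)$. Dividing by $\Theta(N^2/K) = \Theta\!\big(N^2/(mn)\big)$ yields a ratio of order
$$
\frac{P^3 N / \min(m,n)}{N^2/(mn)} = \Theta\!\left(\frac{P^3\,\max(m,n)}{N}\right) = \mathcal{O}\!\left(\frac{P^3\cdot P}{N}\right) = \mathcal{O}\!\left(\frac{P^4}{N}\right),
$$
using $m,n \le K \le P$; the $\Theta(P^2)/\Theta(N^2/K)$ and $\Theta(N)/\Theta(N^2/K)$ contributions are even smaller under $N\gg P$. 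Hence the ratio tends to $0$ whenever $P^4 = o(N)$, which is exactly the stated condition. The main obstacle, and the step that needs care rather than routine arithmetic, is pinning down the decoding complexity: the ``standard decoding techniques'' of \cite{candes2005decoding} for error correction from $P$ evaluations (as opposed to pure erasure interpolation on $mn+n-1$ points) are where the $P^3$ factor — and therefore the $P^4=o(N)$ threshold — comes from, so I would state the decoding-cost bound explicitly (citing the relevant lemma in \Cref{appendix:decoding}/\Cref{appendix:complexity}) and make sure the exponent on $P$ is the one that propagates through, since a looser bound there would change the hypothesis of the theorem.
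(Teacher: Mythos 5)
Your proposal is correct and follows essentially the same route as the paper's proof in \Cref{appendix:complexity}: enumerate each auxiliary cost per node (encoding $\Theta(N)$, All-Gather communication $\Theta(PN/m)$ and $\Theta(PN/n)$, cubic-in-$P$ decoding $\mathcal{O}(P^3 N/m)$ and $\mathcal{O}(P^3 N/n)$, verification $\Theta(P^2)$, activation/Hadamard $\Theta(N)$), identify decoding as dominant, and use $\max(m,n)\le K\le P$ to reduce the ratio against $\Theta(N^2/K)$ to $\mathcal{O}(P^4/N)$. Your closing remark — that the pessimistic $\mathcal{O}(P^3)$ sparse-reconstruction bound is exactly what fixes the $P^4=o(N)$ threshold — is also the observation the paper makes immediately after the theorem statement.
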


The proof is provided in \Cref{appendix:complexity}. For this proof, we assume a pessimistic bound ($\Theta(P^3)$) on the decoding of a code of block length $P$ under errors, based on sparse reconstruction algorithms\cite{candes2005decoding}. We are currently examining the reduction of this complexity using other algorithms, which would also relax the condition of \Cref{thm:complexity_MV}. 

In \Cref{table_complexity}, we finally list out the storage, communication and computation costs of $\mathcal{C}_{\mathrm{GP}}(K,N,P)$. The derivation of communication and computation costs is further elaborated in \Cref{appendix:complexity}, in the proof of \Cref{thm:complexity_MV}.
%with other existing strategies, namely, $\mathcal{C}_{\mathrm{mds}}(K,N,P)$ and $\mathcal{C}_{\mathrm{rep}}(K,N,P)$ respectively.

\begin{table}[!ht]
\centering
\caption{Storage, Communication and Computation Costs for Each Layer (All Stages Combined)}
\label{table_complexity}
\centering
\begin{tabular}{|M{2.7cm}|M{3cm}|M{4cm}|M{5cm}|}
\hline
 Storage & Communication Complexity & Computational Complexity of steps $O1$, $O2$ and $O3$   & Computational Complexity of all other steps (including Encoding/Decoding) \\
\hline
 $\frac{N^2}{K}+ \Theta(\frac{PN}{m}+\frac{PN}{n})  $  & $\Theta((\frac{N}{m}+\frac{N}{n})P)$   & $ \Theta(\frac{N^2}{K}) $ &  $\mathcal{O}((\frac{N}{m}+\frac{N}{n})P^3)$  \\
%\hline
%$\mathcal{C}_{\mathrm{mds}}(K,N,P)$ where $P=P_f+P_b-mn$ &  $\frac{N^2}{K}+ \Theta(\frac{P_f N}{K}+\frac{P_bN}{K})$ & $\Theta((\frac{P_fN}{K}+\frac{P_bN}{K})\log{P})$ & $\Theta(\frac{N^2}{K})  $ &  $\mathcal{O}( \frac{N}{m} (\frac{P_f}{n})^3 +  \frac{N}{n} (\frac{P_b}{m})^3 )$
%\\
%\hline
%$\mathcal{C}_{\mathrm{rep}}(K,N,P)$ & $ \frac{N^2}{K} + \Theta((\frac{N}{m}+ \frac{N}{n} )\frac{P}{K})  $ & $\Theta((\frac{N}{m}+ \frac{N}{n} )  \frac{P}{K} \log{\frac{P}{K}} ) $  & $ \Theta(\frac{N^2}{K} ) $ &  $ \Theta((\frac{N}{m}+ \frac{N}{n} ) \frac{P}{K} )  $  \\
\hline
\end{tabular}
\end{table}

\begin{itemize}
\item Storage: In $\mathcal{C}_{\mathrm{GP}}(K,N,P)$, each node stores a fraction $\frac{1}{K}$ of the matrix $\bm{W}$ of size $N \times N$ which contributes the term $\frac{N^2}{K}$. However, during decentralized decoding in $\mathcal{C}_{\mathrm{GP}}(K,N,P)$, all nodes receive partial computation results of sizes $\frac{N}{m}$ (Feedforward stage - before step $C1$) or $\frac{N}{n}$ (Backpropagation stage - before step $C2$) from all other $P$ nodes, leading to the term $\Theta(\frac{PN}{m}+\frac{PN}{n})$. 
%In comparison, for $\mathcal{C}_{\mathrm{mds}}(K,N,P)$, each node receives partial computation results of sizes $\frac{N}{m}$ (Feedforward stage) or $\frac{N}{n}$(Backpropagation stage ) from only $\frac{P_f}{n}$ (Feedforward stage) or $\frac{P_b}{m}$  (Backpropagation stage) nodes respectively, leading to the term $\Theta(\frac{P_f N}{nm}+\frac{P_bN}{mn})=\Theta(\frac{P_f N}{K}+\frac{P_bN}{K})$. Lastly, for $\mathcal{C}_{\mathrm{rep}}(K,N,P)$, some nodes compare their partial results with the other $\frac{P}{mn}=\frac{P}{K}$ replicas, thus leading to the term $\Theta((\frac{N}{m}+ \frac{N}{n} )\frac{P}{K})$. 
\item Communication: For $\mathcal{C}_{\mathrm{GP}}(K,N,P)$, every node broadcasts its own partial computation result $\widetilde{\bm{s}}_p$ (or $\widetilde{\bm{c}}^T_p$) to all other $P$ nodes. This leads to a communication cost of $\left(\alpha \log{P} + 2\beta \frac{NP}{m} \right) + \left(\alpha \log{P} +2\beta \frac{NP}{n}\right) = \Theta \left( (\frac{N}{m}+\frac{N}{n})P \right)$ when performed using an efficient All-Gather communication protocol (see \Cref{appendix:complexity} and also in \cite{chan2007collective}). Thus, the communication cost is smaller in scaling sense as compared to the computational complexity of the steps $O1$, $O2$ and $O3$ which is what we had desired. As a future work, we are exploring the reduction of this communication cost further by using efficient implementation strategies.

%In comparison, for $\mathcal{C}_{\mathrm{mds}}(K,N,P)$, only $\frac{P_f}{n}$ send partial computation results of size $\frac{N}{m}$ to all other $P$ nodes in the feedforward stage and only $\frac{P_b}{m}$ nodes send partial computation results of size $\frac{N}{n}$ to all other nodes. Thus, the complexity is $\Theta((\frac{N P_f}{mn} + \frac{N P_b}{nm})  \log{P} )  $. For $\mathcal{C}_{\mathrm{rep}}(K,N,P)$, the nodes that compare their results with others send partial results of sizes $\frac{N}{m}$ or $\frac{N}{n}$ to all its $\frac{P}{K}$ replicas, thus leading to the term $\Theta((\frac{N}{m}+ \frac{N}{n} )  \frac{P}{K} \log{\frac{P}{K}} ) $.
\item Computation: The most dominant computational complexity, \textit{i.e.}, the complexity of matrix-vector products and rank-1 update at each node is $\Theta(\frac{N^2}{K})$. Among the additional steps,  the decoding is the most dominant in terms of computational complexity. The decoding of codewords of length $P$ requires a complexity of $\mathcal{O}(P^3)$, and for  $\mathcal{C}_{\mathrm{GP}}(K,N,P)$, this is repeated $\frac{N}{m}$ or $\frac{N}{n}$ times in the feedforward or backpropagation stages respectively. 
%For $\mathcal{C}_{\mathrm{mds}}(K,N,P)$, the decoding of codewords of length $\frac{P_f}{n}$ requires $\mathcal{O}((\frac{P_f}{n})^3)$, to be repeated $\frac{N}{m}$ times in feedforward stage. Similarly, for backpropagation stage, the decoding of codewords of length $\frac{P_b}{m}$ requires $\mathcal{O}((\frac{P_b}{m})^3)$, to be repeated $\frac{N}{n}$ times. Lastly, for  $\mathcal{C}_{\mathrm{rep}}(K,N,P)$, some nodes compare their partial results (of sizes $\frac{N}{m}$ or $\frac{N}{n}$ respectively) with that of $\frac{P}{K}$ replicas, requiring a complexity of $\Theta((\frac{N}{m}+ \frac{N}{n})\frac{P}{K})$.
\end{itemize}

\begin{rem}
Note that we have used pessimistic bounds to characterize the communication and computational overheads of our strategy and in spite of that, we are able to show that the additional overheads are negligible in scaling sense as compared to the computational complexities of steps $O1$, $O2$ and $O3$ respectively. We are currently exploring the reduction of these additional overheads further using efficient implementation strategies. 
\end{rem}

\section{Extension to mini-batch size $B>1$: coded matrix-matrix multiplication}
\label{sec:extension}
So far, we have discussed DNN training using SGD for the special case of mini-batch size $B=1$, \textit{i.e.}, when the DNN accesses only a single data point at a time. In this section, we extend our proposed strategy to the general case of mini-batch size $B>1$, which leads to coded \emph{matrix-matrix} multiplication.

%\subsection{Brief Background on DNN training (Mini-batch size $B>1$)}
%\label{subsec:background}

%\subsubsection{Operations of training}
The operations of DNN training for the general case of $B>1$ have already been stated in \Cref{sec:problem_formulation}. The main difference from the case of $B=1$ is that the input to layer $l$ is a matrix $\bm{X}^l$ of dimensions $N \times B$. Similarly, the backpropagated error is also a matrix $\bm{\Delta}^l$ of dimensions $N \times B$. Because the operations are the same across all the layers, we will again omit the superscript $(\cdot)^l$ in the subsequent discussion.

\textbf{Goal:} The goal in this section is to design a coded DNN training strategy for mini-batch size $B>1$, denoted by $\mathcal{C}(N,K,P,B)$, using $P$ nodes such that every node can store only a $ \frac{1}{K}$ fraction of the entries of $\bm{W}$ for each layer. We assume that $B=o(\frac{N}{K})$ to ensure that the sizes of $\bm{X}$, $\bm{S}(=\bm{W}\bm{X})$, $\bm{\Delta}^T$ and $\bm{C}^T(= \bm{\Delta}^T\bm{W})$ are smaller in scaling sense than the size of $\frac{1}{K}$ fraction of $\bm{W}$. Thus, every node has a total storage of $\frac{L N^2}{K}+ o(\frac{L N^2}{K})$ where the small additional storage of $o(\frac{L N^2}{K})$ is for storing $\bm{X}$, $\bm{S}(=\bm{W}\bm{X})$, $\bm{\Delta}^T$ and $\bm{C}^T(= \bm{\Delta}^T\bm{W})$ respectively for every layer\footnote{If we do not assume an upper bound on $B$, then as $B$ increases, the allowed total storage per node would also be required to increase. Then, it may become possible to store more than $\frac{1}{K}$ fraction of $\bm{W}$ leading to alternative strategies altogether. This problem may be considered as a future work.}. Similar to the coded DNN strategy for $B=1$, the additional computation and communication complexities including encoding/decoding overheads in each iteration should be negligible in scaling sense as compared to the local computational complexity of the steps $O1$, $O2$ and $O3$ parallelized across each node, at any layer.

Thus, essentially we are required to perform distributed ``post'' and ``pre'' multiplication of the same matrix $\bm{W}$ of dimensions $N \times N$ with matrices $\bm{X}$ and $\bm{\Delta}^T$ respectively and distributed update $\bm{W}+\eta \bm{X}\bm{\Delta}^T$, along with all the other operations of training. Similar to the case of $B=1$, because outputs are only communicated to other nodes after steps $O1$ and $O2$ respectively, we aim to correct as many erroneous nodes as possible after these two steps, before moving to another layer.

%Observe that when the number of mini-batches is not too large, \textit{i.e.}, $B=o(\frac{N}{KP})$, the total storage $\frac{L N^2}{K}+ \Theta(LPNB)$ satisfies   $\frac{L N^2}{K}+ o(L\frac{N^2}{K})$  which is the storage constraint for batch-size $1$. However, when $B=\Omega(\frac{N}{KP})$, the term  $\Theta(LPNB)$ actually dominates over $\frac{L N^2}{K}$. Thus, if the allowable total storage is higher than $\frac{L N^2}{K}$, then there is no need for the node to store only a fraction $\frac{1}{K}$ of encoded $\bm{W}$, and there could be alternate implementations which can store larger portions of $\bm{W}$, that might be explored as a future work. For this paper, we thus restrict ourselves to the regime where $B=o(\frac{N}{KP})$, and thus, the storage constraint is dominated by $\frac{L N^2}{K}$ allowing only a fraction $\frac{1}{K}$ of encoded $\bm{W}$ to be stored at each processor\footnote{In a concurrent work [], the authors have considered an alternate formulation for matrix-matrix multiplication in a centralized setting where each node is allowed to store a particular fraction of the two matrices. In this paper, our focus is on the total storage constraint per node in a decentralized setting, though we arrive at similar coding techniques. }.

\begin{rem}
The assumption that $B=o(\frac{N}{K})$ is required only to satisfy the storage constraints. Irrespective of whether $B=o(\frac{N}{K})$ or $B=\Omega(\frac{N}{K})$, when considering the total computational or communication complexity, the matrix-matrix products and updates are still the most significant costs, and all other additional overheads including communication, encoding/decoding etc. add negligible overhead, as we will also formally show in \Cref{thm:complexity_MM}, provided that the number of processing nodes are not too large, \textit{i.e.}, $P^4=o(N)$.
\end{rem}

\subsection{Modification to Existing Strategies.}

\textbf{Replication ($\mathcal{C}_{\mathrm{rep}}(K,N,P,B)$):} 

We modify the replication strategy of \Cref{sec:existing_strategies} for $B>1$. The matrix $\bm{W}$ is block-partitioned and stored in a manner similar to that for  mini-batch $B=1$ with $\frac{P}{mn}$ replicas. However, instead of a single vector $\bm{x}$, now the entire data matrix $\bm{X}$ is divided vertically into $n$ blocks, each of size $\frac{N}{n} \times B$. Similarly, instead of vector $\bm{\delta}^T$, now the matrix $\bm{\Delta}^T$ is divided into $m$ equal parts horizontally, each of size $ B \times \frac{N}{m}$. All the operations that were performed on the $i$-th part of $\bm{x}$ or $\bm{\delta}^T$, are now performed on the corresponding block of $\bm{X}$ or $\bm{\Delta}^T$ respectively. The additional computational and communication complexity of this strategy is obviously higher than the case with $B=1$, but they still satisfy the desired constraint that they should be smaller in scaling sense than the per-node computational complexity of the steps $O1$, $O2$ and $O3$.

\textbf{MDS-code-based Strategy ($\mathcal{C}_{\mathrm{mds}}(K,N,P,B)$):}

The proposed MDS-code-based strategy can also be modified in a similar manner as replication. The matrix $\bm{W}$ is encoded and stored in a manner similar to that for the case of mini-batch size $B=1$. However, the matrices $\bm{X}$ and $\bm{\Delta}^T$ are now divided into $n$ and $m$ parts respectively, similar to replication, and the same operations are performed as in the case of mini-batch size $B=1$. The additional computational and communication complexity of this strategy is still smaller in scaling sense than the per-node computational complexity of $O1$, $O2$ and $O3$.

\subsection{Our Proposed coded DNN Training Strategy for  mini-batch size $B>1$.}
\label{sec:coded_DNN_MM}

%\textcolor{blue}{SD: I would like some feedback on whether the strategy is understood properly or should we have an algorithm environment detailing the steps. It will not take me very long to do it, but I am also concerned about the paper being too long. What do you think?}

%Here, we introduce our proposed coded DNN training strategy $\mathcal{C}_{\mathrm{GP}}(K,N,P,B)$ for this problem.

Here, we apply Generalized PolyDot codes for the matrix-matrix multiplication, as discussed previously in \Cref{thm:gen_poly_MM}. Let us begin by reviewing the notations for matrix partitioning.

%\textbf{Matrix Partitioning Notations:} 
The weight matrix $\bm{W}$ is partitioned into a grid of $m \times n$ blocks as before. The main difference from the case of $B=1$ is that the input to the layer $\bm{X}$ and backpropagated error $\bm{\Delta}$ are matrices now. The matrix $\bm{X}$ is also block partitioned into a grid of $n \times d_1$ blocks, each sub-matrix (or block) being of size $\frac{N}{n} \times \frac{B}{d_1} $. 
%For the case where $m=n=d_1=2$, the partitionings are as follows:
%$$\bm{W}= \begin{bmatrix} \bm{W}_{00} & \bm{W}_{01}\\\bm{W}_{10} & \bm{W}_{11}\end{bmatrix}\text{ and }\bm{X}= \begin{bmatrix} \bm{X}_{00} & \bm{X}_{01}\\\bm{X}_{10} & \bm{X}_{11}\end{bmatrix}.$$
This results in the matrix $\bm{S}=\bm{W}\bm{X}$ to be partitioned into a grid of $m \times d_1$ blocks, with each sub-matrix (or block) being of size $\frac{N}{m}\times \frac{B}{d_1} $, such that $\bm{S}_{i,k}=\sum_{j=0}^{n-1}\bm{W}_{i,j}\bm{X}_{j,k}.$

%For the case where $m=n=d_1=2$, the partitioning is as follows:
%$$\bm{S}=\begin{bmatrix}
%\bm{S}_{00} & \bm{S}_{01}\\
%\bm{S}_{10} &  \bm{S}_{11}
%\end{bmatrix}= \begin{bmatrix}
%\bm{W}_{00} & \bm{W}_{01}\\\bm{W}_{10} &  \bm{W}_{11}\end{bmatrix} \begin{bmatrix}\bm{X}_{00} & \bm{X}_{01}\\\bm{X}_{10} &  \bm{X}_{11}\end{bmatrix} = \begin{bmatrix}\bm{W}_{00}\bm{X}_{00} + \bm{W}_{01}\bm{X}_{10} & \bm{W}_{00}\bm{X}_{01} + \bm{W}_{01}\bm{X}_{11}\\\bm{W}_{10}\bm{X}_{00} + \bm{W}_{11}\bm{X}_{10} & \bm{W}_{10}\bm{X}_{01} + \bm{W}_{11}\bm{X}_{11}\end{bmatrix}.$$

%Similar to the feedforward stage input, the backpropagated error ($\bm{\Delta}$) is also now a matrix. We will mostly encounter the transpose of the backpropagation error, \textit{i.e.}, ($\bm{\Delta}^T$) which has dimensions $B \times N$. 
Similarly, $\bm{\Delta}^T$ is also partitioned into a grid of $d_2 \times m $ blocks. 
%, with each sub-matrix (or block) having dimensions $\frac{B}{d_2}\times \frac{N}{m}$. 
%Here $\bm{\Delta}^T_{k,l}$ denotes the block of $\bm{\Delta}^T$ at location $(k,l)$. 
This in turn results in $\bm{C}^T=\bm{\Delta}^T \bm{W}$ to be partitioned into $d_2 \times n $ blocks, such that $[\bm{C}^T]_{k,j}=\sum_{i=0}^{m-1}[\bm{\Delta}^T]_{k,i} \bm{W}_{i,j}.$
%For $d_2=m=n$, the partitionings are as follows:
%$$\bm{\Delta}^T = \begin{bmatrix}\bm{\Delta}^T_{00} & \bm{\Delta}^T_{01}\\\bm{\Delta}^T_{10} & \bm{\Delta}^T_{11}\end{bmatrix} \text{ and } \bm{C}^T = \begin{bmatrix}\bm{C}^T_{00} & \bm{C}^T_{01}\\\bm{C}^T_{10} & \bm{C}^T_{11} \end{bmatrix} = \begin{bmatrix}\bm{\Delta}^T_{00} & \bm{\Delta}^T_{01}\\\bm{\Delta}^T_{10} & \bm{\Delta}^T_{11}\end{bmatrix} \begin{bmatrix}\bm{W}_{00} & \bm{W}_{01}\\\bm{W}_{10} &  \bm{W}_{11}\end{bmatrix}. $$

\textbf{Initial Encoding of $\bm{W}$ (Pre-processing step):} The initial encoding of $\bm{W}$ is similar to that for the case of mini-batch size $B=1$ (see \Cref{eq:initial_encoded_W} in \Cref{sec:coded_DNN_MV}).
%Node $p=0,1,\dots , P-1$ receives an encoded block of $\bm{W}$ which is a polynomial in $u$ and $v$,
%\begin{equation}
%\widetilde{\bm{W}}(u,v)= \sum_{i=0}^{m-1}\sum_{j=0}^{n-1}\bm{W}_{i,j}u^{i}v^{j}
%\end{equation}
%evaluated at some $(u,v)=(a_p,b_p)$. 

\textbf{Feedforward stage:} Similar to the case of $B=1$, we assume that the entire input $\bm{X}$ is made available at every node by the previous layer\footnote{We will discuss storage and communication tradeoffs in \Cref{subsec:comparison_DNN_MM}}. Each node first partitions the matrix $\bm{X}$ into a grid of $n \times d_1$ blocks, each sub-matrix being of size $\frac{N}{n} \times \frac{B}{d_1} $, and then encodes as follows:
$$\widetilde{\bm{X}}(v,w)= \sum_{j=0}^{n-1}\sum_{k=0}^{d_1-1}\bm{X}_{j,k}v^{n-1-j}w^k  .$$

For $p=0,1,\dots,P-1$, the $p$-th node \textbf{evaluates} the polynomial $\widetilde{X}(v,w)$ at $(v,w)=(b_p,c_p)$, yielding $\widetilde{\bm{X}}_p=\widetilde{\bm{X}}(b_p,c_p)$. E.g., for $n=d_1=2$, $\bm{X}$ is encoded as $ \widetilde{\bm{X}}(v,w)=\bm{X}_{0,0}v + \bm{X}_{0,1} +\bm{X}_{1,0}wv +\bm{X}_{1,1}w$. Next, each node \textbf{computes} the matrix-matrix product: $\widetilde{\bm{S}}_p:=\widetilde{\bm{W}}_p \widetilde{\bm{X}}_p  $. Computing $\widetilde{\bm{S}}_p$ is effectively resulting in the evaluation, at $(u,v,w)=(a_p,b_p,c_p)$, of the following polynomial:
$$\widetilde{\bm{S}}(u,v,w)= \widetilde{\bm{W}}(u,v)\widetilde{\bm{X}}(v,w) = \sum_{i=0}^{m-1}\sum_{j=0}^{n-1} \sum_{j'=0}^{n-1}\sum_{k=0}^{d_1-1} \bm{W}_{i,j}\bm{X}_{j',k}u^i v^{n-1+j-j'}w^{k}, $$
even though the node is not explicitly evaluating it by accessing all its coefficients separately.

Recall that following the basic idea of Generalized PolyDot codes, by fixing $j'=j$, we can evaluate the coefficient of $u^iv^{n-1}w^{k} $ for $i=0,1,\dots, m-1$ and $k=0,1,\dots,d_1-1$, which turns out to be $ \sum_{j=0}^{n-1} \bm{W}_{i,j}\bm{X}_{j,k} =\bm{S}_{i,k}$. Thus, these $md_1$ coefficients constitute the $m\times d_1$ sub-matrices (or blocks) of the desired result $\bm{S}=\bm{W}\bm{X}$. Therefore, $\bm{S}$ can be recovered at any node if it can reconstruct all the coefficients, or rather these $md_1$ coefficients, of the polynomial $\widetilde{\bm{S}}(u,v,w)$.

We use substitutions $(u =v^n, w=v^{mn})$ or $(v=u^m,w=u^{mn})$ as elaborated in \Cref{subsec:comparison_DNN_MM} to convert $\widetilde{\bm{S}}(u,v,w)$ into a polynomial of a single variable, and then use standard decoding techniques \cite{candes2005decoding} to interpolate the coefficients of a polynomial in a single variable from its evaluations at $P$ arbitrary points where some evaluations may be erroneous. Once $\bm{S}$ is decoded, a nonlinear activation function is applied element-wise to generate the input for the next layer. This also makes $\bm{X}$ available at each node at the start of the next feedforward layer, justifying our assumption. 

The additional steps under Error Model $2$ are similar to the case of $B=1$.
%Similar to \textcolor{green}{the previous case -- can you be specific which case/section?}, if the number of errors are few ($\leq$ error tolerance), the nodes are not only able to decode the vectors correctly but also locate which nodes were erroneous under both Error Models $1$ and $2$. Thus, the encoded $\bm{W}$ stored at those nodes \textcolor{green}{\sout{are} is} regenerated similar to \textcolor{green}{the previous case}.

%\textit{Additional Steps:} The DNN is checkpointed at a disk at regular intervals. Similar to \textcolor{green}{the previous case}, when the errors are more than the error tolerance (under Error Model $2$), the DNN is restored from the last checkpoint. To allow for decoding errors, an additional verification step can also be added similarly.

\textbf{Backpropagation stage:}
The backpropagated error (transpose) $\bm{\Delta}^T$ is available at every node. Each node partitions $\bm{\Delta}^T$ into a grid of $ d_2 \times m$ blocks, and encodes them using the polynomial:
$$\widetilde{\bm{\Delta}}^T(w,u)= \sum_{k=0}^{d_2-1} \sum_{i=0}^{m-1} \bm{\Delta}^T_{k,i} w^k u^{m-1-i}.$$

For $p=0,1,\dots P-1$, the $p$-th node \textbf{evaluates} $\widetilde{\bm{\Delta}}^T(w,u)$ at $(w,u)=(c_p,a_p)$, yielding $\widetilde{\bm{\Delta}}^T_p=\widetilde{\bm{\Delta}}^T(c_p,a_p)$.

Next, it \textbf{performs} the computation $\widetilde{\bm{C}}^T_p:= \widetilde{\bm{\Delta}}^T_p \widetilde{\bm{W}}^T_p $ and sends the product to all the other nodes, of which, some products may be erroneous. Consider the polynomial:
$$ \widetilde{\bm{C}}^T(w,u,v)=  \widetilde{\bm{\Delta}}^T(w,u) \widetilde{\bm{W}}(u,v)  = \sum_{k=0}^{d_2-1}\sum_{i'=0}^{m-1}\sum_{i=0}^{m-1}\sum_{j=0}^{n-1} \bm{\Delta}^T_{k,i'}\bm{W}_{i,j} w^k u^{m-1+i-i'} v^{j}. $$

The products computed at each node effectively results in the evaluations of this polynomial $\widetilde{\bm{C}}(w,u,v)$ at $(w,u,v)=(c_p,a_p,b_p)$. The coefficients of $w^ku^{m-1}v^{j}$ for $k=0,1,\dots d_2-1$ and $j=0,1,\dots, n-1 $ in this polynomial actually correspond to the $d_2 \times n$ grid of sub-matrices of $\bm{C}^T$. Thus, if every node is able to decode these coefficients from the different evaluations of the polynomial at $P$ nodes, then every node can reconstruct $\bm{C}^T$. Observe that, since both $\bm{C}^T$ and $\bm{X}$ are available at the node, the backpropagated error for the consecutive layer can be computed at each node by computing the Hadamard product $\bm{C}^T \circ g(\bm{X}^T)$.

%first generating the diagonal matrix $\bm{D}$ from $\bm{X}$ and then performing the Hadamard product step.

\textbf{Update stage:}
Since both $\bm{X}$ and $\bm{\Delta^T}$ are available at each node, it can now encode the $b$-th column of $\bm{X}$ and the $b$-th row of $\bm{\Delta^T}$ for $b=0,1,\dots,B-1$, in a manner similar to that of mini-batch size $B=1$.

Let
$$ \bm{\Delta^T} = \begin{bmatrix}
- \bm{\delta}^T_{(0)} - \\
-\bm{\delta}^T_{(1)} -\\
\vdots \\
-\bm{\delta}^T_{(B-1)} -
\end{bmatrix} \text{ and }  \bm{X}= \begin{bmatrix} 
| & | &  & | \\
\bm{x}_{(0)} & \bm{x}_{(1)} & \ldots & \bm{x}_{(B-1)} \\
| & | &  & |
\end{bmatrix}. $$

We also let each individual vector $\bm{\delta}_{(b)}
=\begin{bmatrix}  \bm{\delta}_{(b)0} \\
\bm{\delta}_{(b)1} \\
\vdots \\
\bm{\delta}_{(b)(m-1)}
\end{bmatrix} $ be partitioned into $m$ equal parts (similar to $\bm{\delta}$ being partitioned into $m$ equal parts for the case of mini-batch size $B=1$) and $\bm{x}_{(b)}= \begin{bmatrix}  \bm{x}_{(b)0} \\
\bm{x}_{(b)1} \\
\vdots \\
\bm{x}_{(b)(n-1)}
\end{bmatrix} $ be partitioned into $n$ equal parts (similar to $\bm{x}$).

Now, each node encodes them as follows:
$$ \sum_{i=0}^{m-1} \bm{\delta}_{(b)i} u^i \text{ and } \sum_{j=0}^{n-1} \bm{x}_{(b)j} v^j \text{  at  } (u,v)=(a_p,b_p) \text{  respectively.} $$

Then, the coded $\widetilde{\bm{W}}_p$ can be updated as:
\begin{align}
\widetilde{\bm{W}}_p &\leftarrow \widetilde{\bm{W}}_p + \sum_{b=0}^{B-1} \eta ( \sum_{i=0}^{m-1} \bm{\delta}_{(b)i} a_p^i  )( \sum_{j=0}^{n-1} \bm{x}_{(b)j} b_p^j )^T  \\
& = \sum_{i=0}^{m-1}\sum_{j=0}^{n-1} \underbrace{(\bm{W}_{i,j} + \eta  \sum_{b=0}^{B-1}  (\bm{\delta}_{(b)i})   (\bm{x}_{(b)j})^T ) }_{\text{Update of } \bm{W}_{i,j}} \ a_p^i b_p^j.
\end{align}

Thus, the update step preserves the coded nature of the weight matrix $\bm{W}$.

\textbf{Update with regularization:} As in the case of mini-batch size $B=1$, the coded update can be easily extended to coded update with regularization as follows:
\begin{align}
\widetilde{\bm{W}}_p &\leftarrow (1-\eta \lambda)\widetilde{\bm{W}}_p + \sum_{b=0}^{B-1} \eta ( \sum_{i=0}^{m-1} \bm{\delta}_{(b)i} a_p^i  )( \sum_{j=0}^{n-1} \bm{x}_{(b)j} b_p^j )^T  \\
& = \sum_{i=0}^{m-1}\sum_{j=0}^{n-1} \underbrace{((1-\eta \lambda)\bm{W}_{i,j} + \eta  \sum_{b=0}^{B-1}  (\bm{\delta}_{(b)i})   (\bm{x}_{(b)j})^T ) }_{\text{Update of } \bm{W}_{i,j}} \ a_p^i b_p^j.
\end{align}

\subsection{Comparison with existing strategies.}
\label{subsec:comparison_DNN_MM}
{\color{black}{
\begin{thm}[Error tolerances $(t_f,t_b)$]
\label{thm:error_tolerance_MM}
The error tolerances $(t_f,t_b)$ in the feedforward and backpropagation stages for a layer for the three strategies $\mathcal{C}_{\mathrm{GP}}(K,N,P,B)$, $\mathcal{C}_{\mathrm{mds}}(K,N,P,B)$ and $\mathcal{C}_{\mathrm{rep}}(K,N,P,B)$ are given by \Cref{table_example_MM}.
\end{thm}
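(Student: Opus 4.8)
The plan is to establish each row of \Cref{table_example_MM} separately, in every case reducing the claim to a recovery-threshold computation followed by an invocation of the two error-correction guarantees: bounded-distance decoding under Error Model $1$, and \Cref{thm:main} under Error Model $2$.

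First I would handle $\mathcal{C}_{\mathrm{GP}}(K,N,P,B)$. Starting from the trivariate polynomial $\widetilde{\bm{S}}(u,v,w)=\widetilde{\bm{W}}(u,v)\widetilde{\bm{X}}(v,w)$ constructed in \Cref{sec:coded_DNN_MM}, I apply the garbage-alignment substitution $(u=v^{n},\,w=v^{mn})$ and count the degree of the resulting univariate polynomial $\widetilde{\bm{S}}(v)$. The essential bookkeeping is to verify that (i) the $md_1$ desired monomials $u^{i}v^{n-1}w^{k}$ map to \emph{distinct} powers of $v$ and do not collide with one another, and (ii) the maximal exponent equals $mnd_1+n-2$ (the minimum being $0$), so that the recovery threshold is $mnd_1+n-1$; this is the exact analogue of the $B=1$ count carried out in \Cref{appendix:error_tolerance}, the only change being that scalar sub-vectors are replaced by $\tfrac{B}{d_1}$-column blocks, which leaves the algebraic structure untouched. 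Because the encoding of $\bm{W}$ is fixed once and for all, the \emph{same} substitution is forced on the backpropagation polynomial $\widetilde{\bm{C}}^{T}(w,u,v)=\widetilde{\bm{\Delta}}^{T}(w,u)\widetilde{\bm{W}}(u,v)$; repeating the exponent count there yields degree $mn(d_2+1)-n-1$, hence recovery threshold $mn(d_2+1)-n$, which is the ``worse'' of the two thresholds — exactly as in the $B=1$ case, where the choice $u=v^{n}$ favours $t_f$ and penalizes $t_b$. The dual substitution $(v=u^{m},\,w=u^{mn})$ interchanges the two roles and supplies the second $\mathcal{C}_{\mathrm{GP}}$ row. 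Finally I convert thresholds to error tolerances: a length-$P$ real-number MDS code with recovery threshold $R$ corrects $\lfloor (P-R)/2\rfloor$ adversarial errors by bounded-distance decoding (Error Model $1$), and by \Cref{thm:main} applied to the Vandermonde-type generator matrix of that code it corrects $P-R-1$ errors with probability $1$ and detects any number of errors with probability $1$ under the additive Gaussian noise of Error Model $2$. Plugging in the two thresholds gives the $(t_f,t_b)$ entries, with the caveat recorded in the remark following the theorem that step-$O3$ errors are caught after step $O1$ of the next iteration, so under Error Model $1$ one needs $t_f\ge t_1+t_3$ and $t_b\ge t_2$.

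For the two baselines the argument is immediate. Extending replication to $B>1$ merely enlarges each replicated block from a vector computation to a $B$-column computation while keeping the replica structure and the $O(1)$-scalar agreement test intact, so \Cref{lem:replication_error_tolerance} applies verbatim and yields the $\mathcal{C}_{\mathrm{rep}}$ row. Likewise, the MDS-code-based strategy for $B>1$ uses the identical row/column systematic-MDS encoding of $\bm{W}$ and the identical partition of the active feedforward/backpropagation node sets; only $\bm{X}$ and $\bm{\Delta}^{T}$ are split further, so \Cref{lem:mds_error_tolerance} carries over and gives the $\mathcal{C}_{\mathrm{mds}}$ row.

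The step I expect to be the main obstacle is the exponent bookkeeping for the \emph{backpropagation} polynomial under the substitution that has already been fixed by the feedforward stage: one must confirm that the $d_2 n$ desired coefficients $w^{k}u^{m-1}v^{j}$ still land on distinct powers of the single remaining variable — so that no block of $\bm{C}^{T}$ is irretrievably absorbed into garbage — and at the same time pin down the exact maximal degree in the presence of the extra $d_1,d_2$ factors. Everything else, namely the threshold-to-tolerance conversions and the two baseline rows, is a direct transcription of facts already proved for $B=1$.
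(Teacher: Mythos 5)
Your proposal matches the paper's argument: both proceed by computing the degree of the single-variable polynomial $\widetilde{\bm S}(v)$ (resp.\ $\widetilde{\bm C}^T$) after the substitution $(u=v^n,w=v^{mn})$ or $(v=u^m,w=u^{mn})$ to obtain the recovery thresholds $mnd_1+n-1$ and $mnd_2+mn-n$ (and their duals), then convert thresholds to $(t_f,t_b)$ via $\lfloor (P-R)/2\rfloor$ for Error Model~$1$ and $P-R-1$ via \Cref{thm:main} for Error Model~$2$, with the two baseline rows imported unchanged from \Cref{lem:replication_error_tolerance,lem:mds_error_tolerance}. The non-collision bookkeeping you flag as the main obstacle is indeed the essential check, and it goes through exactly as you describe, so the proposal is correct and takes the same route as the paper.
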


\begin{table}[!htbp]
\centering
\caption{Error Tolerances $(t_f,t_b)$ under fixed number of nodes $P$ }
\label{table_example_MM}
\centering
\begin{tabular}{|M{2.7cm}|M{4.5cm}|M{6cm}|}
\hline
Strategy & Error Model $1$ ($t_f,t_b$) & Error Model $2$  ($t_f,t_b$)\\
\hline
$\mathcal{C}_{\mathrm{GP}}(K,N,P,B)$ with $u=v^{n} $ and $ r=v^{mn} $ & $\left(\frac{P-d_1mn-n+1}{2},\frac{P-(d_2+1)mn+n}{2}\right)$  & $(P-d_1mn-n,P-(d_2+1)mn+n-1)$
\\
\hline
$\mathcal{C}_{\mathrm{GP}}(K,N,P,B)$ with $v=u^{m}$ and $r=u^{mn}$ &   $\left(\frac{P-(d_1+1)mn+m}{2},\frac{P-d_2mn-m+1}{2}\right)$ & $(P-(d_1+1)mn+m-1,P-d_2mn-m) $ \\
\hline
$\mathcal{C}_{\mathrm{mds}}(K,N,P,B)$ where $P=P_f+P_b-mn$ & $\left( \frac{P_{f}-mn}{2n},\frac{P_{b}-mn}{2m}\right) $   & $\left(\frac{P_{f}-mn-n}{n},\frac{P_{b}-mn-m}{m} \right) $ \\
\hline
$\mathcal{C}_{\mathrm{rep}}(K,N,P,B)$ & $\left( \frac{P-mn}{2mn},\frac{P-mn}{2mn} \right)$  & $\left(\frac{P-2mn}{mn},\frac{P-2mn}{mn}\right)$  \\
\hline
\end{tabular}
\end{table}
Here $d_1$ and $d_2$ are two integers that divide $B$. Choosing higher values of $d_1$ and $d_2$ reduce the computational and communication complexity as discussed in \Cref{table_complexity_MM}, but also reduce the error tolerance.
}}

The proof is provided in \Cref{appendix:error_tolerance}.

\begin{thm}[Complexity Analysis]
\label{thm:complexity_MM} 
For $\mathcal{C}_{\mathrm{GP}}(K,N,P,B)$ at any layer in a single iteration, the ratio of the total complexity of all the steps including encoding, decoding, communication, nonlinear activation, Hadamard product etc. to the most complexity intensive steps (steps $O1$, $O2$ and $O3$) tends to $0$ as $K,N,P \to \infty$ if the number of nodes satisfy $P^4=o(N)$.
\end{thm}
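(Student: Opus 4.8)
The plan is to reproduce, for the matrix--matrix case, the argument used for \Cref{thm:complexity_MV}: first isolate the per-node cost of the three heavy steps $O1$, $O2$, $O3$ after parallelization, then verify term by term that every other per-iteration cost listed in \Cref{table_complexity_MM} is of strictly smaller scaling order under the assumption $P^4 = o(N)$ (which in particular forces $P = o(N^{1/4})$, so $K < P = o(N)$). Since the initial encoding of $\bm{W}$ is a one-time pre-processing cost, it is excluded, consistent with the problem formulation.

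First I would pin down the denominator. Step $O1$ is the local product $\widetilde{\bm{W}}_p \widetilde{\bm{X}}_p$ of an $\frac{N}{m} \times \frac{N}{n}$ matrix with an $\frac{N}{n} \times \frac{B}{d_1}$ matrix, costing $\Theta(\frac{N^2 B}{m n d_1})$; step $O2$ is symmetric with cost $\Theta(\frac{N^2 B}{m n d_2})$; and step $O3$ updates the entire $\frac{N}{m} \times \frac{N}{n}$ block by an outer product of an $\frac{N}{m} \times B$ matrix with a $B \times \frac{N}{n}$ matrix, costing $\Theta(\frac{N^2 B}{m n}) = \Theta(\frac{N^2 B}{K})$ regardless of $d_1, d_2$. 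Hence the cost of $O1, O2, O3$ is at least $\Theta(\frac{N^2 B}{K})$, and it suffices to show every overhead is $o(\frac{N^2 B}{K})$.

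Next I would bound the overheads one at a time, exactly as in \Cref{appendix:complexity}. (i) Encoding $\bm{X}$, $\bm{\Delta}^T$, and the small per-column update vectors are each linear combinations of $\Theta(NB)$ scalars, hence $\Theta(NB)$ per node; the element-wise steps $C1$ (nonlinear activation on $\bm{S}$) and $C2$ (Hadamard product) act on $N \times B$ arrays and are also $\Theta(NB)$. Dividing by $\Theta(\frac{N^2 B}{K})$ gives $\Theta(\frac{K}{N}) = O(\frac{P}{N}) \to 0$ since $K < P = o(N)$. (ii) Communication: each node All-Gathers its partial output $\widetilde{\bm{S}}_p$ of size $\frac{N B}{m d_1}$ (feedforward) or $\widetilde{\bm{C}}^T_p$ of size $\frac{N B}{n d_2}$ (backpropagation) to all $P$ nodes, costing $\alpha \log P + 2 \beta \frac{N B P}{m d_1}$ and $\alpha \log P + 2 \beta \frac{N B P}{n d_2}$ respectively; relative to the denominator this is $\Theta(\frac{P n}{d_1 N} + \frac{P m}{d_2 N}) = O(\frac{P^2}{N}) \to 0$, using $m, n \le K < P$ and $d_1, d_2 \ge 1$. (iii) Decoding is the dominant overhead: using the pessimistic $\Theta(P^3)$ estimate for decoding a length-$P$ real-number code under errors via sparse reconstruction \cite{candes2005decoding}, and noting that this decoder is invoked once per scalar coordinate of the recovered output --- $\frac{N B}{m d_1}$ times in feedforward and $\frac{N B}{n d_2}$ times in backpropagation --- decoding costs $\mathcal{O}(P^3 \frac{N B}{m d_1} + P^3 \frac{N B}{n d_2})$; dividing by $\Theta(\frac{N^2 B}{K})$ yields $\mathcal{O}(\frac{P^3 n}{d_1 N} + \frac{P^3 m}{d_2 N}) = \mathcal{O}(\frac{P^4}{N}) \to 0$ precisely when $P^4 = o(N)$. (iv) Under Error Model $2$ the extra verification step exchanges $\Theta(P)$ scalars among the $P$ nodes, costing $\Theta(P^2)$ arithmetic and $\alpha \log P + 2\beta P^2$ communication, which does not grow with $N$ and is therefore $o(\frac{N^2 B}{K})$; regeneration occurs only in the rare error iterations and is excluded.

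Summing the finitely many overhead terms, each $o(\frac{N^2 B}{K})$, the total additional cost is $o(\frac{N^2 B}{K})$, so the stated ratio tends to $0$ as $K, N, P \to \infty$ with $P^4 = o(N)$. The only place requiring genuine care is the decoding bookkeeping in step (iii): the bound is driven by the (pessimistic) $\Theta(P^3)$ per-coordinate decoding cost multiplied by the exact number of coordinates decoded, and it must be compared against the $\Theta(N^2 B/K)$ lower bound coming from the update $O3$ rather than the possibly smaller $O1$/$O2$ costs; beyond this accounting, the proof introduces no idea not already present in the $B = 1$ case of \Cref{thm:complexity_MV}.
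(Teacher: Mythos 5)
Your proof is correct and follows essentially the same term-by-term accounting as the paper's \Cref{appendix:complexity}: enumerate each overhead (encoding, All-Gather communication, $\Theta(P^3)$-per-coordinate sparse-reconstruction decoding, verification, element-wise post-processing), bound each, and invoke $P^4 = o(N)$ to kill the dominant decoding term. The one bookkeeping difference is the denominator: you compare every overhead against $\Theta(N^2 B / K)$, the cost of $O3$ (which dominates the sum $O1 + O2 + O3$ since $d_1, d_2 \ge 1$), whereas the paper compares the feedforward overheads against $O1$'s cost $\Theta(N^2 B / (K d_1))$ and the backpropagation overheads against $O2$'s cost $\Theta(N^2 B / (K d_2))$. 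Both routes give the theorem, but your choice is marginally cleaner: the paper's intermediate inequality (bounding the encoding term $\Theta(NB)$ by $\Theta(\frac{NB}{m d_1} P^3)$) implicitly relies on the recovery-threshold constraint $m n d_1 + n - 1 \le P$ forcing $m d_1 \le P$, which the paper does not spell out, while your comparison of $\Theta(NB)$ directly against $\Theta(N^2 B/K)$ yields $\Theta(K/N) \to 0$ with no such side condition.
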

 
The proof is provided in \Cref{appendix:complexity}. Now we include a table characterizing the storage, communication and computation costs of our proposed strategy in \Cref{table_complexity_MM}. Note that, the parameters $d_1$ and $d_2$ may be chosen accordingly to vary the communication cost as required.
 
 \begin{table}[!ht]
\centering
\caption{Storage, Communication and Computation Costs for Each Layer (All Stages Combined)}
\label{table_complexity_MM}
\centering
\begin{tabular}{|M{3cm}|M{3cm}|M{4cm}|M{5cm}|}
\hline
 Storage & Communication Complexity & Computational Complexity of steps $O1$, $O2$ and $O3$   & Computational Complexity of all other steps (including Encoding/Decoding) \\
\hline
 $\frac{N^2}{K}+ \Theta(\frac{PNB}{md_1}+\frac{PNB}{nd_2})  $  & $\Theta((\frac{NB}{md_1}+\frac{NB}{nd_2})P)$   & $ \Theta(\frac{N^2B}{Kd_1}+\frac{N^2B}{Kd_2} ) $ &  $\mathcal{O}((\frac{NB}{md_1}+\frac{NB}{nd_2})P^3)$  \\
%\hline
%$\mathcal{C}_{\mathrm{mds}}(K,N,P,B)$ where $P=P_f+P_b-mn$ &  $\frac{N^2}{K}+ \Theta(\frac{P_f N B}{K}+\frac{P_bN B}{K})$ & $\Theta((\frac{P_fN B}{K}+\frac{P_bN B}{K})\log{P})$ & $\Theta(\frac{N^2B}{K})  $ &  $\mathcal{O}( \frac{NB}{m} (\frac{P_f}{n})^3 +  \frac{NB}{n} (\frac{P_b}{m})^3 )$
%\\
%\hline
%$\mathcal{C}_{\mathrm{rep}}(K,N,P,B)$ & $ \frac{N^2}{K} + \Theta((\frac{NB}{m}+ \frac{NB}{n} )\frac{P}{K})  $ & $\Theta((\frac{NB}{m}+ \frac{NB}{n} )  \frac{P}{K} \log{\frac{P}{K}} ) $  & $ \Theta(\frac{N^2B}{K} ) $ &  $ \Theta((\frac{NB}{m}+ \frac{NB}{n} ) \frac{P}{K} )  $  \\
\hline
\end{tabular}
\end{table}

The derivation of these terms is elaborated in \Cref{appendix:complexity}. Once again, we use pessimistic bounds for the additional overheads in our proposed strategy, and in spite of that, we are able to show that these additional overheads are negligible as compared to the complexities of the steps $O1$, $O2$ and $O3$. We are now examining strategies to reduce the overheads further.

\section{Coded Autoencoder}
\label{sec:coded_autoencoder}
In this section, we show that our coded DNN technique can be easily extended to other commonly used architectures through the example of sparse autoencoders. 
Sparse autoencoders (see \cite{ranzato07sparse,ranzato08sparse2} or \Cref{subsec:app_autoenc}) are a specific type of DNN for learning sparse representations of given data in an unsupervised fashion. Sparse autoencoders usually have only one hidden layer, and training autoencoders with more than one hidden layer are treated as multiple one-hidden-layer autoencoders stacked together. Hence, we will only consider training an autoencoder with one hidden layer. 

We first give a short description of sparse autoencoders here. For a comprehensive overview, see \Cref{subsec:app_autoenc}. Note that, we use the index $k$ to denote the iteration index here and a superscript denotes the index of the layer. E.g.~$\bm{W}^l(k)$ denotes the weight matrix $\bm{W}$ of layer $l$ at iteration $k$. 

The major variation in numerical steps arises because of its different loss function $E(\bm{W}^1(k),\bm{W}^2(k))$. In mini-batch SGD, for calculating the gradient $\frac{\partial E(\bm{W}^1(k),\bm{W}^2(k)) }{\partial W^l_{i,j}(k)}$, the loss function over one mini-batch of data is as follows:
\begin{equation}
E(\bm{W}^1(k),\bm{W}^2(k)) = \frac{1}{B}\sum_{b=0}^{B-1} \epsilon^2(b,\bm{W}^1(k),\bm{W}^2(k)) + \lambda (|| \bm{W}^{1}(k) ||_F^2 + ||\bm{W}^{2}(k)||_F^2) + \beta \sum_{i=0}^{N_1-1} \textnormal{KL}(\rho || \hat{\rho}_i),
\end{equation}
where $\hat{\rho}_i$ is the sample sparsity of the second layer's activation averaged over the mini-batch, \emph{i.e.,} $\hat{\rho}_i = \frac{1}{B} \sum_{b=0}^{B-1} Y^1_{i,b} (k)$ where $\bm{Y}^1(k)= \bm{X}^2(k) =f(\bm{S}^1(k)) =f(\bm{W}^1(k)\bm{X}^1(k)) $ applied element-wise and $\bm{X}^1(k) \in \mathcal{R}^{N_0 \times B}$ is a matrix whose columns represent the $B$ data points chosen at the $k$-th iteration.

As the difference is only in the loss function, the feedforward stage follow the same procedures given in Section~\ref{sec:extension}. Now let us examine  coded backpropagation and update stages. First, notice that updating $\bm{W}^2$ at the final layer is simply an update with L$2$ regularization, explained in \Cref{subsec:app_autoenc}. Then the major difference arises in updating $\bm{W}^1$ due to the last term in the loss function. Updating $\bm{W}^1$ follows:
\begin{equation}
\label{eq:autoencoder_update}
\bm{W}^1(k+1) = (1-\eta \lambda) \bm{W}^1(k) + \eta \bm{\Delta}_{\textnormal{auto}}^1(k) [\bm{X}^1(k)]^T,
\end{equation}
where 
\begin{equation}
\label{eq:autoencoder_delta}
\bm{\Delta}_{\textnormal{auto}}^1(k) = \left((\bm{W}^2(k))^T \bm{\Delta}^{2}(k) - \bm{Q}_{\bm{\hat{\rho}}}(k) \right) \circ f'(\bm{S}^1(k)) = \left(\bm{C}^2(k) - \bm{Q}_{\bm{\hat{\rho}}}(k) \right) \circ f'(\bm{S}^1(k)).
\end{equation}
If we compared \Cref{eq:autoencoder_delta} with the update rule of a generic DNN, where
\begin{equation*}
\bm{\Delta}^1(k) = \left((\bm{W}^2(k))^T \bm{\Delta}^{2}(k) \right) \circ f'(\bm{S}^1(k)) = \bm{C}^2(k) \circ f'(\bm{S}^1(k)),
\end{equation*}
we observe that the only additional term is $\bm{Q}_{\bm{\hat{\rho}}}(k)$. The derivation of these autoencoder equations are given in Appendix~\ref{subsec:app_autoenc}. The only additional term in $\bm{\Delta}_{\textnormal{auto}}^1$ compared to $\bm{\Delta}^1$ is $\bm{Q}_{\bm{\hat{\rho}}}$. Hence, we only have to analyze how this term can be incorporated into our coded DNN framework. $\bm{Q}_{\bm{\hat{\rho}}}$ is a matrix with the following form:
\begin{equation}
\bm{Q}_{\bm{\hat{\rho}}} = \left[ \vphantom{\begin{array}{c}1\\1\\1\\1\end{array}}
                  \smash{\underbrace{
                      \begin{array}{cccc}
                             Q(\hat{\rho}_0) & Q(\hat{\rho}_0) & \cdots & Q(\hat{\rho}_0)\\
\vdots & \vdots & \ddots & \vdots \\
Q(\hat{\rho}_{N_1-1}) & Q(\hat{\rho}_{N_1-1}) & \cdots & Q(\hat{\rho}_{N_1-1})\\
                      \end{array}
                      }_{B}}
              \right], \; \textnormal{ for } \; Q(\hat{\rho}_i) = -\frac{\rho}{\hat{\rho}_i} + \frac{1-\rho}{1-\hat{\rho}_i}.
\end{equation}
\\
To compute this, we first need to obtain $\hat{\rho}_0, \cdots \hat{\rho}_{N_1-1}$. As each node already generates the entire input for the next layer during the feedforward stage, \textit{i.e.}, $\bm{X}^2 = \bm{Y}^1 = f(\bm{W}^1(k)\bm{X}^1(k))$, the $\hat{\rho}_i$'s can be computed at every node with computation complexity 
$O(N_1 B)$. Then, computing $Q(\hat{\rho}_i)$'s takes computation complexity of $O(N_1)$. After we complete coded multiplication, $(\widetilde{\bm{C}}^2)^T_p= (\widetilde{\bm{\Delta}}^2)^T_p \widetilde{\bm{W}}^2_p $ and decode $(\bm{C}^2)^T$ at each node, the nodes can compute $\bm{\Delta}_{\textnormal{auto}}^1$ with complexity $O(N_1 B)$. Then it can also encode the computed $\bm{\Delta}_{\textnormal{auto}}^1$ for coded update of $\bm{W}^1$. 

To summarize, the key steps are as follows: 
\begin{itemize}
\item Feedforward stage: Exactly same as before
\item Backpropagation and Update stage at layer $2$: The $\bm{W}^2$ update is only backpropagation with regularization.
\item Backpropagation stage at layer $1$: The $\bm{W}^1$ update is following equations (\ref{eq:autoencoder_update}) and (\ref{eq:autoencoder_delta}). 
\begin{enumerate}
\item Computing $\bm{\hat{\rho}}$: complexity is $\Theta(N_1 B)$. 
\item Computing $\bm{Q}_{\bm{\hat{\rho}}}$: complexity is $\Theta(N_1)$. 
\item Computing $\bm{\Delta}_{\textnormal{auto}}^1$: Computing coded matrix-matrix product $(\widetilde{\bm{C}}^2)^T_p= (\widetilde{\bm{\Delta}}^2)^T_p \widetilde{\bm{W}}^2_p $ which takes $\Theta \left(\frac{N_1 N_2 B}{d_2K} \right)$ complexity.
\item Decode $(\bm{C}^2)^T$ at every node: complexity $\Theta(P^3\frac{N_1B}{d_2n} ) $.
\end{enumerate}
\item Update stage at layer $1$: Encode $\bm{\Delta}_{\textnormal{auto}}^1$ for the update.
\end{itemize}

\section{Discussion and Conclusions}

To summarize, in this work we first proposed a novel coded computing technique for distributed matrix multiplications called Generalized PolyDot and then used it to design a unified coded computing strategy for DNN training. Our proposed coding strategy (and the concurrent strategy of \cite{yu2018entangled}) advances on the existing coded computing techniques for distributed matrix-matrix multiplication, improving the recovery threshold and error tolerances. Lastly, we also show how our unified strategy can be adapted to specific applications, such as, autoencoders.

The problem of reliable computing using unreliable elements was first posed in 1956 by von Neumann\cite{von1956probabilistic}, speculating that the efficiency and reliability of the human brain is obtained by allowing for low power but error-prone components with redundancy for error-resilience. This is also evident from the influence of McCulloch-Pitts model of a neuron~\cite{mcculloch1943logical} in his work. It is often speculated~\cite{sreenivasan2011error} that the error-prone nature of brain's hardware actually helps it be more efficient: rather than making individual components more reliable using higher power/resources, it might be more efficient to accept component-level errors, and utilize sophisticated error-correction mechanisms for overall reliability of the computation\footnote{See also the ``Efficient Coding Hypothesis'' of Barlow~\cite{barlow1961possible} and an application in~\cite{yang2017encoded}.}. It is thus surprising that this problem of training neural networks under errors has still remained open, even as massive artificial neural networks are being trained on increasingly low-cost and unreliable processing units.  We believe that this work (and our prior work~\cite{dutta2018DNN1}) might be a significant step in the design of biologically inspired neural networks with error resilience that could hold the key to significant improvements in efficiency and reduction of energy consumption during neural network training. Thus, these results could be of broader scientific interest to communities like High Performance Computing (HPC), neuroscience as well as neuromorphic computing.

\ifCLASSOPTIONcaptionsoff
  \newpage
\fi

% trigger a \newpage just before the given reference
% number - used to balance the columns on the last page
% adjust value as needed - may need to be readjusted if
% the document is modified later
%\IEEEtriggeratref{8}
% The "triggered" command can be changed if desired:
%\IEEEtriggercmd{\enlargethispage{-5in}}

% references section

% can use a bibliography generated by BibTeX as a .bbl file
% BibTeX documentation can be easily obtained at:
% http://mirror.ctan.org/biblio/bibtex/contrib/doc/
% The IEEEtran BibTeX style support page is at:
% http://www.michaelshell.org/tex/ieeetran/bibtex/
\bibliographystyle{IEEEtran}
%\bibliography{sample}
% argument is your BibTeX string definitions and bibliography database(s)
\bibliography{IEEEabrv,sample}
\appendices 
\crefalias{section}{app}
\crefalias{subsection}{app}
\section{DNN Background}
\label{appendix:DNN_background}
We provide a background on DNNs for unfamiliar readers. We also follow the standard description used in DNN literature~\cite{rumelhart1988learning}, so familiar readers can merely skim this part. \\

\noindent \textbf{DNN Operations:} \\
We first explain the training of a DNN with $l=1,2,\ldots,L$ layers (excluding the input layer which can be thought of as the layer $0$) using Stochastic Gradient Descent (SGD) with batch size $B=1$. Later, we will explain how this can be extended to mini-batch SGD with mini-batch size $B > 1$.
A Deep Neural Network (DNN) essentially consists of $L$ weight matrices (also called parameter matrices), one for each layer, that represent the connections between the $l$-th and $(l-1)$-th layer for $l=1,2,\ldots,L$. At the $l$-th layer, $N_{l}$ denotes the number of neurons. Thus, for layer $l$, the weight matrix to be trained is of dimension $N_{l} \times N_{l-1}$. At the $l$-th layer ($l=1,\ldots,L$), $N_{l}$ denotes the number the neurons, (\emph{i.e.,} the row-dimension of the weight matrix).

We use the index $k$ to denote the iteration number of the training. At the $k$-th iteration, the neural network is trained based on a single data point using three stages: a \textit{feedforward} stage, followed by a \textit{backpropagation} stage and an \textit{update} stage.  We use the following notations: 

\begin{enumerate}

\item $N_1,\dots,N_L :$ Number of neurons in layers $1,2,\dots,L$. We also introduce the notation $N_0$ to denote the dimension of the original data vector, which serves as the input to the first layer.

\item $W_{i,j}^l (k):$ At iteration $k$, the weight of the connection from neuron $j$ on layer $l-1$ to neuron $i$ on layer $l$ for $i=0,1,\dots,N_l-1$ and $ j=0,1,\dots,N_{l-1}-1$. Note that, the weights actually form a matrix $\bm{W}^l(k)$ of dimension $N_l \times N_{l-1}$ for layer $l$.

\item $\bm{x}^l(k) \in \mathcal{R}^{N_{l-1}}  : $ The input of layer $l$ at the $k$-th iteration. Note that, for the first layer, $\bm{x}^1(k)$ becomes the data point used for the $k$-th iteration of training.

\item $ \bm{s}^l(k): $ The summed output of the neurons of layer $l$ before a nonlinear function $f(\cdot)$ is applied on it, at the $k$-th iteration. Note that, for $i=0,1,\dots,N_l-1$, the scalar $s_i^l(k)$ is the $i-$th entry of the vector $\bm{s}^l(k)$, \textit{i.e.} the summed output of neuron $i$ on layer $l$. 

\item $\hat{\bm{y}}^l(k) \in \mathcal{R}^{N_{l}}  : $ The output of layer $l$ at the $k$-th iteration after the application of the nonlinear function. Note that, the output of the last layer $L$, \textit{i.e.}, $\hat{\bm{y}}^L(k)$ is the final estimated label generated by the neural network that can be compared with the true label.

\end{enumerate}

\noindent We now make some observations that explains the functional connectivity across the layers:
\begin{enumerate}
\item Input for any layer is the output of the previous layer except of course for the first layer whose input is the actual data vector itself:

$$\bm{x}^l(k)=
\begin{cases}
    \hat{\bm{y}}^{l-1}(k) \in \mathcal{R}^{N_{l-1}} ,& \text{if } l=2,3,\dots,L\\
    \bm{x}^1(k),              & \text{otherwise}.
\end{cases} $$

\item At each layer, the input for that layer is summed with appropriate weights of that layer (entries $W_{i,j}^l$) to produce the summed output of each neuron given by:
\begin{align}
s_i^l(k) & =\bm{W}^l_{i,:}(k)\bm{x}^{l}(k) = \sum_{j=0}^{N_{l-1}-1} W_{i,j}^{l}(k) x^{l}_j(k) \\
& =  \sum_{j=0}^{N_{l-1}-1} W_{i,j}^{l}(k) x^{l}_j(k)  = \sum_{j=0}^{N_{l-1}-1} W_{i,j}^{l}(k) \hat{y}^{l-1}_j(k).  
\end{align}

\item The final output of each layer is given by a nonlinear function applied on the summed output of each neuron as below:
\begin{align}
\hat{y}_i^l (k) = f(s_i^l(k)) = f\left(\sum_{j=1}^{N_{l-1}} W_{i,j}^{l}(k) \hat{y}^{l-1}_j(k)   \right).
\end{align}

\item Observe that $\hat{y}^L(k)$ of the last layer denotes the estimated output or label of the DNN and is to be compared with the true label vector $\bm{y}(k)$ for the corresponding data point $\bm{x}^1(k)$.\\

\end{enumerate}

    \noindent \textbf{Key Idea of Training:} \\
Let us assume we have a large data set $\chi$ consisting of several data points and their labels. The goal of training a DNN is to find the weight matrices $\bm{W}^1$ to $\bm{W}^L$ that minimize the empirical loss function defined as follows: 
\begin{equation}
\label{eq:loss1}
E (\bm{W}^1,\bm{W}^2, \ldots, \bm{W}^L) = \frac{1}{|\chi|} \sum_{b=0}^{|\chi|-1} \epsilon^2(b,\bm{W}^1,\bm{W}^2, \ldots, \bm{W}^L ) .
\end{equation}
Here $b$ denotes the index of the data point in $\chi$ and $\epsilon^2(b,\bm{W}^1,\bm{W}^2, \ldots, \bm{W}^L )$ is the loss for the $b$-th data point and its label when using the weight matrices (also called parameter matrices) $\bm{W}^1,\bm{W}^2, \ldots, \bm{W}^L$ in the neural network. We clarify this using the following example. Let $(\bm{x}^1,\bm{y})$ be a particular data point and its true label vector and suppose the network consists of only a single layer followed by an element-wise nonlinear activation function $f(\cdot)$. Then, the estimated label $\hat{\bm{y}}^L$ (here $L=1$) is given by $\hat{\bm{y}}^L= f(\bm{W}^1\bm{x}^1)$ applied element-wise. Therefore, the empirical loss function is as follows:
\begin{align*}
E (\bm{W}^1) & = \frac{1}{|\chi|} \sum_{b=0}^{|\chi|-1} \epsilon^2(b,\bm{W}^1) \\
&\overset{\text{E.g.~L2 loss}}{=} \frac{1}{|\chi|} \sum_{b=0}^{|\chi|-1} \left( ||\text{Estimated Label} - \text{True Label} ||_2^2 \right)_{\text{for the b-th data point}} \\
&= 
\frac{1}{|\chi|} \sum_{b=0}^{|\chi|-1} \left(||\hat{\bm{y}}^L-\bm{y}||_2^2 \right)_{\text{for the b-th data point}} = \frac{1}{|\chi|}\sum_{(\bm{x}^1,\bm{y})\in \chi } ||f(\bm{W}^1\bm{x}^1)-\bm{y}||_2^2.
\end{align*}
Other commonly used loss functions for $\epsilon^2(\cdot)$ are hinge loss, logistic loss, or cross-entropy loss. The technique does not depend on the specific choice of the loss function.

Gradient Descent (GD) is one way to iteratively minimize this loss function using the following update rule:
\begin{align}
W_{i,j}^l(k+1) &=  W_{i,j}^l(k)- \eta \frac{\partial E(\bm{W}^1(k),\ldots, \bm{W}^L(k)) }{\partial W_{i,j}^l(k) } \\
& =  W_{i,j}^l(k)- \eta \frac{1}{|\chi|} \sum_{b=0}^{|\chi|-1} \frac{\partial \epsilon^2(b,\bm{W}^1(k), \ldots, \bm{W}^L(k))}{\partial W_{i,j}^l(k)}.
\end{align}
Here $k$ denotes the index of the iteration, $l$ denotes the layer index of the neural network, $\bm{W}^l(k)$ is the value of the weight matrix (or parameter matrix) of layer $l$ at the $k$-th iteration, and 
$W_{i,j}^l(k) $ denotes the $(i,j)$-th scalar element of the matrix $\bm{W}^l(k)$. However, as is evident from this update rule, that GD would require access to all the data points to compute the gradient at each iteration which is computationally expensive. \\

\noindent \textbf{Stochastic Gradient Descent:}\\
To remedy this, one often uses Stochastic Gradient Descent (SGD) instead of full Gradient Descent (GD) where the gradient with respect to only a single data point is used at each iteration and the update rule is replaced as:

\begin{align}
W_{i,j}^l(k+1) =  W_{i,j}^l(k)- \eta  \frac{\partial \epsilon^2(b,\bm{W}^1(k), \ldots, \bm{W}^L(k))}{\partial W_{i,j}^l(k)}.
\end{align}

Here $b$ denotes the index of the data point which is accessed at the $k$-th iteration, and this depends on the iteration index $k$. For the ease of explanation of SGD, we can simply rewrite $\epsilon^2(b,\bm{W}^1(k), \ldots, \bm{W}^L(k))$ only as a function of the iteration index $k$ alone, \textit{i.e.}, $\epsilon^2(k)= \epsilon^2(b,\bm{W}^1(k), \ldots, \bm{W}^L(k))$. Thus, the update rule is as follows:
\begin{align}
W_{i,j}^l(k+1) =  W_{i,j}^l(k)- \eta  \frac{\partial \epsilon^2(k)}{\partial W_{i,j}^l(k)}.
\end{align}

We will also include the case of mini-batch SGD with batch-size $B$ later in \Cref{subsec:mini_batch_SGD}.\\

\noindent  \textbf{Derivation of Backpropagation and Update:} \\
\noindent Recall that, at the $k$-th iteration, the weights of every layer $l$ of the DNN are to be updated as follows:
\begin{equation}
W_{i,j}^l(k+1)=W_{i,j}^l(k) - \eta \frac{\partial \epsilon^2(k)}{\partial W_{i,j}^l(k)}.
\end{equation}

The backpropagation (and update) helps us to compute the errors and updates in a recursive form, so that the update of any layer $l$ depends only on the backpropagated error vector of its succeeding layer, \textit{i.e.} layer $l+1$ and not on the layers $\{l+1, l+2, \dots ,L\}$. Here, we provide the backpropagation and update rules. Let us define the backpropagated error vector as $\bm{\delta}^l(k)$:
\begin{equation}
\delta_i^l(k)= -\frac{\partial \epsilon^2(k)}{\partial s_i^l(k) } \ \ \forall \ i=0,1,\dots,N_l-1.
\end{equation}
For example, one might again consider the special case of the L2 loss function given by:
\begin{equation}
\epsilon^2(k) = ||\hat{\bm{y}}^L(k)- \bm{y}(k) ||_2^2 =\sum_{i=0}^{N_L-1} (\hat{y}_i^L(k)-y_i(k))^2. 
\end{equation}
For the L2 loss function, the error at the last layer is given by:
\begin{equation}
\label{error:last_layer}
\delta_i^L(k)= 2 \epsilon_i(k) f'(s_i^L(k)).
\end{equation}

\noindent Now, observe that $\delta_i^l(k)$ can be calculated from $\bm{\delta}^{l+1}(k)$ as we derive here in Lemma \ref{lem:DNN}.

\begin{lem}
\label{lem:DNN}
During the training of a neural network using backpropagation, the backpropagated error vector $\delta_i^{l}(k)$ for any layer $l$ can be expressed as a function of the backpropagated error vector of the previous layer as given by:
\begin{equation}
\delta_i^{l}(k) = \left( \sum_{j=0}^{N_{l+1}-1} \delta_j^{l+1}(k) W_{j,i}^{l+1}(k)  \right) f'(s_i^l(k)).
\end{equation}
\end{lem}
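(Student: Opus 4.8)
The plan is to prove the recursion by a direct application of the multivariate chain rule to the definition $\delta_i^l(k) = -\partial \epsilon^2(k)/\partial s_i^l(k)$, exploiting the layered (feedforward) structure of the network, with an induction running from layer $L$ down to layer $1$. First I would observe that, holding all weight matrices fixed at their iteration-$k$ values, the loss $\epsilon^2(k)$ depends on the pre-activation vector $\bm{s}^l(k)$ of layer $l$ \emph{only} through the pre-activation vector $\bm{s}^{l+1}(k)$ of the next layer. Indeed, from the feedforward equations recalled in \Cref{appendix:DNN_background} we have $x_i^{l+1}(k) = \hat{y}_i^l(k) = f(s_i^l(k))$, hence
\[
s_j^{l+1}(k) = \sum_{i=0}^{N_l - 1} W_{j,i}^{l+1}(k)\, f\!\big(s_i^l(k)\big), \qquad j = 0,1,\dots,N_{l+1}-1,
\]
and everything downstream of layer $l+1$ is a function of $\bm{s}^{l+1}(k)$ alone. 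The boundary case $l = L$ requires nothing, and \Cref{error:last_layer} supplies the base case of the induction.

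Next I would apply the chain rule with the $\{s_j^{l+1}(k)\}_j$ as the intermediate variables:
\[
\frac{\partial \epsilon^2(k)}{\partial s_i^l(k)} = \sum_{j=0}^{N_{l+1}-1} \frac{\partial \epsilon^2(k)}{\partial s_j^{l+1}(k)} \cdot \frac{\partial s_j^{l+1}(k)}{\partial s_i^l(k)}.
\]
The first factor equals $-\delta_j^{l+1}(k)$ by the definition of the backpropagated error. The second factor is obtained by differentiating the displayed expression for $s_j^{l+1}(k)$ term by term: only the $i$-th summand depends on $s_i^l(k)$, so $\partial s_j^{l+1}(k)/\partial s_i^l(k) = W_{j,i}^{l+1}(k)\, f'(s_i^l(k))$.

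Finally I would substitute and simplify: combining the two factors, pulling out the factor $f'(s_i^l(k))$ (which is independent of $j$), and multiplying through by $-1$ to undo the sign in the definition of $\delta_i^l(k)$, one arrives at exactly
\[
\delta_i^l(k) = \left(\sum_{j=0}^{N_{l+1}-1} \delta_j^{l+1}(k)\, W_{j,i}^{l+1}(k)\right) f'(s_i^l(k)),
\]
which completes the proof. The only real subtlety — and thus the ``main obstacle'' worth spelling out — is the justification that the dependence of $\epsilon^2(k)$ on $s_i^l(k)$ factors entirely through the set $\{s_j^{l+1}(k)\}_j$, i.e.\ that $\bm{s}^{l+1}(k)$ is a valid ``cut'' in the computation graph of the forward pass; this is immediate from the feedforward recursion but must be stated explicitly so that the chain rule is applied to the correct collection of intermediate variables. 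Everything else is a routine one-line differentiation.
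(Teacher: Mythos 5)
Your proposal is correct and takes essentially the same approach as the paper: a direct chain-rule computation through the intermediate variables $\{s_j^{l+1}(k)\}_j$, together with the one-line differentiation $\partial s_j^{l+1}(k)/\partial s_i^l(k) = W_{j,i}^{l+1}(k)\, f'(s_i^l(k))$. The extra scaffolding you add (framing $\bm{s}^{l+1}(k)$ as a cut in the computation graph, noting the base case at $l=L$ via \Cref{error:last_layer}) is correct and just makes explicit what the paper leaves implicit.
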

% * <hey1jeong@gmail.com> 2018-08-06T04:42:40.641Z:
% 
% Doesn't this have to be W_{j,i}^{l+1}?
% 
% ^.
\begin{proof}[Proof of Lemma \ref{lem:DNN}:]
\begin{align}
\delta_i^l(k) & = -\frac{\partial \epsilon^2(k)}{\partial s_i^l(k) } = -\sum_{j=0}^{N_{l+1}-1} \frac{\partial \epsilon^2(k)}{\partial s_j^{l+1}(k) } \frac{\partial s_j^{l+1}(k)}{\partial s_i^l(k) } \\
& = -\sum_{j=0}^{N_{l+1}-1} \frac{\partial \epsilon^2(k)}{\partial s_j^{l+1}(k) } W^{l+1}_{j,i}(k) f'(s^l_i(k)) = \left( \sum_{j=0}^{N_{l+1}-1} \delta_j^{l+1}(k) W_{j,i}^{l+1}(k)  \right) f'(s_i^l(k)).
\end{align}
\end{proof}

Now, using the fact that $s_i^l(k)  = \sum_{j=0}^{N_{l-1}-1} W_{i,j}^{l}(k) x^{l}_j(k)$, we have
\begin{equation}
\frac{\partial \epsilon^2(k)}{\partial W_{i,j}^l(k) } = \frac{\partial \epsilon^2(k)}{\partial s_i^l(k) }\frac{\partial s_i^l(k)}{\partial W_{i,j}^l(k) } = -\delta_i^l(k) x^{l}_j(k).
\end{equation}
Thus, the update rule is derived as follows:
\begin{equation}
\label{eq:update_Wij}
W_{i,j}^l(k+1)=W_{i,j}^l(k) - \eta \frac{\partial \epsilon^2(k)}{\partial W_{i,j}^l(k) }= W_{i,j}^l(k) + \eta \delta_i^l(k) x^{l}_j(k).
\end{equation}
Note that, the update rule does not depend on any particular choice of loss function.

\subsection{Algorithmic Steps for DNN Training ($B=1$).}
We assume that a DNN with $L$ layers (excluding the input layer) is being trained using backpropagation with Stochastic Gradient Descent (SGD)\footnote{As a first step in this direction of coded neural networks, we assume that the training is performed using vanilla SGD. As a future work, we plan to extend these coding ideas to other training algorithms~\cite{ruder2016overview} such as momentum SGD, Adam etc.} with a mini-batch size of $B=1$~\cite{rumelhart1988learning}. The DNN thus consists of $L$ weight matrices (see \Cref{fig:DNN_training}), one for each layer, that represent the connections between the $l$-th and $(l-1)$-th layer for $l=1,2,\ldots,L$. At the $l$-th layer, $N_{l}$ denotes the number of neurons. Thus, the weight matrix to be trained is of dimension $N_{l} \times N_{l-1}$. For simplicity of presentation, we assume that $N_l=N$ for all layers. In every iteration, the DNN (\textit{i.e.} the $L$ weight matrices) is trained based on a single data point and its true label through three stages, namely, feedforward, backpropagation and update, as shown in Fig.~\ref{fig:DNN_training}. At the beginning of every iteration, the first layer accesses the data vector (input for layer $1$) and starts the feedforward stage which propagates from layer $l=1$ to $l=L$. For a layer $l$, let us denote the weight matrix, input for the layer and backpropagated error for iteration $k$ by $\bm{W}^l(k)$, $\bm{x}^l(k)$ and $\bm{\delta}^l(k)$ respectively. The operations performed in layer $l$ during feedforward stage (see \Cref{fig:DNN_step1}) can be summarized as:
\begin{itemize}[noitemsep,topsep=0pt,leftmargin=0.2cm]
\item $[$Step $O1]$ Compute matrix-vector product $\bm{s}^l(k)=\bm{W}^l(k)\bm{x}^l(k)$. 
\item $[$Step $C1]$ Compute input for layer $(l+1)$ given by $\bm{x}^{(l+1)}(k)=f(\bm{s}^l(k))$ where $f(\cdot)$ is a nonlinear activation function applied elementwise.
\end{itemize}
At the last layer ($l=L$), the backpropagated error vector is generated by assessing the true label and the estimated label, $f(\bm{s}^L(k))$, which is output of last layer (see \Cref{fig:DNN_step2}). Then, the backpropagated error propagates from layer $L$ to $1$ (see \Cref{fig:DNN_step3}), also updating the weight matrices at every layer alongside (see \Cref{fig:DNN_step4}). The operations for the backpropagation stage can be summarized as:
\begin{itemize}[noitemsep,topsep=0pt,leftmargin=0.2cm]
\item $[$Step $O2]$ Compute matrix-vector product $[\bm{c}^l(k)]^T=[\bm{\delta}^l(k)]^T\bm{W}^l(k)$. 
\item $[$Step $C2]$ Compute backpropagated error vector for layer $(l-1)$ given by $[\bm{\delta}^{(l-1)}(k)]^T=[\bm{c}^l(k)]^T\bm{D}^l(k)$ where $\bm{D}^l(k)$ is a diagonal matrix whose $i$-th diagonal element depends only on the $i$-th value of $\bm{x}^l(k)$. More specifically, $\bm{D}^l(k)$ is a diagonal matrix whose $i$-th diagonal element is a function $g(\cdot)$ of the $i$-th element of $\bm{x}^l(k)$, such that, $g(f(u))=f'(u)$ for the chosen nonlinear activation function $f(\cdot)$ in the feedforward stage. This is equivalent to computing the Hadamard product: $[\bm{\delta}^{(l-1)}(k)]^T=[\bm{c}^l(k)]^T \circ g([\bm{x}^l(k)]^T ) $.

\end{itemize}
Finally, the step in the Update stage is as follows:
\begin{itemize}[noitemsep,topsep=0pt,leftmargin=0.2cm]
\item  $[$Step $O3]$ Update as: $\bm{W}^l(k+1) \leftarrow \bm{W}^l(k) + \eta \bm{\delta}^l(k)[\bm{x}^l(k)]^T$ where $\eta$ is the learning rate. Sometimes a regularization term is added with the loss function in DNN training (elaborated in \Cref{subsec:app_reg}). For L2 regularization, the update rule is modified as: $\bm{W}^l(k+1) \leftarrow (1-\eta \lambda)\bm{W}^l(k) + \eta \bm{\delta}^l(k)[\bm{x}^l(k)]^T$ where $\eta$ is the learning rate and $\frac{\lambda}{2}$ is the regularization constant.
\end{itemize}

\textbf{Important Note:} As the primary, computationally intensive operations in the DNN training remains roughly the same across layers, we simplify our notations in the main part of the paper. For any layer, we denote the feedforward input $\bm{x}^l(k)$, the weight matrix $\bm{W}^l(k)$ and the backpropagated error $\bm{\delta}^l(k)$ as $\bm{x}$, $\bm{W}$ and $\bm{\delta}$ respectively. We also assume $N_0=N_1=\dots = N_{L}=N$, and thus the dimension of $\bm{W}$ is $N \times N$.

\subsection{Extension to mini-batch SGD with batch size $B>1$.}
\label{subsec:mini_batch_SGD}
Note that while SGD is faster in terms of computations as compared to GD which requires access to the whole data set at each iteration, it is usually less accurate and has noise. So, often one uses mini-batch SGD with a batch size of $B>1$ which is somewhat intermediate between SGD (batch size $B=1$) and full GD. In mini-batch SGD, the gradient is computed over a mini-batch of $B$ data points together than only a single data point, though $B$ is usually much much smaller than the total size of the data set $|\chi|$. Thus, the update rule is given by:

\begin{align}
W_{i,j}^l(k+1) &=  W_{i,j}^l(k)- \frac{\eta}{B} \sum_{b=0}^{B-1}  \frac{\partial \epsilon^2(b,\bm{W}^1(k), \ldots, \bm{W}^L(k))}{\partial W_{i,j}^l(k)} \\
&= W_{i,j}^l(k)- \eta \frac{\partial \left( \frac{1}{B} \sum_{b=0}^{B-1} \epsilon^2(b,\bm{W}^1(k), \ldots, \bm{W}^L(k)) \right)}{\partial W_{i,j}^l(k)}. 
\end{align}

Here, $B$ denotes the size of the mini-batch, and $b$ denotes the index of the data point in the subset of $B$ data points that are chosen for the particular iteration $k$. Again, it is more convenient to represent the whole term $\frac{1}{B} \sum_{b=0}^{B-1}  \epsilon^2(b,\bm{W}^1(k), \ldots, \bm{W}^L(k))$ as only a function of the iteration index $k$. Thus, we have,  $\epsilon^2(k)= \frac{1}{B} \sum_{b=0}^{B-1}   \epsilon^2(b,\bm{W}^1(k), \ldots, \bm{W}^L(k))$. \\

\noindent \textbf{DNN Operations:}

As we are operating on $B$ data points instead of a single one during the training using mini-batch SGD, we introduce some matrix notations here as an extension to the vector notations for the case of mini-batch $B=1$.
\begin{enumerate}
\item $N_0, N_1, \dots,N_L :$ (Stays same as before)

\item $W_{i,j}^l (k):$ (Stays same as before)

\item $\bm{X}^l(k) \in \mathcal{R}^{N_{l-1} \times B}  : $ The input of layer $l$ at the $k$-th iteration. Note that, this is now a matrix with $B$ columns instead of a single column vector. Note that, for the first layer, $\bm{X}^1(k)$ consists of the set of $B$ data points used for the $k$-th iteration of training, all arranged in consecutive columns.

\item $ \bm{S}^l(k) \in \mathcal{R}^{N_{l}\times B }  : $ The summed output of the neurons of layer $l$ before a nonlinear function $f(\cdot)$ is applied on it, at the $k$-th iteration. Note that, during the feedforward stage, $\bm{S}^l(k)=\bm{W}^l(k)\bm{X}^l(k)$. 

\item $\hat{\bm{Y}}^l(k) \in \mathcal{R}^{N_{l}\times B }  : $ The output of layer $l$ at the $k$-th iteration. Note that, $\hat{\bm{Y}}^l(k) = f(\bm{S}^l(k))$, applied element-wise and the input to the next layer, \textit{i.e.}, $\bm{X}^{l+1}(k)= \hat{\bm{Y}}^l(k) $.

\end{enumerate}

The derivation of the backpropagation and update rule is similar to the case of mini-batch $B=1$. The backpropagated error is also now a matrix $ \in \mathcal{R}^{N_l} \times B $ denoted by $\bm{\Delta}^l(k) $ whose $(i,b)$-th entry is defined as:
\begin{equation}
\Delta_{i,b}^l(k)= - \frac{\partial \epsilon^2(k) }{\partial s_{i,b}^l(k) } \forall \ i=0,1,\ldots,N_l-1, \ b=0,1,\ldots,B-1.
\end{equation}

Using an analysis similar to the previous case, the recursion for backpropagation can be derived as follows:
\begin{align}
\Delta^{l}_{i,b}(k) = \left( \sum_{j=0}^{N_{l+1}-1}\Delta^{l+1}_{j,b}(k) W_{j,i}^{l+1}(k)  \right) f'(S_{i,b}^l(k) ).
\end{align}

Or, in matrix notations, this reduces to the following steps:
\begin{itemize}
\item Matrix Multiplication: $ [\bm{C}^l(k)]^T = [\bm{\Delta}^l(k)]^T \bm{W}^l(k) .$
\item Hadamard product: $[\bm{\Delta}^{l-1}(k)]^T= \bm{e}_{(0)}[\bm{C}^l(k)]^T \bm{D}^l_0(k) + \bm{e}_{(1)}[\bm{C}^l(k)]^T \bm{D}^l_1(k) + \dots + \bm{e}_{(B-1)}[\bm{C}^l(k)]^T \bm{D}^l_{B-1}(k)
$ where $\bm{D}^l_b(k)$ is a diagonal matrix that only depends on the $b$-th column of $\bm{X}^l(k)$, \textit{i.e.},  whose $i$-th diagonal element depends on only the element at location $(i,b)$ of the matrix $\bm{X}^l(k)$. More specifically, $\bm{D}^l_b(k)$ is a diagonal matrix whose $i$-th diagonal element is a function $g(\cdot)$ of the element at location $(i,b)$ of the matrix $\bm{X}^l(k)$, such that $g(f(u))=f'(u)$ for the chosen nonlinear activation function $f(\cdot)$ in the feedforward stage. Also, $\bm{e}_{(b)}$ is a unit vector of dimension $1 \times B $ whose $b$-th entry is $1$ and all others are $0$. To clarify, observe that,
$$ \bm{I}_{B \times B} = \begin{bmatrix}
1 & 0 & \dots & 0 \\
0 & 1 & \dots & 0 \\
\vdots & \vdots & \ddots & \vdots \\
0 & 0 & \dots & 1 
\end{bmatrix} = \begin{bmatrix}
-\bm{e}_{(0)}-\\
-\bm{e}_{(1)}-\\
\vdots \\
-\bm{e}_{(B-1)}-
\end{bmatrix}.$$

Note that the aforementioned step can also be rewritten as a Hadamard product between two matrices as follows:

$$ \bm{\Delta}^{l-1}(k) = \bm{C}^l(k) \circ f'(\bm{S}^{l-1}(k)) =  \bm{C}^l(k) \circ g(\bm{X}^{l}(k)), $$
the Hadamard product ``$\circ$'' between two matrices of the same dimension is defined as a matrix of the same dimension as the operands, with each element given by the element-wise product of the corresponding elements of the operands.
\end{itemize}

%Therefore, the update rule can be derived as follows:
%\begin{equation}
%W_{i,j}^l(k+1)= W_{i,j}^l(k) + \eta \sum_{b=0}^{B-1} [\Delta^l(k)]_{{i,b}} [X^l(k)]_{{j,b}}. 
%\end{equation}
In matrix notations, the update rule is:
\begin{equation}
\bm{W}^{l}(k+1)= \bm{W}^{l}(k) + \eta \bm{\Delta}^l(k) [ \bm{X}^l(k)]^T.
\end{equation}

\textbf{Important Note:} Similar to the case of $B=1$, we omit the iteration index $k$ and layer index $l$ in the main text of the paper because we are mainly interested in only the matrix operations that occur at each iteration across each layer and these operations are similar across all iterations and layers.

\subsection{Regularization.} \label{subsec:app_reg}

The loss function discussed in \Cref{eq:loss1} is often used in practice with an additional regularization term. Thus, the empirical loss function takes the form:

%Training DNN is finding the weight matrices $\bm{W}^1$ to $\bm{W}^l$ that minimize the loss function that is defined as follows: 
\begin{equation}
\label{eq:loss2}
E(\bm{W}^1, \bm{W}^2, \ldots, \bm{W}^L) =  \frac{1}{|\chi|} \sum_{b=0}^{|\chi|-1} \epsilon^2(b,\bm{W}^1,\bm{W}^2, \ldots, \bm{W}^L )  + R(\bm{W}^1, \cdots, \bm{W}^l).
\end{equation}
where $R(\cdot)$ is a regularization function of the matrices $\bm{W}^1, \cdots,\bm{W}^L$. Similar to above, the goal of training DNN is finding the weight matrices $\bm{W}^1$ to $\bm{W}^L$ that minimize the loss function in \Cref{eq:loss2}.

%\sout{where $\epsilon^2(k)$ is loss function of the $k$-th iteration and $R(\cdot)$ is regularization on the matrices $W^1, \cdots, W^l$. An example for $\epsilon^2$ can be L2 loss between the true label ($\bm{y}$) and the neural network output ($\hat{\bm{y}}^L$): }
%\begin{equation}\label{eq:l2_loss_ex}
%\epsilon^2(k) = \sum_{i=0}^{N_L-1} \epsilon_i^2=\sum_{i=0}^{N_L-1} (\hat{y}_i^L(k)-y_i(k))^2. 
%\end{equation}

%For now, let us ignore the regularization term, and we will come back to it at the end of the section.

One of the commonly used regularization techniques in deep learning is adding penalty norms on weights, such as L2 norm and L1 norm. For L2 regularization (weight decay), loss function in (\ref{eq:loss2}) can be written as:
\begin{equation}
E(\bm{W}^1,\bm{W}^2,\ldots, \bm{W}^L) = \frac{1}{|\chi|}\sum_{b=0}^{|\chi|-1} \epsilon^2(b,\bm{W}^1,\bm{W}^2,\ldots, \bm{W}^L) + \frac{\lambda}{2} \sum_{l=1}^{L} ||\bm{W}^l||_F^2.
\end{equation}
It is desirable to use different $\lambda$ for the different layers, but for the simplicity we assume that we use the same $\lambda$ across all layers. As we already described, SGD is usually preferred for these kinds of optimization problems as compared to the batch Gradient Descent. Thus, the SGD update rule takes the form:
\begin{equation}
W_{i,j}^l(k+1)= W_{i,j}^l(k) - \eta \left( \frac{\partial \epsilon^2(b,\bm{W}^1,\bm{W}^2,\ldots, \bm{W}^L)}{\partial W_{i,j}^l(k)} +  \frac{\partial \frac{\lambda}{2} \sum_{l=1}^{L} ||\bm{W}^l||_F^2 }{\partial W_{i,j}^l(k)}   \right), 
\end{equation}
where $b$ is the index of the data point used in the $k$-th iteration. With the weight decay regularization term, the update equation in (\ref{eq:update_Wij}) can thus be rewritten as:
\begin{equation} 
W_{i,j}^l(k+1)= W_{i,j}^l(k) + \eta \left(\delta_i^l(k) x^{l}_j(k) -\lambda W_{i,j}^l(k) \right) = (1-\eta\lambda)W_{i,j}^l(k) + \eta \delta_i^l(k) x^{l}_j(k).
\end{equation}
Note that by adding the weight decay term, we have to shrink $W_{i,j}$'s by a constant factor $(1-\eta\lambda)$ at each iteration. A similar expression can also be derived for the case of mini-batch $B>1$.

% \textcolor{black}{To be added to the main text of the paper}

\subsection{Training Autoencoders.} \label{subsec:app_autoenc}
\begin{figure}[ht]
\centering
\subfloat[]{
\includegraphics[height=5.5cm]{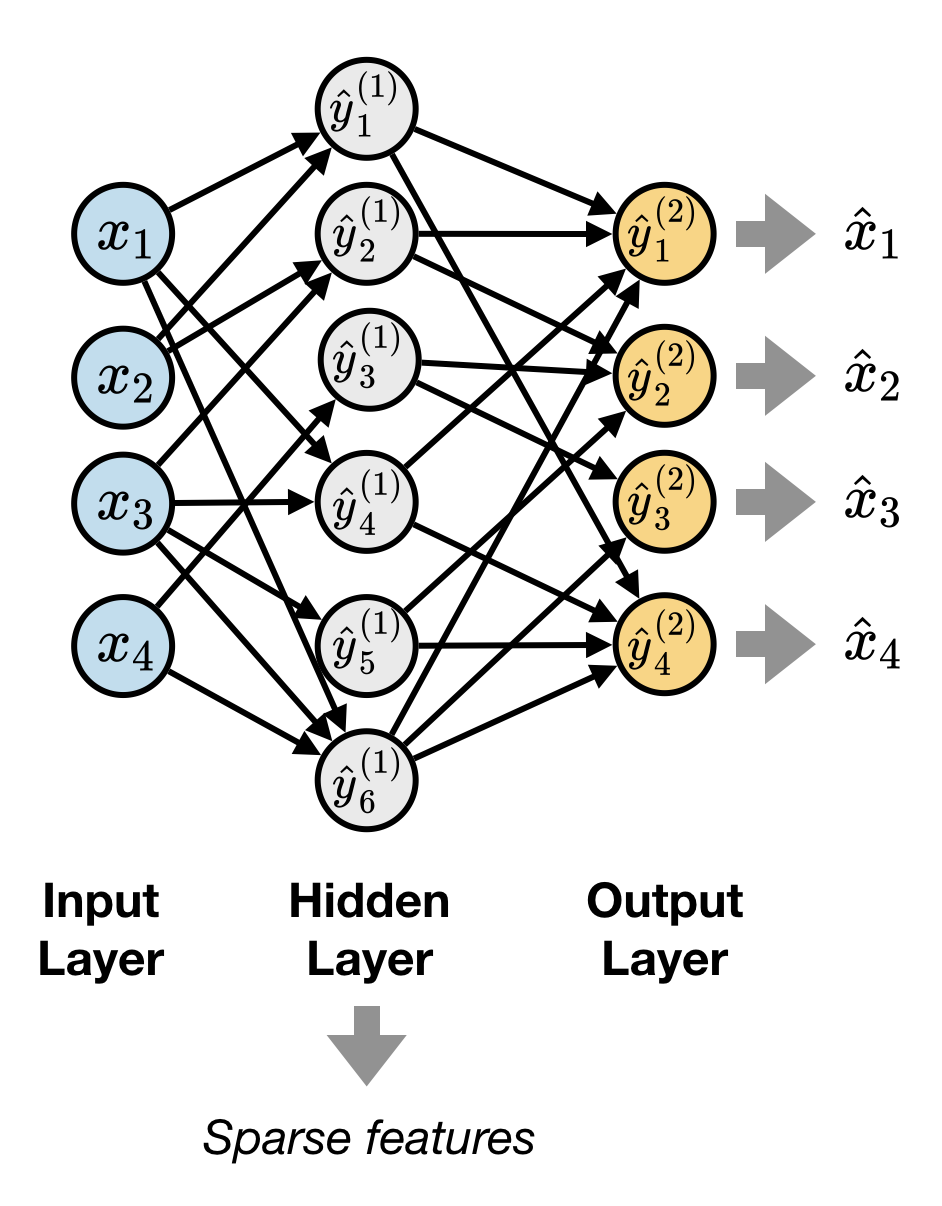} }
\hspace{1cm}
\subfloat[]{ \includegraphics[height=6cm]{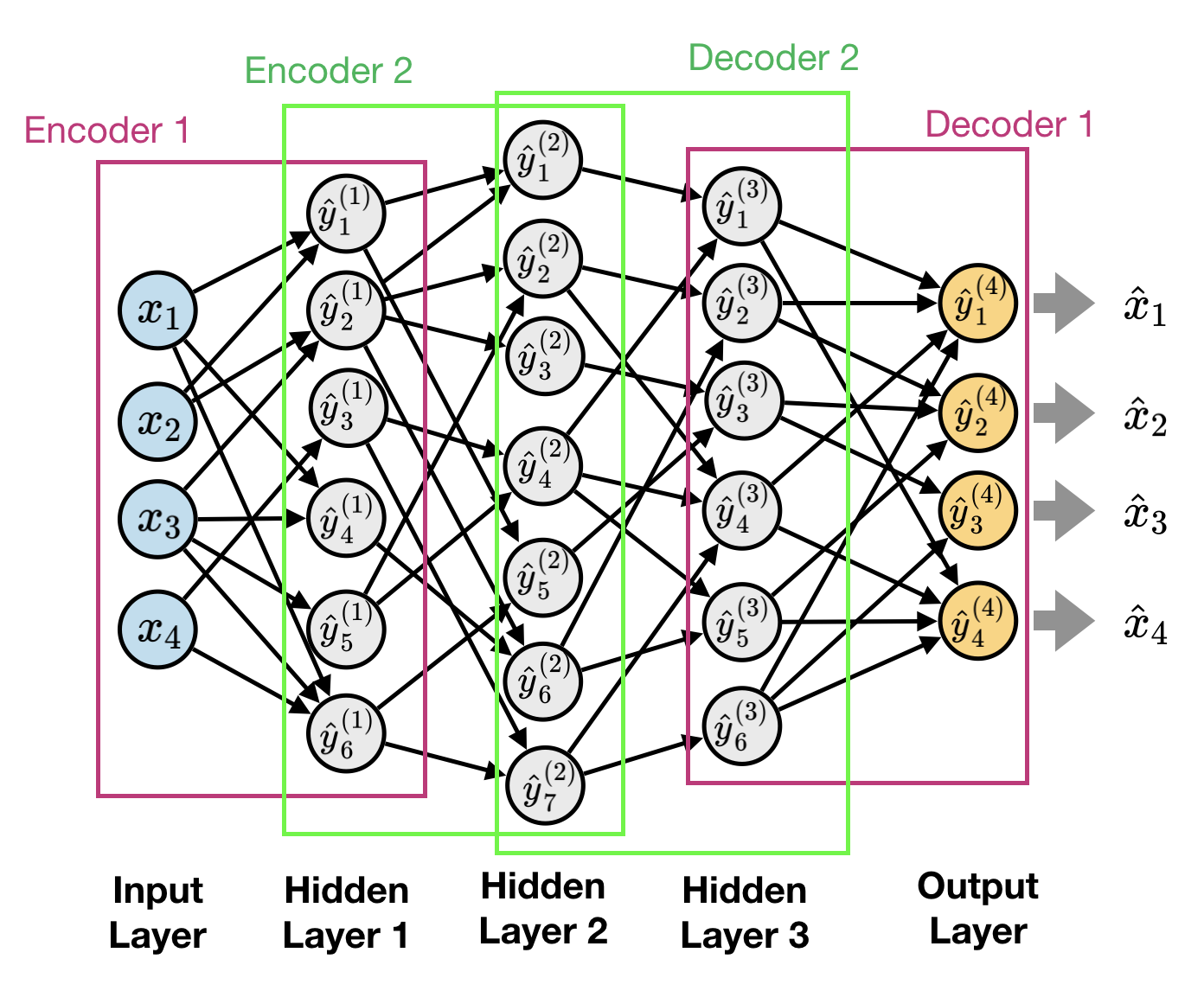} }
\caption{(a) A sparse autoencoder with one hidden layer. (b) An autoencoder with three hideen layers as two single-layer autoencoders stacked together. }
\label{fig:autoenc}
\end{figure}
Autoencoders are a type of neural networks that are used to learn generative models of data in an unsupervised manner. Autoencoders usually consist of one hidden layer that has smaller number of neurons than input or output layers. The goal of autoencoders is to reconstruct the input data at at the output layer, and the outputs of hidden layer can be considered as compressed representation of data. Autoencoders can have more than one hidden layer, but they are usually treated as single-hidden-layer autoencoders stacked together, and they are trained one by one in a greedy fashion (See Fig~\ref{fig:autoenc}). There are many variants of autoencoders, such as denoising autoencoders~\cite{vincent2008denoising}, variational autoencoders~\cite{kingma2013auto}, or sparse autoencoders~\cite{ranzato08sparse2,ranzato07sparse}. In this paper, we will focus on sparse autoencoders. 

In sparse autoencoders, the number of hidden layer neurons can be bigger than the number of neurons in input or output layers. Obviously, an identity function can perfectly reconstruct the input at the output layer in this case. However, what we are interested in is finding a sparse  representation of the data, similar to sparse coding problem. Hence, we add additional term in the loss function to enforce sparsity of the outputs of the hidden layer. Since the autoencoder network has only $2$ layers, the loss function $E$ is defined as below:
\begin{equation}
E(\bm{W}^1,\bm{W}^2) = \frac{1}{|\chi|}\sum_{b=0}^{|\chi|-1} \epsilon^2(b,\bm{W}^1,\bm{W}^2) + \lambda (|| \bm{W}^{1} ||_F^2 + ||\bm{W}^{2}||_F^2) + \beta \sum_{i=0}^{N_1-1} \textnormal{KL}(\rho || \hat{\rho}_i),
\end{equation}
where KL denotes Kullback-Leibler divergence, $\rho$ is desired sparsity and $\hat{\rho}_i$ is the sample sparsity of the second layer's activation averaged over the entire data set, \emph{i.e.,} $\hat{\rho}_i = \frac{1}{|\chi|} \sum_{b=0}^{|\chi|-1} Y^1_{i,b} $ where $\bm{Y}^1 = f(\bm{W}^1\bm{X}^1) $ applied element-wise and $\bm{X}^1 \in \mathcal{R}^{N_0 \times |\chi|}$ is a matrix whose columns represent all the data points.

As full batch GD is expensive, one tends to use mini-batch SGD where the update depends on the gradient calculated on a mini-batch of $B$ data points instead of the whole data set. In mini-batch SGD, for calculating the gradient $\frac{\partial E(\bm{W}^1(k),\bm{W}^2(k)) }{\partial W^l_{i,j}(k)}$, it easier to redefine the loss function $E(\bm{W}^1(k),\bm{W}^2(k))$ over one mini-batch of data, as follows:
\begin{equation}
E(\bm{W}^1(k),\bm{W}^2(k)) = \frac{1}{B}\sum_{b=0}^{B-1} \epsilon^2(b,\bm{W}^1(k),\bm{W}^2(k)) + \lambda (|| \bm{W}^{1}(k) ||_F^2 + ||\bm{W}^{2}(k)||_F^2) + \beta \sum_{i=0}^{N_1-1} \textnormal{KL}(\rho || \hat{\rho}_i),
\end{equation}
where $\hat{\rho}_i$ is the sample sparsity of the second layer's activation averaged over the mini-batch, \emph{i.e.,} $\hat{\rho}_i = \frac{1}{B} \sum_{b=0}^{B-1} Y^1_{i,b} (k)$ where $\bm{Y}^1(k) = f(\bm{W}^1(k)\bm{X}^1(k)) $ applied element-wise and $\bm{X}^1(k) \in \mathcal{R}^{N_0 \times B}$ is a matrix whose columns represent the $B$ data points chosen at the $k$-th iteration.

Since the additional term in the loss function only affects $\bm{W}^1$, the update of $\bm{W}^2$ remains the same and rewrite the update of $\bm{W}^1$ as:
\begin{equation}\label{eq:autoenc_update}
\bm{W}^1(k+1) = (1-\eta \lambda) \bm{W}^1(k) + \eta \bm{\Delta}_{\textnormal{auto}}^1(k) \bm{X}^1(k)^T.
\end{equation}
We now derive $\bm{\Delta}^1_{\textnormal{auto}}(k)$  for the backpropagation of autoencoders in the following lemma.

\begin{lem}\label{lem:backprop_autoenc}
\begin{equation}\label{eq:Delta_auto}
\bm{\Delta}_{\textnormal{auto}}^1(k) = \left((\bm{W}^2(k))^T \bm{\Delta}^{2}(k) - \bm{Q}_{\bm{\hat{\rho}}}(k) \right) \circ f'(\bm{S}^1(k)),
\end{equation}
where 
$$\bm{Q}_{\bm{\hat{\rho}}} = \left[ \vphantom{\begin{array}{c}1\\1\\1\\1\end{array}}
                  \smash{\underbrace{
                      \begin{array}{cccc}
                             Q(\hat{\rho}_0) & Q(\hat{\rho}_0) & \cdots & Q(\hat{\rho}_0)\\
\vdots & \vdots & \ddots & \vdots \\
Q(\hat{\rho}_{N_1-1}) & Q(\hat{\rho}_{N_1-1}) & \cdots & Q(\hat{\rho}_{N_1-1})\\
                      \end{array}
                      }_{B}}
              \right], \; \textnormal{ for } \; Q(\hat{\rho}_i) = -\frac{\rho}{\hat{\rho}_i} + \frac{1-\rho}{1-\hat{\rho}_i}.$$
\\

\end{lem}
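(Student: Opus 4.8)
## Proof Proposal for Lemma \ref{lem:backprop_autoenc}

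The plan is to derive $\bm{\Delta}_{\textnormal{auto}}^1(k)$ directly from its definition as the negative gradient of the loss function with respect to the pre-activation $\bm{S}^1(k)$, treating the three additive pieces of the loss function separately. Recall that $\Delta^1_{i,b}(k) = -\frac{\partial E(\bm{W}^1(k),\bm{W}^2(k))}{\partial S^1_{i,b}(k)}$, and the loss decomposes as $E = \frac{1}{B}\sum_b \epsilon^2(b,\cdot) + \lambda(\|\bm{W}^1\|_F^2 + \|\bm{W}^2\|_F^2) + \beta\sum_i \mathrm{KL}(\rho\|\hat\rho_i)$. First I would note that the L2-regularization term $\lambda(\|\bm{W}^1\|_F^2 + \|\bm{W}^2\|_F^2)$ has zero derivative with respect to $S^1_{i,b}(k)$ (it depends on the weights directly, not through the activations), so it contributes only the $(1-\eta\lambda)$ shrinkage factor already visible in \Cref{eq:autoenc_update} and does not affect $\bm{\Delta}^1_{\textnormal{auto}}$.

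Next I would handle the standard reconstruction term $\frac{1}{B}\sum_b \epsilon^2(b,\cdot)$. Here the chain rule gives exactly the generic DNN backpropagation recursion from \Cref{lem:DNN} (extended to mini-batches): the contribution to $-\partial/\partial S^1_{i,b}$ is $\left(\sum_{j}\Delta^2_{j,b}(k) W^2_{j,i}(k)\right) f'(S^1_{i,b}(k))$, which in matrix form is $\left((\bm{W}^2(k))^T\bm{\Delta}^2(k)\right)\circ f'(\bm{S}^1(k)) = \bm{C}^2(k)\circ f'(\bm{S}^1(k))$. The third and crucial piece is the sparsity penalty $\beta\sum_{i'}\mathrm{KL}(\rho\|\hat\rho_{i'})$. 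Since $\hat\rho_{i'} = \frac{1}{B}\sum_b Y^1_{i',b}(k) = \frac{1}{B}\sum_b f(S^1_{i',b}(k))$, I would apply the chain rule: $\frac{\partial}{\partial S^1_{i,b}}\mathrm{KL}(\rho\|\hat\rho_{i'}) = \frac{\partial \mathrm{KL}(\rho\|\hat\rho_{i'})}{\partial\hat\rho_{i'}}\cdot\frac{\partial\hat\rho_{i'}}{\partial S^1_{i,b}}$, which is nonzero only when $i'=i$. Using $\mathrm{KL}(\rho\|\hat\rho) = \rho\log\frac{\rho}{\hat\rho} + (1-\rho)\log\frac{1-\rho}{1-\hat\rho}$, one computes $\frac{\partial\mathrm{KL}}{\partial\hat\rho} = -\frac{\rho}{\hat\rho} + \frac{1-\rho}{1-\hat\rho} = Q(\hat\rho_i)$, and $\frac{\partial\hat\rho_i}{\partial S^1_{i,b}} = \frac{1}{B}f'(S^1_{i,b}(k))$. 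Absorbing the $\frac{1}{B}$ into the constant $\beta$ (or carrying it through, matching the paper's convention), the contribution of this term to $-\partial E/\partial S^1_{i,b}$ is $-\beta\, Q(\hat\rho_i)\, f'(S^1_{i,b}(k))/B$, i.e., $-\bm{Q}_{\bm{\hat\rho}}(k)\circ f'(\bm{S}^1(k))$ in matrix form (with $\bm{Q}_{\bm{\hat\rho}}$ having constant rows as stated). Summing the two nonzero contributions gives \Cref{eq:Delta_auto}.

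The main obstacle I anticipate is bookkeeping the normalization constants: the $\frac{1}{B}$ factor appearing in both $\hat\rho_i$ and the reconstruction term, and how $\beta$ is scaled, must be tracked carefully so the final expression exactly matches the $\bm{Q}_{\bm{\hat\rho}}$ defined in the statement — the paper seems to fold these constants into $\beta$, so I would be explicit about that convention. A secondary subtlety is that $\hat\rho_i$ couples all $B$ columns of the mini-batch through a single scalar, so the Jacobian $\partial\hat\rho_i/\partial S^1_{i,b}$ is genuinely nonzero for every $b$; this is what forces $\bm{Q}_{\bm{\hat\rho}}$ to be a matrix with identical columns rather than a single vector, and it is worth emphasizing that no cross-column coupling beyond this shared mean survives because $f$ acts elementwise. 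Everything else is a routine application of the chain rule already established in \Cref{lem:DNN}.
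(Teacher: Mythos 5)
Your proposal is correct and takes essentially the same approach as the paper: both arguments split the loss into the reconstruction, L2, and KL-sparsity pieces and apply the chain rule to the KL term via $\hat\rho_i = \frac{1}{B}\sum_b f(S^1_{i,b})$, yielding $\partial\mathrm{KL}/\partial\hat\rho_i = Q(\hat\rho_i)$; the only cosmetic difference is that you differentiate with respect to $\bm{S}^1$ directly while the paper differentiates with respect to $\bm{W}^1$ and then factors out $\bm{X}^T$ to identify $\bm{\Delta}^1_{\textnormal{auto}}$, which is one chain-rule step earlier. You are also right to flag the bookkeeping of $\beta$ and the $\tfrac{1}{B}$ factor, since the paper's own derivation carries $-\tfrac{1}{B}$ and $\beta Q(\hat\rho_i)$ through the intermediate steps but silently absorbs them into the displayed conclusion.
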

\begin{proof}
We will omit $k$ here for the sake of simplicity. Let us first derive the partial derivative of the second term in the loss function first.
\begin{align*}
\frac{\partial}{\partial W_{i,j}^2} \sum_{\tau=0}^{N_1-1} \textnormal{KL}(\rho || \hat{\rho}_{\tau}) &= \sum_{\tau=0}^{N_1-1} \frac{\partial}{\partial W_{i,j}^2} \hat{\rho}_{\tau} \frac{\partial}{\partial \hat{\rho}_{\tau}} \textnormal{KL}(\rho || \rho_{\tau}) \\ 
&= \frac{\partial}{\partial W_{i,j}^2}\hat{\rho}_i \frac{\partial}{\partial \hat{\rho}_i} \textnormal{KL}(\rho || \hat{\rho}_i) \\ 
&= \frac{\partial}{\partial W_{i,j}} \frac{1}{B} \sum_{\nu=0}^{B-1} \bm{Y}^1_{i,\nu} \cdot \left( -\frac{\rho}{\hat{\rho}_i} + \frac{1-\rho}{1-\hat{\rho}_i}\right) \\ 
&= \frac{1}{B} \sum_{\nu=0}^{B-1} \frac{\partial}{\partial W_{i,j}} f(S_{i,\nu}^1) \cdot \left( -\frac{\rho}{\hat{\rho}_i} + \frac{1-\rho}{1-\hat{\rho}_i}\right) \\
&= \frac{1}{B} \sum_{\nu=0}^{B-1} X_{j,\nu} f'(S_{i,\nu}^1) \cdot \left( -\frac{\rho}{\hat{\rho}_i} + \frac{1-\rho}{1-\hat{\rho}_i}\right).
\end{align*}

Now we obtain the derivative of the first two terms in the loss function. 
\begin{align*}
\frac{\partial}{\partial W_{i,j}^1} \epsilon^2 +  \frac{\partial}{\partial W_{i,j}^1} \beta \sum_{\tau=0}^{N_2-1} \textnormal{KL}(\rho || \hat{\rho}_{\tau}(k)) 
&= -\frac{1}{B} \sum_{\nu=0}^{B-1} \Delta_{i, \nu}^1 X_{j, \nu} + \frac{\beta}{B} \sum_{\nu=0}^{B-1} f'(S_{i,\nu}^1) Q(\hat{\rho}_i) X_{j,\nu} \\
&= -\frac{1}{B} \left( \sum_{\nu=0}^{B-1} 
 (\bm{\Delta}_{:,\nu}^2)^T \bm{W}^1_{:,i} \cdot f'(S_{i,\nu}) X_{j,\nu} - \beta f'(S_{i,\nu}^1) Q(\hat{\rho}_i) X_{j,\nu} \right) \\
 &= -\frac{1}{B} \left( \sum_{\nu=0}^{B-1} 
 \left( (\bm{\Delta}_{:,\nu}^2)^T \bm{W}^1_{:,i} - \beta Q(\hat{\rho}_i) \right) f'(S_{i,\nu}^1)  X_{j,\nu} \right). 
\end{align*}
Hence,
\begin{equation}
\frac{\partial}{\partial \bm{W}^1} E = \left(\left((\bm{W}^2)^T \bm{\Delta}^2 - \bm{Q}_{\bm{\hat{\rho}}} \right) \circ f'(\bm{S}^1)\right) \bm{X}^T + \lambda \bm{W}^1.
\end{equation}
\end{proof}

{\color{black}{
\section{Proof of \Cref{thm:main} and  a discussion on error models and decoding techniques}
\label{appendix:decoding}

In this appendix, we provide a proof of \Cref{thm:main} along with discussion on the error models and decoding techniques. For completeness, we restate some of the descriptions already presented in \Cref{sec:background}.

%\subsection{Simplified Interpretation of the two error models for channel coding}
\subsection{Notations and Definitions.}

Recall the channel coding scenario. Let $\bm{q}$ be a $Q \times 1$ vector consisting of $Q$ real-valued symbols. The received output vector is as follows:
\begin{equation}
\bm{z}=\bm{G}^T\bm{q} + \bm{e}.
\end{equation}
Here $\bm{G}$ is the generator matrix of a $(P,Q)$ real number MDS Code and $\bm{e}$ is the $P \times 1$ error vector that corrupts the codeword $\bm{G}^T\bm{q}$. The locations of the codeword that are affected by errors is a subset $\mathcal{A} \subseteq \{0,1,\ldots,P-1\}$. The elements of $\bm{e}$ indexed in $\mathcal{A}$ denote the corresponding values of additive error while the rest are $0$.

We use the notation $\mathbb{Q}$, $\mathbb{Z}$, $\mathbb{E}$ and $\widehat{\mathbb{E}}$ to denote random vectors corresponding to the symbol vector, output vector, the true error vector and the estimated error vector respectively. Note that $\mathbb{Q}$ and $\mathbb{E}$ are independent. The vectors $\mathbb{Q}$ and $\mathbb{E}$ are independent of each other.
Thus, the received output random vector is as follows:
\begin{equation}
\mathbb{Z}=\bm{G}^T\mathbb{Q} + \mathbb{E}.
\end{equation}
To denote the entire sample space of any random vector $\mathbb{Q}$, we use the notation $\Omega_{\mathbb{Q}}$. We also use the notation $\int_{\Omega_{\mathbb{Q}}} (\text{some function}) d\bm{q}$ to denote the multiple integral with respect to all the elements of the vector $\bm{q}$ as follows: $$\int_{\Omega_{\mathbb{Q}_0}} \ldots \int_{\Omega_{\mathbb{Q}_{Q-1}}} (\text{some function}) dq_0 \ dq_1 \ \ldots dq_{Q-1}.$$
We also let $\Pr{(\cdot)}$ denote the probability of an event and $p(\mathbb{Q}=\bm{q})$ denote the pdf of $\mathbb{Q}$. These notations also apply to other random variables. Lastly, $Null(\cdot)$ denotes the null-space of a matrix, and $||\cdot||_0$ denotes the number of non-zero elements of a vector. 

%\subsection{Definitions of the two Error Models}
\textbf{Adversarial Error Model:} The subset $\mathcal{A}$ satisfies $|\mathcal{A} | \leq \lfloor \frac{P-Q}{2} \rfloor$, with no specific assumptions on the locations or values of the errors and they may be chosen advarsarially.

\textbf{Probabilistic Error Model:} The subset $\mathcal{A}$ can be of any cardinality from $0$ to $P$, and these locations may be chosen adversarially. However, given $\mathcal{A}$, the elements of $\mathbb{E}$ indexed in $\mathcal{A}$ are drawn from iid Gaussian distributions and the rest are $0$. Also note that $\mathbb{Q}$ and $\mathbb{E}$ are independent.

\subsection{Formal Proof of \Cref{thm:main}.}

\newtheorem*{thm*}{Theorem}

\begin{thm*}[\Cref{thm:main} Restated]
 Under the Probabilistic Error Model for channel coding, the decoder of a $(P,Q)$ MDS Code can perform the following:
\begin{itemize}
\item [1.] It can detect the occurrence of errors with probability $1$, irrespective of the number of errors that occurred.
\item [2.] If the number of errors that occurred is less than or equal to $P-Q-1$, then all those errors can be corrected with probability $1$, even without knowing in advance that how many errors actually occurred.
\item [3.] If the number of errors that occurred is more than $P-Q-1$, then the decoder is able to determine that the errors are too many to be corrected and declare a ``decoding failure'' with probability $1$.
\end{itemize}
\end{thm*}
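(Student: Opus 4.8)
The plan is to verify the three assertions directly against the decoding algorithm of Algorithm~\ref{algo:decoding}, relying on a single structural fact: since $\bm{H}$ is the transpose of the generator of a $(P,P-Q)$ MDS code, every set of at most $P-Q$ columns of $\bm{H}$ is linearly independent, so for any index set $\mathcal{S}$ one has $\mathrm{rank}(\bm{H}_{\mathcal{S}})=\min\{|\mathcal{S}|,P-Q\}$; in particular $\bm{H}_{\mathcal{S}}$ has trivial kernel when $|\mathcal{S}|\le P-Q$, and $\dim\mathrm{Null}(\bm{H}_{\mathcal{S}})=\max\{0,|\mathcal{S}|-(P-Q)\}$ in general. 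Throughout I fix the (possibly adversarial) error support $\mathcal{A}$ and prove each statement conditionally on $\mathcal{A}$; since $\{0,\dots,P-1\}$ has only finitely many subsets, a union bound then removes the conditioning. The first observation is that $\bm{H}\bm{z}=\bm{H}\bm{e}$, so every branch of the algorithm except the final step $\bm{G}^T\widehat{\bm{q}}=\bm{z}-\widehat{\bm{e}}$ depends on $\mathbb{E}$ alone, and once $\widehat{\bm{e}}=\bm{e}$ is established, $\widehat{\bm{q}}=\bm{q}$ follows because $\bm{G}^T$ has full column rank.

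For Claim $1$ I would split on $|\mathcal{A}|$, with $\mathcal{A}\ne\emptyset$. If $1\le|\mathcal{A}|\le P-Q$, then $\bm{H}_{\mathcal{A}}$ has trivial kernel, so $\bm{H}\mathbb{E}=\bm{H}_{\mathcal{A}}\mathbb{E}_{\mathcal{A}}=\bm{0}$ forces $\mathbb{E}_{\mathcal{A}}=\bm{0}$, which has probability $0$ for an iid Gaussian vector on a nonempty index set. If $|\mathcal{A}|>P-Q$, then $\mathrm{Null}(\bm{H}_{\mathcal{A}})$ is a proper subspace of $\mathbb{R}^{|\mathcal{A}|}$ of dimension $|\mathcal{A}|-(P-Q)<|\mathcal{A}|$, hence Lebesgue-null, and since $\mathbb{E}_{\mathcal{A}}$ has a density, $\Pr(\mathbb{E}_{\mathcal{A}}\in\mathrm{Null}(\bm{H}_{\mathcal{A}}))=0$. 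Either way $\Pr(\bm{H}\mathbb{Z}=\bm{0}\mid\mathbb{E}\ne\bm{0})=0$.

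The heart of the argument, yielding Claims $2$ and $3$ together, is the following: for every fixed $\mathcal{A}$, with probability $1$ there is \emph{no} vector $\bm{e}'\ne\bm{e}$ with $\|\bm{e}'\|_0\le P-Q-1$ and $\bm{H}\bm{e}'=\bm{H}\bm{e}$. I would prove this by a union bound over the finitely many candidate supports $\mathcal{A}'$ with $|\mathcal{A}'|\le P-Q-1$. Fix such an $\mathcal{A}'$. If $\mathcal{A}\subseteq\mathcal{A}'$, then $\bm{e}-\bm{e}'$ is supported on $\mathcal{A}'$ with $|\mathcal{A}'|\le P-Q$, so $\bm{H}_{\mathcal{A}'}$ kills it and $\bm{e}'=\bm{e}$, which is excluded. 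Otherwise put $\mathcal{B}=\mathcal{A}\setminus\mathcal{A}'\ne\emptyset$, so $\mathcal{A}\cup\mathcal{A}'=\mathcal{A}'\sqcup\mathcal{B}$. Writing $\bm{h}=\bm{e}-\bm{e}'$, we have $\bm{h}\in\mathrm{Null}(\bm{H})$ with support in $\mathcal{A}\cup\mathcal{A}'$, hence $\bm{H}_{\mathcal{A}\cup\mathcal{A}'}\bm{h}_{\mathcal{A}\cup\mathcal{A}'}=\bm{0}$; and on $\mathcal{B}$ the coordinates of $\bm{e}'$ vanish, so $\bm{h}_{\mathcal{B}}=\bm{e}_{\mathcal{B}}$ (this is precisely the picture in \Cref{fig:e_prime}). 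Thus $\bm{e}_{\mathcal{B}}$ lies in the linear subspace $V=\{\bm{v}_{\mathcal{B}}:\bm{v}\in\mathrm{Null}(\bm{H}_{\mathcal{A}\cup\mathcal{A}'})\}\subseteq\mathbb{R}^{|\mathcal{B}|}$. A dimension count gives $\dim V\le\max\{0,|\mathcal{A}\cup\mathcal{A}'|-(P-Q)\}$, and since $|\mathcal{B}|=|\mathcal{A}\cup\mathcal{A}'|-|\mathcal{A}'|>|\mathcal{A}\cup\mathcal{A}'|-(P-Q)$ by $|\mathcal{A}'|\le P-Q-1$, the set $V$ is either $\{\bm{0}\}$ or a proper subspace of $\mathbb{R}^{|\mathcal{B}|}$; in both cases it is Lebesgue-null. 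As $\mathbb{E}_{\mathcal{B}}$ is a subvector of the iid Gaussian $\mathbb{E}_{\mathcal{A}}$ on the nonempty set $\mathcal{B}$, it has a density, so $\Pr(\mathbb{E}_{\mathcal{B}}\in V)=0$; summing over all $\mathcal{A}'$ proves the claim.

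Finally I assemble everything. Claim $1$ is done. For Claim $2$, assume $\|\mathbb{E}\|_0\le P-Q-1$: if $\bm{e}=\bm{0}$ the algorithm outputs $\widehat{\bm{e}}=\bm{0}$; if $\bm{e}\ne\bm{0}$, then by Claim $1$ the syndrome is almost surely nonzero, the algorithm solves the $\ell_0$-minimization, its minimizer has weight $\le\|\bm{e}\|_0\le P-Q-1$ and — by the lemma above — equals $\bm{e}$ almost surely, so it is output and $\widehat{\bm{q}}=\bm{q}$; hence $\Pr(\widehat{\mathbb{E}}=\mathbb{E}\mid\|\mathbb{E}\|_0\le P-Q-1)=1$. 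For Claim $3$, assume $\|\mathbb{E}\|_0>P-Q-1$: the syndrome is almost surely nonzero, and by the lemma no feasible vector other than $\bm{e}$ itself has weight $\le P-Q-1$, while $\bm{e}$ already has weight $>P-Q-1$; hence the $\ell_0$-minimizer has weight $>P-Q-1$ almost surely and the algorithm declares a decoding failure. The main obstacle is the lemma in the third paragraph: making the reduction ``an incorrect or over-short estimate forces $\bm{e}_{\mathcal{B}}$ into a fixed Lebesgue-null subspace'' airtight across the degenerate cases ($\mathcal{A}\subseteq\mathcal{A}'$, $|\mathcal{A}\cup\mathcal{A}'|\le P-Q$, $\mathcal{A}=\emptyset$), and observing that fixing $\mathcal{A}$ (and $\mathcal{A}'$) first makes the adversarial choice of error locations harmless; everything else is routine linear algebra and a finite union bound.
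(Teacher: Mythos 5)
Your proof is correct and follows essentially the same path as the paper's: condition on the error support $\mathcal{A}$ (and candidate support $\mathcal{A}'$), reduce incorrect or over-short decoding to $\bm{e}_{\mathcal{B}}$ lying in a null-space projection whose dimension is bounded by $|\mathcal{A}\cup\mathcal{A}'|-(P-Q)<|\mathcal{B}|$, and conclude by a Lebesgue-null argument and finite union bound — your set $\mathcal{B}=\mathcal{A}\setminus\mathcal{A}'$ is exactly the paper's $\mathcal{A}^*=(\mathcal{A}\cup\mathcal{A}')\setminus\mathcal{A}'$. The only cosmetic differences are that you extract the ``no competing sparse solution exists'' statement as a standalone lemma (which cleanly handles Claims 2 and 3 at once) and you avoid integrating over $\bm{q}$ by observing up front that the syndrome depends only on $\mathbb{E}$; both are harmless reorganizations of the same argument.
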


%\begin{defn}[Adversarial Model] 
%\textit{Any} symbol out of the $P$ symbols can have soft-errors but the total number of errors than can occur is known to be bounded by $\lfloor \frac{P-Q}{2}\rfloor$. There is no assumption on the distribution of the errors for this model. Therefore $|\mathcal{A}| \leq \lfloor \frac{P-Q}{2}\rfloor.$ 
%\end{defn}

%\begin{defn}[Probabilistic Model]
%\textit{Any} symbol can have soft-errors and there is no known upper bound on the number of errors. Thus, there could even be $P$ errors. For conceptual simplicity however, it is assumed that when a symbol has an error, its final output value can be represented as the sum of the symbol and an independent, random Gaussian noise. Thus, $|\mathcal{A}|$ can vary from $0$ to $P$, but all the elements of $\bm{e}$ indexed in $\mathcal{A}$ are drawn from iid Gaussian distributions.
%\end{defn}

% Under the Error Model $2$ for channel coding, the decoder of a $(P,Q)$ MDS Code can perform the following:
%\begin{itemize}
%\item [1.] It can detect the occurrence of error with probability $1$, irrespective of the number of errors that occurred.
%\item [2.] If the number of errors that occurred is less than or equal to $P-Q-1$, then all those errors can be corrected with probability $1$, even without knowing in advance that how many errors actually occurred.
%\end{itemize}

Let $\bm{H}$ be the $(P-Q)\times P$ sized parity check matrix of the MDS code, such that $\bm{H}\bm{G}^T=\bm{0}$. We first propose the following decoding algorithm (Algorithm~\ref{algo:decoding} restated) to produce $\widehat{\bm{e}}$, as an estimate of $\bm{e}$, for a given channel output $\bm{z}$. If an $\widehat{\bm{e}}$ is obtained, the decoder can uniquely solve for $\widehat{\bm{q}}$ from the linear set of equations $\bm{G}^T\widehat{\bm{q}}=\bm{z}-\widehat{\bm{e}}$.

\begin{algorithm*}
%\caption{Decoding Algorithm to produce $\widehat{\bm{e}}$, as an estimate of $\bm{e}$, for a given channel output $\bm{z}$}
\begin{algorithmic}[1]
\STATE \textbf{If} $\bm{H}\bm{z}=\bm{0}$, \textbf{then} declare ``no errors detected'' and produce $\widehat{\bm{e}}=\bm{0}$. 
\STATE \hspace{1cm} \textbf{Else} find an $\widehat{\bm{e}}$ as follows: $\widehat{\bm{e}} =
\arg \min ||\bm{e}||_0 \text{ such that } \bm{H}\bm{e}=\bm{H}\bm{z}.$ 
\STATE \hspace{1.5cm} \textbf{If} the obtained $\widehat{\bm{e}}$ is such that $||\widehat{\bm{e}}||_0\leq P-Q-1$, \textbf{then} produce this estimate $\widehat{\bm{e}}$.
\STATE \hspace{2cm} \textbf{Else} declare a ``decoding failure.'' 
\end{algorithmic}
\end{algorithm*}

%We first propose the following decoding algorithm for the decoder.\\

%{\bf Decoding Algorithm to provide $\widehat{\bm{e}}$, as an estimate of $\bm{e}$, for a given output $\bm{z}$}:\\
%Check if $\bm{H}\bm{z}=\bm{0}$. Then declare ``no errors detected'' and produce $\widehat{\bm{e}}=\bm{0}$. \\
%Otherwise, find an $\widehat{\bm{e}}$ as follows: $\widehat{\bm{e}} =\arg \min ||\bm{e}||_0 \text{ such that } \bm{H}\bm{e}=\bm{H}\bm{z}.$ \\
%If the obtained $\widehat{\bm{e}}$ is such that $||\widehat{\bm{e}}||_0\leq P-Q-1$, then the algorithm provides this $\widehat{\bm{e}}$.\\ Otherwise the algorithm declares a ``decoding failure.'' \\
%If an $\widehat{\bm{e}}$ is obtained, the decoder can uniquely solve for $\widehat{\bm{q}}$ from the linear set of equations $\bm{G}^T\widehat{\bm{q}}=\bm{z}-\widehat{\bm{e}}$.\\

%We use the notation $\mathbb{Q}$, $\mathbb{Z}$, $\mathbb{E}$ and $\widehat{\mathbb{E}}$ to denote random vectors corresponding to the symbol vector, output vector, the true error vector and the estimated error vector respectively. 

%\begin{thm} Under Error Model $2$, using a $(P,Q)$ MDS Code, one can perform the following:
%\begin{itemize}
%\item [1.] Detect the occurrence of error with probability $1$ irrespective of the number of errors that occur.
%\item [2.] If the number of errors that occurred is less than or equal to $P-Q-1$, then all those errors can be corrected with probability $1$, even without knowing how many errors actually occurred in advance.
%\end{itemize}
%\label{thm:main}
%\end{thm}

Now we will show that the three claims of \Cref{thm:main} hold using this proposed decoding algorithm. Our first claim is that Algorithm~\ref{algo:decoding} detects the occurrence of errors almost surely when it checks if $\bm{H}\mathbb{Z}=\bm{0}$.

\noindent {\bf Claim 1:}   Under the Probabilistic Error Model, $$\Pr{(\bm{H}\mathbb{Z} = \bm{0}|\ \mathbb{E}\neq \bm{0} )}=0.$$ 

\begin{proof}[Proof of Claim $1$ in \Cref{thm:main}]
\begin{align}
\Pr{(\bm{H}\mathbb{Z} = \bm{0}|\ \mathbb{E}\neq \bm{0} )} &= \frac{\Pr{(\bm{H}\mathbb{Z} = \bm{0},\ \mathbb{E}\neq \bm{0} )}}{\Pr{(\mathbb{E}\neq \bm{0} )}} \nonumber \\
& = \int_{\Omega_{\mathbb{Q}}}   \frac{\Pr{(\bm{H}\mathbb{Z} = \bm{0},\ \mathbb{E}\neq \bm{0}| \mathbb{Q}=\bm{q} )} }{\Pr{(\mathbb{E}\neq \bm{0} )} } p(\mathbb{Q}=\bm{q}) d \bm{q} \nonumber \\
&= \int_{\Omega_{\mathbb{Q}}} \frac{\Pr{(\bm{H}\mathbb{Z} = \bm{0},\ \mathbb{E}\neq \bm{0}| \mathbb{Q}=\bm{q} )}}{ \Pr{(\mathbb{E}\neq \bm{0} )}} p(\mathbb{Q}=\bm{q}) d \bm{q}.
\label{eq:clm1_wrt_x}
\end{align}

We examine a term from \Cref{eq:clm1_wrt_x} as follows:
\begin{align}
& \Pr{(\bm{H}\mathbb{Z} = \bm{0},\ \mathbb{E}\neq \bm{0}| \mathbb{Q}=\bm{q} )}\nonumber \\
&=\sum_{  \substack{ \text{all possible}\\ \mathcal{A} \subseteq \{0,1,\ldots,P-1\} }  } 
\Pr{\text{(Error pattern is $\mathcal{A}$)}}
\int_{ \Omega_{\mathbb{E}_{\mathcal{A}}} \backslash \{\bm{0} \} }  \Pr{(\bm{H}\mathbb{Z} = \bm{0}| \mathbb{E}_{\mathcal{A}}= \bm{e}_{\mathcal{A}}, \mathbb{Q}=\bm{q}) } p(\mathbb{E}_{\mathcal{A}}=\bm{e}_{\mathcal{A}}) d \bm{e}_{\mathcal{A}}  \nonumber\\
%& = \sum_{  \substack{ \text{all possible}\\ \mathcal{A} \in \{0,1,\ldots,P-1\} }  } \Pr{\text{(Error pattern is $\mathcal{A}$)}} \int_{\Omega_{\mathbb{E}_{\mathcal{A}}}\backslash \{\bm{0} \} } I[  \bm{e}_{\mathcal{A}} : \bm{H}\bm{e}=\bm{0}  ] p(\mathbb{E}_{\mathcal{A}}=\bm{e}_{\mathcal{A}}) d \bm{e}_{\mathcal{A}} \nonumber \\
%& =  \int_{\Omega_{\mathbb{E}}\backslash \{\bm{0} \} }  \Pr{(\bm{H}\bm{e} = \bm{0},\ \mathbb{E}= \bm{e}| \mathbb{Q}=\bm{q} )} p(\mathbb{E}=\bm{e}) d \bm{e} \nonumber \\
%& = \int I[  \bm{e} \in \Omega_{\mathbb{E}}: \bm{H}\bm{e}=\bm{0}  ] p(\mathbb{E}=\bm{e}) d \bm{e} 
&=  \sum_{  \substack{ \text{all possible}\\ \mathcal{A} \subseteq \{0,1,\ldots,P-1\} }  } 
\Pr{\text{(Error pattern is $\mathcal{A}$)}}
\int_{ \Omega_{\mathbb{E}_{\mathcal{A}}} \backslash \{\bm{0} \} } I[  \bm{e}_{\mathcal{A}} : \bm{H}_{\mathcal{A}}\bm{e}_{\mathcal{A}}=\bm{0}  ]   p(\mathbb{E}_{\mathcal{A}}=\bm{e}_{\mathcal{A}}) d \bm{e}_{\mathcal{A}}.
\label{eq:clm1_wrt_e}\\
&  \hspace{10cm} \text{[since }  \bm{H}\bm{z}= \bm{H}(\bm{G}^T\bm{q}+\bm{e})=\bm{H}\bm{e}= \bm{H}_{\mathcal{A}}\bm{e}_{\mathcal{A}}] \nonumber 
\end{align}
Here $\bm{e}_{\mathcal{A}}$ and $\mathbb{E}_{\mathcal{A}}$ denote the elements of the vectors $\bm{e}$ or $\mathbb{E}$ respectively, indexed by $\mathcal{A}$ and $\bm{H}_{\mathcal{A}}$ denotes the columns of $\bm{H}$ indexed by $\mathcal{A}$. Moreover, $I[ \bm{e}_{\mathcal{A}} : \bm{H}_{\mathcal{A}}\bm{e}_{\mathcal{A}}=\bm{0}  ]$  denotes the indicator function of $\bm{e}_{\mathcal{A}}$, which is $1$ when its condition is satisfied, and is $0$ otherwise.   

For a particular realization of $\mathbb{E}=\bm{e}$, if $\bm{H}\bm{e}=0$, then it means that the true error vector $\bm{e}$ lies in $Null(\bm{H})$. When $0<|\mathcal{A}| < P-Q+1$, then $\bm{e}$ can never lie in $Null(\bm{H})$ because $\bm{H}$ is also the generator of a $(P,P-Q)$ MDS Code and hence any vector lying in its null space has at least $P-Q+1$ non-zeros. When $|\mathcal{A}| \geq P-Q+1$, if $\bm{e}$ lies in $Null(\bm{H})$, then
\begin{equation}
\bm{H}_{\mathcal{A}}\bm{e}_{\mathcal{A}} = \bm{0}.
\label{eq:detection}
\end{equation}
Note that, any $\bm{e}_{\mathcal{A}}$ satisfying \Cref{eq:detection}, lies in a sub-space of dimension $|\mathcal{A}| - (P-Q)$, which becomes a measure $0$ subset for random vector $\mathbb{E}_{\mathcal{A}}$, whose all $|\mathcal{A}|$ entries are drawn from iid Gaussian distributions. Thus, %continuing from where we left off in \Cref{eq:clm1_wrt_e},
\begin{align}
&\int_{ \Omega_{\mathbb{E}_{\mathcal{A}}} \backslash \{\bm{0} \} } I[  \bm{e}_{\mathcal{A}} : \bm{H}_{\mathcal{A}}\bm{e}_{\mathcal{A}}=\bm{0}  ]   p(\mathbb{E}_{\mathcal{A}}=\bm{e}_{\mathcal{A}}) d \bm{e}_{\mathcal{A}} =0, \ \  \forall \ \mathcal{A}\subseteq \{0,1,\ldots,P-1\}.
%& \int_{ \{ \bm{e} \in \Omega_{\mathbb{E}}: \bm{H}\bm{e}=\bm{0} \} } p(\mathbb{E}=\bm{e}) d \bm{e} \nonumber \\
%&=  \sum_{ \text{$2^P$ sparsity patterns of $\bm{e}$} }\int_{ \substack{ \{ \bm{e} \in \Omega_{\mathbb{E}}: \bm{H}\bm{e}=\bm{0},\\  \text{ particular sparsity pattern} \\ \text{with $m$ non-zeros} \}} }  p(\mathbb{E}=\bm{e}) d \bm{e} \nonumber \\
%&= \sum_{ \text{$2^P$ sparsity patterns of $\bm{e}$} } \int_{ \substack{ \{ \bm{e}_A : \bm{H}_A\bm{e}_A=\bm{0},\\  \text{for a particular A} \\ \text{with $m$ indices} \}} }  p(\mathbb{E}_A=\bm{e}_A) d \bm{e}_A =0
\label{eq:clm1_measure_0}
\end{align}
This holds because for $|\mathcal{A} |< P-Q+1$, the set $\{ \bm{e}_{\mathcal{A}} \in \Omega_{\mathbb{E}_{\mathcal{A}}}\backslash \{\bm{0} \} : \bm{H}_{\mathcal{A}}\bm{e}_{\mathcal{A}}=\bm{0}\} $,   is an empty set and for $ |\mathcal{A}| \geq P-Q+1$, the set $\{ \bm{e}_{\mathcal{A}} \in \Omega_{\mathbb{E}_{\mathcal{A}}}\backslash \{\bm{0} \} : \bm{H}_{\mathcal{A}}\bm{e}_{\mathcal{A}}=\bm{0}\} $,  lies in a subspace of dimension at most $|\mathcal{A}|-(P-Q)$ which becomes a measure $0$ set for random vector $\mathbb{E}_A$ whose all $|\mathcal{A}|$ entries are drawn from iid Gaussian distribution. Thus,
\begin{align}
&\Pr{(\bm{H}\mathbb{Y} = \bm{0}|\ \mathbb{E}\neq \bm{0} )} \nonumber \\
&\overset{ \Cref{eq:clm1_wrt_x}}{=} \int_{\Omega_{\mathbb{Q}}} \frac{\Pr{(\bm{H}\mathbb{Z} = \bm{0},\ \mathbb{E}\neq \bm{0}| \mathbb{Q}=\bm{q} )}}{ \Pr{(\mathbb{E}\neq \bm{0} )}} p(\mathbb{Q}=\bm{q}) d \bm{q} \nonumber \\
% \int_{\Omega_{\mathbb{X}}} \Pr{(\bm{H}\mathbb{Y} = \bm{0},\ \mathbb{E}\neq \bm{0}| \mathbb{Q}=\bm{q} )} p(\mathbb{Q}=\bm{q}) d \bm{x} \nonumber \\
&\overset{\Cref{eq:clm1_wrt_e}}{=} \int_{\Omega_{\mathbb{Q}}} \sum_{ \substack{ \text{all possible}\\ \mathcal{A} \subseteq \{0,1,\ldots,P-1\} } } 
\frac{\Pr{\text{(Error pattern is $\mathcal{A}$)}}}{ \Pr{(\mathbb{E}\neq \bm{0})} }
\int_{ \Omega_{\mathbb{E}_{\mathcal{A}}} \backslash \{\bm{0} \} } I[  \bm{e}_{\mathcal{A}} : \bm{H}_{\mathcal{A}}\bm{e}_{\mathcal{A}}=\bm{0}  ]   p(\mathbb{E}_{\mathcal{A}}=\bm{e}_{\mathcal{A}}) \ d \bm{e}_{\mathcal{A}} \ p(\mathbb{Q}=\bm{q}) \ d \bm{q} \nonumber \\
%\int_{\Omega_{\mathbb{X}}}  \int_{ \{ \bm{e} \in \Omega_{\mathbb{E}}: \bm{H}\bm{e}=\bm{0} \} } p(\mathbb{E}=\bm{e}) d \bm{e} \  d \bm{x}\nonumber\\
&\overset{\Cref{eq:clm1_measure_0}}{=} \int_{\Omega_{\mathbb{Q}}} \sum_{ \substack{ \text{all possible}\\ \mathcal{A} \subseteq \{0,1,\ldots,P-1\} } } 
\frac{\Pr{\text{(Error pattern is $\mathcal{A}$)}}}{\Pr{(\mathbb{E}\neq \bm{0})}}\cdot 0 \cdot p(\mathbb{Q}=\bm{q}) \ d \bm{q} = 0.
\end{align}
\end{proof}

The next two claims show the error correction capability of Algorithm~\ref{algo:decoding}. Observe that, for a particular realization of $\mathbb{E}=\bm{e}$, Algorithm~\ref{algo:decoding} can have three possible outcomes. Either the algorithm produces an $\hat{\mathbb{E}}=\bm{e}$, or an $\hat{\mathbb{E}}=\bm{e}' \neq \bm{e}$, or it declares a decoding failure. Thus,
$$ \Pr{(\hat{\mathbb{E}} = \mathbb{E})} + \Pr{(\hat{\mathbb{E}} \neq \mathbb{E})} + \Pr{(\text{Decoding Failure})}=1.$$

\noindent{\bf Claim $2$:} Under the Probabilistic Error Model,
$$\Pr{(\hat{\mathbb{E}} = \mathbb{E}| \ || \mathbb{E}||_0 \leq P-Q-1 )} =1.$$
Given that $||\bm{e}||_0 \leq P-Q-1$, there is at least one vector, which is the true $\bm{e}$, which lies in the search-space of the decoding algorithm and hence the declaration of a decoding failure does not arise. Thus, it is sufficient to show: 
$$\Pr{(\hat{\mathbb{E}} \neq \mathbb{E}| \ || \mathbb{E}||_0 \leq P-Q-1 )} =0.$$

\noindent{\bf Claim $3$:} Under the Probabilistic Error Model, $$\Pr{(\text{Decoding Failure }| \ || \mathbb{E}||_0 > P-Q-1 )} =1.$$ Given $|| \bm{e}||_0 > P-Q-1$, the case of $\hat{\mathbb{E}}=\bm{e}$ cannot arise because the decoder only searches for a solution with number of non-zeros at most $P-Q-1$. Thus, it is sufficient to show: $$\Pr{(\hat{\mathbb{E}} \neq \mathbb{E}| \ || \mathbb{E}||_0 > P-Q-1 )} =0.$$ 

%\noindent {\bf Claims $2$ and $3$:} 

%For Claim $2$, we are required to show that $\Pr{(\hat{\mathbb{E}} = \mathbb{E}| \ || \mathbb{E}||_0 \leq P-Q-1 )} =1.$ 

%\noindent {\bf Claim 2:} First observe that when the number of errors that occurred is less than or equal to $P-Q-1$, the algorithm will find either the correct error vector or a wrong one but will never say ``cannot be corrected.''
%Let $\hat{\mathbb{E}}$ be the random vector denoting the estimated error vector that is generated from the decoding algorithm. Our goal is to show: $$\Pr{(\hat{\mathbb{E}} \neq \mathbb{E}| \ || \mathbb{E}||_0 \leq P-Q-1 )} =0 .$$

%If a solution is found to the aforementioned decoding problem such that $||\bm{e}||_0 \leq P-Q-1$, then it is the correct error vector with probability $1$ under Error Model $2$. Or, in other words, if $\bm{e}'$ is the true error vector under Error Model $2$, then the probability that there exists another vector $\bm{e} \neq \bm{e}'$, such that $\bm{H}\bm{e}=\bm{H}\bm{e}'$, and $||\bm{e}||_0 \leq \min \{ ||\bm{e}'||_0, P-Q-1 \}$ is $0$.

\begin{proof}[Proof of Claims $2$ and $3$ in \Cref{thm:main}]
Essentially, to prove both Claims $2$ and $3$, it is sufficient to show that $$\Pr{(\hat{\mathbb{E}} \neq \mathbb{E})}=0.$$

Observe that,
\begin{equation}
\Pr{(\hat{\mathbb{E}} \neq \mathbb{E} )}   = \int_{\Omega_{\mathbb{Q}}}   \Pr{(\hat{\mathbb{E}} \neq \mathbb{E}  | \mathbb{Q}=\bm{q} )}  p(\mathbb{Q}=\bm{q}) d \bm{q}. 
\label{eq:clm2_wrt_x}
\end{equation}

Next, we examine a term in \Cref{eq:clm2_wrt_x} as follows:
\begin{align}
&\Pr{(\hat{\mathbb{E}} \neq \mathbb{E}  | \mathbb{Q}=\bm{q} )} \nonumber \\
&=  \sum_{  \substack{ \text{all possible}\\ \mathcal{A} \subseteq \{0,1,\ldots,P-1\}}}
%\\ \text{with } |\mathcal{A}|\leq P-Q-1  }  } 
\Pr{\text{(Error pattern is $\mathcal{A}$)}}
\int_{ \Omega_{\mathbb{E}_{\mathcal{A}}} \backslash \{\bm{0} \} } \Pr{(\hat{\bm{e}} \neq \bm{e} | \mathbb{E}_{\mathcal{A}}=\bm{e}_{\mathcal{A}}, \mathbb{Q}=\bm{q} )} p(\mathbb{E}_{\mathcal{A}}=\bm{e}_{\mathcal{A}})d \bm{e}_{\mathcal{A}} \nonumber \\
& \leq \sum_{  \substack{ \text{all possible}\\ \mathcal{A} \subseteq \{0,1,\ldots,P-1\} }} %\\ \text{with } |\mathcal{A}|\leq P-Q-1  }  } 
\Pr{\text{(Error pattern is $\mathcal{A}$)}}
\int_{ \Omega_{\mathbb{E}_{\mathcal{A}}} \backslash \{\bm{0} \} } I[\bm{e}_{\mathcal{A}}: \bm{e}_{\mathcal{A}} \in \mathcal{E}_{\mathcal{A}} ]\ p(\mathbb{E}_{\mathcal{A}}=\bm{e}_{\mathcal{A}}) \ d \bm{e}_{\mathcal{A}}
\label{eq:clm2_wrt_e}
%&= \sum_{\substack{\text{all possible sparsity}\\ \text{patterns of $\bm{e}$ with}\\\text{} }} \int_{ \substack{\bm{e} \in \Omega_{\mathbb{E}}:  \text{particular}\\\text{sparsity pattern with $m$}\\\text{non-zeros} } } \Pr{(\hat{\bm{e}} \neq \bm{e} | \mathbb{E}=\bm{e}, \mathbb{Q}=\bm{q} )} p(\mathbb{E}=\bm{e}) d \bm{e} \nonumber \\
\end{align}
where $\mathcal{E}_{\mathcal{A}}$ is the set of $\bm{e}_{\mathcal{A}}$ for which there exists another $\bm{e}' \neq \bm{e}$ such that $\bm{H}\bm{e}=\bm{H}\bm{e}'$, $||\bm{e}'||_0\leq |\mathcal{A}|$ and $ ||\bm{e}'||_0\leq P-Q-1$. Note that, this set $\mathcal{E}_{\mathcal{A}}$ is a super-set of the set of $\bm{e}_{\mathcal{A}}$s for which $\bm{e}' \neq \bm{e}$ because the existence of another such $\bm{e}'$ does not necessarily mean that the decoding algorithm always picks it over the correct $\bm{e}$. Examining a term in \Cref{eq:clm2_wrt_e},
\begin{align}
&\int_{ \Omega_{\mathbb{E}_{\mathcal{A}}} \backslash \{\bm{0} \} } I[\bm{e}_{\mathcal{A}}: \bm{e}_{\mathcal{A}} \in \mathcal{E}_{\mathcal{A}} ]\ p(\mathbb{E}_{\mathcal{A}}=\bm{e}_{\mathcal{A}}) \ d \bm{e}_{\mathcal{A}} \nonumber \\
&=\int_{ \Omega_{\mathbb{E}_{\mathcal{A}}} \backslash \{\bm{0} \} } \sum_{\substack{ \text{all possible}\\ \mathcal{A}' \subseteq \{0,1,\ldots,P-1\}  }} I[\bm{e}_{\mathcal{A}}: \bm{e}_{\mathcal{A}} \in \mathcal{E}_{\mathcal{A},\mathcal{A}'} ]\ p(\mathbb{E}_{\mathcal{A}}=\bm{e}_{\mathcal{A}}) \ d \bm{e}_{\mathcal{A}} \nonumber \\
& = \sum_{\substack{ \text{all possible}\\ \mathcal{A}' \subseteq \{0,1,\ldots,P-1\}  }} \int_{ \Omega_{\mathbb{E}_{\mathcal{A}}} \backslash \{\bm{0} \} }  I[\bm{e}_{\mathcal{A}}: \bm{e}_{\mathcal{A}} \in \mathcal{E}_{\mathcal{A},\mathcal{A}'} ]\ p(\mathbb{E}_{\mathcal{A}}=\bm{e}_{\mathcal{A}}) \ d \bm{e}_{\mathcal{A}},
\label{eq:clm2_sparsity_split}
\end{align}
where $\mathcal{E}_{\mathcal{A},\mathcal{A}'} \subseteq \mathcal{E}_{\mathcal{A}}$ is the set of all $\bm{e}_{\mathcal{A}}$s for which $ \exists \bm{e}' \text{ with non-zero indices } \mathcal{A}' $ such that   $\bm{H}\bm{e}=\bm{H}\bm{e}'$, $||\bm{e}'||_0\leq |\mathcal{A}|$ and $ ||\bm{e}'||_{0}\leq P-Q-1$. Note that, $$ \mathcal{E}_{\mathcal{A}}= \cup_{\mathcal{A}'} \ \mathcal{E}_{\mathcal{A},\mathcal{A}'} .$$

Observe that when $|\mathcal{A}'|$ is greater than $\min \{|\mathcal{A}|, P-Q-1 \}$ for a given $\bm{e}_{\mathcal{A}}$, there is no possible $\bm{e}'$ that satisfies all the aforementioned conditions, and thus $\mathcal{E}_{\mathcal{A},\mathcal{A}'}$ is an empty set. Let us see what happens when $|\mathcal{A}'| \leq \min \{|\mathcal{A}|, P-Q-1 \}$.

First we will show that $\mathcal{E}_{\mathcal{A},\mathcal{A}'}$ is also an empty set when  $| \mathcal{A} \cup \mathcal{A}'  | = |\mathcal{A}'|$.
%in this regime $|\mathcal{A}'| \leq \min \{|\mathcal{A}|, P-Q-1 \}$. 
For $\bm{H}\bm{e}=\bm{H}\bm{e}'$ to hold,
\begin{equation}
\bm{e}=\bm{e}' + \bm{h},
\label{eq:e_prime}
\end{equation}
where $\bm{h}\in Null(\bm{H}) \backslash \{\bm{0} \}$. As $\bm{H}$ is also the generator matrix of a $(P,P-Q)$ MDS Code (parity check of a $(P,Q)$ MDS Code), for any $\bm{h}\in Null(\bm{H}) \backslash \{\bm{0} \}$, the number of non-zeros is $||\bm{h}||_0 \geq P-Q+1$. On the other hand, $ |\mathcal{A}'| \leq P-Q-1$ which is less than $P-Q+1$. Thus $$| \mathcal{A} \cup \mathcal{A}'  | \geq ||\bm{h}||_0 \geq P-Q+1 > P-Q-1 \geq |\mathcal{A}' |,$$ for any such $\bm{e}'$ (with non-zero elements $\mathcal{A}'$) to exist, for a given $\bm{e}_{\mathcal{A}}$. 

Therefore, if there exists such an $\bm{e}'$ for which \Cref{eq:e_prime} holds, then $|\mathcal{A} \cup \mathcal{A}'| - |\mathcal{A}'|(>0)$ entries of $\bm{h}(=\bm{e}-\bm{e}')$, indexed by the set $(\mathcal{A} \cup \mathcal{A}')\backslash \mathcal{A}'$ should match exactly with the true error vector $\bm{e}$. For particular $\mathcal{A}$ and $\mathcal{A}'$, satisfying
$| \mathcal{A} \cup \mathcal{A}'  | > |\mathcal{A}'|$ and $|\mathcal{A}'| \leq \min \{|\mathcal{A}|, P-Q-1 \}$,  let us denote the set of indices $(\mathcal{A} \cup \mathcal{A}') \backslash \mathcal{A}'$ as $\mathcal{A}^*$. Observe that,
\begin{equation}
\bm{0}=\bm{H}(\bm{e}-\bm{e}') = \bm{H}\bm{h} =
\bm{H}_{\mathcal{A} \cup \mathcal{A}'}\bm{h}_{\mathcal{A} \cup \mathcal{A}'} 
 = 
\begin{bmatrix} \bm{H}_{\mathcal{A}^*} & \bm{H}_{\mathcal{A}'}
\end{bmatrix} \begin{bmatrix} \bm{h}_{\mathcal{A}^*} \\
\bm{h}_{\mathcal{A}'}
\end{bmatrix}
=\begin{bmatrix} \bm{H}_{\mathcal{A}^*} & \bm{H}_{\mathcal{A}'}
\end{bmatrix} \begin{bmatrix} \bm{e}_{\mathcal{A}^*} \\
\bm{h}_{\mathcal{A}'}
\end{bmatrix},
\label{eq:null_H}
\end{equation}
where $\bm{H}_{\mathcal{A}^*}$, $\bm{H}_{\mathcal{A}'}$ and $\bm{H}_{\mathcal{A} \cup \mathcal{A}'}$ denotes the sub-matrices of the matrix $\bm{H}$ consisting of the columns indexed in sets $\mathcal{A}^*$, $\mathcal{A}'$ and $\mathcal{A} \cup \mathcal{A}' $ respectively. Similarly the sub-scripted $\bm{h}$ denotes sub-vectors of $\bm{h}$ consisting of the elements indexed in the sub-script.

Thus, the set $\mathcal{E}_{\mathcal{A},\mathcal{A}'}$ can be redefined as the set of all $\bm{e}_{\mathcal{A}}$s for which there exists an $\bm{h}=\bm{e}'-\bm{e}$ (and hence an $\bm{h}_{\mathcal{A}'}$) such that $\begin{bmatrix} \bm{H}_{\mathcal{A}^*} & \bm{H}_{\mathcal{A}'}\end{bmatrix} \begin{bmatrix} \bm{e}_{\mathcal{A}^*} \\
\bm{h}_{\mathcal{A}'}
\end{bmatrix}=\bm{0}$ where $\mathcal{A}^*=(\mathcal{A}\cup \mathcal{A}') \backslash \mathcal{A}'$. Returning to where we left off in \Cref{eq:clm2_sparsity_split},
\begin{align}
&\sum_{\substack{ \text{all possible}\\ \mathcal{A}' \subseteq \{0,1,\ldots,P-1\}  }} \int_{ \Omega_{\mathbb{E}_{\mathcal{A}}} \backslash \{\bm{0} \} }  I[\bm{e}_{\mathcal{A}}: \bm{e}_{\mathcal{A}} \in \mathcal{E}_{\mathcal{A},\mathcal{A}'} ]\ p(\mathbb{E}_{\mathcal{A}}=\bm{e}_{\mathcal{A}}) \ d \bm{e}_{\mathcal{A}} \nonumber \\
&= \sum_{\substack{ \text{all possible}\\ \mathcal{A}' \subseteq \{0,1,\ldots,P-1\} \\  |\mathcal{A}'|\leq  \min \{|\mathcal{A}|, P-Q-1\},\ |\mathcal{A}\cup\mathcal{A}'|> |\mathcal{A}'|   }} \int_{ \Omega_{\mathbb{E}_{\mathcal{A}}} \backslash \{\bm{0} \} }  I[\bm{e}_{\mathcal{A}}: \bm{e}_{\mathcal{A}} \in \mathcal{E}_{\mathcal{A},\mathcal{A}'} ]\ p(\mathbb{E}_{\mathcal{A}}=\bm{e}_{\mathcal{A}}) \ d \bm{e}_{\mathcal{A}} \nonumber
\\
& = \sum_{\substack{ \text{all possible}\\ \mathcal{A}' \subseteq \{0,1,\ldots,P-1\} \\  |\mathcal{A}'|\leq  \min \{|\mathcal{A}|, P-Q-1\},\ |\mathcal{A}\cup\mathcal{A}'|> |\mathcal{A}'|  }} \int_{ \Omega_{\mathbb{E}_{\mathcal{A}\backslash \mathcal{A}* }} \backslash \{\bm{0} \} } \int_{ \Omega_{\mathbb{E}_{\mathcal{A}^* }} \backslash \{\bm{0} \} }  I[\bm{e}_{\mathcal{A}^*}: \bm{e}_{\mathcal{A}^*} \in \mathcal{E}^*_{\mathcal{A},\mathcal{A}',\bm{e}_{\mathcal{A}\backslash \mathcal{A}^*}} ]\nonumber \\
&\hspace{7cm} p(\mathbb{E}_{\mathcal{A}^*}=\bm{e}_{\mathcal{A}^*}| \mathbb{E}_{\mathcal{A}\backslash \mathcal{A}^*}=\bm{e}_{\mathcal{A}\backslash \mathcal{A}^*}   ) p(\mathbb{E}_{\mathcal{A}\backslash \mathcal{A}^*}=\bm{e}_{\mathcal{A}\backslash \mathcal{A}^*} ) \ d \bm{e}_{\mathcal{A}^*} d \bm{e}_{\mathcal{A}\backslash \mathcal{A}^* } \nonumber \\
& = \sum_{\substack{ \text{all possible}\\ \mathcal{A}' \subseteq \{0,1,\ldots,P-1\} \\  |\mathcal{A}'|\leq  \min \{|\mathcal{A}|, P-Q-1\}, \\
 |\mathcal{A}\cup\mathcal{A}'|> |\mathcal{A}'| }} \int_{ \Omega_{\mathbb{E}_{\mathcal{A}\backslash \mathcal{A}* }} \backslash \{\bm{0} \} } \int_{ \Omega_{\mathbb{E}_{\mathcal{A}^* }} \backslash \{\bm{0} \} }  I[\bm{e}_{\mathcal{A}^*}: \bm{e}_{\mathcal{A}^*} \in \mathcal{E}^*_{\mathcal{A},\mathcal{A}',\bm{e}_{\mathcal{A}\backslash \mathcal{A}^*}} ]p(\mathbb{E}_{\mathcal{A}^*}=\bm{e}_{\mathcal{A}^*}   ) p(\mathbb{E}_{\mathcal{A}\backslash \mathcal{A}^*}=\bm{e}_{\mathcal{A}\backslash \mathcal{A}^*} ) \ d \bm{e}_{\mathcal{A}^*} d \bm{e}_{\mathcal{A}\backslash \mathcal{A}^* } 
%& = \int_{\bm{e}_{A_1^c} \in S} \int_{\bm{e}_{A_1} \in S_1}  p(\mathbb{E}_{A_1}=\bm{e}_{A_1}| \mathbb{E}_{A^c_1}=\bm{e}_{A^c_1}) d \bm{e}_{A_1}\ \ p(\mathbb{E}_{A^c_1}=\bm{e}_{A^c_1}) d \bm{e}_{A^c_1} \\
%&= \int_{\bm{e}_{A_1^c} \in S} \int_{\bm{e}_{A_1} \in S_1}  p(\mathbb{E}_{A_1}=\bm{e}_{A_1}) d \bm{e}_{A_1}\ \ p(\mathbb{E}_{A^c_1}=\bm{e}_{A^c_1}) d \bm{e}_{A^c_1}
\label{eq:clm2_measure_0}
\end{align}
where $\bm{e}_{\mathcal{A}\backslash \mathcal{A}^*}$ denotes the elements of $\bm{e}$ that are independent and outside of those in set $\mathcal{A}^*$, and the set $\mathcal{E}^*_{\mathcal{A},\mathcal{A}',\bm{e}_{\mathcal{A}\backslash \mathcal{A}^* } }$ is defined as the set of all $\bm{e}_{\mathcal{A}^*}$s for a given $\bm{e}_{\mathcal{A}\backslash \mathcal{A}^*} $, such that there exists an $\bm{h}_{\mathcal{A}'}$ satisfying $$\begin{bmatrix} \bm{H}_{\mathcal{A}^*} & \bm{H}_{\mathcal{A}'} \end{bmatrix} \begin{bmatrix} \bm{e}_{\mathcal{A}^*} \\
\bm{h}_{\mathcal{A}'} 
\end{bmatrix}= \bm{0}.$$

Now, any vector $\begin{bmatrix} \bm{e}_{\mathcal{A}^*} \\
\bm{h}_{\mathcal{A}'} 
\end{bmatrix}= \begin{bmatrix} \bm{h}_{\mathcal{A}^*} \\
\bm{h}_{\mathcal{A}'} 
\end{bmatrix}$ satisfying \Cref{eq:null_H}, lies in a sub-space of dimension $|\mathcal{A}\cup\mathcal{A}'|-(P-Q)$. Thus, $\bm{e}_{\mathcal{A}^*}$ being a sub-vector of this vector also lies in a sub-space of dimension at most $|\mathcal{A}\cup\mathcal{A}'|-(P-Q)$. Thus, essentially the entire set $\mathcal{E}^*_{\mathcal{A},\mathcal{A}',\bm{e}_{\mathcal{A}\backslash \mathcal{A}^* } }$  lies within a sub-space of dimension at most $|\mathcal{A}\cup\mathcal{A}'|- (P-Q)$, and is thus a measure $0$ subset for a random vector $\mathbb{E}_{\mathcal{A}^*}$ whose all $|\mathcal{A}^*|= |\mathcal{A}\cup\mathcal{A}' |-|\mathcal{A}'|$ entries are iid Gaussian random variables since $|\mathcal{A}\cup\mathcal{A}' |-|\mathcal{A}'| \geq |\mathcal{A}\cup\mathcal{A}' | - (P-Q-1) > |\mathcal{A}\cup\mathcal{A}' |-(P-Q)$. 
Thus, examining one of the terms in \Cref{eq:clm2_measure_0},
\begin{align}
& \int_{ \Omega_{\mathbb{E}_{\mathcal{A}\backslash \mathcal{A}* }} \backslash \{\bm{0} \} } \int_{ \Omega_{\mathbb{E}_{\mathcal{A}^* }} \backslash \{\bm{0} \} }  I[\bm{e}_{\mathcal{A}^*}: \bm{e}_{\mathcal{A}^*} \in \mathcal{E}^*_{\mathcal{A},\mathcal{A}',\bm{e}_{\mathcal{A}\backslash \mathcal{A}^*}} ]p(\mathbb{E}_{\mathcal{A}^*}=\bm{e}_{\mathcal{A}^*}   ) p(\mathbb{E}_{\mathcal{A}\backslash \mathcal{A}^*}=\bm{e}_{\mathcal{A}\backslash \mathcal{A}^*} ) \ d \bm{e}_{\mathcal{A}^*} d \bm{e}_{\mathcal{A}\backslash \mathcal{A}^* }  \nonumber \\
&= \int_{ \Omega_{\mathbb{E}_{\mathcal{A}\backslash \mathcal{A}* }} \backslash \{\bm{0} \} } \underbrace{\int_{ \Omega_{\mathbb{E}_{\mathcal{A}^* }} \backslash \{\bm{0} \} }  I[\bm{e}_{\mathcal{A}^*}: \bm{e}_{\mathcal{A}^*} \in \mathcal{E}^*_{\mathcal{A},\mathcal{A}',\bm{e}_{\mathcal{A}\backslash \mathcal{A}^*}} ]p(\mathbb{E}_{\mathcal{A}^*}=\bm{e}_{\mathcal{A}^*} ) d \bm{e}_{\mathcal{A}^*} }_{=0} \ p(\mathbb{E}_{\mathcal{A}\backslash \mathcal{A}^*}=\bm{e}_{\mathcal{A}\backslash \mathcal{A}^*} ) \  d \bm{e}_{\mathcal{A}\backslash \mathcal{A}^* } \nonumber \\
&=0.
\end{align}
This leads to,
\begin{align}
\Pr{(\hat{\mathbb{E}} \neq \mathbb{E} )} =0,
\end{align}
using \Cref{eq:clm2_wrt_x}, \Cref{eq:clm2_wrt_e}, \Cref{eq:clm2_sparsity_split} and \Cref{eq:clm2_measure_0} successively.
\end{proof}

\subsection{Comparison with Adversarial Error Model.}
For completion of the discussion on error models and also to facilitate both the proofs of \Cref{thm:error_tolerance_MV,thm:error_tolerance_MM}, we now also formally show that the number of errors that can be corrected using a $(P,Q)$ MDS Code under the adversarial model is $\lfloor \frac{P-Q}{2} \rfloor $. The result is well-known in coding theory \cite{ryan2009channel} for finite fields. This is an extension for real numbers (also see \cite{candes2005decoding}).

\begin{lem}[Real Number Error Correction under Adversarial Model]
\label{lem:errors}
Under the adversarial error model for channel coding, the decoder of a $(P,Q)$ MDS Code can correct at most $\lfloor \frac{P-Q}{2} \rfloor $ errors in the worst case. 
%Moreover, if the additive errors are drawn from independent random Gaussian distributions, then it is still possible to detect the occurrence of errors with probability $1$ if not correct them when the number of errors are more than  $\lfloor \frac{P-Q}{2} \rfloor $.
\end{lem}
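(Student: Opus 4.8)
The plan is to prove both directions of the statement: that a nearest‑codeword decoder recovers the transmitted codeword whenever at most $t := \lfloor \frac{P-Q}{2} \rfloor$ coordinates are corrupted, and that no decoder can succeed on every error pattern of weight $t+1$, so $t$ is exactly the adversarial error‑correction capability. The common starting point is the fact that a $(P,Q)$ real‑number MDS code has minimum Hamming distance $d_{\min}=P-Q+1$. First I would record why: by the defining property of the code (every $P-Q$ columns of the $(P-Q)\times P$ parity‑check matrix $\bm{H}$ are linearly independent over $\mathbb{R}$, as holds for the Vandermonde‑type constructions used here), no nonzero codeword has Hamming weight $\le P-Q$, hence $d_{\min}\ge P-Q+1$; combined with the Singleton bound $d_{\min}\le P-Q+1$ this yields equality. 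I would also note that an MDS code contains a codeword of weight exactly $d_{\min}$, which the converse will use.

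For achievability, consider the decoder that returns the codeword $\bm{c}$ minimizing $d_H(\bm{z},\bm{c})$, where $\bm{z}=\bm{G}^T\bm{q}+\bm{e}$. If $\|\bm{e}\|_0\le t$, then $d_H(\bm{z},\bm{G}^T\bm{q})\le t$, while for any other codeword $\bm{c}'$ the triangle inequality in Hamming distance gives $d_H(\bm{z},\bm{c}')\ge d_H(\bm{G}^T\bm{q},\bm{c}')-t\ge d_{\min}-t=\lceil\frac{P-Q}{2}\rceil+1>t$. Hence $\bm{G}^T\bm{q}$ is the unique closest codeword and is recovered; then $\bm{q}$ is obtained by solving the consistent linear system $\bm{G}^T\widehat{\bm{q}}=\bm{z}-\widehat{\bm{e}}$, which has a unique solution since $\bm{G}^T$ has full column rank.

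For the converse, suppose for contradiction that some decoding rule corrects every error pattern of weight $t+1$. Let $\bm{c}^{*}$ be a nonzero codeword with $\|\bm{c}^{*}\|_0=d_{\min}=P-Q+1$. Checking the even and odd cases of $P-Q$ separately shows $t+1=\lfloor\frac{P-Q}{2}\rfloor+1\ge\lceil\frac{d_{\min}}{2}\rceil$, so the support of $\bm{c}^{*}$ can be partitioned into two sets $S_1,S_2$ each of size at most $t+1$. Build a received word $\bm{z}$ that equals $0$ on the coordinates of $S_1$ and equals $\bm{c}^{*}$ on all remaining coordinates. Then $\bm{z}$ arises from the codeword $\bm{0}$ via an error supported on $S_2$ (weight $\le t+1$) and equally arises from the codeword $\bm{c}^{*}$ via an error supported on $S_1$ (weight $\le t+1$); the decoder would have to output both $\bm{0}$ and $\bm{c}^{*}$ on the same input $\bm{z}$, a contradiction. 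Therefore no decoder corrects all weight‑$(t+1)$ patterns, and $\lfloor\frac{P-Q}{2}\rfloor$ is optimal.

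I do not expect a genuine obstacle here; the only point needing care is the floor/ceiling bookkeeping — verifying $d_{\min}-t>t$ and $t+1\ge\lceil d_{\min}/2\rceil$ in both parities of $P-Q$, and that $d_{\min}\ge t+1$ (so $S_1$ can be taken of full size $t+1$) whenever $P>Q$. The one conceptual item worth stating explicitly is that the MDS hypothesis, read over $\mathbb{R}$ as "every $P-Q$ columns of $\bm{H}$ are linearly independent," delivers $d_{\min}=P-Q+1$ exactly as in the finite‑field setting, after which the argument is the classical two‑codeword ambiguity argument.
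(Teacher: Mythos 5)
Your achievability argument and the paper's are the same classical fact viewed from two angles. The paper works with the syndrome: if $\bm e,\bm e'$ both have weight $\le t$ and $\bm H\bm e=\bm H\bm e'$, then $\bm e-\bm e'\in Null(\bm H)\setminus\{\bm 0\}$ would be a nonzero codeword of weight $\le 2t\le P-Q<d_{\min}$, impossible since any $(P-Q)$ columns of $\bm H$ are linearly independent. You work with codewords and the Hamming triangle inequality: two distinct codewords both within radius $t$ of $\bm z$ would be at distance $\le 2t<d_{\min}$ of each other. These are the same two-codeword-ambiguity argument reparametrized by $\bm e\mapsto\bm z-\bm c$, so the mechanism and the floor bookkeeping are identical. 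Where you genuinely go beyond the paper is the converse: you exhibit a minimum-weight codeword $\bm c^{*}$, split its support into two halves $S_1,S_2$ of size $\le t+1$, and build a received word equidistant (in the $\le t+1$ sense) from $\bm 0$ and $\bm c^{*}$, showing no decoder handles all weight-$(t+1)$ patterns. The paper proves only the positive direction (uniqueness of the weight-$\le t$ solution to $\bm H\bm e=\widetilde{\bm z}$) and leaves the tightness implicit; your converse makes the ``at most'' in the statement literal. Both buy you correctness, but yours is the more complete reading of the lemma; the paper's is leaner because it only needs the positive direction downstream (in \Cref{thm:error_tolerance_MV,thm:error_tolerance_MM}) and phrases things in the syndrome language that its \Cref{algo:decoding} actually uses. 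One small caveat in your write-up: for the converse you need the existence of a codeword of weight exactly $d_{\min}$; this is true for MDS codes (the weight distribution of an MDS code has nonzero mass at $d_{\min}$), but it is worth a one-line citation or the standard shortened-code argument, since for a general linear code ``$d_{\min}=P-Q+1$'' alone already forces it but a reader may want the justification spelled out.
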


\textbf{This statement is also equivalent to the following: A real-valued polynomial with $Q$ coefficients can be interpolated correctly from its evaluations at $P$ distinct values, if at most $\lfloor \frac{P-Q}{2} \rfloor $ evaluations are erroneous under the adversarial model. }

\begin{proof}[Proof of \Cref{lem:errors}]

Let $\bm{z}$ denote the vector of length $P$ that denotes the corrupted codeword with additive error $\bm{e}$. Thus,
$$\bm{z} =  \bm{G}^T\bm{q} +  \bm{e},$$
where $\bm{G}$ is the generator matrix of any real number $(P,Q)$ MDS code. The location of errors, \textit{i.e.}, $\mathcal{A}$ represents the non-zero indices of $\bm{e}$.

As the matrix $\bm{G}^T$ is full-rank, there exists a full-rank annihilating matrix or parity check matrix $\bm{H}$ of dimension $(P-Q)\times P $ such that $\bm{H} \bm{G}^T =0$, leading to $\bm{H}\bm{e}= \bm{H} \bm{z} := \widetilde{\bm{z}}.$ Following the steps of \cite{candes2005decoding}, we will show that if $|\mathcal{A}|\leq \lfloor \frac{P-Q}{2}\rfloor$, then the solution $\bm{e}$ to this linear system of equations $\bm{H}\bm{e}=\widetilde{\bm{z}}$ is unique. Therefore, given $\bm{z}$ and $\bm{G}^T$, the vector $\bm{q}$  can also be reconstructed uniquely as $\bm{G}^T$ is full-rank. 
%if the number of non-zero entries in $\bm{e}$ is less than $\frac{Spark(\bm{H})}{2}  $ where $Spark(\bm{H})$ is is the minimum number of columns of $\bm{H}$ that are linearly dependent .

%We follow the steps of \cite{candes2005decoding} to show that any additive error vector $\bm{e}$ with number of non-zero entries (\textit{i.e.} number of errors) less or equal to $\lfloor  \frac{P-Q}{2} \rfloor $ is unique, and thus given $\bm{z}$ and $\bm{G}^T$, the vector $\bm{q}$  can also be reconstructed uniquely as $\bm{G}^T$ is full-rank. 

Suppose there exists another $\bm{e}'\neq \bm{e}$ with non-zero indices $\mathcal{A}'$ such that $ |\mathcal{A}'|\leq \lfloor \frac{P-Q}{2}\rfloor$ and $\bm{H}\bm{e}=\bm{H}\bm{e}'$. Then, $\bm{e}-\bm{e}'=\bm{h}$ where $\bm{h}\in Null(\bm{H})\backslash \{ \bm{0} \}$. The number of non-zero elements in $\bm{h}=\bm{e}-\bm{e}'$ satisfies:
\begin{equation}
||\bm{h}||_0 \leq |\mathcal{A} \cup \mathcal{A}'| \leq |\mathcal{A}|+ |\mathcal{A}'| \leq 2\lfloor \frac{P-Q}{2}\rfloor \leq P-Q.
\end{equation}

This is a contradiction. Any $\bm{h}$ lying in $Null(\bm{H})\backslash \{ \bm{0} \}$ has at most $P-Q+1$ non-zero elements. This is because $\bm{H}$ is also the generator of a $(P, P-Q)$ MDS code, and thus any $(P-Q)$ columns are linearly dependent.

%Now we will show that $Spark(\bm{H})=P-Q+1$.  For contradiction, assume that $Spark(\bm{H}) < P-Q+1$. Thus, there is a vector $\bm{\beta}$ such that $\bm{\beta} \in \ Null-space(\bm{H})$ and has less than $P-Q+1 $ non-zero entries. Note that, the dimension of the null-space of $\bm{H}$ is $Q$ as the $Q$ linearly independent columns of the matrix $\bm{G}^T$ lie in the null-space of $\bm{H}$ and also form a basis. Thus$$\bm{G}^T \bm{\alpha} = \bm{\beta} \text{ for some vector } \bm{\alpha}.$$

%The vector $\bm{\beta}$ has strictly more than $(P-(P-Q+1))$ zeros, \textit{i.e.}, $Q-1$ zeros. Or, $\bm{\beta}$ has at least $Q$ zeros. However, this is not possible as every $Q \times Q$ square sub-matrix of $\bm{G}^T$ is invertible. Contradiction! Thus, $$Spark(\bm{H}) = (P - Q + 1).$$ 

%We can thus say that if $2||\bm{e}||_0 < (P -Q + 1)$ or, $2 \times (\text{No. of non-zeros in } \bm{e}) \leq (P - Q)$, then there exists a unique vector $\bm{e}$ that satisfies $\bm{He} = \bm{Hz} $. As the number of non-zero entries in $\bm{e}$ is an integer, the bound becomes $ \lfloor \frac{P - Q}{2} \rfloor$. 

\end{proof}
For the special case of polynomial interpolation, the matrix $\bm{G}^T$ becomes a Vandermonde matrix. Consider a polynomial $g(u)=q_0 +q_1 u +  \dots + q_{Q-1}u^{Q-1} $ with $Q$ unique coefficients, evaluated at $P$ distinct values $a_0,a_1,\dots, a_{P-1}  $ respectively. Let these evaluations be represented as $\gamma_0,\gamma_1,\dots,\gamma_{P-1}  $ respectively. Thus, we have,
$$ \begin{bmatrix}   
\gamma_0 \\
\vdots \\
\gamma_{P-1}
\end{bmatrix}= 
\begin{bmatrix}
1 & a_1 & \ldots & a_1^{Q-1}\\
\vdots & \vdots & \ddots & \vdots \\
1 & a_{P-1} & \ldots & a_{P-1}^{Q-1}
\end{bmatrix}  \begin{bmatrix}
q_0\\
\vdots\\
q_{Q-1}
\end{bmatrix}  =  \bm{G}^T_{P \times Q} \bm{q}_{Q\times 1} \ \text{where } \bm{G}^T_{P \times Q}= \begin{bmatrix}
1 & a_1 & \ldots & a_1^{Q-1}\\
\vdots & \vdots & \ddots & \vdots \\
1 & a_{P-1} & \ldots & a_{P-1}^{Q-1}
\end{bmatrix} . $$

\subsection{Decoding Methods (Polynomial Time).}
The realistic decoding technique under either of the error models would be to first compute $\bm{H} \ \bm{z} = \widetilde{\bm{z}}$. If $\widetilde{\bm{z}}=\bm{0}$, one can declare that there are no errors with probability $1$. Otherwise, one can use standard sparse reconstruction algorithms \cite{candes2005decoding,tropp2010computational} to find a solution for the undetermined linear system of equations $\bm{H}\bm{e}= \widetilde{\bm{z}}$ with minimum non-zero elements for $\bm{e}$. Under Error Model $1$, if a solution is found with number of non-zero entries at most $\lfloor \frac{P-Q}{2} \rfloor$, the error is detected and corrected.
Under Error Model $2$, there is no upper bound on the number of errors. If a solution is found with number of non-zero entries at most $P-Q-1$, it is declared as the correct $\bm{e}$. Otherwise, the node declares a decoding failure even though error is detected, and reverts to the last checkpoint.

}}
\section{Proofs of \Cref{thm:error_tolerance_MV,thm:error_tolerance_MM} }
\label{appendix:error_tolerance}
Here we provide details of the decoding technique for both feedforward and backpropagation. As we discussed before, the problem of decoding reduces to the problem of interpolating the coefficients of a polynomial in two variables from its values evaluated at $(a_p,b_p)$ for $p=0,1,\ldots,P-1$, where some of the values might be prone to additive undetected errors (or erasures in the case of stragglers). 

Observe that the total number of unique coefficients that the polynomial $\widetilde{\bm{s}}(u,v)$ has is $ m (2n-1)$ (similarly $\widetilde{\bm{c}}^T(u,v)$ has $n(2m-1)$ unique coefficients). Thus, at first glance it might appear that one would require at the very least  $ m (2n-1) +2t $ (or  $n(2m-1)+2t$ for backpropagation) evaluations to be able to correct any $t$ errors during the matrix-vector product of the feedforward or backpropagation stages respectively (under Error Model $1$).

%out of the $P$ nodes to be able to reconstruct all the coefficients of the polynomial $\widetilde{\bm{s}}(u,v)$ (or $\widetilde{\bm{c}^T}(u,v)$). From coding theoretic arguments, the Hamming Distance is thus at most $P- m (2n-1) + 1 $ in the feedforward stage (or  $ P-n(2m-1) + 1$  in the backpropagation stage), implying that one can correct at most $(P- m (2n-1))$ (or $(P-n(2m-1))$) erasures or at most $\lfloor \frac{(P- m (2n-1))}{2}\rfloor $ (or $\lfloor \frac{(P-n(2m-1))}{2}\rfloor $ ) errors respectively. 

However, we actually do not require all the unique $ m(2n-1)$ (or  $n(2m-1)$) coefficients. We only need the coefficients of $ u^i v^{n-1}$ (or $ u^{m-1} v^{j}$) which means we effectively require the values of only $m$ (or $n$) coefficients respectively. One technique for solving this interpolation problem is to convert it into a polynomial in a single variable, such that we can reconstruct the required coefficients from its evaluation at $P$ distinct values at  $P$ different nodes, where some evaluations may be erroneous.

%out of the $P$ nodes to be able to reconstruct all the coefficients of the polynomial $\widetilde{\bm{s}}(u,v)$ (or $\widetilde{\bm{c}^T}(u,v)$). From coding theoretic arguments, the Hamming Distance is thus at most $P- m (2n-1) + 1 $ in the feed forward stage (or  $ P-n(2m-1) + 1$  in the backpropagation stage), implying that one can correct at most $(P- m (2n-1))$ (or $(P-n(2m-1))$) erasures or at most $\lfloor \frac{(P- m (2n-1))}{2}\rfloor $ (or $\lfloor \frac{(P-n(2m-1))}{2}\rfloor $ ) errors respectively. 

\begin{proof}[Proof of \Cref{thm:error_tolerance_MV}] We consider the four cases separately.

$\mathcal{C}_{\mathrm{GP}}(K,N,P)$ with $u=v^{n} : $ Suppose we substitute $u=v^{n}$ and then evaluate the polynomials $\widetilde{\bm{s}}(u,v)$ (or similarly $\widetilde{\bm{c}}^T(u,v)$ at distinct values of $v$ at each of the $P$ nodes. For the feedforward stage, the polynomial $ \widetilde{\bm{s}}(u,v)$ thus becomes only a polynomial of a single variable $v$ as:
\begin{align}
\widetilde{\bm{s}}(v) =  \sum_{i=0}^{m-1}\sum_{j=0}^{n-1} \sum_{j'=0}^{n-1} \bm{W}_{i,j}\bm{x}_{j'} v^{in + n-j'+j-1}.
\end{align}
The degree of this polynomial is $mn + n -2$ which means it has  $mn + n -1$ unique coefficients now. The number of unique coefficients reduce since some of the unique coefficients in $\widetilde{\bm{s}}(u,v)$ now align together with the substitution $u=v^{n}$. However, interestingly, the coefficients of our interest, \textit{i.e.} the coefficients of $u^iv^{n-1}$ in $\widetilde{\bm{s}}(u,v)$  exactly correspond to the coefficients of $v^{in + n-1}$ in $\widetilde{\bm{s}}(v) $.  For interpolation of a polynomial with $mn + n -1$ unique coefficients from its evaluations at $P$ distinct values, one can correct at most $\lfloor \frac{P-mn-n+1}{2} \rfloor $ errors under Error Model $1$ (see \Cref{lem:errors}) and $P-mn-n$ under Error Model $2$ (see \Cref{thm:main}).

%out of $P$ to interpolate all the coefficients. Thus, the feedforward stage can tolerate $(P-mn -n+1)$ erasures or  $\lfloor \frac{(P-mn -n+1 )}{2}\rfloor $ errors. 

For the backpropagation however, none of the coefficients of $\widetilde{\bm{c}}^T(u,v)$ align in $\widetilde{\bm{c}}^T(v) $ with the substitution $u=v^{n}$ and $\widetilde{\bm{c}}^T(v) $ still has $n(2m-1)$ unique coefficients but is now a polynomial of one variable. Thus it can correct at most $\lfloor \frac{P-2mn +n}{2} \rfloor $ errors under Error Model $1$ (see \Cref{lem:errors}) and $P-2mn+n-1$ under Error Model $2$ (see \Cref{thm:main}).\newline

%can tolerate $(P-n(2m-1))$ erasures or $\lfloor \frac{(P-n(2m-1))}{2}\rfloor $ errors. 
$\mathcal{C}_{\mathrm{GP}}(K,N,P)$ with $v=u^{m} : $
Alternately, using the substitution $ v=u^{m}$, one can increase the number of errors that can be corrected in the backpropagation while reducing that of feedforward. For backpropagation, with the substitution  $v=u^{m}$, the polynomial $\widetilde{\bm{c}}^T(u,v)$ reduces to a variable in one variable $u$ as follows:
\begin{equation}
\widetilde{\bm{c}}^T(u) =  \sum_{i'=0}^{m-1} \sum_{i=0}^{m-1} \sum_{j=0}^{n-1}  \bm{\delta}_{i'}^T\bm{W}_{i,j}  u^{j m+ m-i'+i-1}.
\end{equation}
This is a polynomial of degree $mn + m-2  $, and thus has $ mn + m-1 $ unique coefficients. Thus, during backpropagation, one can correct $\lfloor \frac{P-mn-m+1}{2} \rfloor $ errors under Error Model $1$ and $P-mn-m$ under Error Model $2$. However, for $\widetilde{\bm{s}}(u)$ with the substitution $ v=u^m $, the number of unique coefficients is $m(2n-1)$. Thus, the number of errors that can be corrected in feedforward stage is 
$\lfloor \frac{P-2mn+m}{2} \rfloor $ errors under Error Model $1$ (see \Cref{lem:errors}) and $P-2mn+m-1$ under Error Model $2$ (see \Cref{thm:main}). \newline

$\mathcal{C}_{\mathrm{mds}}(K,N,P):$ The error tolerances are provided in \Cref{lem:mds_error_tolerance}. For the MDS-code-based strategy, all the $P$ nodes are not used in both the feedforward and backpropagation stages. We let $P_f$ and $P_b$ denote the number of nodes in the feedforward and backpropagation stages respectively, of which only $mn$ nodes are common. Thus, $P=P_f+P_b-mn$.

Now, for the feedforward stage, one uses a $(\frac{P_f}{n},m)$ MDS code, and is thus able to correct any $\lfloor \frac{\frac{P_f}{n} -m}{2} \rfloor $ errors under Error Model $1$ and $\frac{P_f}{n} -m-1$ under Error Model $2$. 

And for the backpropagation stage, one uses a $(\frac{P_b}{m},n)$ MDS code. Therefore, one can correct any $\lfloor \frac{\frac{P_b}{m} -n}{2} \rfloor $ errors under Error Model $1$ and $\frac{P_b}{m} -n-1$ under Error Model $2$. \newline

$\mathcal{C}_{\mathrm{rep}}(K,N,P):$ The error tolerances for the replication strategy is provided in \Cref{lem:replication_error_tolerance}. The entire grid of $mn$ nodes is replicated $\frac{P}{mn}$ times, and nodes performing the same computation exchange their outputs for error correction. Under Error Model $1$, one can use a majority decoding and is thus able to correct $\lfloor \frac{\frac{P}{mn}-1}{2}\rfloor$ errors.

Under Error Model $2$, the probability of two outputs having exactly same error is $0$ as the errors are drawn from continuous distributions (recall \Cref{rem:error_model}). Thus, as long as an output occurs at least twice, it is almost surely the correct output. Thus, any $\frac{P}{mn}-2$ errors can be detected and corrected both in the feedforward and backpropagation stages respectively.
\end{proof}

Now we move on to the proof of \Cref{thm:error_tolerance_MM}. 

\begin{proof}[Proof of \Cref{thm:error_tolerance_MM}] The proof is very similar to the previous case. We again consider the four cases separately.\\
\newline

$\mathcal{C}_{\mathrm{GP}}(K,N,P,B)$ with $u=v^n, w=v^{mn}:$ Using this substitution, the polynomial $\widetilde{\bm{S}}(u,v,w)$ reduces to a polynomial of a single variable $v$ as follows:
\begin{align}
\widetilde{\bm{S}}(v)=\widetilde{\bm{S}}(u,v,w)|_{u=v^n, w=v^{mn}}= \sum_{i=0}^{m-1}\sum_{j=0}^{n-1} \sum_{j'=0}^{n-1}\sum_{k=0}^{d_1-1} \bm{W}_{i,j}\bm{X}_{j',k} v^{mnk+ ni+n-1+j-j'}.
\end{align}
The degree of this polynomial is $mnd_1 +n -2 $ which means that it has $mnd_1 +n -1 $ unique coefficients. The problem of interpolation of the coefficients of this polynomial under erroneous evaluations is equivalent to the problem of decoding of a $(P,mnd_1 +n -1)$ MDS code. Thus, one can correct $\lfloor  \frac{P -mnd_1 -n +1}{2}  \rfloor $ errors under Error Model $1$ and $P -mnd_1 -n$ under Error Model $2$. Now let us see what happens for backpropagation.
\begin{align}
\widetilde{\bm{C}}^T(v) =\widetilde{\bm{C}}^T(w,u,v)|_{u=v^n, w=v^{mn}}  = \sum_{k=0}^{d_2-1}\sum_{i'=0}^{m-1}\sum_{i=0}^{m-1}\sum_{j=0}^{n-1} \bm{\Delta}^T_{k,i'}\bm{W}_{i,j} v^{mnk+ n(m-1+i-i')+j}. 
\end{align}
The degree of this polynomial is $mnd_2 +mn -n-1$ which means there are $mnd_2+mn-n$ unique coefficients. The problem of interpolation of the coefficients of this polynomial under erroneous evaluations is equivalent to the problem of decoding of a $(P,mnd_2+mn-n)$ MDS code. Thus, one can correct $\lfloor \frac{P -mnd_2 -mn+n}{2}   \rfloor$ errors under Error Model $1$ and $P -mnd_2 -mn+n-1$ errors under Error Model $2$. \\

$\mathcal{C}_{\mathrm{GP}}(K,N,P,B)$ with $v=u^m, w=v^{mn}:$ Using this substitution, the polynomial $\widetilde{\bm{S}}(u,v,w)$ reduces to a polynomial of a single variable $v$ as follows:
\begin{align}
\widetilde{\bm{S}}(v)=\widetilde{\bm{S}}(u,v,w)|_{v=u^m, w=v^{mn}}= \sum_{i=0}^{m-1}\sum_{j=0}^{n-1} \sum_{j'=0}^{n-1}\sum_{k=0}^{d_1-1} \bm{W}_{i,j}\bm{X}_{j',k} u^{mnk+ m(n-1+j-j')+i  }.
\end{align}
The degree of this polynomial is $mnd_1 +mn -m-1 $ which means that it has $mnd_1 +mn -m $ unique coefficients. Thus, one can correct $\lfloor  \frac{P -mnd_1 -mn +m}{2}  \rfloor $ errors under Error Model $1$ and $P -mnd_1 -mn +m-1$ under Error Model $2$. Now let us see what happens for backpropagation.
\begin{align}
\widetilde{\bm{C}}^T(v) =\widetilde{\bm{C}}^T(w,u,v)|_{v=u^m, w=v^{mn}}  = \sum_{k=0}^{d_2-1}\sum_{i'=0}^{m-1}\sum_{i=0}^{m-1}\sum_{j=0}^{n-1} \bm{\Delta}^T_{k,i'}\bm{W}_{i,j} u^{mnk+mj+m-1+i-i'}. 
\end{align}
The degree of this polynomial is $mnd_2 +n-2$ which means there are $mnd_2+n-1$ unique coefficients. Thus, one can correct $\lfloor \frac{P -mnd_2 -n+1}{2}   \rfloor$ errors under Error Model $1$ and $P -mnd_2 -n$ under Error Model $2$. \\

The expressions for the MDS-based technique and replication follows from the proof of \Cref{thm:error_tolerance_MV}.
\end{proof}

Now, we provide a proof of \Cref{coro:scaling_sense_comparison}.

\begin{proof}[Proof of \Cref{coro:scaling_sense_comparison}]
The proof can be derived by simply substituting $m=n=\sqrt{K}$ in the expressions of $t_f$ (or $t_b$) in \Cref{thm:error_tolerance_MV}. Let us look at the ratio of $t_f$ for 
$\mathcal{C}_{\mathrm{GP}}(K,N,P)$ with $u=v^n$ to $\mathcal{C}_{\mathrm{mds}}(K,N,P)$ under Error Model $1$.

\begin{align}
\text{ Ratio of } t_f = \lim_{P \to \infty} \frac{P-mn-n+1}{\frac{P_f-mn}{n}}\geq \lim_{P \to \infty} \frac{P-mn-n+1}{\frac{P-mn}{n}} = n = \Theta(\sqrt{K}).
\end{align}

Similarly, the ratio of $t_f$ for $\mathcal{C}_{\mathrm{GP}}(K,N,P)$ with $u=v^n$ to $\mathcal{C}_{\mathrm{rep}}(K,N,P)$ can be derived as,

\begin{align}
\text{ Ratio of } t_f = \lim_{P \to \infty} \frac{P-mn-n+1}{\frac{P-mn}{mn}} = \Theta(mn) = \Theta(K).
\end{align}

Similar ratios can also be derived for $t_b$ under Error Model $1$ in the limit of large $P$, as well as for both $t_f$ and $t_b$ under Error Model $2$.

\end{proof}

\section{Complexity Analysis and Proofs of \Cref{thm:complexity_MV} and \Cref{thm:complexity_MM}}
\label{appendix:complexity}

Before proceeding to the proofs, we remind the reader about \Cref{rem:broadcast}. In this work, we assumed that each node can multi-cast simultaneously to at most a constant number (say $2$) of nodes. In order to multi-cast the same $N$ values from one node to $P$ other nodes, the communication complexity is  $\alpha \log_2{P} + \beta N = \Theta(N)$, assuming a node cannot communicate to all the $P$ other nodes simultaneously but uses a spanning-tree type multi-cast mechanism\cite{van1997summa,chan2007collective}. It might also be worth mentioning that when all $P$ nodes have to broadcast their own set of $N$ values to all other nodes except itself, we assume that the communication complexity is $\alpha \log_2{P} + \beta NP = \Theta(NP)$ using an efficient all-to-all communication protocol called All-Gather~\cite{chan2007collective}. Currently, we are examining strategies to reduce this communication cost even further using improved implementation strategies.

For completeness, we include two of the communication protocols here. The reader can refer to \cite{chan2007collective} for more details.

\textbf{Broadcast Communication Protocol:} In a cluster of $P$ nodes, when one node sends a data (say a vector $\bm{x}$) of $N$ values to all other nodes, it is called a broadcast. The communication cost of Broadcast is $\alpha \log{P} +\beta N$.

\textbf{All-Gather Communication Protocol:} In a cluster of $P$ nodes, when every node sends its own, unique data (say vector $\bm{x}_p$ for node index $p=0,1,\ldots,P-1$) of $N$ values to all other nodes, it is called an All-Gather. The communication cost of All-Gather is $\alpha \log{P} +2\beta PN$.

\begin{proof}[Proof of Theorem \ref{thm:complexity_MV}] Let us understand the computational and communication complexity per node for all the steps during DNN training in the decentralized implementation. 
\begin{itemize}
\item Matrix-vector products (steps $O1$ and $O2$):\\
The computational complexity is $\Theta( \frac{N^2}{K})$ for both feedforward (step $O1$) and backpropagation (step $O2$).
\item Multi-casting the partial result to all other nodes for decoding:\\
 Feedforward stage: For node $p=0,1,\ldots,P-1$, a sub-vector $\widetilde{\bm{s}}_p$ of length $\frac{N}{m} $ is to be multi-casted from the $p$-th node to all other nodes. The total communication complexity of this step is $\alpha \log{P} + 2\beta \frac{N}{m} P =  \Theta(\frac{N}{m} P ) $ using All-Gather.\\
 Backpropagation stage: For node $p=0,1,\ldots,P-1$, a sub-vector $\widetilde{\bm{c}}^T_p$ of length $\frac{N}{n} $ is to be multi-casted from the $p$-th node to all other nodes. The communication complexity of this step is $\alpha \log{P} + 2\beta \frac{N}{n} P = \Theta( \frac{N}{n} P )$ using All-Gather.
\item Decoding at every node (Requires solving sparse reconstruction problems \cite{candes2005decoding}):\\ Feedforward stage: $ \mathcal{O}(\frac{N}{m} P^3)$.\\
Backpropagation stage: $\mathcal{O}(\frac{N}{n} P^3)$. \\
Note that, this step is actually the most dominant among the complexities of all the other steps apart from the steps $O1$, $O2$ and $O3$. We are using a pessimistic upper bound for this decoding complexity which can reduce further based on appropriate choice of decoding algorithm.
\item Crosscheck for decoding errors:\\
In this step, every node generates a vector of length $P$ denoting which nodes were found erroneous and then it communicates this vector of $P$ values to every other node for crosscheck. Thus, the complexity is $\alpha \log{P} + 2\beta P^2 =\Theta(P^2 )$ for both feedforward and backpropagation stage, using All-Gather.
\item Post-processing on decoded result: \\ Feedforward stage: Each node generates the vector $\bm{x}$ for next layer by applying nonlinear function $f(\cdot)$ element-wise on vector $\bm{s}$. The complexity is thus $\Theta(N) $.\\
Backpropagation stage: Each node multiplies the resulting vector $\bm{c}^T$ with diagonal matrix $\bm{D}$. The complexity is thus $\Theta(N)$.
\item Encoding of vectors: \\
Feedforward stage: The vector $\bm{x}$ is divided into $n$ equal parts of length $\frac{N}{n}$ and a linear combination of all of them is taken. The computational complexity is thus $\Theta\left(n\frac{N}{n}\right) = \Theta(N)$. \\
Backpropagation stage: The vector $\bm{\delta}^T$ is divided into $m$ equal parts of length $\frac{N}{m}$. The computational complexity is thus $\Theta\left(m\frac{N}{m}\right) = \Theta(N)$.
\item Update stage (step $O3$):\\
It involves a rank-$1$ update to the stored encoded sub-matrix $\widetilde{\bm{W}}_p$ and is thus of computational complexity $\Theta( \frac{N^2}{K})$.
\end{itemize}

Now, we can compute the total computational and communication complexity of all the steps except the individual matrix vector products at each node (step $O1$) for the feedforward stage.
\begin{align}
\text{Total complexity of all additional steps } &= \Theta\left(\frac{N}{m} P  \right) +  \mathcal{O}\left(\frac{N}{m} P^3\right)  + \Theta(P^2 ) + \Theta(N) \nonumber \\ 
& \leq \Theta\left(\frac{N}{m} P^3 \right) =  \Theta \left(\frac{N}{K} n P^3\right) \nonumber \\
& \leq \Theta\left(\frac{N}{K} P^4 \right) = o\left(\frac{N^2}{K}  \right). 
\end{align}
Here the last line follows from the condition of the theorem that $P^4=o(N)$.

Similarly, for the backpropagation stage, the total complexity of all additional steps except the matrix-vector product (step $O2$)  is given by,
\begin{align}
\text{Total complexity of all additional steps } &= \Theta\left(\frac{N}{n} P  \right) +  \mathcal{O}\left(\frac{N}{n} P^3\right)  + \Theta(P^2 ) + \Theta(N) \nonumber \\ 
& \leq \Theta \left(\frac{N}{n} P^3 \right) =  \Theta \left(\frac{N}{K} m P^3\right) \nonumber \\
& \leq \Theta\left(\frac{N}{K} P^4 \right) = o\left(\frac{N^2}{K}  \right).
\end{align}
Here the last line follows from the condition of the theorem that $P^4=o(N)$.

\noindent Thus, it is proved that the individual matrix-vector products (steps $O1$ and $O2$) and the update (step $O3$) are the most computationally intensive steps and the additional steps including encoding/decoding add negligible overhead. 
\end{proof}

\begin{proof}[Proof of Theorem \ref{thm:complexity_MM}] Let us now understand the computational and communication complexity of the additional steps for the case of $B>1$ in the decentralized implementation. Again, we look at the complexities of the various steps as follows:
\begin{itemize}
\item Matrix-matrix products: \\ 
Feedforward stage (step $O1$): The multiplication of sub-matrix $\widetilde{\bm{W}}_p$ of size $\frac{N}{m}\times \frac{N}{n}$ with sub-matrix $\widetilde{\bm{X}}_p$ of size $\frac{N}{n}\times \frac{B}{d_1}$ is performed at each node, which is of computational complexity $\Theta(\frac{N^2B}{Kd_1})$ where $K=mn$. \\
Backpropagation stage (step $O2$): The multiplication of sub-matrix $\widetilde{\bm{\Delta}}^T_p$ of size $\frac{B}{d_2} \times \frac{N}{m}$ with sub-matrix $\widetilde{\bm{W}}_p$ of size $\frac{N}{m}\times \frac{N}{n}$ is performed at each node, which is of computational complexity $\Theta( \frac{N^2B}{Kd_2})$ where $K=mn$.  
\item Broad-casting the partial result to all other nodes for decoding:\\
Feedforward stage: For $p=0,1,\ldots,P-1$, the node with index $p$ multi-casts the resulting sub-matrix $\widetilde{\bm{S}}_p (=\widetilde{\bm{W}}_p\widetilde{\bm{X}}_p) $ of dimension $\frac{N}{m}\times \frac{B}{d_1}$ to all other nodes. Thus, the communication complexity is given by: $ \alpha \log{P} + 2\beta \frac{NB}{md_1} P = \Theta(\frac{NB}{md_1} P )$ using All-Gather.\\
Backpropagation stage: For $p=0,1,\ldots,P-1$, the node with index $p$ multi-casts the resulting sub-matrix $\widetilde{\bm{C}}^T_p (=\widetilde{\bm{\Delta}}^T_p\widetilde{\bm{W}}_p)$ of dimension $\frac{B}{d_2}\times \frac{N}{n}$ to all other nodes. Thus, the communication complexity is given by: $\alpha \log{P} + 2\beta \frac{NB}{nd_2} P = \Theta( \frac{NB}{nd_2} P )$ using All-Gather.
\item Decoding at every node (Requires solving a sparse reconstruction problem \cite{candes2005decoding}):\\
Feedforward stage: $ \mathcal{O}(\frac{NB}{md_1} P^3)$.\\
Backpropagation stage: $\mathcal{O}(\frac{NB}{nd_2} P^3)$. \\
Note that, this step is actually the most dominant among the complexities of all the other steps apart from the matrix-matrix products (steps $O1$ and $O2$) and the update (step $O3$). We are using a pessimistic upper bound for this decoding complexity which can reduce based on appropriate choice of decoding algorithm.
\item Crosscheck for decoding errors:\\
For both feedforward and backpropagation, this step is similar to that for the case of $B=1$. Each node sends a vector of $P$ values to every other node. Thus, the communication complexity is $\alpha  \log{P} + 2\beta P^2= \Theta(P^2)$.
\item Post-processing on decoded result: \\
Feedforward stage: Each node generates the matrix $\bm{X}$ for the next layer by applying nonlinear function $f(\cdot)$ element-wise on the matrix $\bm{S}$ of dimensions $N \times B$. The complexity is thus $\Theta(NB)$.\\
Backpropagation stage: Each node performs the Hadamard product of the matrix $\bm{C}$ with the matrix $f'(\bm{S})$ applied element-wise, where both the matrices are of dimension $N \times B$. The complexity is thus $\Theta(NB)$.
\item Encoding of sub-matrices: \\
Feedforward stage: The matrix $\bm{X}$ is divided into an $n \times d_1$ grid of sub-matrices of dimensions $\frac{N}{n} \times \frac{B}{d_1}$ and a linear combination of all of these $nd_1$ matrices is computed at each node. The computational complexity is thus $\Theta\left(\frac{NB}{nd_1}nd_1\right)= \Theta(NB)$.\\
Backpropagation stage: The matrix $\bm{\Delta}^T$ is divided into a $d_2 \times m$ grid of sub-matrices of dimensions $\frac{B}{d_2} \times \frac{N}{m}$ and a linear combination of all of these $md_2$ matrices is computed at each node. The computational complexity is thus $\Theta\left(\frac{NB}{md_2}md_2\right)= \Theta(NB)$.
\item Update stage (step $O3$):\\ This involves performing $B$ consecutive rank-$1$ updates to the stored, encoded sub-matrix $\widetilde{\bm{W}}_p$. Thus, the complexity is $\Theta( \frac{N^2B}{K})$.
\end{itemize}

Now, we can compute the total complexity of all the steps except the individual matrix-matrix products (step $O1$) at each node for the feedforward stage.
\begin{align}
\text{Total complexity of all additional steps } &= \Theta \left(\frac{NB}{md_1} P  \right) +  \mathcal{O}(\frac{NB}{md_1} P^3)  + \Theta(P^2 ) + \Theta(NB) \nonumber \\ 
& \leq \Theta \left(\frac{NB}{md_1} P^3 \right) =  \Theta \left(\frac{NB}{Kd_1} n P^3\right) \nonumber \\
& \leq \Theta\left(\frac{NB}{Kd_1} P^4 \right) =  \underbrace{o\left(\frac{N^2 B}{Kd_1} \right)}_{\text{Complexity of step $O1$} }.
\end{align}
Here the last line follows from the condition of the theorem that $P^4=o(N)$.
Thus the ratio of the complexity of all the additional steps to that of the matrix-matrix products (step $O1$) tends to $0$ as $K,N$ and $P$ scale.

Similarly, for the backpropagation stage, the total complexity of all additional steps except the matrix-matrix product (step $O2$) is given by,
\begin{align}
\text{Total complexity of all additional steps } &= \Theta \left(\frac{NB}{nd_2} P  \right) +  \mathcal{O}\left(\frac{NB}{nd_2} P^3\right)  + \Theta(P^2) + \Theta(NB) \nonumber \\ 
& \leq \Theta \left(\frac{NB}{nd_2} P^3 \right) =  \Theta \left(\frac{NB}{Kd_2} m P^3\right) \nonumber \\
& \leq \Theta\left(\frac{NB}{Kd_2} P^4 \right) =  \underbrace{o\left(\frac{N^2 B}{Kd_2} \right)}_{\text{Complexity of step $O2$}}. 
\end{align}
Here the last line follows from the condition of the theorem that $P^4=o(N)$. Thus the ratio of the complexity of all the additional steps to that of the matrix-matrix products (step $O2$) tends to $0$ as $K,N$ and $P$ scale.

\noindent Thus, it is proved that the individual matrix-matrix products and the update are the most computationally intensive steps and the additional steps including encoding or decoding add negligible overhead. 
\end{proof}

\end{document}